\title{Doubly-Robust Inference for Conditional Average Treatment Effects with High-Dimensional Controls\thanks{We are grateful to Denis Chetverikov, Andres Santos, Zhipeng Liao, Jinyong Hahn, Rosa Matzkin, Shuyang Sheng and participants in UCLA's Econometrics Proseminar for helpful comments.}}%
\author{Adam Baybutt}
\author{Manu Navjeevan\thanks{Corresponding author: \texttt{mnavjeevan@g.ucla.edu}.} }
\affil{University of California, Los Angeles}
\date{Revised \today}
\begin{document}

\begin{doublespace}
    \maketitle
\end{doublespace}
\begin{abstract}

Plausible identification of conditional average treatment effects (CATEs) may rely on controlling for a large number of variables to account for confounding factors. In these high-dimensional settings, estimation of the CATE requires estimating first-stage models whose consistency relies on correctly specifying their parametric forms. While doubly-robust estimators of the CATE exist, inference procedures based on the second stage CATE estimator are not doubly-robust. Using the popular augmented inverse propensity weighting signal, we propose an estimator for the CATE whose resulting Wald-type confidence intervals are doubly-robust. We assume a logistic model for the propensity score and a linear model for the outcome regression, and estimate the parameters of these models using an \(\ell_1\) (Lasso) penalty to address the high dimensional covariates. Our proposed estimator remains consistent at the nonparametric rate and our proposed pointwise and uniform confidence intervals remain asymptotically valid even if one of the logistic propensity score or linear outcome regression models are misspecified. 

\end{abstract}

\newpage
\section{Introduction}%
\label{sec:introduction}

Consider a potential outcomes framework \citep{RubinPotentialOutcomes, Rubin_1978_AnnalsStat}  where an observed outcome \(Y \in \SR\) and treatment \(D \in \{0,1\}\) are related to two latent potential outcomes \(Y_1, Y_0\in\SR\) via \(Y = DY_1 + (1-D)Y_0\). To account for unobserved confounding factors a common strategy is to assume the researcher has access to a vector of covariates, \(Z = (Z_1, X) \in \calZ_1 \times \calX \subseteq \SR^{d_z-d_x,d_x}\), such that the potential outcomes are independent of the treatment decision after conditioning on the observed covariates, \((Y_1,Y_0)\perp D | Z\). In this setting, we  are interested in estimation of and inference on the conditional average treatment effect (CATE):
\begin{equation}
    \label{eq:CATE-def}
    \E[Y_1 - Y_0 \mid X = x]
.\end{equation} 
Estimation of the CATE generally requires first fitting propensity score and/or outcome regression models. When the number of control variables \(Z\) is large (\(d_z \gg n\)), these first stage models must be estimated using regularized methods which converge slower than the nonparametric rate and typically rely on the correctness of parametric specifications for consistency.\footnote{Recent works by \citet{BK-2019-neural-nets,Schmidt-Hieber-2020-neural-networks} provide some limited nonparametric results in high-dimensional settings using deep neural networks.} 

Fortunately, so long as both models are correctly specified, one can obtain a nonparametric-rate consistent estimator and valid inference procedure for the CATE by using the popular augmented inverse propensity weighted (aIPW) signal \citep{SC-2020,fan2022estimation}. This is because the aIPW signal obeys an orthogonality condition at the true nuisance model values that limits the first stage estimation error passed on to the second stage estimator. Moreover, estimators based on the aIPW signal are doubly-robust; consistency of the resulting second-stage estimators requires correct specification of only one of the first stage propensity score or outcome regression models. However inference based on these estimators is not doubly-robust. Under misspecification the aIPW signal orthogonality fails and resulting testing procedures and confidence intervals are rendered invalid. 

This paper proposes a doubly-robust estimator and inference procedure for the conditional average treatment effect when the number of control variables \(d_z\) is potentially much larger than the sample size \(n\). The dimensionality of the conditioning variable, \(d_x\), remains fixed in our analysis. Our approach is based on \citet{Tan-2018} wherein doubly-robust inference is developed for the average treatment effect. Following \citet{SC-2020} we take a series approach to estimating the CATE, using a quasi-projection of the aIPW signal onto a growing set of basis functions. By assuming a logistic form for the propensity score model and a linear form for the outcome regression model, we construct novel \(\ell_1\)-regularized first-stage estimating equations to recover a partial orthogonality of the aIPW signal at the limiting values of the first stage estimators. This restricted orthogonality is enough to achieve doubly robust pointwise and uniform inference; pointwise and uniform confidence intervals centered at the second-stage estimator are valid even if one of the logistic or linear functional forms is misspecified.

To achieve doubly-robust inference at all points in the support of the conditioning variable, we must obtain this restricted orthogonality for each basis term in the series approximation. This is accomplished by employing distinct first-stage estimating equations for each basis term used in the second-stage series approximation. This results in the number of first-stage estimators growing with the number of basis terms. These estimators converge uniformly to limiting values under standard conditions in high-dimensional analysis. Improving on prior work in doubly-robust inference, our \(\ell_1\) regularized first-stage estimation incorporates a data-dependent penalty parameter based on the work of \citet{CS-2021}. This allows practical implementation of our proposed estimation procedure with minimal knowledge of the underlying data generating process. 

The use of multiple pairs of nuisance parameter estimates limits our ability to straightforwardly apply existing nonparametric results for series estimators \citep{Newey-1997,BCCK-2015}. Under modified conditions, we analyze the asymptotic properties of our second-stage series estimator to re-derive pointwise and uniform inference results. These modified conditions are in general slightly stronger than those of \citet{BCCK-2015}, though in certain special cases collapse exactly to the conditions of \citet{BCCK-2015}.

\paragraph{Prior Literature.}
\citet{CCDDHNR-2018} analyze the general problem of estimating finite dimensional target parameters in the presence of potentially high dimensional nuisance functions. Using score functions that are Neyman-orthogonal with respect to nuisance parameters they show that it is possible to obtain target parameter estimates that are \(\sqrt{n}\)-consistent and asymptotically normal so long as the nuisance parameters are consistent at rate \(n^{-1/4}\), a condition satisfied by many machine learning-based estimators. \citet{SC-2020} take advantage of new results for series estimation in \citet{BCCK-2015} and consider series estimation of functional target parameters after high-dimensional nuisance estimation.\footnote{\citet{fan2022estimation} provides a similar analysis using a second stage kernel estimator.} 

In the same setting as this paper, \citet{Tan-2018} considers estimation of the average treatment effect. After assuming a logistic form for the propensity score and a linear form for the outcome regression, \citet{Tan-2018} proposes \(\ell_1\)-regularized first-stage estimators that allow for partial control of the derivative of the aIPW signal away from true nuisance values and thus allow for doubly-robust inference. \citet{SRR-2019} extends the analysis of \citet{Tan-2018} to consider doubly-robust inference for a larger class of finite dimensional target parameters with bilinear influence functions. 
\citet{tanCATE} provide doubly-robust inference procedures for covariate-specific treatment effects with discrete conditioning variables; their results depend on exact representation assumptions that are unlikely to hold with continuous covariates. Moreover, no uniform inference procedures are described.

\citet{CS-2021} propose a data-driven ``bootstrap after cross-validation'' approach to penalty parameter selection that is modified for and implemented in our setting. This work is related to other work on the lasso \citep{tibshirani1996regression,BRT-2009,BC-2013,CLC-2021-CVLasso} and \(\ell_1\)-regularized M-estimation in high dimensional settings \citep{vanDerGreer2016,Tan-2017}.

\paragraph{Paper Structure.} This paper proceeds as follows. Section~\ref{sec:setup} defines the problem and introduces our methods for estimation and inference. Section~\ref{sec:theory-overview} provides intuition for how the first stage estimation procedure allows for doubly-robust estimation and inference on the CATE as well as formally establishes the necessary first stage convergence. Section~\ref{sec:first-stage} presents the main results: valid pointwise and uniform inference for the second-stage series estimator if either the first-stage logistic propensity score model or linear outcome regression model is correctly specified. Section~\ref{sec:cate-wrapup} ties up a technical detail. Section~\ref{sec:simulations} provides evidence from a simulation study while Section~\ref{sec:empirical} applies our proposed estimator to examine the effect of maternal smoking on infant birth weight. Section~\ref{sec:conclusion} concludes. Proofs of main results are deferred to
the Appendix.

\paragraph{Notation.} 
For any measure \(F\) and any function \(f\), define the \(L^2\) norm,  \(\|f\|_{F, 2} = (\E_{F}[f^2])^{1/2}\) and the \(L^\infty\) norm  \(\|f\|_{F, \infty} = \esssup_F |f| \). For any vector in \(\SR^p\) let \(\|\cdot\|_p\) for \(p \in [1,\infty]\) denote the \(\ell_p\) norm, \(\|a\|_p = (\sum_{l=1}^p a_l^p)^{1/p}\) and \(\|a\|_\infty = \max_{1\leq l\leq \infty}|a_l|\). If the subscript is unspecified, we are using the \(\ell_2\) norm. For two vectors \(a,b\in\SR^p\), let \(a\circ b = (a_ib_i)_{i=1}^p\) denote the Hadamard (element-wise) product. We adopt the convention that for \(a\in\SR^p\) and \(c\in\SR\), \(a + c = (a_i + c)_{i=1}^p\). For a matrix \(A\in\SR^{m\times n}\) let \(\|A\| = \max_{\|v\|_{\ell_2} \leq 1}\|Av\|_{\ell_2}\) denote the operator norm and \(\|A\|_\infty = \sup_{{1\leq r\leq m,1\leq s \leq n}} |A_{rs}|\). For any real valued function \(f\) let  \(\E_n[f(X)] = \frac{1}{n} \sum_{i=1}^n f(X_i)\) denote the empirical expectation and \(
\mathbb{G}_n[f(X)] = \frac{1}{\sqrt{n}}\sum_{i=1}^n (f(X_i) - \E[X_i])\) denote the empirical process. For two sequences of random variables \(\{a_n\}_{\SN}\) and \(\{b_n\}_{\SN}\), we say \(a_n \lesssim_P b_n\) or \(a_n = O_p(b_n)\) if \(a_n/b_n\) is bounded in probability and say \(a_n = o_p(b_n)\) if \(a_n/b_n \to_p 0\).

\section{Setup}%
\label{sec:setup}
Below, we formally define the setting and identification strategy that we consider. We then introduce our doubly-robust estimator and inference procedure.  The parameter of interest is the conditional average treatment effect: \(\E[Y_1 - Y_0\mid X=x]\). However, for this paper we largely focus on estimation and inference for the conditional average counterfactual outcome: 
\begin{equation}
    \label{eq:target-paramter}
    g_0(x) := \E[Y_1\mid X= x]
.\end{equation} 
Doubly-robust estimation and inference on the other conditional counterfactual outcome, \(\E[Y_0\,|\,X=x]\), follows a similar procedure and is described in \Cref{sec:cate-wrapup}. The procedures can be combined for doubly-robust estimation and inference for the CATE. 
\subsection{Setting}
\label{subsec:setting}
We assume that the researcher observes i.i.d data and that conditioning on \(Z\) is sufficient to control for all confounding factors affecting both the treatment decision \(D\) and the potential outcomes, \(Y_1\) and \(Y_0\). Our analysis allows the dimensionality of the controls, \(Z = (Z_1,X)\), to grow much faster than sample size \((d_z \gg n)\), while assuming the dimensionality of the conditioning variables, \(X\), remains fixed \((d_x \ll n)\). 
\begin{assumption}[Identification]
    \label{assm:identification}
    \leavevmode
    \begin{enumerate}[(i)]
        \item \(\{Y_i,D_i,Z_i\}_{i=1}^n\) are independent and identically distributed.
        \item \((Y_1,Y_0)\perp D \mid Z\).
        \item There exists a value \(\eta \in (0,1) \) such that  \(\eta < \E[D\mid Z= z] < 1-\eta\) almost surely in  \(Z\). 
    \end{enumerate}
\end{assumption}
To obtain doubly-robust estimation and inference we use the augmented inverse propensity weighted (aIPW) signal,
\begin{equation}
    \label{eq:aIPW-signal}
    Y(\pi, m) = \frac{DY}{\pi(Z)} - \left(\frac{D}{\pi(Z)} - 1\right)m(Z),
\end{equation}
which is a function of a fitted propensity score model, \(\pi(Z),\) and a fitted outcome regression model, \(m(Z)\), whose true values are given \(\pi^\star(Z) := \E[D\mid Z]\) and \(m^\star(Z) := \E[Y\mid D= 1, Z]\).  
Under \Cref{assm:identification}, the aIPW signal \(Y(\cdot,\cdot)\) provides doubly-robust identification of \(g_0(x)\). That is, for integrable \(\pi\neq \pi^\star\) and  \(m\neq m^\star\),
\begin{equation}
    \label{eq:double-robustness}
    \begin{split}
        \E[Y_1 \mid X = x] &= \E[Y(\pi^\star, m^\star) \mid X = x] \\
                           &= \E[\,Y(\pi, m^\star)\;\mid X = x] \\ 
                           &= \E[\,Y(\pi^\star, m)\; \mid X = x].
    \end{split}
\end{equation}

We use a series approach to estimate \(g_0(x)\), taking a quasi-projection of the aIPW signal onto a growing set of \(k\) weakly positive basis terms:
\begin{equation}
    \label{eq:basis-terms}
    p^k(x) := \left(p_1(x),\dots,p_k(x)\right)' \in \SR_+^k
.\end{equation} 
The basis terms are required to be weakly positive as they are used as weights within the convex first-stage estimators estimating equations.\footnotemark 
Examples of weakly positive basis functions are B-splines or shifted polynomial series terms. To ensure that the basis terms are well behaved, we make assumptions on \(\xi_{k,\infty} := \sup_{x\in\calX}\|p^k(x)\|_\infty\), \(\xi_{k,2} := \sup_{x\in\calX}\|p^k(x)\|_2\), and the eigenvalues of the design matrix \(Q := \E[p^k(x)p^k(x)']\).

For each basis term \(p_j(x), j = 1,\dots, k\), we estimate a separate propensity score model, \(\widehat\pi_j(Z)\), and outcome regression model, \(\widehat m_j(Z)\). Under standard moment and sparsity conditions, these converge uniformly over \(j=1,\dots,k\) to limiting values \(\bar\pi_j(Z)\) and  \(\bar m_j(Z)\). If the propensity score model and outcome regression models are correctly specified these limiting values coincide with the true values \(\pi^\star(Z)\) and \(m^\star(Z)\). However, in general the limiting and true values may differ. The double robustness of the aIPW signal allows for identification of the CATE even if only one of the nuisance models is correctly specified. If either \(\bar\pi_j = \pi^\star\) or \(\bar m_j = m^\star\), we can write for all \(j=1,\dots,k\):
\begin{equation}
    \label{eq:second-stage-setup}
    \begin{split}
        Y(\bar\pi_j, \bar m_j) 
        &= g_0(x) + \eps_j,\,\;\;\;\;\;\;\;\;\;\;\;\;\;\E[\eps_j\mid X] = 0\\
        &= g_k(x) + r_k(x) + \eps_j
    \end{split}
\end{equation}
where \(g_0(x)\) is the conditional counterfactual outcome~\eqref{eq:target-paramter}, \(g_k(x) := p^k(x)'\beta^k\) is the projection of \(g_0(x)\) onto the first  \(k\) basis terms, and \(r_k (x) := g_0(x) - g_k (x) \) denotes the approximation error from this projection. Note the separate error terms for each \(j=1,\dots,k\) in \eqref{eq:second-stage-setup}, which are collected together in the vector \(\eps^k := (\eps_1,\dots,\eps_k)\).  As long as one of the first-stage models is correctly specified, the least squares parameter \( \beta^k \) governing the projection in \(g_k(x)\) can be identified by the projection of the aIPW signal onto the basis terms \(p^k(x)\):
\begin{equation}
    \label{eq:beta-k-population}
    \begin{split}
        \beta^k &:= Q^{-1}\E[p^k(X)Y_1] \\
                &\;= Q^{-1}\E[p^k(X)Y(\pi^\star, m^\star)] \\
                &\;=  Q^{-1}\E[p^k(X)Y(\bar\pi_j,\bar m_j)],\;\; \forall j = 1,\dots,k.
    \end{split}
\end{equation}
\footnotetext{\Cref{sec:first-stage-second-stage} provides a slightly modified method of constructing our doubly-robust estimator and inference procedure that does not require the first stage weights to directly be the second stage basis terms. This may be useful in case the researcher wants to use a second stage basis that cannot be transformed to be weakly positive.}
\subsection{Estimator and Inference Procedure}
\label{subsec:estimator-inference}
We assume a logistic regression form for the propensity score model and a linear form for the outcome regression model:   
\begin{equation}
    \label{eq:nuisance-parameter-functional-forms}
    \begin{split}
        \pi(Z; \gamma) &= \left(1 + \exp(-\gamma'Z)\right)^{-1}, \\
        m(Z; \alpha) &= \alpha'Z.
    \end{split}
\end{equation}
For each \(j=1,\dots,k,\) the parameters of \eqref{eq:nuisance-parameter-functional-forms}, \(\gamma,\alpha \in \SR^{d_z} ,\) are estimated by
\begin{align}
    \label{eq:gamma-j-estimating-equation}
    \widehat\gamma_j &:= \arg\min_\gamma\, \E_n[p_j(X)\{De^{-\gamma'Z} + (1-D)\gamma'Z\}] + \lambda_{\gamma,j}\|\gamma\|_1, \\
    \label{eq:alpha-j-estimating-equation}
    \widehat\alpha_j &:= \arg\min_\alpha\, \E_n[p_j(X)De^{-\widehat\gamma_j'Z}(Y - \alpha'Z)^2]/2 + \lambda_{\alpha,j}\|\alpha\|_1.
\end{align}
The penalty parameters \(\lambda_{\gamma, j}\) and \(\alpha_{\gamma, j}\) are chosen via a data dependent technique described below. These first stage estimating equations are designed so that their first order conditions directly limit the bias passed on to the second-stage series estimator, as is described in \Cref{sec:theory-overview}. Under standard assumptions the parameter estimators \(\widehat\gamma_j, \widehat\alpha_j\) will converge uniformly over \(j=1,\dots,k\) to population minimizers
\begin{align}
    \label{eq:gamma-bar-j}
    \bar\gamma_j &:= \arg\min_\gamma\, \E[p_j(X)\{De^{-\gamma'Z} + (1-D)\gamma'Z\}], \\
    \label{eq:alpha-bar-j} 
    \bar\alpha_j &:= \arg\min_\alpha \E[p_j(Z)De^{-\bar\gamma_j'Z}(Y - \alpha'Z)^2].
\end{align}
which we assume are sufficiently sparse. Our first stage estimators are then \(\widehat\pi_j(Z) := \pi(Z;\widehat\gamma_j)\) and \(\widehat m_j(Z) := m(Z;\widehat\alpha_j)\) with limiting values \(\bar\pi_j(Z) := \pi(Z;\bar\gamma_j)\) and \(\bar m_j(Z) := m(Z;\bar\alpha_j)\), respectively. 

Our second stage estimator is then \(\widehat g(x) := p^k(x)'\widehat\beta^k\) where \(\widehat\beta^k\) is an estimate of the population projection parameter, \(\beta^k\), obtained by combining all \(k\) pairs of first stage estimators according to
\begin{equation}
     \label{eq:beta-k}
     \widehat\beta^k = \widehat{Q}^{-1}\E_n
     \begin{bmatrix} p_1(X)Y(\widehat\pi_1, \widehat m_1) \\ \vdots \\ p_k(X)Y(\widehat\pi_k, \widehat m_k) \end{bmatrix},
\end{equation} 
and \(\widehat Q := \E_n[p^k(X)p^k(X)']\).
We estimate the variance of \(\widehat g(x)\) using \(\widehat\sigma(x) := \|\widehat\Omega^{1/2}p^k(x)\|/\sqrt{n}\) for
\begin{equation}
    \label{eq:omega-hat-definitions}
    \widehat\Omega := \widehat Q^{-1}\E_n[\{p^k(X)\circ\widehat\eps^k\}\{p^k(X)\circ\widehat\eps^k\}' ]\widehat Q^{-1},
\end{equation}
where \(\circ\) represents the Hadamard product and \(\widehat\eps^k := (\widehat\eps_1,\dots,\widehat\eps_k)\); \(\widehat\eps_j := Y(\widehat\pi_j, \widehat m_j) - \widehat g(x)\), \(j=1,...,k\).

Inference is based on the \(100(1-\eta)\%\) confidence bands
\begin{equation}
    \label{eq:confidence-bands}
    \left[\underline i(x), \bar i(x)\right] := \left[\widehat g(x) - c^\star\left(1-\eta/2\right)\widehat\sigma(x),\; \widehat g(x) + c^\star\left(1-\eta/2\right)\widehat\sigma(x)\right]
.\end{equation} 
For pointwise inference, the critical value \(c^\star(1-\eta/2)\) is taken as the \((1-\eta/2)\) quantile of a standard normal distribution. For uniform inference \(c^\star(1-\eta/2)\) is taken  
\[
    c^\star(1-\eta/2) := (1-\eta/2)\text{-quantile of }\sup_{x\in\calX} \left|\frac{p^k(x)\widehat\Omega^{1/2}}{\widehat\sigma(x)}N_k^b \right|
\]
where \(N_k^b\) is a bootstrap draw from  \(N(0,I_k)\). \Cref{sec:theory-overview,sec:first-stage} show that, under standard sparsity and moment conditions, these pointwise and uniform inference procedures remain valid even under misspecification of either first-stage model.

\subsection{Penalty Parameter Selection}
\label{subsec:additional}

To select the penalty parameters \(\lambda_{\gamma,j}\) and \(\lambda_{\alpha,j}\) in  \eqref{eq:gamma-j-estimating-equation}-\eqref{eq:alpha-j-estimating-equation} we propose a data driven two-step procedure based on the work of \citet{CS-2021}. 
For each \(j= 0,1\dots,k,\) we start with pilot penalty parameters given by
\begin{equation}
    \label{eq:pilot-penalty}
    \lambda^{\text{\tiny pilot}}_{\gamma, j} = c_{\gamma,j}\times \sqrt{\frac{\ln^3(d_z)}{n}} \andbox \lambda_{\alpha,j}^{\text{\tiny pilot}} = c_{\alpha,j}\times \sqrt{\frac{\ln^3(d_z)}{n} }
\end{equation}
for some constants \(c_{\gamma, j}, c_{\alpha, j}\) selected from the interval \([\underline c_n, \bar c_n]\) with \(\underline c_n > 0\). In practice, the researcher has a fair bit of flexibility in choosing these constants. The optimal choice of these constants may depend on the underlying data generating process. We recommend using cross validation to pick these constants from a fixed-cardinality set of possible values. In line with \Cref{assm:logistic-model-convergence}(vi), the values in the set should be chosen to be on the order of the maximum value of \(\|p^k(X_i)\|_\infty\) observed in the data. 

Using \(\lambda^{\text{\tiny pilot}}_{\gamma, j}\) and \(\lambda^{\text{\tiny pilot}}_{\alpha,j}\) in lieu of \(\lambda_{\gamma,j}\) and  \(\lambda_{\alpha,j}\) in  \eqref{eq:gamma-j-estimating-equation}-\eqref{eq:alpha-j-estimating-equation} we generate pilot estimators \(\widehat\gamma^{\text{\tiny pilot}}_j\) and \(\widehat\alpha^{\text{\tiny pilot}}_j\). These pilot estimators are used to generate plug in estimators \(\widehat U_{\gamma, j}\) and \(\widehat U_{\alpha, j}\) of the residuals
\begin{equation}
    \label{eq:residual-estimates}
    \begin{split}
        \widehat U_{\gamma,j} &:= -p_j(X)\{De^{-\widehat\gamma_j^{\text{\tiny pilot}'}Z} + (1-D)\}  \\
        \widehat U_{\alpha,j} &:= p_j(X)De^{-\widehat\gamma_j^{\text{\tiny pilot}'}Z}(Y - \widehat\alpha_j^{\text{\tiny pilot}'}Z).
    \end{split}
\end{equation}
We then use a multiplier bootstrap procedure to select our final penalty parameters \(\lambda_{\gamma, j}\) and \(\lambda_{\alpha, j}\). 
\begin{equation}
    \label{eq:final-penalty-parameters}
    \begin{split}
    \lambda_{\gamma, j} &= c_0\times (1-\eps)\text{-quantile of} \max_{1\leq l \leq d_z} |\E_n[e_i\widehat U_{\gamma,j}Z_l]|\text{ given }\{Y_i, D_i, Z_i\}_{i=1}^n, \\ 
    \lambda_{\alpha, j} &= c_0\times(1-\eps)\text{-quantile of} \max_{1\leq l \leq d_z} |\E_n[e_i\widehat U_{\alpha,j}Z_l]|\text{ given }\{Y_i, D_i, Z_i\}_{i=1}^n  
    \end{split}
\end{equation}
where \(e_1,\dots,e_n\) are independent standard normal random variables generated independently of the data \(\{Y_i,D_i,X_i\}_{i=1}^n\) and \(c_0 > 1\) is a fixed constant.\footnote{The constant \(c_0\) can be different for the propensity score and outcome regression models and can also vary for each  \(j = 1,\dots,k\). All that matters is that each constant satisfies the requirements of \Cref{thm:first-stage-convergence}. This complicates notation, however.} In line with other work we find \(c_0 = 1.1\) works well in simulations. So long as our residual estimates converge in empirical mean square to limiting values, the choice of penalty parameter in \eqref{eq:final-penalty-parameters} will ensure that the penalty parameter dominates the noise with high probability. This allows for consistent variable selection and coefficient estimation.

For computational reasons, the researcher may not want to implement the bootstrap penalty parameter procedure. If this is the case, we note that the pilot penalty parameters of \eqref{eq:pilot-penalty} can be used directly after the constants \(c_{\gamma, j}\) and \(c_{\alpha,j}\) can be selected via cross validation from a growing set \(\Lambda_n \subseteq [\underline c_n, \bar c_n]\) under modified conditions. \Cref{sec:alt-penalty-CV} provides details for this implementation as well as formally shows the modified conditions needed.

\section{Theory Overview}%
\label{sec:theory-overview}

We begin with a main technical lemma which provides a bound on rate at which first stage estimation error is passed on to the second stage CATE and variance estimators. This bound is comparable to others seen in the inference after model-selection literature \citep{BCH-2013,Tan-2018} and is achieved under standard conditions in the \(\ell_1\)-regularized estimation literature \citep{BRT-2009,Bulmann-VanDeGeer-2011,BC-2013,CS-2021}. However, this bound is achieved at the limiting values of the propensity score and outcome regression models which may differ from the true values \(\pi^\star\) and \(m^\star\) under misspecification.

The potential misspecification of the first stage models which means we cannot directly apply orthogonality of the aIPW signal, discussed below, to show that the effect of first stage estimation error on the second stage is negligible. Instead, we use the first order conditions for \(\widehat\gamma_j\) and \(\widehat\alpha_j\) to directly control this quantity. After presenting the lemma \Cref{subsec:first-stage-intuition} provides some intuition for how this is done. Controlling the rate at which first stage estimation error is passed on to the second stage estimator even at points away from the true values \(\pi^\star\) and  \(m^\star\) is key for obtaining doubly-robust inference for the CATE.

\subsection{Uniform First-Stage Convergence}%
\label{subsec:uniform-first-stage-convergence}

To show uniform convergence of the first stage estimators and thus uniform control of the bias passed on from the first stage estimation to the second stage estimator we rely on the following assumption:

\begin{assumption}[First Stage Convergence]
    \label{assm:logistic-model-convergence}
    \leavevmode
    \begin{enumerate}[(i)]
        \item The regressors \(Z\) are bounded, \(\max_{1 \leq l \leq d_z} |Z_l| \leq C_0\) almost surely.
        \item The errors \(Y_1 - \bar m_j(Z)\) are uniformly subgaussian conditional on  \(Z\) in the following sense. There exist fixed positive constants  \(G_0\) and  \(G_1\) such that for any  \(j\):
        \[
            G_0 \E\left[\exp\big(\{Y_1 - \bar m_j(Z)\}^2/G_0^2\big) - 1\mid Z\right] \leq G_1^2
        \]
         almost surely.
        \item There is a constant \(B_0\) such that \(\bar\gamma_j'Z \geq B_0\) almost surely for all \(j\).
        \item There exist fixed constants \(\xi_0 > 1\) and  \(1> \nu_0 > 0\) such that for each  \(j = 1,\dots,k\) the following empirical compatability condition holds for the empirical hessian matrix \(\tilde\Sigma_{\gamma, j} := \E_n[De^{-\bar\gamma_j'Z}ZZ']\). For any \(b\in\SR^{d_z}\) and \(\calS_j = \{l: \bar\gamma_l\vee\bar\alpha_l\neq 0\}\):
        \[
           \sum_{l\not\in\calS_j} |b_l|\leq \xi_0\sum_{l\in\calS_j} |b_l|
           \implies \nu_0^2\Big(\sum_{l\in\calS_j}|b_l|\Big)^2 \leq |\calS| \left(b'\tilde\Sigma_{\gamma, j}b\right)
        .\] 
        \item There exist fixed constants  \(c_u\) and  \(C_U > 0\) such that for all \(j \leq k\),  \(\E[U_{\gamma,j}^4] \leq (\xi_{k,\infty}C_U)^4\) and \(\min_{1\leq l\leq d_z}\E[U_{\gamma, j}^2 Z_l^2] \geq c_u\).
        \item The constant \(\underline c_n\) is chosen such that  \(\xi_{k.\infty} \lesssim \underline c_n\) and the following sparsity bounds hold for \(s_k = \max_{1 \leq j \leq k}|\calS_j|\)
        \[
           \frac{\xi_{k,\infty}s_k^2\bar c_n^2\ln^5(d_zn)}{n} \to 0,\andbox \frac{\xi_{k,\infty}^4\ln^7(d_zkn)}{n}\to 0 
        .\] 
    \end{enumerate}
\end{assumption}

The first part of \Cref{assm:logistic-model-convergence} assumes that the regressors are bounded while the second assumes that tail behavior of the outcome regression errors are uniformly thin. Both of these can be relaxed somewhat with sufficient moment conditions on the tail behavior of the controls and errors. We should note that compactness of \(\calX\) is generally required by nonparametric estimators. The third part of the assumption bounds all limiting propensity scores \(\bar\pi_j(Z)\) away from zero uniformly. The fourth assumption is an empirical compatibility condition on the weighted first-stage design matrix. It is slightly weaker than the restricted eigenvalue conditions often assumed in the literature \citep{BRT-2009,BCCK-2012}. The penultimate condition is an identifiability constraint that limits the moments of the noise and bounds it away from zero uniformly over all estimation procedures.  Many of the constants in \Cref{assm:logistic-model-convergence} are assumed to be fixed across all \(j\). This is mainly to simplify the exposition of the results below and in practice all constants can be allowed to grow slowly with \(k\). However, the growth rate of these terms affects the required first-stage sparsity.

The last condition is required for the validity of the bootstrap penalty parameter selection procedure and is comparable to the requirements needed for the bootstrap after cross validation technique described by \citet{CS-2021}. The main difference is the additional assumption on the growth rate of the basis functions, \(\xi_{k,\infty}\) which is to ensure uniform stability of the estimation procedures \eqref{eq:gamma-j-estimating-equation}-\eqref{eq:alpha-j-estimating-equation} as well as some assumptions on the order of the constants \(c_{\gamma,j}\) and \(c_{\alpha,j}\) in \eqref{eq:pilot-penalty}.

\begin{lemma}[First-Stage Convergence]
    \label{thm:first-stage-convergence}
    Suppose that \Cref{assm:logistic-model-convergence} holds. In addition assume that \(c_0 > (\xi_0 + 1)/(\xi_0-1)\), \(k/n \to 0\), \(k\eps \to 0\), and there is a fixed constant  \(c > 0\) such that for all  \(j\),  \(\lambda_{\alpha, j}/\lambda_{\gamma,j} \geq c\).\footnotemark Then the following weighted means converge uniformly in absolute value at least at rate:
    \begin{equation}
        \label{eq:uniform-mean-convergence}
        \max_{1\leq j\leq k}\left|\E_n[p_j(X)Y(\widehat\pi_j, \widehat m_j)] - \E_n[p_j(X)Y(\bar\pi_j, \bar m_j)]\right| 
        \lesssim_P 
        \frac{s_k\,\xi_{k,\infty}^2\ln(d_z)}{n}
    \end{equation}
    and in empirical mean square at least at rate:
    \begin{equation}
        \label{eq:second-moment-convergence}
        \max_{1\leq j\leq k}\E_n[p_j^2(X)(Y(\widehat\pi_j,\widehat m_j) - Y(\bar\pi_j,\bar m_j))^2] \lesssim_P \frac{s_k^2\,\xi_{k,\infty}^4 \ln(d_z)}{n} 
    \end{equation}
\end{lemma}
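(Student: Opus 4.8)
The plan is to follow the standard architecture for oracle inequalities of $\ell_1$-penalized $M$-estimation (cf.\ \citet{BRT-2009,BC-2013,Tan-2018,CS-2021}), but to carry every bound \emph{uniformly over} $j=1,\dots,k$ and to account for the data-driven penalties of \eqref{eq:final-penalty-parameters}, in three stages. \textbf{Stage 1 (the regularization event).} Let $U_{\gamma,j},U_{\alpha,j}$ denote the population residuals, so that the gradients of \eqref{eq:gamma-bar-j}--\eqref{eq:alpha-bar-j} at $\bar\gamma_j,\bar\alpha_j$ equal $-\E[U_{\gamma,j}Z]=0$ and $\E[U_{\alpha,j}Z]=0$. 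The object of this stage is to show that, with probability tending to one, \emph{simultaneously over all} $j\le k$ the chosen penalties satisfy $\lambda_{\gamma,j}\ge c_0\|\E_n[U_{\gamma,j}Z]\|_\infty$ and $\lambda_{\alpha,j}\ge c_0\|\E_n[U_{\alpha,j}Z]\|_\infty$ while remaining of order $\xi_{k,\infty}\sqrt{\ln(d_z)/n}$. This needs (i) a crude uniform $\ell_1$-consistency of the \emph{pilot} estimators under \eqref{eq:pilot-penalty}, enough to give $\max_{j\le k}\E_n[(\widehat U_{\gamma,j}-U_{\gamma,j})^2]=o_p(1)$ and similarly for $\alpha$, where $\widehat U$ are the plug-ins \eqref{eq:residual-estimates}; and (ii) a conditional multiplier-bootstrap / Gaussian maximal-inequality argument in the style of \citet{CS-2021} comparing the $(1-\eps)$-quantile of $\max_l|\E_n[e_i\widehat U_{\gamma,j}Z_l]|$ first to the same quantile with $U_{\gamma,j}$ in place of $\widehat U_{\gamma,j}$, and then to $\|\E_n[U_{\gamma,j}Z]\|_\infty$ via self-normalized moderate deviations. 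Taking a union bound over $j\le k$ and $l\le d_z$, and controlling the accumulated bootstrap-quantile error, is exactly what forces the $\ln^5(d_z n)$ and $\ln^7(d_zk n)$ sparsity budgets and the couplings $k\eps\to0$, $k/n\to0$; the requirement $c_0>(\xi_0+1)/(\xi_0-1)$ fixes the dominance margin needed for Stage 2, and \Cref{assm:logistic-model-convergence}(v) keeps the score nondegenerate uniformly.

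\textbf{Stage 2 (uniform oracle inequalities).} On the event of Stage 1, the optimality of $\widehat\gamma_j$ in \eqref{eq:gamma-j-estimating-equation} gives the basic inequality; since $\lambda_{\gamma,j}$ dominates the empirical score with margin $c_0$, the error $\widehat\gamma_j-\bar\gamma_j$ lies in the cone $\{b:\sum_{l\notin\calS_j}|b_l|\le\xi_0\sum_{l\in\calS_j}|b_l|\}$, so \Cref{assm:logistic-model-convergence}(iv) applies. Turning the Bregman divergence of the logistic objective into a quadratic form calls for a self-bounding step: convexity of $t\mapsto e^{-t}$, together with $\max_l|Z_l|\le C_0$ (Assumption~(i)) and $\bar\gamma_j'Z\ge B_0$ (Assumption~(iii)), lower-bounds the divergence, on $\{\|\widehat\gamma_j-\bar\gamma_j\|_1=o(1)\}$, by a constant multiple of $(\widehat\gamma_j-\bar\gamma_j)'\tilde\Sigma_{\gamma,j}(\widehat\gamma_j-\bar\gamma_j)$. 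Combined with compatibility this yields, uniformly in $j$,
\[
    \max_{j\le k}\|\widehat\gamma_j-\bar\gamma_j\|_1 \;\lesssim_P\; \frac{s_k\,\xi_{k,\infty}\sqrt{\ln d_z}}{\nu_0^2\sqrt n}, \qquad
    \max_{j\le k}\,(\widehat\gamma_j-\bar\gamma_j)'\tilde\Sigma_{\gamma,j}(\widehat\gamma_j-\bar\gamma_j) \;\lesssim_P\; \frac{s_k\,\xi_{k,\infty}^2\ln d_z}{\nu_0^2\,n}.
\]
The same weighted-least-squares oracle inequality handles $\widehat\alpha_j$ in \eqref{eq:alpha-j-estimating-equation}, with one wrinkle: its loss uses the estimated weight $De^{-\widehat\gamma_j'Z}$ rather than $De^{-\bar\gamma_j'Z}$, and replacing one by the other introduces an extra score term of size $\lesssim\xi_{k,\infty}\|\widehat\gamma_j-\bar\gamma_j\|_1$, which is dominated by $\lambda_{\alpha,j}$ precisely because $\lambda_{\alpha,j}/\lambda_{\gamma,j}\ge c$. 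Since the constants in \Cref{assm:logistic-model-convergence} are fixed in $j$ (or, per the remark after it, grow slowly enough to be absorbed by the sparsity budget), passing to the maximum over $j\le k$ leaves the rates unchanged and one obtains the analogous bounds for $\widehat\alpha_j-\bar\alpha_j$.

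\textbf{Stage 3 (from parameter error to signal error).} Writing $1/\pi(Z;\gamma)=1+e^{-\gamma'Z}$ and $m(Z;\alpha)=\alpha'Z$, the map $(\gamma,\alpha)\mapsto Y(\pi(\cdot;\gamma),m(\cdot;\alpha))$ is smooth in $\gamma$ and affine in $\alpha$. Expanding $\E_n[p_j(X)(Y(\widehat\pi_j,\widehat m_j)-Y(\bar\pi_j,\bar m_j))]$ to second order around the \emph{estimated} values $(\widehat\gamma_j,\widehat\alpha_j)$, the linear term is a pair of inner products of $\widehat\gamma_j-\bar\gamma_j$ and $\widehat\alpha_j-\bar\alpha_j$ against $\E_n[p_j(X)\partial_\gamma Y(\widehat\gamma_j,\widehat\alpha_j)]$ and $\E_n[p_j(X)\partial_\alpha Y(\widehat\gamma_j,\widehat\alpha_j)]$; but these two gradients are, up to sign, exactly the objects that the first-order conditions of \eqref{eq:alpha-j-estimating-equation} and \eqref{eq:gamma-j-estimating-equation} constrain in $\ell_\infty$ by $\lambda_{\alpha,j}$ and $\lambda_{\gamma,j}$, respectively. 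This is the ``partial orthogonality'' engineered into the estimating equations: by H\"older and Stage 2 the linear term is $\lesssim_P \lambda_{\alpha,j}\|\widehat\gamma_j-\bar\gamma_j\|_1+\lambda_{\gamma,j}\|\widehat\alpha_j-\bar\alpha_j\|_1 \lesssim_P s_k\xi_{k,\infty}^2\ln(d_z)/n$. The second-order remainder — every term of which carries a factor $D$, and on $\{D=1\}$ a factor $e^{-\tilde\gamma_j'Z}$ that is bounded on $\{\|\widehat\gamma_j-\bar\gamma_j\|_1=o(1)\}$ by Assumptions~(i),(iii), with the outcome residual absorbed by the uniform sub-Gaussian moments of Assumption~(ii) — is quadratic in $(\widehat\gamma_j-\bar\gamma_j)'Z$ and $(\widehat\alpha_j-\bar\alpha_j)'Z$ and hence controlled by the weighted prediction norms of Stage 2 and by residual-reweighted Gram matrices whose compatibility follows from Assumption~(iv); this is again $\lesssim_P s_k\xi_{k,\infty}^2\ln(d_z)/n$, proving \eqref{eq:uniform-mean-convergence}. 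For \eqref{eq:second-moment-convergence} there is no cancellation to use: one bounds $|Y(\widehat\pi_j,\widehat m_j)-Y(\bar\pi_j,\bar m_j)|^2$ by a bounded factor times $((\widehat\gamma_j-\bar\gamma_j)'Z)^2+((\widehat\alpha_j-\bar\alpha_j)'Z)^2$, averages against $p_j^2(X)\le\xi_{k,\infty}^2$, and uses the crude $\E_n[((\widehat\gamma_j-\bar\gamma_j)'Z)^2]\le\|\widehat\gamma_j-\bar\gamma_j\|_1^2\,\|\E_n[ZZ']\|_\infty \lesssim_P s_k^2\xi_{k,\infty}^2\ln(d_z)/n$, giving the stated $s_k^2\xi_{k,\infty}^4\ln(d_z)/n$. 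Note that \eqref{eq:uniform-mean-convergence} is sharper than the square root of \eqref{eq:second-moment-convergence}, which is why the orthogonality cancellation — not a brute second-moment bound — is what delivers the mean rate.

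\textbf{Main obstacle.} The delicate part is Stage 1: making the data-driven penalties \eqref{eq:final-penalty-parameters} valid \emph{simultaneously} over all $j\le k$ while the pilot estimators feeding \eqref{eq:residual-estimates} are themselves only crudely consistent — this coupling of $k$, $s_k$ and the $\ln$-powers is what dictates \Cref{assm:logistic-model-convergence}(vi). A secondary difficulty is the self-bounding curvature control of the weighted logistic loss and the propagation of the $\gamma$-estimation error into the $\alpha$-estimating equation, which are handled by $\lambda_{\alpha,j}/\lambda_{\gamma,j}\ge c$, $\xi_{k,\infty}\lesssim\underline c_n$, and Assumption~(iii).
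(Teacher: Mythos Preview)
Your three-stage architecture matches the paper's proof exactly: Stage~1 corresponds to the events $\Omega_{k,1},\dots,\Omega_{k,7}$ (score domination, penalty majorization, and Gram-matrix concentration) shown to hold with probability $1-o(1)$ via the multiplier-bootstrap machinery of \citet{CS-2021}; Stage~2 is the paper's Lemmas~\ref{lemma:nonasymptotic-logit}--\ref{lemma:nonasymptotic-outcome}; and Stage~3 is Lemmas~\ref{lemma:nonasymptotic-means}--\ref{lemma:nonasymptotic-variance}.

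The only real difference is in how Stage~3 is organized. You Taylor expand $Y(\gamma,\alpha)$ around the \emph{estimated} point $(\widehat\gamma_j,\widehat\alpha_j)$ and kill both first-order pieces with the two KKT conditions of \eqref{eq:gamma-j-estimating-equation}--\eqref{eq:alpha-j-estimating-equation}. The paper instead writes an exact three-term algebraic decomposition $\delta_{1,j}+\delta_{2,j}+\delta_{3,j}$; it then uses the KKT of $\widehat\gamma_j$ to bound $\delta_{1,j}+\delta_{3,j}=(\widehat\alpha_j-\bar\alpha_j)'\E_n[p_j(X)(1-D/\widehat\pi_j)Z]$ directly (as you do), but for $\delta_{2,j}$ it Taylor expands only in $\gamma$ around $\bar\gamma_j$ and controls the linear piece by the \emph{score-domination event} $\Omega_{k,3}$ at the limiting values rather than by the KKT of $\widehat\alpha_j$. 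Both routes land on the same rate; yours is conceptually symmetric in the two KKT conditions, while the paper's avoids one mixed second-order cross term and makes the role of $\Omega_{k,5}$--$\Omega_{k,7}$ (concentration of the residual-weighted Gram matrices) more explicit. One small imprecision: your phrase ``residual-reweighted Gram matrices whose compatibility follows from Assumption~(iv)'' overstates things --- Assumption~(iv) gives compatibility only for $\tilde\Sigma_{\gamma,j}$, and the residual-weighted versions are handled instead by concentration to their expectations plus the sub-Gaussian moment bound of Assumption~(ii), exactly as in the paper's steps \eqref{eq:nonasymptotic-mean-step2.1}--\eqref{eq:nonasymptotic-mean-step2.2}.
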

\Cref{thm:first-stage-convergence} provides a tight bound on the first-stage estimation error passed on to the second stage estimator even when the first-stage estimators converge  to values that are not the true propensity score or outcome regression. In particular notice that under the (familiar) sparsity bound \(s_k\xi_{k,\infty}^2k^{1/2}\ln^2(d_z)/\sqrt{n}\to 0\), any linear combination of the means in both \eqref{eq:uniform-mean-convergence} and \eqref{eq:second-moment-convergence} is \(o_p(\sqrt{n})\).  This allows us to obtain doubly-robust inference for the CATE. 
\footnotetext{The requirement \(\lambda_{\alpha,j}/\lambda_{\gamma,j} \geq c\) may seem a bit unnatural, but it can be enforced in practice  without upsetting any assumptions by setting the linear penalty 
    \(
        \lambda_{\alpha,j}^{\text{\tiny ratio}} := \max\{\lambda_{\gamma,j}/5, \lambda_{\alpha, j}\}
    .\)
In simulations, we find this constraint is rarely binding.}

\subsection{Managing First-Stage Bias}%
\label{subsec:first-stage-intuition}
Below, we provide some intuition for how this result is obtained and the role our particular estimating equations play in establishing this fact.
We focus on control of the vector \(\vB^k\), defined in \eqref{eq:beta-means}, which measures the bias passed on from first-stage estimation to the second-stage estimate \(\widehat\beta^k\). Limiting the size of \(\vB^k\) is crucial in showing convergence of \(\widehat\beta^k\) to the true parameter \(\beta^k\) and thus consistency of the nonparametric estimator \(\widehat g(x)\). 
\begin{equation}
    \label{eq:beta-means}
    \vB^k :=\E_n\begin{bmatrix} p_1(X)\left\{Y(\widehat\pi_1, \widehat m_1) - Y(\bar\pi_1, \bar m_1)\right\} \\ \vdots \\ p_k(X)\left\{Y(\widehat \pi_k, \widehat m_k) - Y(\bar \pi_k, \bar m_k)\right\}\end{bmatrix}.
\end{equation}
For exposition, we consider a single term of \eqref{eq:beta-means}, \(\vB^k_j\), which roughly measures the first stage estimation bias taken on from adding the \(j^\text{th}\) basis term to our series approximation of  \(g_0(x)\). The discussion that follows is a bit informal, instead of considering the derivatives with respect to the true parameters below our proof strategy will directly use the Kuhn-Tucker conditions of the optimization routines in \eqref{eq:gamma-j-estimating-equation}-\eqref{eq:alpha-j-estimating-equation}. However, the general intuition is the same as is used in the proofs.

In addition to the doubly-robust identification property \eqref{eq:double-robustness}, the aIPW signal is typically useful in the high-dimensional setting because it obeys an orthogonality condition at the true values \((\pi^\star,m^\star)\):\footnote{Robustness and orthogonality are indeed closely related, see Theorem 6.2 in \citet{nmf-1994} for a discussion.} 
\begin{equation}
    \label{eq:neyman-orthogonality}
    \E[\nabla_{\pi,m} Y(\pi^\star, m^\star) \mid Z] = 0
.\end{equation} 
When both the propensity score model and outcome regression model are correctly specified we can (loosely speaking) examine the bias \(\vB_j^k\) by replacing \(\bar\pi_j = \pi^\star\) and  \(\bar m_j = m^*\) and considering the following first order expansion:
\begin{equation}
    \label{eq:taylor-expansion-true-parameters}
    \begin{split}
       \vB_j^k= 
       \E_n[p_j(X)Y(\widehat \pi_j, \widehat m_j)] 
        &- \E_n[p_j(X)Y(\pi^\star, m^\star)] \\
        &= \underbrace{\E_n[p_j(X)\nabla_{\pi, m}\,Y(\pi^\star, m^\star)]}_{O_p(n^{-1/2})\text{ by \eqref{eq:neyman-orthogonality}}}\begin{bmatrix} \widehat \pi_j - \pi^\star \\ \widehat m_j - m^\star \end{bmatrix}   + o_p(n^{-1/2}).
    \end{split} 
\end{equation}
By orthogonality of the aIPW signal the gradient term is close to zero, which guarantees that the bias is asymptotically negligible even if the nuisance parameters converge slowly to the true values, \(\pi^\star\) and  \(m^\star\).\footnote{Typically all that is required is that \(\|\hat\pi_j - \pi^\star\| = o_p(n^{-1/4})\) and \(\|\hat m_j - m^\star\| = o_p(n^{-1/4})\) in order to make the second order remainder term \(\sqrt{n}\)-negligible}  This allows the researcher to ignore first stage nuisance parameter estimation error and treat \(\pi^\star\) and \(m^\star\) as known when analyzing the asymptotic properties of the second stage series estimator. Indeed, since the aIPW signal orthogonality holds conditional on \(Z = (Z_1,X)\), if both models are correctly specified only a single pair of first stage estimators would be needed to provide control over all the elements in \(\vB^k\). This is the approach followed by \citet{SC-2020}.

So long as either one of \(\bar\pi_j = \pi^\star\) or  \(\bar m_j = m^\star\), double robustness of the aIPW signal~\eqref{eq:double-robustness} still delivers identification: \(\E[p_j(X)Y_1] \approx \E_n[p_j(X)Y(\bar\pi_j,\bar m_j)\).
However, the aIPW orthogonality tells us nothing about the expectation of the gradient away from the true parameters, \(\pi^\star, m^\star\); if either \(\bar\pi_j \neq \pi^\star\) or  \(\bar m_j \neq m^\star\) there is no reason to believe that the gradient on the right hand side of \eqref{eq:taylor-expansion-true-parameters} is mean zero when evaluated instead at \(Y(\bar\pi_j,\bar m_j)\). In general, the bias \(\vB_j^k\) will then diminish at the rate of convergence of our nuisance parameters. Because we have high dimensional controls, this convergence rate will generally be much slower than the standard nonparametric rate \citep{Newey-1997,BCCK-2015}. 

To get around this, we design the first-stage objective functions \eqref{eq:gamma-j-estimating-equation}-\eqref{eq:alpha-j-estimating-equation} such that the resulting first-order conditions control the bias passed on to the second stage. Consider the following expansion instead around the limiting parameters \(\bar\gamma_j\) and \(\bar\alpha_k\).
\begin{equation}
    \label{eq:taylor-expansion-gamma-alpha}
    \begin{split}
        \vB_j^k = \E_n[p_j(X)Y(\widehat \pi_j, \widehat m_j)] 
        &- \E_n[p_j(X)Y(\bar\pi_j, \bar m_j)] \\
        &= \E_n[p_j(X)\nabla_{\gamma_j, \alpha_j}\,Y(\bar\pi_j, \bar m_j)]
        \begin{bmatrix} \widehat \gamma_j - \bar\gamma_j \\ \widehat \alpha_j - \bar\alpha_j \end{bmatrix} + o_p(n^{-1/2})
    \end{split}
\end{equation}
After substituting the forms of \(\bar\pi_j(z) = \pi(z;\bar\gamma_j)\) and \(\bar m_j(z) = m(z;\bar\alpha_j)\) described in  \eqref{eq:nuisance-parameter-functional-forms} and differentiating with respect to \(\gamma_j\) and \(\alpha_j\) we obtain 
\begin{equation}
    \label{eq:gradient-plim-parameters}
    \E[p_j(X)\nabla_{\gamma_j,\alpha_j}\,Y(\bar\pi_j,\bar m_j)] 
    = \E\begin{bmatrix} 
    -p_j(X)De^{-\bar\gamma'Z}(Y- \bar\alpha'Z)Z \\
    p_j(x)\{D(1+e^{-\bar\gamma'Z})Z + Z\} 
    \end{bmatrix} 
\end{equation}
However, by definition \(\bar\gamma_j\) and  \(\bar\alpha_j\) solve the minimization problems defined in  \eqref{eq:gamma-bar-j}-\eqref{eq:alpha-bar-j}, the population analogs of our finite sample estimating equations. The first order conditions of these minimization problems yield 
\begin{equation}
    \label{eq:foc-plim-parameters}
    \begin{split}
     \E
        \underbrace{\overbrace{\begin{bmatrix} 
                p_j(X)\{D(1 + e^{\bar\gamma'Z})Z +  Z\} \vspace{0.2cm}\\
                p_j(X)De^{-\bar\gamma'Z}(DY - \bar\alpha'Z)Z
        \end{bmatrix}}^{\text{First order condition of \(\bar\gamma_j\)}}}_{\text{First order condition of \(\bar\alpha_j\)}} 
        = 0 \;\implies\; \E[p_j(X)\nabla_{\gamma_j,\alpha_j}Y(\bar\pi_j,\bar m_j)] = 0
    \end{split}
\end{equation}
Examining the first order conditions in \eqref{eq:foc-plim-parameters}, we see that they exactly give us control over the gradient \eqref{eq:gradient-plim-parameters}. Under suitable convergence of the first stage parameter estimates, this guarantees the bias examined in expansion \eqref{eq:taylor-expansion-gamma-alpha} is negligible even under misspecification of the propensity score or outcome regression models.

Control of this gradient under misspecification is not provided using other estimating equations, such as maximum likelihood for the logistic propensity score model or ordinary least squares for the linear outcome regression model. Moreover, control over the gradient of \(\vB_j^k\) from \eqref{eq:beta-means} is not provided by the first-order conditions for \(\bar\gamma_l\) and  \(\bar\alpha_l\) for \(l\neq j\):
\begin{equation}
    \label{eq:foc-l-does-not-control-gradient-j}
    \begin{split}
        \E[p_j(X)\nabla_{\gamma_j, \alpha_j}Y(\bar\pi_j, \bar m_j)] 
        &= \E\begin{bmatrix} 
        -p_j(X)De^{-\bar\gamma'Z}(Y - \bar\alpha'Z)Z \vspace{0.2cm} \\
        p_j(X)\{D(1 + e^{\bar\gamma'Z})Z +  Z\}
        \end{bmatrix} \\
        &\neq \E
        \underbrace{\overbrace{\begin{bmatrix} 
        p_l(X)\{D(1 + e^{\bar\gamma'Z})Z +  Z\} \vspace{0.2cm}\\
         p_l(X)De^{-\bar\gamma'Z}(Y - \bar\alpha'Z)Z
        \end{bmatrix}}^{\text{First order condition of \(\bar\gamma_l\)}}}_{\text{First order condition of \(\bar\alpha_l\)}} 
    .\end{split}
\end{equation}
Showing that the inference procedure of \Cref{sec:setup} remains valid at all points \(x\in\calX\) under misspecification requires showing negligible first stage estimation bias for any linear combination of the vector~\eqref{eq:beta-means}. As outlined above, this requires using \(k\) separate pairs of nuisance parameter estimator to obtain \(k\) separate pairs of first order conditions, one for each term of the vector.

\section{Main Results}%
\label{sec:first-stage}

In this section, we present the main consistency and distributional results for our second-stage estimator \(\widehat g(x)\) described in \Cref{sec:setup}. A full set of second stage results, including pointwise and uniform linearization lemmas and uniform convergence rates, can be found in \Cref{sec:additional-second-stage}. The first set of results is established under the following condition, which limits the bias passed from first-stage estimation onto the second-stage estimator. In particular, \Cref{cond:no-effect} implies that the bias vector \(\vB^k\) from \eqref{eq:beta-means} satisfies \(\|\vB^k\| = o_p(n^{-1/2})\). 

\begin{condition}[No Effect of First-Stage Bias]
   \label{cond:no-effect}
    \begin{equation}
        \label{eq:no-effect-first-stage-bias}
        \max_{1 \leq j \leq k}\big|\E_n[p_j(X)Y(\widehat\pi_j, \widehat m_j)] - \E_n[p_j(X)Y(\bar\pi_j, \bar m_j)]\big| = o_p(n^{-1/2}k^{-1/2})
    .\end{equation} 
\end{condition}

Via \Cref{thm:first-stage-convergence} we can see that is a logistic propensity score model and a linear outcome regression model and estimating the first stage models using the estimating equations \eqref{eq:gamma-j-estimating-equation}-\eqref{eq:alpha-j-estimating-equation}, \Cref{cond:no-effect} can be achieved under  \Cref{assm:logistic-model-convergence} and the sparsity bound 
\begin{equation}
    \label{eq:first-stage-sparsity-bound}
    \frac{s_k\,\xi_{k,\infty}^2k^{1/2}\ln(d_z)}{\sqrt{n}} \to 0 
.\end{equation} 
If the researcher were to assume different parametric forms for the first stage model, different first estimating equations would have to be used to obtain doubly-robust estimation and inference. However, so long as the \Cref{cond:no-effect} can be established at the limiting values of the first stage models, the results of this section hold.

Having dealt with the first stage estimation error, the main complication remaining is that under misspecification the aIPW signals  \(Y(\hat\pi_j, \hat m_j)\) for \(j = 1,\dots,k\) do not all converge to the same limiting values. However, so long as at least one of the first stage models is correctly specified, all of the limiting aIPW signals have the same conditional mean, \(g_0(x)\). 
In the standard setting, consistency of nonparametric estimator relies on certain conditions on the error terms. In our setting, we require that these assumptions hold uniformly over \(k\) the error terms. We note though that there is a non-trivial dependence structure between that limiting aIPW signals. This strong dependence gives plausibility to our uniform conditions. For example, if the logistic propensity score model is correctly specified and the limiting outcome regression models are uniformly bounded conditional on  \(Z\), our conditions reduce exactly to the conditions of \citet{BCCK-2015}. In general, however, the uniform conditions suggest that a degree of undersmoothing is optimal when implementing our estimation procedure.

\subsection{Pointwise Inference}%
\label{subsec:pointwise-inference}

Pointwise inference relies on the following assumption in tandem with \Cref{cond:no-effect}.
\begin{assumption}[Second-Stage Pointwise Assumption]
    \label{assm:second-stage-assumptions}
    Let \(\bar \eps_k := \max_{1\leq j\leq k}|\eps_j|\). Assume that
    \begin{enumerate}[(i)]
        \item Uniformly over all \(n\), the eigenvalues of  \(Q = \E[p^k(x)p^k(x)']\) are bounded from above and away from zero.
        \item The conditional variance of the error terms is uniformly bounded in the following sense. There exist constants \(\underline\sigma^2\) and  \(\bar\sigma^2\) such that for any \(j=1,2\dots\) we have that  \(\underline{\sigma}^2 \leq \Var(\eps_j \mid X)\leq \bar\sigma^2 < \infty;\)
        \item For each \(n\) and  \(k\) there are finite constants  \(c_k\) and  \(\ell_k\) such that for each \(f\in \calG\)
        \[
            \|r_k\|_{L,2} = (\E[r_k(x)^2])^{1/2}\leq c_k\andbox \|r_k\|_{L,\infty} = \sup_{x\in\calX}|r_k(x)| \leq \ell_kc_k
        .\] 
        \item \(\sup_{x\in\calX}\E[\bar\eps_k^2\,\bm{1}\{\bar\eps_k + \ell_kc_k > \delta\sqrt{n}/\xi_k\} \mid X= x]\to 0\) as \(n\to\infty\)  and \(\sup_{x\in\calX}\E[\ell_k^2c_k^2\bm{1}\{\bar\eps_k + \ell_kc_k > \delta\sqrt{n}/\xi_k\} \mid X= x]\to 0\) as \(n\to\infty\) for any  \(\delta > 0\). 
    \end{enumerate}
\end{assumption}

As mentioned, these are exactly the conditions required by \citet{BCCK-2015}, with the modification that the bounds on conditional variance and other moment conditions on the error term hold uniformly over \(j = 1,\dots,k\). The assumptions on the series terms being used in the approximation can be shown to be satisfied by a number of commonly used functional bases, such as polynomial bases or splines, under adequate normalizations and smoothness of the underlying regression function.  Readers should refer to \citet{Newey-1997}, \citet{Chen-2007}, or \citet{BCCK-2015} for a more in depth discussion of these assumptions.\footnote{In practice, we recommend the use of B-splines in order to to satisfy the first requirement that the basis functions are weakly positive and to reduce instability of the convex optimization programs described in \eqref{eq:gamma-j-estimating-equation}-\eqref{eq:alpha-j-estimating-equation}.} 

Under these assumptions, the variance of our second stage estimator is governed by one of the following variance matrices:
\begin{equation}
    \label{eq:omega-definitions}
    \begin{split}
        \tilde\Omega &:= Q^{-1}\E[\{p^k(x)\circ(\eps^k + r_k)\}\{p^k(x)\circ(\eps^k + r_k)\}' ]Q^{-1} \\
        \Omega_0 &:= Q^{-1}\E[\{p^k(x)\circ\eps^k\}\{p^k(x)\circ\eps^k\}']Q^{-1}\\ 
    \end{split}
\end{equation}
where \(\circ\) represents the Hadamard (element-wise) product and, abusing notation, for a vector \(a \in \SR^k\) and scalar \(c\in \SR\) we let \(a + c = (a_i + c)_{i=1}^k\). Later on, we establish the validity of the plug-in analog \(\hat\Omega\) \eqref{eq:omega-hat-definitions}, as an estimator of these matrices.

\begin{theorem}[Pointwise Normality]
    \label{thm:pointwise-normality}
    Suppose that \Cref{cond:no-effect} and \Cref{assm:second-stage-assumptions} hold. In addition suppose that \(\xi_k^2\log k/n\to 0\). Then so long as either the logistic propensity score model or linear outcome regression model is correctly specified, for any \(\alpha\in S^{k-1}\): 
    \begin{equation}
        \label{eq:thm-pointwise-normality-1}
        \sqrt{n}\frac{\alpha'(\widehat\beta^k-\beta^k)}{\|\alpha'\Omega^{1/2}\|} \to_d N(0,1) 
    \end{equation}
    where generally \(\Omega = \tilde\Omega\) but if  \(\ell_kc_k \to 0\) then we can set  \(\Omega = \Omega_0\). Moreover, for any  \(x\in\calX\) and  \(s(x) := \Omega^{1/2}p^k(x)\),
    \begin{equation}
        \label{eq:thm-pointwise-normality-2}
        \sqrt{n}\frac{p^k(x)'(\widehat\beta^k - \beta^k)}{\|s(x)\|} \to_d N(0,1) 
    \end{equation}
    and if the approximation error is negligible relative to the estimation error, namely \(\sqrt{n}r_k(x) = o(\|s(x)\|)\), then
    \begin{equation}
        \label{eq:thm-pointwise-normality-3}
        \sqrt{n}\frac{\widehat g(x) - g(x)}{\|s(x)\|}\to_d N(0,1) 
    \end{equation}
\end{theorem}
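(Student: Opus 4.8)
The plan is to adapt the least-squares series arguments of \citet{BCCK-2015}: extract an asymptotically linear ``oracle'' term from $\widehat\beta^k - \beta^k$, show that the first-stage estimation error and the replacement of $\widehat Q$ by $Q$ are asymptotically negligible, and then apply a triangular-array central limit theorem to the oracle term. First I would record the exact algebraic decomposition. Using \eqref{eq:beta-k} and the representation $Y(\bar\pi_j,\bar m_j) = p^k(X)'\beta^k + r_k(X) + \eps_j$ from \eqref{eq:second-stage-setup} — so that $\E_n[p_j(X)Y(\bar\pi_j,\bar m_j)]$ is the $j$-th row of $\widehat Q\beta^k$ plus $\E_n[p_j(X)(\eps_j + r_k(X))]$ — one obtains
\[
    \widehat\beta^k - \beta^k \;=\; \widehat Q^{-1}\vB^k \;+\; \widehat Q^{-1}\,\E_n\big[p^k(X)\circ(\eps^k + r_k)\big],
\]
where $\vB^k$ is the bias vector \eqref{eq:beta-means} and $\eps^k + r_k$ denotes the $k$-vector with $j$-th entry $\eps_j + r_k(X)$. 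The second summand is the oracle term; I will show the first is negligible after contraction with any $\alpha\in S^{k-1}$.

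I would then dispose of three ``nuisance'' pieces. \emph{(a) First-stage bias.} \Cref{cond:no-effect} gives $\|\vB^k\| \le \sqrt{k}\,\max_{j}|\vB^k_j| = o_p(n^{-1/2})$, so $\sqrt n\,|\alpha'\widehat Q^{-1}\vB^k| \le \sqrt n\,\|\widehat Q^{-1}\|\,\|\vB^k\| = o_p(1)$ once $\|\widehat Q^{-1}\| = O_p(1)$, and dividing by $\|\alpha'\Omega^{1/2}\|$ (bounded away from zero, see (c)) preserves this. \emph{(b) Design replacement.} Under $\xi_k^2\log k/n \to 0$ a matrix-concentration bound gives $\|\widehat Q - Q\| = o_p(1)$, hence by \Cref{assm:second-stage-assumptions}(i) the eigenvalues of $\widehat Q$ are bounded away from $0$ and $\infty$ w.p.a.1, and one needs $\sqrt n\,\alpha'(\widehat Q^{-1} - Q^{-1})\E_n[p^k(X)\circ(\eps^k + r_k)] = o_p(1)$; as in \citet{BCCK-2015} this calls for a truncation-and-conditioning argument rather than a crude Cauchy--Schwarz bound, which is too lossy at this order. \emph{(c) Variance normalization.} Using \Cref{assm:second-stage-assumptions}(i)--(ii), the orthogonality $\E[p_j(X)r_k(X)] = 0$ (since $g_k$ is the $L^2$-projection of $g_0$), and the dependence structure of $(\eps_1,\dots,\eps_k)$, both $\tilde\Omega$ and $\Omega_0$ have eigenvalues bounded above and away from zero uniformly in $k$; and when $\ell_kc_k\to 0$, $\|r_k\|_{L,\infty}\to 0$ forces $\|\tilde\Omega - \Omega_0\|\to 0$, so the self-normalized statistic is unaffected by replacing $\tilde\Omega$ with $\Omega_0$.

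With these reductions the oracle term is $\sqrt n\,\alpha'Q^{-1}\E_n[p^k(X)\circ(\eps^k+r_k)] = n^{-1/2}\sum_{i=1}^n W_{n,i}$ with $W_{n,i} := (Q^{-1}\alpha)'\big(p^k(X_i)\circ(\eps^k_i + r_k(X_i))\big)$, an i.i.d.\ triangular array with $\E[W_{n,i}] = 0$ (by $\E[\eps_j\mid X]=0$ and $\E[p_j(X)r_k(X)]=0$) and $\Var(W_{n,i}) = \alpha'\tilde\Omega\alpha =: \sigma_{n,\alpha}^2 \asymp 1$. I would verify the Lindeberg condition from the bound $|W_{n,i}| \lesssim \xi_k(\bar\eps_k + \ell_kc_k)$ together with \Cref{assm:second-stage-assumptions}(iv) (in conjunction, if needed, with the uniform subgaussian tail of the $\eps_j$), which is precisely calibrated to kill $\sigma_{n,\alpha}^{-2}\E[W_{n,1}^2\bm{1}\{|W_{n,1}| > \delta\sqrt n\,\sigma_{n,\alpha}\}]$; this yields \eqref{eq:thm-pointwise-normality-1}. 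Statement \eqref{eq:thm-pointwise-normality-2} follows by specializing to $\alpha = p^k(x)/\|p^k(x)\|$, for which $\|\alpha'\Omega^{1/2}\| = \|s(x)\|/\|p^k(x)\|$ and the factors $\|p^k(x)\|$ cancel. Statement \eqref{eq:thm-pointwise-normality-3} then follows from the identity $\widehat g(x) - g_0(x) = p^k(x)'(\widehat\beta^k - \beta^k) - r_k(x)$, the hypothesis $\sqrt n\,r_k(x) = o(\|s(x)\|)$, and Slutsky's theorem.

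The hard part will be step (b) together with the variance bounds in (c): controlling the interaction between $\widehat Q^{-1} - Q^{-1}$ and the empirical process of the noise with a growing number of basis terms — which is exactly where $\xi_k^2\log k/n\to 0$ is spent — and, more novel relative to \citet{BCCK-2015}, handling the fact that the oracle term now aggregates $k$ \emph{distinct and dependent} error terms $\eps_1,\dots,\eps_k$ (one per basis coordinate, each with its own pair of nuisance estimates), so that every moment and eigenvalue bound must hold uniformly in $j$. This is precisely the role of the uniform-in-$j$ strengthening built into \Cref{assm:second-stage-assumptions}, and in the special case where the logistic model is correct and the limiting outcome regressions are uniformly bounded conditional on $Z$ it collapses back to the \citet{BCCK-2015} conditions.
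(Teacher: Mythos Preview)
Your proposal is correct and follows essentially the same route as the paper: the decomposition into first-stage bias, oracle term, and design-replacement remainder is exactly the content of the paper's Pointwise Linearization lemma (Lemma~\ref{lemma:pointwise-linearization}), and the Lindeberg verification via the Hadamard bound $|\alpha'(p^k\circ(\eps^k+r_k))|\le(|\alpha|'p^k)(\bar\eps_k+\ell_kc_k)$ together with \Cref{assm:second-stage-assumptions}(iv) is precisely how the paper proceeds.

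Two small clarifications. First, in step~(b) the paper does \emph{not} truncate: it simply conditions on $X=(x_1,\dots,x_n)$, notes that $\alpha'[\widehat Q^{-1}-I]\mathbb{G}_n[p^k(x)\circ\eps^k]$ then has conditional mean zero and conditional variance bounded by $\bar\sigma^2\|\widehat Q^{-1}-I\|^2\,O_p(1)$, and applies Chebyshev plus the matrix LLN to get the $\sqrt{\xi_k^2\log k/n}$ rate; truncation only enters in the \emph{uniform} linearization (Lemma~\ref{lemma:uniform-linearization}). Second, your claim in~(c) that $\tilde\Omega$ and $\Omega_0$ have eigenvalues bounded \emph{above} is not needed for the theorem and does not obviously follow from \Cref{assm:second-stage-assumptions}(ii), which bounds only $\Var(\eps_j\mid X)$ and not $\E[\bar\eps_k^2\mid X]$; the paper uses only the lower bound $\Omega\ge\Omega_0\ge\underline\sigma^2 Q^{-1}$, which suffices both to keep $\|\alpha'\Omega^{1/2}\|$ bounded away from zero and, together with $\|\tilde\Omega-\Omega_0\|\lesssim \ell_k^2c_k^2\to 0$, to justify swapping $\tilde\Omega$ for $\Omega_0$.
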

\Cref{thm:pointwise-normality} shows that the estimator proposed in \Cref{sec:setup} has a limiting gaussian distribution even under misspecification of either first stage model. This allows for doubly-robust pointwise inference after establishing a consistent variance estimator.

\subsection{Uniform Convergence}%
\label{subsec:uniform-inference}

Next, we turn to strengthening the pointwise results to hold uniformly over all points \(x\in\calX\). This requires stronger conditions. we make the following assumptions on the tail behavior of the error terms which strengthens \Cref{assm:second-stage-assumptions}. 

\begin{assumption}[Uniform Limit Theory]
    \label{assm:uniform-limit-theory}
    Let \(\bar\eps_k = \sup_{1\leq j\leq k} |\eps_j|\), \(\alpha(x) := p^k(x)/\|p^k(x)\|\), and let 
    \[
        \xi_k^L := \sup_{\substack{x,x'\in\calX \\ x\neq x'}} \frac{\|\alpha(x) - \alpha(x')\|}{\|x-x'\|} 
    .\] 
    Further for any integer \(s\) let \(\bar\sigma_k^s = \sup_{x\in\calX}\E[|\bar\eps_k|^s|X=x]\). For some \(m > 2\)  assume
    \begin{enumerate}[(i)]
        \item The regression errors satisfy \(\sup_{x\in\calX} \E[\max_{1 \leq i \leq n} |\bar\eps_{k,i}|^m \mid X = x] \lesssim_P n^{1/m}\)
        \item The basis functions are such that (a) \(\xi_k^{2m/(m-2)}\log k/n \lesssim 1\), (b) \((\bar\sigma_k^2\vee\bar\sigma_k^m)\log\xi_k^L \lesssim \log k\), and  (c) \(\log\bar\sigma_k^m\xi_k \lesssim \log k\).
    \end{enumerate}
\end{assumption}

As before, \Cref{assm:uniform-limit-theory} is very similar to its analogue in \citet{BCCK-2015}, with the modification that the conditions are required to hold for \(\bar\eps_k\) as opposed to \(\eps_k\). Under this assumption, we derive doubly-robust uniform rates of convergence uniform inference procedures for the conditional counterfactual outcome  \(g_0(x)\).
 
\begin{theorem}[Strong Approximation by a Gaussian Process]
    \label{thm:strong-approximation}
    Assume that \Cref{cond:no-effect} holds and that Assumptions~\ref{assm:second-stage-assumptions}-\ref{assm:uniform-limit-theory} hold with \(m\geq 3\). In addition assume that (i) \(\bar R_{1n} = o_p(a_n^{-1})\) and (ii) \(a_n^6k^4\xi_k^2(\bar\sigma_k^3 + \ell_k^3c_k^2)^2\log^2 n/n\to 0\) where
    \begin{align*}
            \bar R_{1n} := \sqrt{\frac{\xi_k^2\log k}{n}}(n^{1/m}\sqrt{\log k} + \sqrt{k}\ell_k c_k) \andbox
            \bar R_{2n} := \sqrt{\log k}\cdot\ell_kc_k
    \end{align*}
    Then so long as either the propensity score model or outcome regression model is correctly specified, for some  \(\calN_k\sim N(0,I_k)\):
    \begin{equation}
        \label{eq:uniform-normality-beta}
        \sqrt{n}\frac{\alpha(x)'(\widehat\beta-\beta)}{\|\alpha(x)'\Omega^{1/2}\|} =_d \frac{\alpha(x)'\Omega^{1/2}}{\|\alpha(x)'\Omega^{1/2}\|}N_k + o_p(a_n^{-1})\;\;\text{in }\ell^\infty(\calX)  
    \end{equation}
    so that for \(s(x) := \Omega^{1/2}p^k(x)\)
    \begin{equation}
        \label{eq:uniform-normality-intermediate}
        \sqrt{n}\frac{p^k(x)'(\widehat\beta-\beta)}{\|s(x)\|} =_d \frac{s(x)}{\|s(x)\|}N_k + o_p(a_n^{-1})\;\;\text{in }\ell^\infty(\calX)  
    \end{equation}
    and if \(\sup_{x\in\calX} \sqrt{n}|r_k(x)|/\|s(x)\| = o(a_n^{-1})\), then
    \begin{equation}
        \label{eq:uniform-normality-ghat}
        \sqrt{n}\frac{\widehat g(x) - g(x)}{\|s(x)\|} =_d \frac{s(x)'}{\|s(x)\|}\calN_k + o_p(a_n^{-1})\;\;\text{in }\ell^\infty(\calX)  
    \end{equation}
    where in general we take \(\Omega = \tilde\Omega\) but if  \(\bar R_{2n} = o_p(a_n^{-1})\) then we can set  \(\Omega = \Omega_0\) where  \(\tilde\Omega\) and  \(\Omega_0\) are as in  \eqref{eq:omega-definitions}.
\end{theorem}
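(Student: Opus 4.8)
The plan is to adapt the least-squares series analysis of \citet{BCCK-2015}, the two new ingredients being the first-stage bias vector $\vB^k$ from \eqref{eq:beta-means} and the fact that the second-stage signals carry $k$ distinct error terms $\eps_1,\dots,\eps_k$ (with strong mutual dependence) rather than a single error. \textbf{Step 1 (exact linearization).} Substituting $Y(\bar\pi_j,\bar m_j) = p^k(X)'\beta^k + r_k(X) + \eps_j$ from \eqref{eq:second-stage-setup} into the definition \eqref{eq:beta-k} of $\widehat\beta^k$ and adding and subtracting $\E_n[p^k(X)Y(\bar\pi,\bar m)]$ yields the identity
\begin{equation*}
    \widehat\beta^k - \beta^k = \widehat Q^{-1}\,\E_n\!\big[p^k(X)\circ(\eps^k + r_k)\big] + \widehat Q^{-1}\vB^k,
\end{equation*}
with the convention $a + c = (a_i+c)_i$. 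On the event $\|\widehat Q - Q\| < \tfrac12\lambda_{\min}(Q)$, which has probability tending to one because $\|\widehat Q - Q\| = O_p(\sqrt{\xi_k^2\log k/n}) = o_p(1)$ by matrix concentration under the maintained conditions, one has $\|\widehat Q^{-1}\| = O_p(1)$; combined with \Cref{cond:no-effect} this gives $\|\sqrt n\,\widehat Q^{-1}\vB^k\| \le O_p(1)\sqrt{nk}\max_{j\le k}|\vB^k_j| = o_p(1)$, so the bias term is uniformly negligible after normalization. Writing $\widehat Q^{-1} = Q^{-1} - Q^{-1}(\widehat Q - Q)\widehat Q^{-1}$ splits off a further remainder of order $\sqrt{\xi_k^2\log k/n}\cdot\sqrt{k/n}(\bar\sigma_k + \ell_kc_k)$, using $\E\|\E_n[p^k(X)\circ(\eps^k+r_k)]\|^2 \lesssim (\bar\sigma_k^2 + \ell_k^2c_k^2)k/n$ (which follows because $\E[p^k(X)\circ\eps^k] = 0$ and $\E[p^k(X)r_k(X)] = 0$, both consequences of $\beta^k = Q^{-1}\E[p^k(X)Y_1]$ and the definition of $r_k$, and because the eigenvalues of $Q$ are bounded), and this remainder is absorbed into the $\bar R_{1n}$-type bound under hypothesis (ii).

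\textbf{Step 2 (Gaussian coupling of the leading term).} It remains to treat $\sqrt n\,Q^{-1}\E_n[p^k(X)\circ(\eps^k+r_k)] = n^{-1/2}\sum_{i=1}^n V_i$ with $V_i := Q^{-1}\big(p^k(X_i)\circ(\eps^k_i + r_k(X_i))\big)$ i.i.d., mean zero, and $\Var(V_i) = \tilde\Omega$. \Cref{assm:uniform-limit-theory}(i) gives $\max_{i\le n}|\bar\eps_{k,i}| = O_p(n^{1/m})$, hence $\max_i\|V_i\| = O_p(\xi_k(n^{1/m} + \ell_kc_k))$, while \Cref{assm:second-stage-assumptions} and the moment bounds on $\bar\eps_k$ control $\sup_{x}\E|\alpha(x)'V_i|^s$. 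Applying the coupling inequality for suprema of empirical processes of \citet{BCCK-2015} (a Yurinskii-type argument, valid because $m\ge 3$ and because $\log\xi_k^L\lesssim\log k$ under \Cref{assm:uniform-limit-theory}(ii)) one constructs $\calN_k\sim N(0,I_k)$ on the same probability space with
\begin{equation*}
    \sup_{x\in\calX}\left|\frac{\alpha(x)'\big(n^{-1/2}\sum_{i=1}^n V_i\big)}{\|\alpha(x)'\tilde\Omega^{1/2}\|} - \frac{\alpha(x)'\tilde\Omega^{1/2}}{\|\alpha(x)'\tilde\Omega^{1/2}\|}\calN_k\right| = O_p(\bar R_{1n}) + o_p(a_n^{-1}),
\end{equation*}
the empirical-process fluctuation contributing $O_p(\bar R_{1n})$ and the Yurinskii coupling error being controlled by hypothesis (ii); both are $o_p(a_n^{-1})$ under (i)--(ii). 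Together with Step 1 this is \eqref{eq:uniform-normality-beta} with $\Omega = \tilde\Omega$, and \eqref{eq:uniform-normality-intermediate} is the pure rescaling obtained from $p^k(x) = \|p^k(x)\|\alpha(x)$ and $s(x) = \|p^k(x)\|\,\Omega^{1/2}\alpha(x)$, under which the factor $\|p^k(x)\|$ cancels from numerator and denominator.

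\textbf{Step 3 ($\Omega_0$ refinement and transfer to $\widehat g$).} If instead $\bar R_{2n} = \sqrt{\log k}\,\ell_kc_k = o_p(a_n^{-1})$, replace $V_i$ by $V_i^0 := Q^{-1}(p^k(X_i)\circ\eps^k_i)$, which has covariance $\Omega_0$, and observe that the discarded piece $\sup_{x\in\calX}\big|n^{-1/2}\sum_i\alpha(x)'Q^{-1}p^k(X_i)r_k(X_i)\big|$ is a mean-zero empirical process in $x$ with envelope $O(\xi_k\ell_kc_k)$ and variance $O(\ell_k^2c_k^2)$, so a maximal inequality over $\{\alpha(x):x\in\calX\}$ (again using $\log\xi_k^L\lesssim\log k$) bounds it by $O_p(\bar R_{2n}) = o_p(a_n^{-1})$; this yields \eqref{eq:uniform-normality-beta}--\eqref{eq:uniform-normality-intermediate} with $\Omega = \Omega_0$. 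Finally $\widehat g(x) - g_0(x) = p^k(x)'(\widehat\beta^k - \beta^k) - r_k(x)$, and under the stated condition $\sup_{x\in\calX}\sqrt n|r_k(x)|/\|s(x)\| = o(a_n^{-1})$ the bias term drops out uniformly, giving \eqref{eq:uniform-normality-ghat}.

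\textbf{Main difficulty.} The heart of the argument is Step 2: executing the Gaussian coupling for $k$-dimensional partial sums whose increments are only as light as the $m$-th moment permits — so the running maximum is merely $O_p(n^{1/m})$ rather than $O_p(\sqrt{\log n})$ — while keeping explicit track of the dependence on $k$, $\xi_k$, $\bar\sigma_k$ and $\ell_kc_k$ so that the coupling error collapses to $o_p(a_n^{-1})$ exactly under conditions (i)--(ii). The $k$ distinct errors $\eps_j$ do not raise the dimension of the coupling problem, but they force the covariance $\tilde\Omega$ (rather than a diagonal matrix) to be carried through every maximal inequality, and they are the reason \Cref{assm:second-stage-assumptions} and \Cref{assm:uniform-limit-theory} are imposed on $\bar\eps_k = \sup_{j\le k}|\eps_j|$ in place of a single error term.
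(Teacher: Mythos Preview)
Your proposal is correct and follows the same overall strategy as the paper: isolate the leading term \(n^{-1/2}\sum_i Q^{-1}\{p^k(X_i)\circ(\eps^k_i+r_k(X_i))\}\), control the linearization remainder uniformly in \(x\), and couple the leading sum to a Gaussian via Yurinskii. The paper packages the linearization step as a standalone uniform-linearization lemma, proving the \(\bar R_{1n}\) rate by a Dudley entropy argument conditional on the design, whereas you take the cruder route of bounding the \(\ell_2\) norm of \((\widehat Q^{-1}-Q^{-1})\E_n[p^k(X)\circ(\eps^k+r_k)]\) and using \(\|\alpha(x)\|=1\); your resulting rate \(\sqrt{\xi_k^2\log k/n}\cdot\sqrt{k}\,\bar\sigma_k\) is not literally \(\bar R_{1n}\), but it is indeed \(o(a_n^{-1})\) once one combines hypothesis~(ii) with the standing condition \(\xi_k^2\log k/n\to 0\). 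Beyond this, your Step~2 slightly conflates the two sources of error---the paper applies Yurinskii directly to the \(k\)-vector (no entropy over \(x\) needed, since \(\sup_x|\alpha(x)'v|\le\|v\|\)), with hypothesis~(ii) controlling the coupling and hypothesis~(i) controlling the linearization separately---but the substance is the same.
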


\Cref{thm:strong-approximation} establishes conditions under which we obtain a doubly-robust strong approximation of the empirical process \(x \mapsto \sqrt{n}(\widehat g(x) - g_0(x))\) by a Gaussian process. After establishing consistent estimation of the matrix \(\Omega\), this  strong approximation result allows us to show validity of the uniform confidence bands described in \Cref{sec:setup}.  
As noted by \citet{BCCK-2015}, this is distinctly different from a Donsker type weak convergence result for the estimator \(\widehat g(x)\) as viewed as a random element of \(\ell^\infty(X)\). In particular, the covariance kernel is left completely unspecified and in general need not be well behaved.

\subsection{Matrix Estimation and Uniform Inference}%
\label{subsec:uniform-bootstrap}

We establish that the estimator \(\widehat\Omega\) proposed in \eqref{eq:omega-hat-definitions} is a consistent estimator of the true limiting variance \(\Omega\), where  \(\Omega = \tilde\Omega\) in general but if  \(\bar R_{2n} = o_p(a_n^{-1})\) then \(\Omega = \Omega_0\). To do so, we rely on the second stage assumptions \Cref{assm:second-stage-assumptions,assm:uniform-limit-theory} as well as the following condition limiting the first stage estimation error passed on to the variance estimator \(\widehat\Omega\).

\begin{condition}[Variance Estimation]
    \label{cond:variance-estimation}
    Let \(m > 2\) be as in \Cref{assm:uniform-limit-theory}. Then, 
    \begin{equation}
        \label{eq:variance-condition}
        \xi_{k,\infty}\max_{1\leq j\leq k}\E_n[p_j(X)^2(Y(\widehat\pi_j, \widehat m_j) - Y(\bar\pi_j, \bar m_j))^2] = o_p(k^{-2}n^{-1/m})
    \end{equation}
\end{condition}

Via \Cref{thm:first-stage-convergence} we can establish \Cref{cond:variance-estimation} under \Cref{assm:logistic-model-convergence} as well as the additional sparsity bound\footnote{The sparsity bound \eqref{eq:variance-sparsity-bound} required for consistent variance estimation can be significantly sharpened if the researcher is willing to use a cross fitting procedure, using one sample to estimate the nuisance parameters and another to evaluate the aIPW signal. This is because one could more directly follow \citet{SC-2020} and control alternate quantities with bounds that converge more quickly to zero.}
\begin{equation}
    \label{eq:variance-sparsity-bound}
    \frac{\xi_{k,\infty}^5s_k^2k^2\ln(d_z)}{n^{(m-1)/m}} 
.\end{equation}

\begin{theorem}[Matrix Estimation]
    \label{thm:matrix-estimation}
    Suppose that Conditions~\ref{cond:no-effect} and \ref{cond:variance-estimation} and Assumptions~\ref{assm:second-stage-assumptions}-\ref{assm:uniform-limit-theory} hold. In addition, assume that \(\bar R_{1n} + \bar R_{2n} \lesssim (\log k)^{1/2}\). Then, so long as either the propensity score model or outcome regression model is correctly specified then for \(\widehat\Omega = \widehat{Q}^{-1}\widehat\Sigma \widehat{Q}^{-1}\):
    \begin{align*}
        \|\widehat\Omega - \Omega\| &\lesssim_P (v_n \vee \ell_kc_k)
        \sqrt{\frac{\xi_k^2\log k}{n}} = o(1)
    \end{align*}
\end{theorem}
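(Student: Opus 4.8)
The plan is a sandwich expansion of $\widehat\Omega-\Omega$ combined with a residual-error decomposition of $\widehat\Sigma$. Write $\Sigma := \E[\{p^k(X)\circ(\eps^k+r_k)\}\{p^k(X)\circ(\eps^k+r_k)\}']$, so that $\Omega=\tilde\Omega=Q^{-1}\Sigma Q^{-1}$ in the notation of \eqref{eq:omega-definitions}, and recall $\widehat\Omega=\widehat Q^{-1}\widehat\Sigma\,\widehat Q^{-1}$ with $\widehat\Sigma$ the empirical outer product of $p^k(X)\circ\widehat\eps^k$. First I would split
\[
\widehat\Omega-\Omega \;=\; (\widehat Q^{-1}-Q^{-1})\widehat\Sigma\,\widehat Q^{-1} \;+\; Q^{-1}(\widehat\Sigma-\Sigma)\widehat Q^{-1} \;+\; Q^{-1}\Sigma\,(\widehat Q^{-1}-Q^{-1}).
\]
A matrix Bernstein inequality for the i.i.d.\ summands $p^k(X_i)p^k(X_i)'$, each of operator norm at most $\xi_k^2$, together with the eigenvalue bounds of \Cref{assm:second-stage-assumptions}(i), gives $\|\widehat Q-Q\|\lesssim_P\sqrt{\xi_k^2\log k/n}=o(1)$; hence $\widehat Q$ is invertible with probability approaching one, $\|\widehat Q^{-1}\|\lesssim_P 1$, and $\|\widehat Q^{-1}-Q^{-1}\|\lesssim_P\sqrt{\xi_k^2\log k/n}$. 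Since $\|\Sigma\|$ is controlled under Assumptions~\ref{assm:second-stage-assumptions}--\ref{assm:uniform-limit-theory}, the first and third terms above are $\lesssim_P\sqrt{\xi_k^2\log k/n}$ once $\|\widehat\Sigma-\Sigma\|=o_p(1)$ is shown, and everything reduces to proving $\|\widehat\Sigma-\Sigma\|\lesssim_P(v_n\vee\ell_kc_k)\sqrt{\xi_k^2\log k/n}$.

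For the middle term I would substitute $\widehat\eps_j=(\eps_j+r_k(X))+\delta_j$, where, using \eqref{eq:second-stage-setup} and $\widehat g(x)=p^k(x)'\widehat\beta^k$,
\[
\delta_j \;=\; \delta_j^{(1)}+\delta_j^{(2)},\qquad \delta_j^{(1)}:=Y(\widehat\pi_j,\widehat m_j)-Y(\bar\pi_j,\bar m_j),\qquad \delta_j^{(2)}:=-\,p^k(X)'(\widehat\beta^k-\beta^k),
\]
the second summand being common to all $j$. Expanding the outer product, $\widehat\Sigma-\Sigma$ is the sum of (a) the statistical fluctuation $(\E_n-\E)[\{p^k\circ(\eps^k+r_k)\}\{p^k\circ(\eps^k+r_k)\}']$, (b) two symmetrized cross terms $\E_n[\{p^k\circ\delta\}\{p^k\circ(\eps^k+r_k)\}']$, and (c) the quadratic remainder $\E_n[\{p^k\circ\delta\}\{p^k\circ\delta\}']$. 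Term (a) is precisely the $\widehat\Sigma$-consistency step of \citet{BCCK-2015}: a self-normalized maximal inequality over the $k$ coordinates, fed by the moment bound on $\bar\eps_k$ in \Cref{assm:uniform-limit-theory}(i) and by $\|r_k\|_{L,\infty}\le\ell_kc_k$ from \Cref{assm:second-stage-assumptions}(iii), yields $\lesssim_P(\bar\sigma_k\vee\ell_kc_k)\sqrt{\xi_k^2\log k/n}$, which is within the advertised order; the $r_k$ contribution is dropped when one instead targets $\Omega_0$, which is licensed whenever $\bar R_{2n}=o_p(a_n^{-1})$ since $\|\tilde\Omega-\Omega_0\|\lesssim\ell_kc_k(\bar\sigma_k+\ell_kc_k)\lesssim\bar R_{2n}$.

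For terms (b)--(c) I would split $\delta=\delta^{(1)}+\delta^{(2)}$ once more. Each $\delta^{(1)}$ piece is written as $A'B/n$ and bounded by $\|A/\sqrt n\|_F\|B/\sqrt n\|_F$; this introduces the factor $\big(\max_j\E_n[p_j^2(\delta_j^{(1)})^2]\big)^{1/2}$, which \Cref{cond:variance-estimation} --- obtained from \eqref{eq:second-moment-convergence} under the sparsity bound \eqref{eq:variance-sparsity-bound} --- makes $o_p(\xi_{k,\infty}^{-1/2}k^{-1}n^{-1/(2m)})$, while the companion factor is $\lesssim_P\bar\sigma_k\vee\ell_kc_k$ by another maximal inequality, so after the product these pieces are of smaller order than the term in (a). The $\delta^{(2)}$ pieces couple the second-stage coefficient error $\widehat\beta^k-\beta^k$ --- itself driven by the first-stage error already controlled in \Cref{cond:no-effect} --- against the conditionally mean-zero residual $\eps^k+r_k$; here one uses $\sum_{j=1}^kp_j(X)^2=\|p^k(X)\|^2\le\xi_k^2$ to avoid a factor $k$, and then the mean-zero structure of $\eps^k$ given $X$, so that the leading contribution is a first-order negligible cross term between estimation noise and regression noise, controlled by the crude second-stage $L^2$ rate for $\widehat\beta^k$ available from \Cref{sec:additional-second-stage} under the side condition $\bar R_{1n}+\bar R_{2n}\lesssim(\log k)^{1/2}$ (which also gives $v_n=o(1)$ and $\|\widehat\Sigma\|\lesssim_P 1$). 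Combining (a)--(c) with the $\widehat Q$ terms yields the stated bound, and the right-hand side is $o(1)$ because $\xi_k^2\log k/n\to 0$.

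\emph{Main obstacle.} The delicate part is the cross-term accounting in (b)--(c). For the $\delta^{(1)}$ contribution one must check that the crude Cauchy--Schwarz/Frobenius split is not too lossy: the extra factor $k$ it introduces is exactly what the additional sparsity requirement \eqref{eq:variance-sparsity-bound} (via \Cref{thm:first-stage-convergence}) is calibrated to absorb, but confirming that the residual $o_p(\xi_{k,\infty}^{-1/2}n^{-1/(2m)}(\bar\sigma_k\vee\ell_kc_k))$ stays below $(v_n\vee\ell_kc_k)\sqrt{\xi_k^2\log k/n}$ leans on the growth restrictions on $\xi_k,\xi_{k,\infty}$ and on how $v_n$ relates to $\bar\sigma_k$. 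The $\delta^{(2)}$ contribution is harder still, because $\widehat\beta^k-\beta^k$ is correlated with $\eps^k$, so a norm bound alone (which only gives $\xi_k\|\widehat\beta^k-\beta^k\|$) does not suffice; one must genuinely exploit the conditional mean-zero property of $\eps^k$ given $X$, together with the second-stage linearization of $\widehat\beta^k$ and the side conditions $\bar R_{1n}+\bar R_{2n}\lesssim(\log k)^{1/2}$, to show this term is first-order negligible. Finally, every maximal inequality over $j=1,\dots,k$ must be run using only the $\bar\eps_k$ moment hypotheses of \Cref{assm:uniform-limit-theory} rather than coordinatewise conditions --- this is precisely where the uniform-over-$j$ strengthening of the \citet{BCCK-2015} assumptions, and the auxiliary $\bar R_{1n},\bar R_{2n}$ side conditions, enter.
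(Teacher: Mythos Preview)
Your decomposition is essentially the paper's: the paper writes $\widehat\eps^k=\Delta+\dot\eps^k$ with $\Delta=\widehat Y^k-\bar Y^k$ (your $\delta^{(1)}$), shows $\|\Sigma_1+\Sigma_2+\Sigma_3\|=o_p(1)$ from \Cref{cond:variance-estimation} via the same Cauchy--Schwarz/Frobenius split you describe, and then further splits $\dot\eps^k=(\eps^k+r_k)+\delta^{(2)}$ inside a second lemma. So the architecture matches.

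Two points where your plan drifts from what actually works. First, you over-engineer the $\delta^{(2)}$ piece: the paper does \emph{not} exploit any conditional mean-zero structure of $\eps^k$ against $\widehat\beta^k-\beta^k$. It simply bounds $\|\Sigma_{42}+\Sigma_{43}\|$ by $\max_i|\bar\eps_{k,i}+r_{k,i}|\cdot\max_i|p^k(X_i)'(\widehat\beta^k-\beta^k)|\cdot\|\widehat Q\|$, uses $\max_i|\bar\eps_{k,i}|\lesssim_P v_n$ from \Cref{assm:uniform-limit-theory}, and plugs in the uniform rate for $p^k(x)'(\widehat\beta^k-\beta^k)$ from the uniform-convergence theorem (this is exactly where $\bar R_{1n}+\bar R_{2n}\lesssim(\log k)^{1/2}$ is consumed). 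No decoupling of correlated noise is needed. Relatedly, your parenthetical ``which also gives $v_n=o(1)$'' is wrong: $v_n=\E[\max_i\bar\eps_{k,i}^2]^{1/2}$ is typically of order $n^{1/m}$ and appears in the final rate precisely because it is \emph{not} $o(1)$.

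Second, for term (a) the paper does not use a coordinatewise maximal inequality; it uses Rademacher symmetrization followed by Khinchin's matrix inequality in the Schatten norm, which after a self-bounding step produces the factor $(\E[\max_i|\bar\eps_{k,i}+r_{k,i}|^2])^{1/2}\asymp v_n+\ell_kc_k$. That is why the theorem's rate carries $v_n$ rather than the $\bar\sigma_k$ you wrote; a per-coordinate moment bound would not directly deliver this.
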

\Cref{thm:matrix-estimation} establishes that pointwise inference based on the test statistic described in \Cref{sec:setup}, obtained by replacing \(\Omega\) in \Cref{thm:pointwise-normality} with the consistent estimator \(\widehat\Omega\), is doubly-robust. Hypothesis tests based on the test statistic as well as pointwise confidence intervals for \(g_0(x)\) remain valid even if one of the first stage parameters is misspecified.

We now establish the validity of uniform inference based on the gaussian bootstrap critical values \(c_u^\star(1-\alpha)\) defined in \Cref{sec:setup}.

\begin{theorem}[Validity of Uniform Confidence Bands]
    \label{thm:uniform-confidence-bands}
    Suppose \Cref{cond:no-effect,cond:variance-estimation} are satisfied and \Crefrange{assm:second-stage-assumptions}{assm:uniform-limit-theory} hold with \(m\geq 4\). In addition suppose (i) \(R_{1n} + R_{2n} \lesssim \log^{1/2} n\), (ii)  \(\xi_k\log^2 n /n^{1/2-1/m} = o(1)\), (iii) \(\sup_{x\in\calX} |r_k(x)|/\|p^k(x)\| = o(\log^{-1/2} n)\), and (iv) \(k^4\xi_k^2(1 + l_k^3r_k^3)^2\log^5 n/n=o(1)\). Then, so long as either the propensity score model or outcome regression model is satisfied 
    \[
       \Pr\left(\sup_{x\in\calX} |\frac{\widehat g(x)-g(x)}{\widehat\sigma(x)}| \leq c^\star(1-\alpha)\right) = 1-\alpha + o(1)
    .\] 
    As a result, uniform confidence intervals formed in \eqref{eq:confidence-bands} satisfy 
    \[
        \Pr(g(x) \in [\underline i(x), \bar i(x)],\;\forall x\in\calX) = 1 - \alpha + o(1)
    .\] 
\end{theorem}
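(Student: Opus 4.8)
The plan is to run the standard three-part argument for uniform confidence bands built from series estimators --- strong approximation by a Gaussian process, operator-norm consistent estimation of the limiting covariance, and a Gaussian comparison plus anti-concentration step to pass from the infeasible Gaussian process to the bootstrap critical value --- with the two hard inputs, \Cref{thm:strong-approximation} and \Cref{thm:matrix-estimation}, already tailored to the misspecified, many-nuisance-estimator setting. Since both of those theorems carry the qualifier ``so long as either the propensity score or outcome regression model is correctly specified,'' the double robustness of the final statement is inherited and needs no separate treatment. Throughout write $\alpha(x) := p^k(x)/\|p^k(x)\|$, $s(x) := \Omega^{1/2}p^k(x)$, $\sigma(x) := \|s(x)\|/\sqrt n$ for the infeasible standard error, and $Z_k(x) := \alpha(x)'\Omega^{1/2}\calN_k / \|\alpha(x)'\Omega^{1/2}\|$ for the target Gaussian process, where $\Omega = \tilde\Omega$ in general and $\Omega = \Omega_0$ when $\bar R_{2n} = o_p(a_n^{-1})$, as in \eqref{eq:omega-definitions}.

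\textbf{Step 1 (feasible Studentized strong approximation).} First I would apply \Cref{thm:strong-approximation}, whose hypotheses follow from \Cref{cond:no-effect}, \Crefrange{assm:second-stage-assumptions}{assm:uniform-limit-theory} with $m \geq 4$, and conditions (i), (ii), (iv) here for the relevant slowly growing $a_n$, while (iii) forces the approximation bias $\sup_{x\in\calX}\sqrt n |r_k(x)|/\|s(x)\|$ to be $o_p(a_n^{-1})$. This gives \eqref{eq:uniform-normality-ghat}, i.e. $\sup_{x\in\calX}|(\widehat g(x)-g_0(x))/\sigma(x) - Z_k(x)| = o_p(a_n^{-1})$ for a sequence $a_n\to\infty$ that I would also take to satisfy $a_n^{-1}\sqrt{\log k}\to 0$. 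To replace $\sigma(x)$ by the feasible $\widehat\sigma(x) = \|\widehat\Omega^{1/2}p^k(x)\|/\sqrt n$, I would invoke \Cref{thm:matrix-estimation} --- using \Cref{cond:variance-estimation}, which \Cref{thm:first-stage-convergence} delivers under the sparsity bound \eqref{eq:variance-sparsity-bound} --- to get $\|\widehat\Omega-\Omega\| \lesssim_P \rho_n$ with $\rho_n := (v_n\vee\ell_kc_k)\sqrt{\xi_k^2\log k/n} = o(1)$. Since the eigenvalues of $\Omega$ are bounded away from $0$ and $\infty$ (from \Cref{assm:second-stage-assumptions}(i)--(ii) and the error-moment bounds), this yields $\sup_{x\in\calX}|\widehat\sigma(x)^2/\sigma(x)^2 - 1| \leq \|\widehat\Omega - \Omega\|/\lambda_{\min}(\Omega) \lesssim_P \rho_n$, hence $\sup_{x\in\calX}|\widehat\sigma(x)/\sigma(x) - 1| \lesssim_P \rho_n$. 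Because $\sup_{x\in\calX}|Z_k(x)| = O_p(\sqrt{\log k})$ by a maximal inequality using the Lipschitz bound $\xi_k^L$ and \Cref{assm:uniform-limit-theory}(ii)(b), and conditions (i), (ii), (iv) are exactly what make $\rho_n\sqrt{\log k} = o(a_n^{-1})$, multiplying through gives
\[
   T_n := \sup_{x\in\calX}\left|\frac{\widehat g(x)-g_0(x)}{\widehat\sigma(x)}\right| \;=\; \sup_{x\in\calX}|Z_k(x)| + o_p(a_n^{-1}) \;=:\; T_n^0 + o_p(a_n^{-1}).
\]

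\textbf{Step 2 (bootstrap covariance, anti-concentration, conclusion).} Conditionally on the data, the bootstrap process $\widehat Z_k^b(x) := \alpha(x)'\widehat\Omega^{1/2} N_k^b / \|\alpha(x)'\widehat\Omega^{1/2}\|$ is centered Gaussian and its covariance kernel differs from that of $Z_k$ only through $\widehat\Omega$ versus $\Omega$. By the Gaussian comparison inequality for suprema (as used by \citet{BCCK-2015}), $\sup_{x\in\calX}|\widehat Z_k^b(x)|$ and $T_n^0$ have Kolmogorov distance tending to zero at a rate governed by $\|\widehat\Omega-\Omega\|$ and $\log k$; in particular the conditional quantile obeys $c^\star(1-\alpha) = c^0(1-\alpha) + o_p(1)$, where $c^0(1-\alpha)$ is the deterministic $(1-\alpha)$-quantile of $T_n^0$. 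The anti-concentration inequality for suprema of centered Gaussian processes gives $\sup_t \Pr(|T_n^0 - t| \leq \varepsilon) \lesssim \varepsilon(1 + \E T_n^0) \lesssim \varepsilon\sqrt{\log k}$, so $T_n^0$ is atomless and stable under the $o_p(a_n^{-1})$ perturbation of Step 1 (since $a_n^{-1}\sqrt{\log k}\to 0$) and under the $o_p(1)$ perturbation of the critical value. Chaining these,
\[
   \Pr\left(T_n \leq c^\star(1-\alpha)\right) = \Pr\left(T_n^0 \leq c^0(1-\alpha)\right) + o(1) = 1-\alpha + o(1),
\]
which is the first display of the theorem. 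The second display is then immediate: by the definition of the bands in \eqref{eq:confidence-bands}, the event $\{g_0(x)\in[\underline i(x),\bar i(x)]\text{ for all }x\in\calX\}$ coincides with $\{T_n \leq c^\star(1-\alpha)\}$.

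\textbf{Main obstacle.} The substance is the bookkeeping in Steps 1--2: I must verify that \emph{every} perturbation introduced --- the $o_p(a_n^{-1})$ strong-approximation error, the Studentization error $\rho_n\sqrt{\log k}$ inflated by taking the supremum, the Gaussian-comparison error for the bootstrap covariance (a power of $\rho_n\log k$), and the residual bias in (iii) --- is negligible relative to the anti-concentration scale $1/(a_n\sqrt{\log k})$, and that conditions (i)--(iv) together with \Cref{assm:logistic-model-convergence} (feeding \Cref{thm:first-stage-convergence}, hence \Cref{cond:variance-estimation} and the rate $\rho_n$ of \Cref{thm:matrix-estimation}) are precisely calibrated to make this hold. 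The genuinely non-routine point, relative to \citet{BCCK-2015}, is that the limiting covariance kernel is unknown and may be nearly degenerate at some $x$, so the whole comparison must be routed through the \emph{estimated} kernel $\widehat\Omega$ rather than a population object; this is why the operator-norm rate in \Cref{thm:matrix-estimation}, not mere consistency, is what the argument consumes.
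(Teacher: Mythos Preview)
Your proposal is correct and follows essentially the same approach as the paper, which simply states that the result ``follows from the exact same steps as Theorem 3.5 in \citet{SC-2020} after establishing strong approximation by a gaussian process as in \Cref{thm:strong-approximation} and consistent variance estimation as in \Cref{thm:matrix-estimation}.'' You have in fact spelled out in more detail than the paper does what those steps are (feasible Studentization via the operator-norm rate from \Cref{thm:matrix-estimation}, Gaussian comparison for the bootstrap covariance, and anti-concentration for the Gaussian supremum), and your bookkeeping of which conditions (i)--(iv) control which error terms is aligned with the paper's intent.
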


In conjunction with \Cref{thm:first-stage-convergence}, \Cref{thm:pointwise-normality} and \Cref{thm:matrix-estimation}, \Cref{thm:uniform-confidence-bands} shows the validity of the uniform inference procedure described in \Cref{sec:setup}.

\section{Estimation of the Conditional Average Treatment Effect}
\label{sec:cate-wrapup}

Up to now, we have mainly focused on doubly-robust estimation and model-assisted inference for the function
\[
    g_0(x) = \E[Y_1 \mid X= x]
.\]
We conclude by noting that we can use a symmetric procedure to obtain model-assisted inference for the additional conditional counterfactual outcome
\[
    \tilde g_0(x) = \E[Y_0 \mid X = x]
.\]
To do so, we use the alternate aIPW signal 
\[
    Y_0(\pi_0,  m_0) = \frac{(1-D)Y}{1-\pi_0(Z)} + \left(\frac{1-D}{1-\pi_0(Z)} - 1\right)m_0(Z)
\] 
where as before the true value for \(\pi^\star_0(z) = \Pr(D = 1\mid Z = z)\) but now \( m_0^\star(z) = \E[Y \mid D= 0, Z= z]\). To estimate these nuisance models we again assume a logistic form for the propensity score model \(\pi_0(z) = \pi(z; \gamma^0)\) and a linear form for the outcome regression model \(m_0(z) = m(z, \alpha^0)\) as in \eqref{eq:nuisance-parameter-functional-forms} and use a separate estimation procedure for each basis term in our series approximation of \(\tilde g_0(x)\). The estimating equations we use to estimate each \(\gamma_j^0\) and  \(\alpha_j^0\) differ from those in  \eqref{eq:gamma-j-estimating-equation}-\eqref{eq:alpha-j-estimating-equation} however, and are instead given
\begin{align*}
    \widehat\gamma_j^0 &:= \arg\min_\gamma\, \E_n[p_j(X)\{(1-D)e^{\gamma'Z} - D\gamma'Z\}] + \lambda_{\gamma,j}\|\gamma\|_1 \\
    \widehat\alpha_j^0 &:= \arg\min_\alpha\, \E_n[p_j(Z)(1-D)e^{{\widehat{\gamma}_j^{0'}}Z}(Y - \alpha'Z)^2]/2 + \lambda_{\alpha,j}\|\alpha\|_1
\end{align*}
which under the natural analog of \Cref{assm:logistic-model-convergence} converge uniformly to population minimizers:
\begin{align*}
    \bar\gamma_j^0 &:= \arg\min_\gamma\, \E[p_j(X)\{(1-D)e^{\gamma'Z} - D\gamma'Z\}] \\
    \bar\alpha_j^0 &:= \arg\min_\alpha\, \E[p_j(Z)(1-D)e^{{\bar{\gamma}_j^{0'}}Z}(Y - \alpha'Z)^2] 
\end{align*}
Letting \(\bar\pi_{0,j}(z) = \pi(z, \bar\gamma^0_j)\), and \(\bar m_{0,j}(z) = m(z, \bar\alpha_j^0)\) we can repeat the decomposition of \Cref{sec:theory-overview}, expressing \(\tilde Y(\bar\pi_{0,j}, \bar m_{0,j})\) as functions of the parameters \(\bar\gamma_j^0\) and  \(\bar\alpha_j^0\) and show that the first order conditions for \(\bar\gamma_j^0\) and  \(\bar\alpha_j^0\) directly control the bias passed on to the second stage nonparametric estimator for \(\tilde g_0(x)\). Convergence rates and validity of inference then follow from symmetric analysis of the results in \Cref{sec:theory-overview,sec:first-stage}. Combining estimation and inference of the two conditional counterfactual outcomes then gives a doubly-robust estimator and inference procedure for the CATE. To perform inference on the CATE we can use the variance matrix
\[
    \bar\Omega = \Omega_0 + \Omega_1 - 2\Omega_2
\] 
where \(\Omega_0\) is as in \eqref{eq:omega-definitions} but \(\Omega_1\) and \(\Omega_2\) are given 
\begin{equation}
    \label{eq:omega12}
    \begin{split}
        \Omega_1 &= Q^{-1}\E[\{p^k(x)\circ\eps_0^k\}\{p^k(x)\circ\eps_0^k\}']Q^{-1} \\
        \Omega_2 &= Q^{-1}\E[\{p^k(x)\circ\eps^k\}\{p^k(x)\circ\eps_0^k\}']Q^{-1}
    \end{split}
\end{equation}
where \(\eps_{0,j}^k = Y_0(\bar\pi_{0,j},\bar m_{0,j}) - \tilde g_0(x)\) and  \(\eps_0^k = (\eps_{0,1}^k,\dots,\eps_{0,k}^k)'\). These matrices can be consistently estimated using their natural empirical analogs as in \eqref{eq:omega-hat-definitions}.

\section{Simulation Study}%
\label{sec:simulations}

We investigate the finite-sample performance of the doubly-robust estimator and inference procedure via simulation study. We find that our proposed estimation procedure retains good coverage properties even under misspecification.

\subsection{Simulation Design}

Observations are generated i.i.d. according to the following distributions
The error term is generated following \(\epsilon \sim N(0, 1) \). The controls are set \(Z_i = (Z_{1i},X_i) \in \SR^{d_z}\) where \(d_z= 100\), \(X\sim U(1,2)\), and the independent regressors \(Z_1 \) are jointly centered Gaussian with a covariance matrix of the Toeplitz form
\begin{align*}
    \mathrm{Cov}(Z_{1,j},Z_{1,k}) = \E [Z_{1,j} Z_{1,k}] = 2^{-|j-k|}, \ \ \ 3\leq j,k \leq d_z.
\end{align*}
To capture misspecification, we let \(Z^\dagger\) be a transformation of the regressors in \(Z_1\) where \(Z_j^\dagger = Z_j + \max (0,1+Z_j)^2, \ \forall \ j=3,\dots,d_z\). Let \texttt{sparsity} control the number of regressors in \(Z = (Z_1,X)\) entering the DGP.
\begin{enumerate}[label=(S\arabic*)]
    \item \textit{Correct specification}: Generate \(D\) given \(Z\) from a Bernoulli distribution with \(\Pr (D=1 | Z ) = \{ 1+\exp(p_1 - X - 0.5X^2 - \gamma'Z_1)  \}^{-1}\) and \(Y = D(1 + X + 0.5X^2 + \gamma'Z_1) + \epsilon.\)
    \item \textit{Propensity score model correctly specified, but outcome regression model misspecified}: Generate \(D\) given \(Z\) as in (S1), but \( Y = D(1 + X + 0.5X^2 + \gamma'Z_1^\dagger) + \epsilon. \)
    \item \textit{Propensity score model misspecified, but outcome regression model correctly specified}: Generate \(Y\) according to (S1), but generate \(D\) given \(Z\) from a Bernoulli distribution with \(\Pr (D=1 | Z ) = \{ 1+\exp(p_2 - X - 0.5X^2 + \gamma'Z_1^\dagger)  \}^{-1}\).
\end{enumerate}
where the constants \(p_1\) and \(p_2\) differ in various simulation setups but are always set so that the average probability of treatment is about one half. To consider various degrees of high-dimensionality, we implement \( N \in \{500, 1000\} \) with \(d_z = 100\). For (S1), \texttt{sparsity}\(=6\); for (S2), \texttt{sparsity}\(=4\); and, for (S3), \texttt{sparsity}\(=5\). Results are reported for \(S=1,000\) repeated simulations.

\subsection{Estimators and Implementation}

To select the first stage penalty parameters, we implement the multiplier bootstrap procedure described in \Cref{subsec:additional}. The constants \(c_{\gamma,j}\) and \(c_{\alpha,j}\) in the pilot penalty parameters \eqref{eq:pilot-penalty} are selected via cross validation from a set of size 5. To select the final bootstrap penalty parameter we set \(c_0 = 1.1\) and select the \(95^\text{\tiny th}\) quantile of \(B=10000\) bootstrap replications.
In our second-stage estimation, we use a b-spline basis of size \(k=3\).
B-splines are implemented from the R package \texttt{splines2} \citep{splines2-paper}, which uses the specification detailed in \cite{perperoglou2019review}. In the tables below, we refer to our method as \textit{MA-DML} (model assisted double machine learning).

We compare our proposed estimator and inference procedure to that of \citet{SC-2020}, which projects a single aIPW signal onto a growing series of basis terms. In implementing this \textit{DML} method, we use the standard \(\ell_1\)-penalized maximum likelihood (MLE) and ordinary least squares (OLS) loss functions to estimate the first stage propensity score and outcome regression models, respectively.\footnote{Vira Semenova provides several example \texttt{R} scripts implementing \textit{DML}: \url{https://sites.google.com/view/semenovavira/research}.}

Estimation error is studied for the target parameter \(g_0 (x)= \E [ Y| D=1, X=x]\) over a grid of 100 points spaced across \(x\in[1,2]\), i.e. the support of \( X \). We study average coverage across simulations of each method's pointwise (at \(x = 1.5 \)) and uniform confidence intervals.
To compare the estimation error for the target parameter \( g(x) \) across the two different estimators \( \widehat g_s (x) \) for each simulation \( s = 1,\dots, S \), we utilize integrated bias, variance, and mean-squared error where \( \Bar{g} (x) = S^{-1} \sum^S_{s=1} \widehat g_s (x), \)
\begin{align*}
    \mathrm{IBias}^2 &= \int_0^1 (\Bar{g} (x) - g_0 (x))^2 dx, \\
    \mathrm{IVar} &= S^{-1} \sum_{s=1}^S \int_0^1 (\widehat{g}_s (x) - \Bar{g} (x))^2 dx, \\
    \mathrm{IMSE} &= S^{-1} \sum_{s=1}^S \int_0^1 (\widehat{g}_s (x) - g_0 (x))^2 dx. \\
\end{align*}

\subsection{Simulation Results}

Table \ref{tab:simulation} presents the simulation results for all three specifications (S1)-(S3) for \(n=500\)  and \(n = 1000\). Integrated squared bias, variance, and mean squared error are presented in columns (1)-(3), respectively. Pointwise and uniform coverage results are presented in columns (4)-(7).

\begin{table}
\caption{Simulation study.} \label{tab:simulation}
\begin{center}
\begin{tabular}{ p{1cm} l c c c c c c c}
\hline \hline \\[-10pt]
 DGP & Estimator & IBias$^2$ & IVar & IMSE & Cov90 & Cov95 & UCov90 & UCov95 \\[1pt]
  &  & (1) & (2) & (3) & (4) & (5) & (6) & (7) \\[1ex]
 \hline
 & & \multicolumn{7}{c}{K=3, n=500, $d_z$ = 100} \\ \cline{3-9}
 \multirow{2}{1em}{(S1)} & DML & 0.04 & 0.31 & 0.35 & 0.92 & 0.96 & 1.00 & 1.00 \\
  & MA-DML & $\sim$0.0 & 0.34 & 0.34 & 0.93 & 0.97 & 1.00 & 1.00  \\[7pt]
 \multirow{2}{1em}{(S2)} & DML & 0.16 & 2.17 & 2.33 & 0.92 & 0.97 & 0.83 & 0.86 \\  
  & MA-DML & 0.03 & 2.12 & 2.15 & 0.90 & 0.94 & 0.88 & 0.91  \\ [7pt]
 \multirow{2}{1em}{(S3)} & DML & 0.03 & 0.55 & 0.59 & 0.87 & 0.93 & 0.95 & 0.97 \\  
  & MA-DML & 0.01 & 0.79 & 0.80 & 0.91 & 0.95 & 0.99 & 0.99 \\[1ex] \cline{3-9}
  & & \multicolumn{7}{c}{K=3, n=1000, $d_z$ = 100} \\ \cline{3-9}
 \multirow{2}{1em}{(S1)} & DML & 0.12 & 0.20 & 0.32 & 0.83 & 0.90 & 0.96 & 0.96 \\  
  & MA-DML & 0.01 & 0.22 & 0.23 & 0.83 & 0.90 & 0.99 & 0.99  \\[7pt]
 \multirow{2}{1em}{(S2)} & DML & 0.40 & 2.1 & 2.5 & 0.84 & 0.91 & 0.33 & 0.39 \\  
  & MA-DML & 0.19 & 2.07 & 2.26 & 0.83 & 0.89 & 0.50 & 0.55  \\ [7pt]
 \multirow{2}{1em}{(S3)} & DML & 0.11 & 0.34 & 0.46 & 0.74 & 0.82 & 0.80 & 0.84 \\  
  & MA-DML & 0.01 & 0.53 & 0.54 & 0.84 & 0.89 & 0.89 & 0.91  \\
 \hline\hline \\[-5pt]
 \multicolumn{9}{c}{\parbox{12.2cm}{\footnotesize Note: DGP refers to the three various data generating processes introduced above. IBias$^2$, IVar, and IMSE refer to integrated squared bias, variance, and mean squared error, respectively. Cov90, Cov95, UCov90, and UCov95 refer to the coverage proportion of the 90\% and 95\% pointwise and uniform confidence intervals across simulations. $K$ refers to the number of series terms, $N$ to the sample size, and $d_z$ to the dimensionality of the random variable $Z_1.$}} \\
\end{tabular}
\end{center}
\end{table}

For pointwise and uniform coverage under correct specification regime (S1), \textit{MA-DML} has some slight improvements. Under misspecification DGPs (S2) and (S3), the pointwise coverage of \textit{MA-DML} is closer to the targets except in the $N=1000$ and (S2) case where it slightly underperforms. However, \textit{MA-DML} has a notable improvement over \textit{DML} in the (S3) case when $N=1000.$ Similarly, \textit{MA-DML} outperforms \textit{DML} in three of the four misspecified regimes, i.e. all but (S3) when $N=500$ where \textit{MA-DML} has over-coverage. Under (S2) when $N=1000,$ both methods are markedly deterioated uniform coverage, although \textit{MA-DML} is noticably closer to target. 

In regards to estimation error, in four of the six settings, \textit{MA-DML} has a lower MSE than \textit{DML} where regardless of sample size \textit{MA-DML} underperforms in (S3). Notably, it does appear \textit{MA-DML} has substantially smaller IBias$^2$ across the DGPs.

Finally, we were surprised to find for both estimators that coverage properties, in general, improve under the higher-dimensional regime of $N=500$ with $d_z=100$ compared to $N=1,000$ and $d_z=100.$ In particular, with a higher ratio of covariates to observations, the uniform coverage properties under regime (S2) were substantially better. The estimation error results were in line with our priors as the higher-dimensional regime sees in general higher estimation errors for both methods.

For coverage under correct specification, we did anticipate the underperformance of \textit{MA-DML} given it is designed to handle misspecification with the cost of other estimators outperforming under correct specification. Additionally, we attribute the poor uniform coverage in DGP (S2) for both estimators under $N=1,000$ to a lack of a rich enough cross-validation given the performance was improved under a more difficult regime when the number of observations drops to $N=500.$ The integrated bias of \textit{MA-DML} is lower across the various DGPs compared to \textit{DML}. Following the discussion in \Cref{sec:theory-overview} this is expected since the first stage estimating equations for the model assisted procedure are specifically designed to minimize the bias passed on to the second stage estimator. However, the model assisted procedure has higher values of integrated variance compared to the standard procedure, which could be attributable to the use of \(k\) distinct first-stage estimations.

Our findings should not be interpreted as a critique of the \citet{SC-2020} benchmark method, whose work we rely on and were inspired by. 

\section{Empirical Application}
\label{sec:empirical}

We apply the model assisted estimator to estimate the effect of maternal smoking on infant birthweight conditional on the age of the mother. We use the \citet{Cattaneo_2010} dataset which can be found online on the Stata website.\footnote{The dataset can be downloaded \href{http://www.stata-press.com/data/r13/cattaneo2.dta}{here}.} The dataset describes each infant's birthweight in grams, \(Y\), whether or not the mother smoked during pregnancy, \(D=1\) indicating smoking, and a number of covariates containing information on the mother's health and socioeconomic background, \(Z = (X,Z_1)\), where \(X\) represents the conditioning variable, maternal age. A full summary of the data used as well as additional details/analysis from our empirical analysis can be found in \Cref{sec:empirical-details}.

We compare the model assisted estimator of the CATE against one where standard MLE and OLS loss functions are used to estimate the first stage propensity score and outcome regression models. We also qualitatively compare our results to \citet{Zimmert_Lechner_CATE}, who use a kernel based approach to estimate the CATE in this setting. While this sort of comparison is not perfect since we do not know the true DGP, this setting is advantageous for analysis since we strongly expect that (i) the effect of smoking on birthweight will be negative and (ii) this effect should grow stronger in magnitude as the age of the mother increases. These hypotheses have been corroborated by other work that examines the conditional average treatment effect in this setting \citep{Zimmert_Lechner_CATE,Abreya_2006,Lee_Ryo_Wang_2016}.

\subsection{Empirical Results}

\Cref{fig:emp1} displays our main results from implementing both the model assisted and standard MLE/OLS estimation procedures. After removing the top 3\% and bottom 3\% of smoker and non-smoker birthweights by maternal age, we select the penalty parameters for the first stage models via the bootstrap procedure described in \Cref{sec:first-stage}. The pilot penalty parameters are uniformly taken to be equal to zero, so that the residuals used in the bootstrap procedure are generated from non-regularized estimations. We take \(c_0 = 2\) in \eqref{eq:final-penalty-parameters} and and select the first stage penalty parameters using the 99\textsuperscript{th}, 95\textsuperscript{th}, and  90\textsuperscript{th} quantiles of the bootstrap distribution. For the second stage basis functions we implement second degree b-splines with 3 knots via the splines2 package in \texttt{R} \citep{splines2-paper}.

\begin{figure}[htp!]
    \centering
    \includegraphics[width=\linewidth]{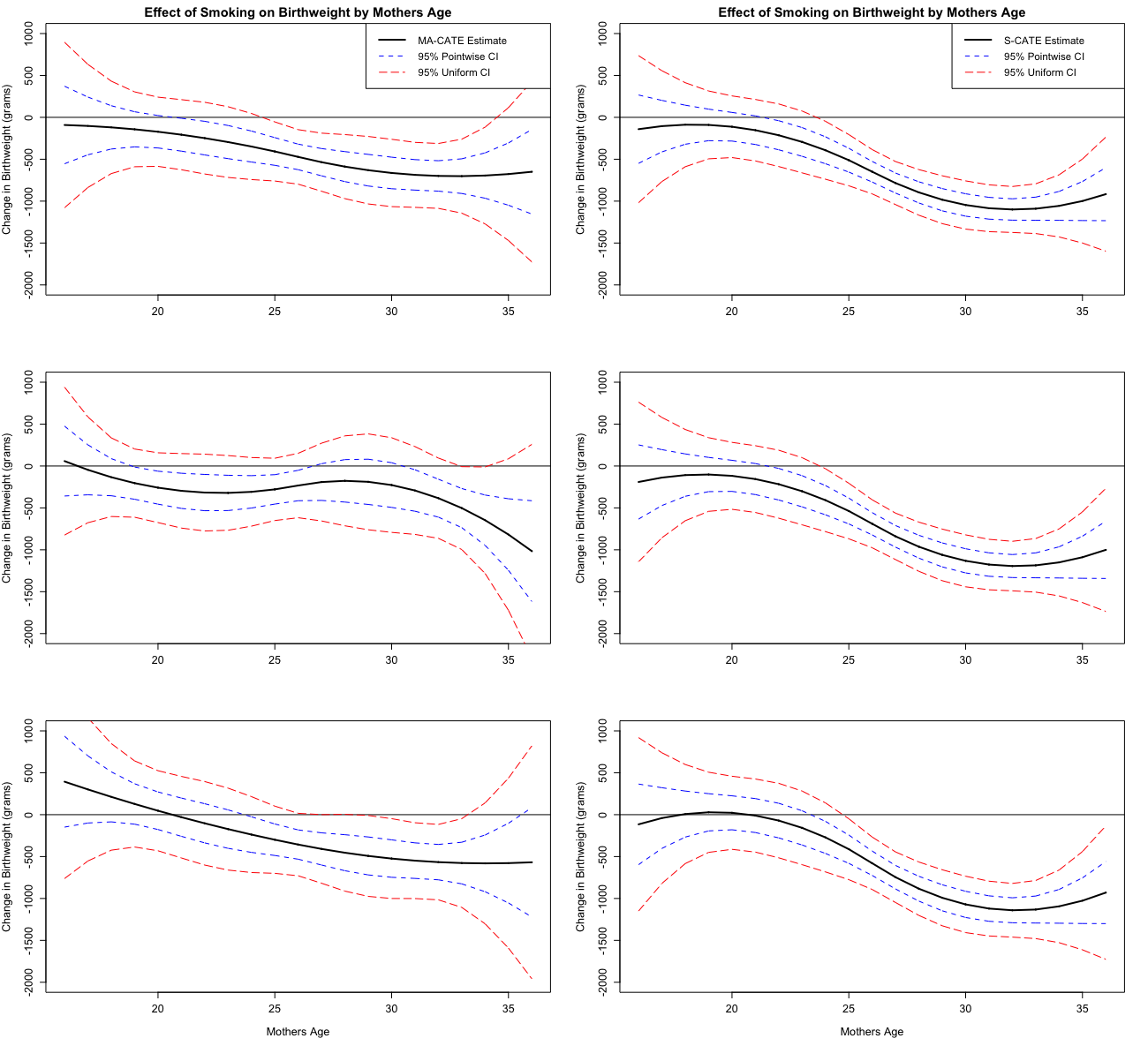}
    \caption{CATE of maternal smoking estimated using model assisted estimating equations (left) and standard MLE/OLS estimating equations (right). Top row uses the 99\textsuperscript{th} quantile of the bootstrap distribution to select the penalty parameters, second row uses 95\textsuperscript{th} quantile, and final row uses the 90\textsuperscript{th} quantile. Second stage is computed using b-splines of the second degree with 3 knots. 95\% pointwise confidence intervals are displayed in blue short dashes and 95\% uniform confidence bands are displayed in long red dashes.}%
    \label{fig:emp1}
\end{figure}

Consistent with prior work, both estimators of the CATE suggest that the effect of smoking on birthweight becomes more negative with age. Both estimation procedures also generally produces negative estimates for the CATE, but it should be noted that for the lowest levels of penalization the model assisted CATE estimate suggests a slightly positive effect of smoking for particularly young mothers, though this difference is not significantly different from zero. The shapes of the estimated functions remain relatively stable under various sizes of the penalty parameter, though the model assisted procedure displays a bit more sensitivity to the level of regularization introduced.\footnote{Numerically solving the minimization problems in \eqref{eq:gamma-j-estimating-equation}-\eqref{eq:alpha-j-estimating-equation} also typically requires more iterations to converge than solving the standard MLE/OLS minimization problems.} 

For the most part, the effects found here are similar to those found in \citet{Zimmert_Lechner_CATE}, though the effects estimated using standard first stage loss functions have somewhat larger magnitudes and in general both series estimation procedures seem to give less reasonable results on the boundaries. An advantage of using a series second stage however, compared to the kernel first stage of \citet{Zimmert_Lechner_CATE}, is the existence of the uniform confidence bands displayed. Reassuringly, the estimates of \citet{Zimmert_Lechner_CATE} seem to be within the 95\% uniform confidence bands generated by the model assisted estimator.

As a robustness check, we also try estimating the treatment effect using first degree b-splines instead of second degree splines. These results are displayed in \Cref{fig:empD1}. Again, we find that the effect of smoking on child birthweight is almost uniformly negative regardless of estimation procedure used or choice of penalty parameter. The shape of the estimated CATE function  using a standard MLE/OLS first stage is very stable to penalty choice here while the shape of the model assisted CATE function displays a bit more  instability here at the two lower levels of regularization.

\begin{figure}[htpb]
    \centering
    \includegraphics[width=\linewidth]{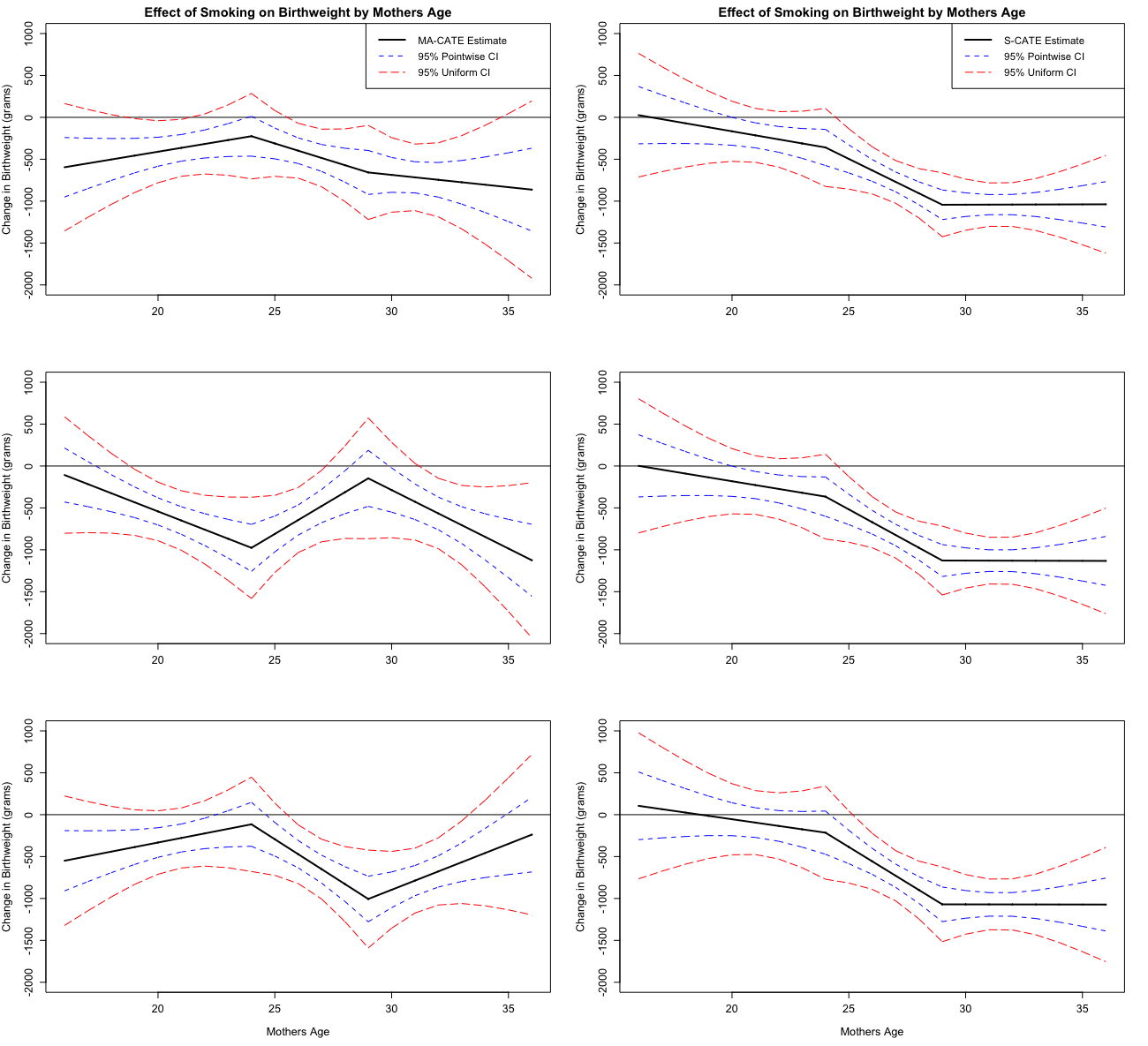}
    \caption{CATE of maternal smoking estimated using model assisted estimating equations (left) and standard MLE/OLS estimating equations (right). Top row uses the 99\textsuperscript{th} quantile of the bootstrap distribution to select the penalty parameters, second row uses 95\textsuperscript{th} quantile, and final row uses the 90\textsuperscript{th} quantile. Second stage is computed using b-splines of the first degree with 3 knots. 95\% pointwise confidence intervals are displayed in blue short dashes and 95\% uniform confidence bands are displayed in long red dashes.}%
    \label{fig:empD1}
\end{figure}
Finally, \Cref{tab:implied-ate} reports the smoothed average treatment effect estimates taken from averaging the model assisted CATE estimates from \Cref{fig:emp1} across observations. Again, these estimates are generally in line with prior work
\begin{table}[htpb]
    \centering
    \caption{Smoothed Model Assisted ATE Estimates}
    \vspace{0.1em}
    \label{tab:implied-ate}
    \begin{tabular}{c|ccc}
    \hline
    \hline
    Bootstrap Penalty Qt.& 99\textsuperscript{th} & 95\textsuperscript{th} & 90\textsuperscript{th}  \\
    \hline
     Implied ATE & -295.221 & -292.9086 & -453.2242 
    \end{tabular}
\end{table}

\section{Conclusion}%
\label{sec:conclusion}

Estimation of conditional average treatment effects with high dimensional controls typically relies on first estimating two nuisance parameters: a propensity score model and an outcome regression model. In a high-dimensional setting, consistency of the nuisance parameter estimators typically relies on correctly specifying their functional forms. While the resulting second-stage estimator for the conditional average treatment effect typically remains consistent even if one of the nuisance parameters is inconsistent, the confidence intervals may no longer be valid.

In this paper, we consider estimation and valid inference on the conditional average treatment effect in the presence of high dimensional controls and nuisance parameter misspecification. We present a nonparametric estimator for the CATE that remains consistent at the nonparametric rate, under slightly modified conditions, even under misspecification of either the logistic propensity score model or linear outcome regression model. The resulting Wald-type confidence intervals based on this estimator also provide valid asymptotic coverage under nuisance parameter misspecification.

\newpage
\singlespacing
\bibliography{bibtex/mlci.bib}

\begin{thebibliography}{}

\bibitem[\protect\citeauthoryear{Abrevaya}{Abrevaya}{2006}]{Abreya_2006}
Abrevaya, J. (2006).
\newblock Estimating the effect of smoking on birth outcomes using a matched
  panel data approach.
\newblock {\em Journal of Applied Econometrics\/}~{\em 21\/}(4), 489--519.

\bibitem[\protect\citeauthoryear{Bauer and Kohler}{Bauer and
  Kohler}{2019}]{BK-2019-neural-nets}
Bauer, B. and M.~Kohler (2019).
\newblock {On deep learning as a remedy for the curse of dimensionality in
  nonparametric regression}.
\newblock {\em The Annals of Statistics\/}~{\em 47\/}(4), 2261 -- 2285.

\bibitem[\protect\citeauthoryear{Belloni, Chen, Chernozhukov, and
  Hansen}{Belloni et~al.}{2012}]{BCCK-2012}
Belloni, A., D.~Chen, V.~Chernozhukov, and C.~Hansen (2012).
\newblock Sparse models and methods for optimal instruments with an application
  to eminent domain.
\newblock {\em Econometrica\/}~{\em 80\/}(6), 2369--2429.

\bibitem[\protect\citeauthoryear{Belloni and Chernozhukov}{Belloni and
  Chernozhukov}{2013}]{BC-2013}
Belloni, A. and V.~Chernozhukov (2013).
\newblock {Least squares after model selection in high-dimensional sparse
  models}.
\newblock {\em Bernoulli\/}~{\em 19\/}(2), 521 -- 547.

\bibitem[\protect\citeauthoryear{Belloni, Chernozhukov, Chetverikov, Hansen,
  and Kato}{Belloni et~al.}{2018}]{belloni2018highdimensional}
Belloni, A., V.~Chernozhukov, D.~Chetverikov, C.~Hansen, and K.~Kato (2018).
\newblock High-dimensional econometrics and regularized gmm.

\bibitem[\protect\citeauthoryear{Belloni, Chernozhukov, Chetverikov, and
  Kato}{Belloni et~al.}{2015}]{BCCK-2015}
Belloni, A., V.~Chernozhukov, D.~Chetverikov, and K.~Kato (2015).
\newblock Some new asymptotic theory for least squares series: Pointwise and
  uniform results.
\newblock {\em Journal of Econometrics\/}~{\em 186\/}(2), 345--366.
\newblock High Dimensional Problems in Econometrics.

\bibitem[\protect\citeauthoryear{Belloni, Chernozhukov, and Hansen}{Belloni
  et~al.}{2013}]{BCH-2013}
Belloni, A., V.~Chernozhukov, and C.~Hansen (2013, 11).
\newblock Inference on treatment effects after selection among high-dimensional
  controls.
\newblock {\em The Review of Economic Studies\/}~{\em 81\/}(2), 608--650.

\bibitem[\protect\citeauthoryear{Bickel, Klaassen, Ritov, and Wellner}{Bickel
  et~al.}{1993}]{bickel1993efficient}
Bickel, P., C.~Klaassen, Y.~Ritov, and J.~Wellner (1993).
\newblock {\em Efficient and Adaptive Estimation for Semiparametric Models}.
\newblock Johns Hopkins series in the mathematical sciences. Johns Hopkins
  University Press.

\bibitem[\protect\citeauthoryear{Bickel, Ritov, and Tsybakov}{Bickel
  et~al.}{2009}]{BRT-2009}
Bickel, P.~J., Y.~Ritov, and A.~B. Tsybakov (2009).
\newblock {Simultaneous analysis of Lasso and Dantzig selector}.
\newblock {\em The Annals of Statistics\/}~{\em 37\/}(4), 1705 -- 1732.

\bibitem[\protect\citeauthoryear{B{\"u}hlmann and van~de Geer}{B{\"u}hlmann and
  van~de Geer}{2011}]{Bulmann-VanDeGeer-2011}
B{\"u}hlmann, P. and S.~van~de Geer (2011).
\newblock {\em Statistics for high-dimensional data}.
\newblock Springer Series in Statistics. Springer, Heidelberg.
\newblock Methods, theory and applications.

\bibitem[\protect\citeauthoryear{Cattaneo}{Cattaneo}{2010}]{Cattaneo_2010}
Cattaneo, M. (2010).
\newblock Efficient semiparametric estimation of multi-valued treatment effects
  under ignorability.
\newblock {\em Journal of Econometrics\/}~{\em 155\/}(2), 138--154.

\bibitem[\protect\citeauthoryear{Chen}{Chen}{2007}]{Chen-2007}
Chen, X. (2007).
\newblock Large sample sieve estimation of semi-nonparametric models.
\newblock In J.~Heckman and E.~Leamer (Eds.), {\em Handbook of Econometrics\/}
  (1 ed.), Volume~6B, Chapter~76, pp.\  5549--5632. Elsevier.

\bibitem[\protect\citeauthoryear{Chernozhukov, Chetverikov, Demirer, Duflo,
  Hansen, Newey, and Robins}{Chernozhukov et~al.}{2018}]{CCDDHNR-2018}
Chernozhukov, V., D.~Chetverikov, M.~Demirer, E.~Duflo, C.~Hansen, W.~Newey,
  and J.~Robins (2018, 01).
\newblock Double/debiased machine learning for treatment and structural
  parameters.
\newblock {\em The Econometrics Journal\/}~{\em 21\/}(1), C1--C68.

\bibitem[\protect\citeauthoryear{Chernozhukov, Chetverikov, and
  Kato}{Chernozhukov et~al.}{2017}]{CCK-2018-HDCLT}
Chernozhukov, V., D.~Chetverikov, and K.~Kato (2017).
\newblock Central limit theorems and bootstrap in high dimensions.
\newblock {\em The Annals of Probability\/}~{\em 45\/}(4), 2309--2352.

\bibitem[\protect\citeauthoryear{Chetverikov, Liao, and
  Chernozhukov}{Chetverikov et~al.}{2021}]{CLC-2021-CVLasso}
Chetverikov, D., Z.~Liao, and V.~Chernozhukov (2021).
\newblock {On cross-validated Lasso in high dimensions}.
\newblock {\em The Annals of Statistics\/}~{\em 49\/}(3), 1300 -- 1317.

\bibitem[\protect\citeauthoryear{Chetverikov and Sørensen}{Chetverikov and
  Sørensen}{2021}]{CS-2021}
Chetverikov, D. and J.~R.-V. Sørensen (2021).
\newblock Analytic and bootstrap-after-cross-validation methods for selecting
  penalty parameters of high-dimensional m-estimators.
\newblock {\em ArXiv\/}~{\em NA}, 1--50.

\bibitem[\protect\citeauthoryear{De~Boor}{De~Boor}{2001}]{DeBoor-2001}
De~Boor, C. (2001).
\newblock {\em {A practical guide to splines; rev. ed.}}
\newblock Applied mathematical sciences. Berlin: Springer.

\bibitem[\protect\citeauthoryear{der Vaart and Wellner}{der Vaart and
  Wellner}{1996}]{vdvw-1996}
der Vaart, A.~V. and J.~Wellner (1996).
\newblock {\em Weak Convergence and Empirical Processes\/} (1 ed.).
\newblock Springer Series in Statistics. Springer, New York, NY.

\bibitem[\protect\citeauthoryear{Dudley}{Dudley}{1967}]{Dudley-1967}
Dudley, R. (1967).
\newblock The sizes of compact subsets of hilbert space and continuity of
  gaussian processes.
\newblock {\em Journal of Functional Analysis\/}~{\em 1\/}(3), 290--330.

\bibitem[\protect\citeauthoryear{Fan, Hsu, Lieli, and Zhang}{Fan
  et~al.}{2022}]{fan2022estimation}
Fan, Q., Y.-C. Hsu, R.~P. Lieli, and Y.~Zhang (2022).
\newblock Estimation of conditional average treatment effects with
  high-dimensional data.
\newblock {\em Journal of Business \& Economic Statistics\/}~{\em 40\/}(1),
  313--327.

\bibitem[\protect\citeauthoryear{Giné and Koltchinskii}{Giné and
  Koltchinskii}{2006}]{GK-2006}
Giné, E. and V.~Koltchinskii (2006).
\newblock {Concentration inequalities and asymptotic results for ratio type
  empirical processes}.
\newblock {\em The Annals of Probability\/}~{\em 34\/}(3), 1143 -- 1216.

\bibitem[\protect\citeauthoryear{Hlavac}{Hlavac}{2022}]{stargazer}
Hlavac, M. (2022).
\newblock {\em stargazer: Well-Formatted Regression and Summary Statistics
  Tables}.
\newblock Bratislava, Slovakia: Social Policy Institute.
\newblock R package version 5.2.3.

\bibitem[\protect\citeauthoryear{Lee, Okui, and Whang}{Lee
  et~al.}{2017}]{Lee_Ryo_Wang_2016}
Lee, S., R.~Okui, and Y.-J. Whang (2017).
\newblock Doubly robust uniform confidence band for the conditional average
  treatment effect function.
\newblock {\em Journal of Applied Econometrics\/}~{\em 32\/}(7), 1207--1225.

\bibitem[\protect\citeauthoryear{Newey}{Newey}{1997}]{Newey-1997}
Newey, W. (1997).
\newblock Convergence rates and asymptotic normality for series estimators.
\newblock {\em Journal of Econometrics\/}~{\em 79\/}(1), 147--168.

\bibitem[\protect\citeauthoryear{Newey and McFadden}{Newey and
  McFadden}{1994}]{nmf-1994}
Newey, W.~K. and D.~McFadden (1994).
\newblock Chapter 36 large sample estimation and hypothesis testing.
\newblock {\em Handbook of Econometrics\/}~{\em 4}, 2111--2245.

\bibitem[\protect\citeauthoryear{Perperoglou, Sauerbrei, Abrahamowicz, and
  Schmid}{Perperoglou et~al.}{2019}]{perperoglou2019review}
Perperoglou, A., W.~Sauerbrei, M.~Abrahamowicz, and M.~Schmid (2019).
\newblock A review of spline function procedures in r.
\newblock {\em BMC medical research methodology\/}~{\em 19\/}(1), 1--16.

\bibitem[\protect\citeauthoryear{Pollard}{Pollard}{2001}]{pollard_2001}
Pollard, D. (2001).
\newblock {\em A User's Guide to Measure Theoretic Probability}.
\newblock Cambridge Series in Statistical and Probabilistic Mathematics.
  Cambridge University Press.

\bibitem[\protect\citeauthoryear{Rubin}{Rubin}{1974}]{RubinPotentialOutcomes}
Rubin, D.~B. (1974).
\newblock Estimating causal effects of treatments in randomized and
  nonrandomized studies.
\newblock {\em Journal of Educational Psychology\/}~{\em 66}, 688--701.

\bibitem[\protect\citeauthoryear{Rubin}{Rubin}{1978}]{Rubin_1978_AnnalsStat}
Rubin, D.~B. (1978).
\newblock Bayesian inference for causal effects.
\newblock {\em The Annals of Statistics\/}~{\em 6\/}(1), 34--58.

\bibitem[\protect\citeauthoryear{Rudelson}{Rudelson}{1999}]{Rudelson99randomvectors}
Rudelson, M. (1999).
\newblock Random vectors in the isotropic position.
\newblock {\em J. Funct. Anal\/}~{\em 164}, 60--72.

\bibitem[\protect\citeauthoryear{Schmidt-Hieber}{Schmidt-Hieber}{2020}]{Schmidt-Hieber-2020-neural-networks}
Schmidt-Hieber, J. (2020, 08).
\newblock Nonparametric regression using deep neural networks with relu
  activation function.
\newblock {\em Annals of Statistics\/}~{\em 48}, 1875--1897.

\bibitem[\protect\citeauthoryear{Semenova and Chernozhukov}{Semenova and
  Chernozhukov}{2021}]{SC-2020}
Semenova, V. and V.~Chernozhukov (2021, 08).
\newblock {Debiased machine learning of conditional average treatment effects
  and other causal functions}.
\newblock {\em The Econometrics Journal\/}~{\em 24}, 264--289.
\newblock utaa027.

\bibitem[\protect\citeauthoryear{Smucler, Rotnitzky, and Robins}{Smucler
  et~al.}{2019}]{SRR-2019}
Smucler, E., A.~Rotnitzky, and J.~M. Robins (2019).
\newblock A unifying apptoach for doubly-robust $\ell_1$ regularized estimation
  of causal contrasts.
\newblock {\em ArXiv\/}~{\em NA}, 1--125.

\bibitem[\protect\citeauthoryear{Tan}{Tan}{2017}]{Tan-2017}
Tan, Z. (2017).
\newblock Regularized calibrated estimation of propensity scores with model
  misspecification and high-dimensional data.
\newblock {\em ArXiv\/}~{\em NA}, 1--60.

\bibitem[\protect\citeauthoryear{Tan}{Tan}{2020}]{Tan-2018}
Tan, Z. (2020).
\newblock {Model-assisted inference for treatment effects using regularized
  calibrated estimation with high-dimensional data}.
\newblock {\em The Annals of Statistics\/}~{\em 48\/}(2), 811 -- 837.

\bibitem[\protect\citeauthoryear{Tibshirani}{Tibshirani}{1996}]{tibshirani1996regression}
Tibshirani, R. (1996).
\newblock Regression shrinkage and selection via the lasso.
\newblock {\em Journal of the Royal Statistical Society: Series B
  (Methodological)\/}~{\em 58\/}(1), 267--288.

\bibitem[\protect\citeauthoryear{van~der Greer}{van~der
  Greer}{2016}]{vanDerGreer2016}
van~der Greer, S. (2016).
\newblock {\em Estimation and Testing under Sparsity}.
\newblock Lecture Notes in Mathematics. Springer, New York, NY.

\bibitem[\protect\citeauthoryear{Wang and Yan}{Wang and
  Yan}{2021}]{splines2-paper}
Wang, W. and J.~Yan (2021).
\newblock Shape-restricted regression splines with {R} package {splines2}.
\newblock {\em Journal of Data Science\/}~{\em 19\/}(3), 498--517.

\bibitem[\protect\citeauthoryear{Wu, Tan, Hu, and Zhou}{Wu
  et~al.}{2021}]{tanCATE}
Wu, P., Z.~Tan, W.~Hu, and X.-H. Zhou (2021).
\newblock Model-assisted inference for covariate-specific treatment effects
  with high-dimensional data.

\bibitem[\protect\citeauthoryear{Zimmert and Lechner}{Zimmert and
  Lechner}{2019}]{Zimmert_Lechner_CATE}
Zimmert, M. and M.~Lechner (2019).
\newblock Nonparametric estimation of causal heterogeneity under
  high-dimensional confounding.

\end{thebibliography}

\appendix \onehalfspacing
\section{Proofs for Results in Main Text}%
\label{sec:proofs-main}

Here we provide proofs of the main results in Sections~\ref{sec:theory-overview}-\ref{sec:first-stage}. The proofs for \Cref{sec:first-stage} rely on an assortment of supporting lemmas proved in \Cref{sec:proofs-lemmas}. %

\subsection{Proofs for Main First Stage Results}
\label{subsec:first-stage-main-proofs}

\subsubsection*{Proof of \Cref{thm:first-stage-convergence}}

The proof of \Cref{thm:first-stage-convergence} relies on a series of non-asymptotic bounds that are established in Online Appendix Lemmas~\ref{lemma:nonasymptotic-logit} and \ref{lemma:nonasymptotic-outcome} that hold on \(\bigcap_{m=1}^6\Omega_{k,m}\) and depend on the quantity
\[
    \bar\lambda_k = M\xi_{k,\infty}\sqrt\frac{\log(d_z/\eps)}{n}
\]
where \(M\) is a fixed constant. In addition let \(\tilde\Sigma_{\alpha,j}^1 := \E_n[p_j(X)De^{-\bar\gamma_j'Z}|Y-\bar\alpha_j'Z|ZZ']\) and \(\Sigma_{\alpha,j}^1 := \E\tilde\Sigma_{\alpha,j}^1\) and define the event 
\begin{equation}
    \label{eq:mean-event}
    \Omega_{k,7} := \{\|\tilde\Sigma_{\alpha,j}^1 - \Sigma_{\alpha,j}^1\|_\infty \leq \bar\lambda_k, \forall j \leq k\} 
\end{equation}
In Online~\Cref{subsec:first-stage-probability-bounds} we show that \(\Pr(\bigcap_{m=1}^7) \geq 1  - o(1)\). Under these events, \Cref{lemma:nonasymptotic-means}, below provides the bound needed for first statement of \Cref{thm:first-stage-convergence} while \Cref{lemma:nonasymptotic-variance} provides the  bound needed for the second statement.

\begin{lemma}[Nonasymptotic Bounds for Weighted Means]
    \label{lemma:nonasymptotic-means}
    Suppose that \Cref{assm:logistic-model-convergence} holds, \(\xi_0  > (c_0 + 1)/(c_0 - 1)\), and \(2C_0\nu_0^{-2}s_k\bar\lambda_k \leq \eta < 1\). In addition, assume there is a constant \(c > 0\) such that  \(\lambda_{\alpha, j}/\lambda_{\gamma, j} \geq c\) for all \(j\leq k\). Then, under the event \(\bigcap_{m=1}^7 \Omega_{k,m}\), there is a constant  \(M_2\) that does not depend on  \(k\) such that
    \begin{equation}
        \label{eq:nonasymptotic-means}
        \max_{1\leq j\leq k}|\E_n[p_j(X)Y(\widehat\pi_j, \widehat m_j)] - \E_n[p_j(X)Y(\bar\pi_j,\bar m_j)]| \leq M_2s_k\bar\lambda_k^2 
    \end{equation}
\end{lemma}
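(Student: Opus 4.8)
The plan is to expand $\E_n[p_j(X)Y(\bar\gamma_j,\bar\alpha_j)]$ by a second-order Taylor expansion about the \emph{estimates} $(\widehat\gamma_j,\widehat\alpha_j)$, so that for each $j$ the increment $\vB_j^k:=\E_n[p_j(X)Y(\widehat\gamma_j,\widehat\alpha_j)]-\E_n[p_j(X)Y(\bar\gamma_j,\bar\alpha_j)]$ splits into a linear-in-error term plus a second-order remainder. Since $1/\pi(Z;\gamma)=1+e^{-\gamma'Z}$ and $m(Z;\alpha)=\alpha'Z$, we have $Y(\gamma,\alpha)=DY(1+e^{-\gamma'Z})-(De^{-\gamma'Z}+D-1)\alpha'Z$, so that, writing $\widehat\delta_{\gamma,j}:=\widehat\gamma_j-\bar\gamma_j$ and $\widehat\delta_{\alpha,j}:=\widehat\alpha_j-\bar\alpha_j$,
\[
  \vB_j^k=\E_n\!\big[p_j(X)\nabla_{(\gamma,\alpha)}Y(\widehat\gamma_j,\widehat\alpha_j)\big]'\begin{bmatrix}\widehat\delta_{\gamma,j}\\\widehat\delta_{\alpha,j}\end{bmatrix}-\tfrac12\begin{bmatrix}\widehat\delta_{\gamma,j}\\\widehat\delta_{\alpha,j}\end{bmatrix}'\E_n\!\big[p_j(X)\nabla^2_{(\gamma,\alpha)}Y(\tilde\gamma_j,\tilde\alpha_j)\big]\begin{bmatrix}\widehat\delta_{\gamma,j}\\\widehat\delta_{\alpha,j}\end{bmatrix}
\]
for $(\tilde\gamma_j,\tilde\alpha_j)$ on the segment between the two pairs, where $\nabla_\gamma Y=-De^{-\gamma'Z}(Y-\alpha'Z)Z$, $\nabla_\alpha Y=-(De^{-\gamma'Z}+D-1)Z$, $\nabla^2_{\gamma\gamma}Y=De^{-\gamma'Z}(Y-\alpha'Z)ZZ'$, $\nabla^2_{\gamma\alpha}Y=De^{-\gamma'Z}ZZ'$, and $\nabla^2_{\alpha\alpha}Y=0$.

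\textbf{Linear term.} Here I would use the Karush--Kuhn--Tucker conditions of \eqref{eq:gamma-j-estimating-equation}--\eqref{eq:alpha-j-estimating-equation} directly (in place of orthogonality): stationarity of \eqref{eq:gamma-j-estimating-equation} at $\widehat\gamma_j$ gives $\|\E_n[p_j(X)(De^{-\widehat\gamma_j'Z}+D-1)Z]\|_\infty\le\lambda_{\gamma,j}$, i.e.\ $\|\E_n[p_j(X)\nabla_\alpha Y(\widehat\gamma_j,\widehat\alpha_j)]\|_\infty\le\lambda_{\gamma,j}$, and stationarity of \eqref{eq:alpha-j-estimating-equation} at $\widehat\alpha_j$ gives $\|\E_n[p_j(X)De^{-\widehat\gamma_j'Z}(Y-\widehat\alpha_j'Z)Z]\|_\infty\le\lambda_{\alpha,j}$, i.e.\ $\|\E_n[p_j(X)\nabla_\gamma Y(\widehat\gamma_j,\widehat\alpha_j)]\|_\infty\le\lambda_{\alpha,j}$. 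By Hölder the linear term is at most $(\lambda_{\gamma,j}\vee\lambda_{\alpha,j})(\|\widehat\delta_{\gamma,j}\|_1+\|\widehat\delta_{\alpha,j}\|_1)$, and on $\bigcap_{m=1}^{7}\Omega_{k,m}$ the oracle inequalities of \Cref{lemma:nonasymptotic-logit,lemma:nonasymptotic-outcome} give, uniformly over $j\le k$, $\lambda_{\gamma,j}\vee\lambda_{\alpha,j}\lesssim\bar\lambda_k$ and $\|\widehat\delta_{\gamma,j}\|_1\vee\|\widehat\delta_{\alpha,j}\|_1\lesssim s_k\bar\lambda_k$ --- this is where $\xi_0>(c_0+1)/(c_0-1)$ (forcing the errors into the restricted cone), the smallness condition $2C_0\nu_0^{-2}s_k\bar\lambda_k\le\eta$, and the ratio requirement $\lambda_{\alpha,j}/\lambda_{\gamma,j}\ge c$ (keeping the outcome-Lasso penalty from being swamped by the propagated logit error) all enter. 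Hence the linear term is $\lesssim s_k\bar\lambda_k^2$ uniformly in $j$.

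\textbf{Second-order term; the main obstacle.} Because $\bar\gamma_j'Z\ge B_0$ and $|Z_l|\le C_0$, the smallness condition keeps $\|\widehat\delta_{\gamma,j}\|_1$ small enough that $e^{-\tilde\gamma_j'Z}\le e^{-B_0+o(1)}\lesssim1$ and $|Y-\tilde\alpha_j'Z|\le|Y-\bar\alpha_j'Z|+o(1)$ on the segment, so the second-order remainder is bounded up to constants by $\widehat\delta_{\gamma,j}'\tilde\Sigma^1_{\alpha,j}\widehat\delta_{\gamma,j}$ together with the weighted quadratic forms $\widehat\delta_{\gamma,j}'\widehat M_j\widehat\delta_{\gamma,j}$, $\widehat\delta_{\alpha,j}'\widehat M_j\widehat\delta_{\alpha,j}$ and (by Cauchy--Schwarz for the positive semidefinite $\widehat M_j:=\E_n[p_j(X)De^{-\bar\gamma_j'Z}ZZ']$) their geometric mean. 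Establishing that each of these is $\lesssim s_k\bar\lambda_k^2$ uniformly over $j\le k$ is the crux. For $\widehat\delta_{\gamma,j}'\tilde\Sigma^1_{\alpha,j}\widehat\delta_{\gamma,j}$ this is exactly the purpose of $\Omega_{k,7}$: replacing $\tilde\Sigma^1_{\alpha,j}$ by its population version $\Sigma^1_{\alpha,j}$ costs $\le\bar\lambda_k\|\widehat\delta_{\gamma,j}\|_1^2\lesssim s_k^2\bar\lambda_k^3\lesssim s_k\bar\lambda_k^2$ (as $s_k\bar\lambda_k\lesssim1$ under the sparsity hypothesis), and the deterministic form $\widehat\delta_{\gamma,j}'\Sigma^1_{\alpha,j}\widehat\delta_{\gamma,j}$ is then controlled by the logit prediction-norm rate of \Cref{lemma:nonasymptotic-logit}, using \Cref{assm:logistic-model-convergence}(ii) (so $\E[|Y_1-\bar m_j(Z)|\mid Z]$ is bounded and $\Sigma^1_{\alpha,j}$ is dominated on the restricted cone by a multiple of the logit Hessian) and the compatibility condition \Cref{assm:logistic-model-convergence}(iv); the $\widehat M_j$-forms are handled identically via the counterpart events among $\Omega_{k,1},\dots,\Omega_{k,6}$ and the $\ell_2$/prediction-norm rates of \Cref{lemma:nonasymptotic-logit,lemma:nonasymptotic-outcome}. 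Collecting the finitely many $O(s_k\bar\lambda_k^2)$ contributions and taking $\max_{1\le j\le k}$ --- legitimate since all the bounds above hold simultaneously on $\bigcap_{m=1}^{7}\Omega_{k,m}$ --- yields \eqref{eq:nonasymptotic-means} with a constant $M_2$ independent of $k$.
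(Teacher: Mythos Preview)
Your proposal is correct and is essentially the paper's argument, only organized a bit differently. The paper does not do a single Taylor expansion around $(\widehat\gamma_j,\widehat\alpha_j)$; instead it writes the exact three-term identity
\[
\mu_j(\widehat\pi_j,\widehat m_j)-\mu_j(\bar\pi_j,\bar m_j)=\delta_{1,j}+\delta_{2,j}+\delta_{3,j},
\]
observes that $\delta_{1,j}+\delta_{3,j}=(\widehat\alpha_j-\bar\alpha_j)'\E_n[p_j(X)(1-D/\widehat\pi_j(Z))Z]$ and bounds it by $\lambda_{\gamma,j}\|\widehat\alpha_j-\bar\alpha_j\|_1$ via the logit KKT (exactly your $\alpha$-direction linear term), and then Taylor-expands $\delta_{2,j}$ around $\bar\gamma_j$. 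The first-order piece $\delta_{21,j}$ is then controlled using score domination at $(\bar\gamma_j,\bar\alpha_j)$ (the event $\Omega_{k,3}$), whereas you control the analogous $\gamma$-direction linear term using the KKT of the outcome regression at $(\widehat\gamma_j,\widehat\alpha_j)$ --- a slightly cleaner twin of the same idea. The second-order work is identical in spirit: the paper bounds $\delta_{22,j}$ by passing to $\tilde\Sigma^1_{\alpha,j}$ via $\Omega_{k,7}$, then to $\Sigma_{\gamma,j}$ via the sub-Gaussian bound on $\E[|Y-\bar m_j(Z)|\mid Z]$, then back to $\tilde\Sigma_{\gamma,j}$ via $\Omega_{k,6}$, and finally invokes the Bregman-divergence rate of \Cref{lemma:nonasymptotic-logit}, just as you outline. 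One small cosmetic point: your ``$|Y-\tilde\alpha_j'Z|\le|Y-\bar\alpha_j'Z|+o(1)$'' should read $+\,C_0\|\widehat\delta_{\alpha,j}\|_1$; this extra term is $\lesssim s_k\bar\lambda_k$, and its contribution to the quadratic form is $\lesssim s_k\bar\lambda_k\cdot s_k\bar\lambda_k^2\lesssim s_k\bar\lambda_k^2$ by the smallness hypothesis, so it does no harm.
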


\begin{proof}
    We show that the bound of \eqref{eq:nonasymptotic-means} holds for any \(j=1,\dots,k\) in a couple steps. To save notation, define 
    \begin{align*}
        \mu_j(\pi,m) &:= \E_n\left[p_j(X)Y(\pi, m)\right] \\
                     &= \E_n\bigg[p_j(X)\bigg\{\frac{DY}{\pi(Z)} + \bigg(\frac{D}{\pi(Z)} -1 \bigg)m(Z) \bigg\} \bigg]
    \end{align*}

    \emph{Step 1: Decompose Difference and Use Logistic FOCs.}
    Consider the following decomposition 
    \begin{align*}
        \mu_j(\widehat\pi_j,\widehat m_j) - \mu_j(\bar\pi_j,\bar m_j)
        &= \E\bigg[p_j(X)\{\widehat m_j(Z) - \bar m_j(Z)\}\bigg(1 - \frac{D}{\bar\pi_j(X)}\bigg) \bigg] \\
        &\;\;+ \E_n\bigg[p_j(X)D\{Y -  \bar m_j(Z)\}\bigg(\frac{1}{\widehat\pi_j(Z)} - \frac{1}{\bar\pi_j(Z)} \bigg) \bigg] \\
        &\;\;+ \E_n\bigg[p_j(X)\{\widehat m_j(Z) - \bar m_j(Z)\}\bigg(\frac{D}{\bar\pi_j(Z)} - \frac{D}{\widehat\pi_j(Z)}\bigg) \bigg] \\
        &:= \delta_{1,j} + \delta_{2,j} + \delta_{3,j}
    \end{align*}
    Notice that \(\delta_{1,j} + \delta_{3,j} = (\widehat\alpha_j - \bar\alpha_j)'\E_n[p_j(X)(1 - D/\widehat\pi_j(Z))Z]\).
    By the first order conditions for \(\widehat\gamma_j\) we have that 
    \[
        |\E_n[p_j(X)\{Z_l - DZ_l/\widehat\pi_j(Z)\}]| \leq \lambda_{\gamma,j}\;\forall l = 1,\dots,d_z \implies \|\E_n[p_j(X)\{Z_l - DZ_l/\widehat\pi_j(Z)\}]\|_\infty \leq \lambda_{\gamma, j}
    .\] 
    Applying Hölder's inequality to \(\delta_{1,j} + \delta_{3,j}\) then gives us that on the event \(\Omega_{k,2}\)
     \[
        |\delta_{1,j} + \delta_{3,j}| \leq \|\widehat\alpha_j-\bar\alpha_j\|_1\lambda_{\gamma,j} \leq \|\widehat\alpha_j -\bar\alpha_j\|\bar\lambda_k
    .\]
    By \Cref{lemma:nonasymptotic-outcome} on the event \(\bigcap_{m=1}^6\Omega_{k,m}\) and under the conditions of \Cref{lemma:nonasymptotic-means}, \(\|\widehat\alpha_j - \bar\alpha_j\| \leq M_1s_k\bar\lambda_k\) where \(M_1\) is a constant that does not depend on  \(k\). So
    \begin{equation}
        \label{eq:nonasymptotic-mean-step1}\tag{M.1}
        |\delta_{1,j} + \delta_{3,j}| \leq M_1s_k\bar\lambda_k^2 
    \end{equation}
    
    \emph{Step 2: Use Outcome Regression Score Domination to Bound \(\delta_{2,j}\).}
    Now deal with the term \(\delta_{2,j}\). By first order Taylor expansion, for some  \(u\in (0,1)\)
    \begin{align*}
        \delta_{2,j} &= -(\widehat\gamma_j - \bar\gamma_j)'\E_n[p_j(X)D\{Y - \bar m_j(Z)\}e^{-\bar\gamma_j'Z}Z] \\
                 &\;\;\;+ (\widehat\gamma_j-\bar\gamma_j)'\E_n[p_j(X)D\{Y-\bar m_j(Z)\}e^{-u\widehat\gamma_j'Z -(1-u)\bar\gamma_j'Z}ZZ'](\widehat\gamma_j - \bar\gamma_j)/2 \\
                 &:= \delta_{21, j} +\delta_{22, j}
    \end{align*}
    In the event \(\Omega_{k,1}\cap\Omega_{k,2}\cap\Omega_{k,3}\cap\Omega_{k,4}\) we have by score domination of the linear outcome regression model and \Cref{lemma:nonasymptotic-logit} that \(\delta_{21} \leq M_0s_k\bar\lambda_k^2\).

    The term \(\delta_{22,j}\) is second order. On the event \(\Omega_{k,0} \cap\Omega_{k,1}\) where \(\|\widehat\gamma_j - \bar\gamma_j\|_1 \leq M_0s_k\bar\lambda_k \leq M_0\eta/C_0\) it can be bounded with 
     \begin{align*}
        \delta_{22,j} 
        &\leq e^{C_0\|\widehat\gamma_j - \bar\gamma_j\|_1}\E_n[p_j(X)De^{-\bar\gamma_j'Z}|Y - \bar m_j(Z)|\{\widehat\gamma_j'Z - \bar\gamma_j'Z\}^2] \\
        &\leq e^{M_0\eta}\E_n[p_j(X)De^{-\bar\gamma_j'Z}|Y-\bar m_j(Z)|\{\widehat\gamma_j'Z - \bar\gamma_j'Z\}^2]
    .\end{align*}
    This in turn is bounded in a few steps. First note on the event \(\Omega_{k,7}\)     
    \[
        (\E_n - \E)[p_j(X)De^{-\bar\gamma_j'Z}|Y -\bar m_j(Z)|\{\widehat\gamma_j'Z - \bar\gamma_j'Z\}^2] \leq \bar\lambda_k\|\widehat\gamma_j - \bar\lambda_j\|_1^2
    .\] 
    By \Cref{assm:logistic-model-convergence} we have that \(G_0^2E[D|Y-\bar m_j(Z)|\mid Z] \leq G_1^2/G_0 + G_0\) so that,
    \[
        \E[p_j(X)De^{-\bar\gamma_j'Z}|Y - \bar m_j(Z)|\{\widehat\gamma_j'Z - \bar\gamma_j'Z\}^2] \leq (G_1^2/G_0 + G_0)\E[p_j(X)De^{-\bar\gamma_j'Z}\{\widehat\gamma_j'Z - \bar\gamma_j'Z\}^2]
    .\] 
    On the event \(\Omega_{k,6}\) we have that 
    \[
        (\E_n - \E)[p_j(X)De^{-\bar\gamma_j'Z}\{\widehat\gamma_j'Z-\bar\gamma_j'Z\}^2] \leq \bar\lambda_k\|\widehat\gamma_j - \bar\gamma_j\|_1
    .\] 
    Putting these all together gives 
    \begin{equation}
        \label{eq:nonasymptotic-mean-step2.1}\tag{M.2}
        \begin{split}
            \E_n[p_j(X)De^{-\bar\gamma_j'Z}|Y&-\bar m_j(Z)|\{\widehat\gamma_j'Z - \bar\gamma_j'Z\}^2] \\
            &\leq \bar\lambda_k\|\widehat\gamma_j - \bar\gamma_j\|_1^2 + (G_1^2/G_0 + G_0)\bar\lambda_k\|\widehat\gamma_j - \bar\gamma_j\|_1^2 \\
            &\;\;\; + (G_1^2/G_0 + G_0)\E_n[p_j(X)De^{-\bar\gamma'Z}\{\widehat\gamma_j'Z - \bar\gamma_j'Z\}]
        \end{split}
    \end{equation}
    To bound \eqref{eq:nonasymptotic-mean-step2.1} note again that in the event \(\Omega_{k,1}\cap\Omega_{k,2}\), \(\|\widehat\gamma_j- \bar\gamma_j\|_1 \leq M_0s_k\bar\lambda_k\) and that using by \eqref{eq:nonasymptotic-outcome-step2.2} in Online Appendix~\Cref{lemma:nonasymptotic-outcome}:
    \[
        \E_n[p_j(X)De^{-\bar\gamma_j'Z}\{\widehat\gamma_j'Z - \bar\gamma_j'Z\}^2] \leq e^{-M_0\eta}M_0s_k\bar\lambda_k^2
    .\] 
    Plugging these into \eqref{eq:nonasymptotic-mean-step2.1} gives 
    \begin{equation}
        \label{eq:nonasymptotic-mean-step2.2}\tag{M.3}
        \delta_{22,j} \leq e^{M_0\eta}M_0^2s_k^2\bar\lambda_k^3 + e^{M_0\eta}(G_1^2/G_0 + G_0)M_0^2s_k^2\bar\lambda_k^3 + (G_1^2/G_0 + G_0)M_0s_k\bar\lambda_k^2
    \end{equation}
    so that in total \(\delta_{2,j} = \delta_{21,j} + \delta_{22,j}\) is bouned
    \begin{equation}
        \label{eq:nonasymptotic-mean-step2}\tag{M.4}
        \delta_{2,j} \leq M_0s_k(G_1^2/G_0 + G_0 + 1)\bar\lambda_k^2  + e^{M_0\eta}M_0^2s_k^2(G_1^2/G_0 + G_0 + 1)\bar\lambda_k^3
    \end{equation}

    \emph{Step 3: Combine Terms.}
    Putting this together yields
    \begin{equation}
        \label{eq:nonasymptotic-mean-step3.1}\tag{M.5}
        \begin{split}
            |\delta_{1,j} +\delta_{2,j} + \delta_{3,j}| 
            &\leq \{M_1 + M_0(G_1^2/G_0 + G_0 + 1)\}s_k\bar\lambda_k^2 \\
            &\;\;\;+ e^{M_0\eta}(G_1^2/G_0 + G_0)M_0^2s_k^2\bar\lambda_k^3 
        \end{split}
    \end{equation}
    Use the fact that \(s_k\bar\lambda_k \leq \eta <1\) to simplify the last term of this expression 
      \begin{equation}
        \label{eq:nonasymptotic-mean-step3}\tag{M.6}
        \begin{split}
            |\delta_{1,j} +\delta_{2,j} + \delta_{3,j}| 
            &\leq \{M_1 + M_0(G_1^2/G_0 + G_0 + 1)\}s_k\bar\lambda_k^2 \\
            &\;\;\;+ e^{M_0\eta}(G_1^2/G_0 + G_0)M_0^2 s_k\bar\lambda_k
        \end{split}
    \end{equation}
    This gives the result \eqref{eq:nonasymptotic-means} after taking \(M_2 = M_1 + M_0(G_1^/G_0 + G_0 + 1) + e^{M_0\eta}(G_1^2/G_0 + G_0)M_0^2\).
 
\end{proof}

\begin{lemma}[Nonasymptotic Bounds for Variance Estimation]
    \label{lemma:nonasymptotic-variance}
    Suppose that \Cref{assm:logistic-model-convergence} hold, \(\xi_0  > (c_0 + 1)/(c_0 - 1)\), and \(2C_0\nu_0^{-2}s_k\bar\lambda_k \leq \eta < 1\). In addition, assume there is a constant \(c > 0\) such that  \(\lambda_{\alpha, j}/\lambda_{\gamma, j} \geq c\) for all \(j\leq k\). Then, under the event \(\bigcap_{m=1}^7 \Omega_{k,m}\), there is a constant  \(M_3\) that does not depend on  \(k\) such that
    \begin{equation}
        \label{eq:nonasymptotic-variance}
        \max_{1\leq j\leq k} \E_n[p_j^2(X)(Y(\widehat\pi_j,\widehat m_j) - Y(\bar\pi_j,\bar m_j))^2] \leq M_3\xi_{k,\infty}^2s_k^2\bar\lambda_k^2
    \end{equation}
\end{lemma}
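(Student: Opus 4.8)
The plan is to run the same three-term decomposition used in the proof of \Cref{lemma:nonasymptotic-means}, but now tracking squares. Writing the aIPW signal as $Y(\pi,m) = D(Y-m(Z))/\pi(Z) + m(Z)$ and adding and subtracting $D(Y-\widehat m_j(Z))/\bar\pi_j(Z)$, I would decompose $Y(\widehat\pi_j,\widehat m_j) - Y(\bar\pi_j,\bar m_j) = T_{1,j} + T_{2,j} + T_{3,j}$, where $T_{1,j} := D(Y-\widehat m_j(Z))\{1/\widehat\pi_j(Z) - 1/\bar\pi_j(Z)\}$, $T_{2,j} := D\{\bar m_j(Z)-\widehat m_j(Z)\}/\bar\pi_j(Z)$, and $T_{3,j} := \widehat m_j(Z) - \bar m_j(Z)$. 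By $(a+b+c)^2\le 3(a^2+b^2+c^2)$ it suffices to bound $\E_n[p_j^2(X)T_{i,j}^2]$ by a constant times $\xi_{k,\infty}^2 s_k^2\bar\lambda_k^2$ for $i=1,2,3$, uniformly over $j\le k$. Throughout I work on $\bigcap_{m=1}^{7}\Omega_{k,m}$, on which Online Appendix \Cref{lemma:nonasymptotic-logit} and \Cref{lemma:nonasymptotic-outcome} give $\|\widehat\gamma_j-\bar\gamma_j\|_1 \lesssim s_k\bar\lambda_k$ and $\|\widehat\alpha_j-\bar\alpha_j\|_1\lesssim s_k\bar\lambda_k$ uniformly in $j$, with $s_k\bar\lambda_k\le\eta/C_0 < 1$.

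The terms $T_{2,j}$ and $T_{3,j}$ are immediate. The logistic form gives $1/\bar\pi_j(Z) = 1 + e^{-\bar\gamma_j'Z}\le 1 + e^{-B_0}$ by \Cref{assm:logistic-model-convergence}(iii), and $|\widehat m_j(Z)-\bar m_j(Z)| = |(\widehat\alpha_j-\bar\alpha_j)'Z|\le C_0\|\widehat\alpha_j-\bar\alpha_j\|_1$ by \Cref{assm:logistic-model-convergence}(i), while $0\le p_j(X)\le\xi_{k,\infty}$ and $D\le 1$; so both $\E_n[p_j^2(X)T_{2,j}^2]$ and $\E_n[p_j^2(X)T_{3,j}^2]$ are at most a constant times $\xi_{k,\infty}^2 C_0^2\|\widehat\alpha_j-\bar\alpha_j\|_1^2\lesssim \xi_{k,\infty}^2 s_k^2\bar\lambda_k^2$.

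For $T_{1,j}$ the key simplification is that the logistic form yields $1/\widehat\pi_j(Z)-1/\bar\pi_j(Z) = e^{-\widehat\gamma_j'Z} - e^{-\bar\gamma_j'Z}$; combining this with $|e^a-e^b|\le|a-b|e^{\max(a,b)}$, $\bar\gamma_j'Z\ge B_0$, and $\|\widehat\gamma_j-\bar\gamma_j\|_1\le\eta/C_0$ gives $|1/\widehat\pi_j(Z)-1/\bar\pi_j(Z)|\le C_0 e^{\eta-B_0}\|\widehat\gamma_j-\bar\gamma_j\|_1$, hence (using $D^2=D$) $T_{1,j}^2\lesssim \|\widehat\gamma_j-\bar\gamma_j\|_1^2\, D(Y-\widehat m_j(Z))^2$. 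Splitting $D(Y-\widehat m_j(Z))^2\le 2D(Y-\bar m_j(Z))^2 + 2D((\widehat\alpha_j-\bar\alpha_j)'Z)^2$, the second piece contributes $\lesssim \|\widehat\gamma_j-\bar\gamma_j\|_1^2\xi_{k,\infty}^2 s_k^2\bar\lambda_k^2\lesssim\xi_{k,\infty}^2 s_k^4\bar\lambda_k^4$, which is of smaller order since $s_k\bar\lambda_k<1$; so the task reduces to $\max_{j\le k}\E_n[p_j^2(X)D(Y-\bar m_j(Z))^2]\lesssim\xi_{k,\infty}^2$.

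This last bound is the main obstacle. In expectation it is easy: $\E[p_j^2(X)D(Y-\bar m_j(Z))^2]\le\xi_{k,\infty}^2\,\E\bigl[\E[(Y_1-\bar m_j(Z))^2\mid Z]\bigr]\lesssim\xi_{k,\infty}^2$ by the uniform subgaussianity of $Y_1-\bar m_j(Z)$ in \Cref{assm:logistic-model-convergence}(ii). Promoting this to a bound uniform over $j\le k$ requires a maximal inequality for the empirical means $\E_n[p_j^2(X)D(Y-\bar m_j(Z))^2]$ (subexponential tails of the squared reweighted errors, a union bound over $j$, and $k/n\to0$), which I would fold into the probability bounds for the events $\Omega_{k,m}$ established in the Online Appendix, in the same spirit as the concentration already used for $\Omega_{k,7}$. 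Granting that, collecting the three pieces and maximizing over $j$ gives $\max_{1\le j\le k}\E_n[p_j^2(X)(Y(\widehat\pi_j,\widehat m_j)-Y(\bar\pi_j,\bar m_j))^2]\le M_3\xi_{k,\infty}^2 s_k^2\bar\lambda_k^2$, with $M_3$ depending only on $C_0,B_0,\eta,G_0,G_1$ and the constants of \Cref{lemma:nonasymptotic-logit} and \Cref{lemma:nonasymptotic-outcome}, not on $k$; this is consistent with the asymptotic rate $s_k^2\xi_{k,\infty}^4\ln(d_z)/n$ of \Cref{thm:first-stage-convergence} since $\bar\lambda_k^2 = M^2\xi_{k,\infty}^2\ln(d_z/\eps)/n$.
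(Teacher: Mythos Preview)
Your decomposition is a harmless regrouping of the paper's (your \(T_{2,j}+T_{3,j}\) is the paper's \(\tilde\delta_{1,j}\), and your \(T_{1,j}\) is the paper's \(\tilde\delta_{2,j}+\tilde\delta_{3,j}\)), and your treatment of the outcome-regression terms via \(\|\widehat\alpha_j-\bar\alpha_j\|_1\) matches the paper's handling of \(\tilde\delta_{1,j}\).

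The gap is in your handling of \(T_{1,j}\). You bound \(|1/\widehat\pi_j-1/\bar\pi_j|\) pointwise by a constant times \(\|\widehat\gamma_j-\bar\gamma_j\|_1\) and are left needing \(\max_{j\le k}\E_n[p_j^2(X)D(Y-\bar m_j(Z))^2]\lesssim\xi_{k,\infty}^2\). That bound is \emph{not} a consequence of the stated event \(\bigcap_{m=1}^{7}\Omega_{k,m}\): none of those events controls this unweighted empirical second moment, and you cannot recover it from \(\Omega_{k,5}\) because \Cref{assm:logistic-model-convergence}(iii) gives only a \emph{lower} bound on \(\bar\gamma_j'Z\), hence only an \emph{upper} bound on \(e^{-\bar\gamma_j'Z}\), so you cannot pass from the weighted quantity \(\E_n[p_j(X)De^{-\bar\gamma_j'Z}(Y-\bar m_j)^2]\) to the unweighted one. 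Since the lemma is a deterministic statement on \(\bigcap_{m=1}^{7}\Omega_{k,m}\), introducing an extra high-probability event (your proposed \(\Omega_{k,8}\)-type maximal inequality) changes the claim rather than proves it.

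The paper avoids this by not throwing away the structure: writing \(\{1/\widehat\pi_j-1/\bar\pi_j\}^2\lesssim e^{-B_0}\,e^{-\bar\gamma_j'Z}\{\widehat\gamma_j'Z-\bar\gamma_j'Z\}^2\) (keep one \(e^{-\bar\gamma_j'Z}\) factor, bound the other by \(e^{-B_0}\)) leads to
\[
\E_n[\tilde\delta_{2,j}^2]\;\lesssim\;\xi_{k,\infty}\,\E_n\!\big[p_j(X)De^{-\bar\gamma_j'Z}(Y-\bar m_j(Z))^2\{\widehat\gamma_j'Z-\bar\gamma_j'Z\}^2\big],
\]
and this weighted quadratic form is exactly what inequality~\eqref{eq:nonasymptotic-outcome-step2.4} in the proof of \Cref{lemma:nonasymptotic-outcome} already controls using only \(\Omega_{k,5}\) and \(\Omega_{k,6}\) (plus the Bregman-divergence bound from \Cref{lemma:nonasymptotic-logit}). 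That gives \(\E_n[\tilde\delta_{2,j}^2]\lesssim\xi_{k,\infty}s_k\bar\lambda_k^2\), which is in fact sharper than the \(\xi_{k,\infty}^2 s_k^2\bar\lambda_k^2\) your route would produce, and it stays entirely inside the hypothesized event. The fix to your argument is to retain the quadratic form \((\widehat\gamma_j-\bar\gamma_j)'ZZ'(\widehat\gamma_j-\bar\gamma_j)\) rather than collapsing it to \(\|\widehat\gamma_j-\bar\gamma_j\|_1^2\).
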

\begin{proof}   
    We show the bound holds for each \(j=1,\dots,k\). We start by decomposing
    \begin{equation*}
        \begin{split}
        p_j(X)(Y(\widehat\pi_j, \widehat m_j) - Y(\bar\pi_j,\bar m_j)) 
        &= p_j(X)\{\widehat m_j(Z) - \bar m_j(Z)\}\bigg(1 - \frac{D}{\bar\pi_j(X)}\bigg)  \\
        &\;\;+ p_j(X)D\{Y -  \bar m_j(Z)\}\bigg(\frac{1}{\widehat\pi_j(Z)} - \frac{1}{\bar\pi_j(Z)} \bigg)  \\
        &\;\;+ p_j(X)\{\widehat m_j(Z) - \bar m_j(Z)\}\bigg(\frac{D}{\bar\pi_j(Z)} - \frac{D}{\widehat\pi_j(Z)}\bigg)  \\
        &:= \tilde\delta_{1,j} + \tilde\delta_{2,j} + \tilde\delta_{3,j} 
        \end{split}
    \end{equation*}
    We will use the fact that \((a+b+c)^2 \leq 4a^2 + 4b^2 + 4c^2\) to bound
    \begin{equation}
        \label{eq:nonasymptotic-variance-step0}\tag{V.1}
        \E_n[p_j^2(X)(Y(\widehat\pi_j,\widehat m_j) - Y(\bar\pi_j,\bar m_j))^2]  \leq 4\E_n[\tilde\delta_{1,j}^2] + 4\E_n[\tilde\delta_{2,j}^2] + 4\E_n[\tilde\delta_{3,j}^2]
    .\end{equation} 

    To bound \(\E_n[\tilde\delta_{2,j}]\) use the mean value equation \eqref{eq:nonasymptotic-outcome-mvt} in Online Appendix \Cref{lemma:nonasymptotic-outcome} and the lower bound on \(\bar g_j(z)\) from \Cref{assm:logistic-model-convergence}
    \begin{align}
        \E_n[\tilde\delta_{2,j}^2]  
        &= \E_n[p_j^2(X)D\{Y - \bar m_j(Z)\}^2\{\widehat\pi_j^{-1}(Z) - \bar\pi_j^{-1}(Z)\}^2]\nonumber \\ 
        &\leq \xi_{k,\infty}e^{-B_0}\left(1 + e^{C_0\|\widehat\gamma_j -\bar\gamma_j\|_1}\right)^2\E_n[p_j(X)De^{-\bar\gamma_j'Z}\{Y - \bar m_j(Z)\}^2\{\widehat g_j(Z) - \bar g_j(Z)\}^2 ]\nonumber
        \intertext{Applying \eqref{eq:nonasymptotic-outcome-step2.4} in Online Appendix~\Cref{lemma:nonasymptotic-outcome}, Online Appendix~\Cref{lemma:nonasymptotic-logit}, and \(s_k\bar\lambda_k \leq \eta < 1\) there is a constant \(\tilde M_1\) that does not depend on  \(k\) such that in the event  \(\bigcap_{m=1}^7 \Omega_{k,m}\) this is bounded}
        \label{eq:nonasymptotic-variance-delta2}\tag{V.2}
        &\leq \tilde M_1 \xi_{k,\infty}s_k\bar\lambda_k^2
    \end{align}
    To bound \(\E_n[\tilde\delta_{3,j}]\) write  \(\widehat \pi_j^{-1}(Z) - \bar\pi_j^{-1}(Z) = e^{-\bar\gamma_j'Z}\{e^{-\widehat\gamma_j'Z + \bar\gamma_j'Z}-1\}\) and use the lower bound on \(\bar g_j(z)\) from \Cref{assm:logistic-model-convergence}:
    \begin{align}
        \E_n[\tilde\delta_{3,j}^2] 
        &= \E_n[p_j^2(X)D\{\widehat m_j(Z) - \bar m_j(Z)\}^2\{\widehat\pi_j^{-1}(Z)-\bar\pi_j^{-1}(Z)\}^2] \nonumber \\
        &\leq \xi_{k,\infty}e^{-B_0}\left(1 + e^{C_0\|\widehat\gamma_j - \bar\gamma_j\|_1}\right)^2\E_n[p_j(X)e^{-\bar\gamma_j'Z}\{\widehat m_j(Z) - \bar m_j(Z)\}^2]\nonumber
        \intertext{Applying Online Appendix \Cref{lemma:nonasymptotic-outcome}, there is a constant \(\tilde M_2\) that does not depend on  \(k\) such that on the event \(\bigcap_{m=1}^6 \Omega_{k,m}\) this is bounded}
        \label{eq:nonasymptotic-variance-delta3}\tag{V.3}
        &\leq \tilde M_2\xi_{k,\infty} s_k \bar\lambda_k^2
    \end{align}
    Finally, to bound \(\E_n[\tilde\delta_{1,j}^2]\) again use the lower bound on \(\bar g_j(z)\) and decompose 
    \begin{align}
        \E_n[\tilde\delta_{1,j}^2] 
        &= \E_n[p_j^2(X)\{\widehat m(z) - \bar m(z)\}^2\{1 - D/\bar\pi_j(Z)\}^2] \nonumber \\
        &\leq \xi_{k,\infty}^2(1+e^{-B_0})^2\E_n[\{\widehat m_j(Z) - \bar m_j(Z)\}^2] \nonumber\\
        &\leq \xi_{k,\infty}^2(1+e^{-B_0})^2C_0^2\|\widehat\alpha_j - \bar\alpha_j\|_1^2\nonumber
        \intertext{Again on the event \(\bigcap_{m=1}^6 \Omega_{k,m}\) apply Online Appendix \Cref{lemma:nonasymptotic-outcome} this is bounded, for some constant \(\tilde M_3\) that does not depend on  \(k\) by}
        \label{eq:nonasymptotic-variance-delta1}\tag{V.4}
        &\leq \tilde M_3 \xi_{k,\infty}^2s_k^2\bar\lambda_k^2
    \end{align}
    The result \eqref{eq:nonasymptotic-variance} follows by collecting \eqref{eq:nonasymptotic-variance-step0}-\eqref{eq:nonasymptotic-variance-delta1}.
\end{proof}

\subsection{Proofs of Main Second Stage Results}%
\label{subsec:second-stage-main-proofs}

The proofs for \Cref{sec:first-stage} closely follow those of \citet{BCCK-2015} with some modifications to deal with the various error terms. They also rely on some additional second stage results proved in Online \Cref{sec:additional-second-stage} .

\subsubsection*{Proof of \Cref{thm:pointwise-normality}}
Equation \eqref{eq:thm-pointwise-normality-2} follows from applying \eqref{eq:thm-pointwise-normality-1} with \(\alpha = p(x)/\|p(x)\|\) and \eqref{eq:thm-pointwise-normality-3} follows from \eqref{eq:thm-pointwise-normality-2}. So it suffices to prove \eqref{eq:thm-pointwise-normality-1}.

For any \(\alpha\in S^{k-1}\),  \(1 \lesssim \|\alpha'\Omega^{1/2}\|\) because of the conditional variance of \(\bar\eps_j^2\) is bounded from below and from above and under the positive semidefinite ranking
\[
    \Omega \geq \Omega_0 \geq \underline\sigma^2Q^{-1}
.\] 
Moreover, by condition (ii) of the theorem and \Cref{lemma:pointwise-linearization}, \(R_{1n}(\alpha) = o_p(1)\). So we can write 
\begin{align*}
    \sqrt{n}\alpha'(\widehat\beta -\beta) 
    &=\frac{\sqrt{n}\alpha'}{\|\alpha'\Omega^{1/2}\|}\mathbb{G}_n[p^k(x)\circ(\eps^k + r_k)] + o_p(1)  \\
    &= \sum_{i=1}^n \frac{\alpha'}{\sqrt{n}\|\alpha'\Omega^{1/2}\|}\{p^k(x)\circ(\eps^k + r_k)\}
.\end{align*}
Goal will be to verify Lindberg's condition for the CLT. Throughout the rest of the proof, it will be helpful to make the following notations. First, for any vector \(a = (a_1,\dots,a_k)'\in S^{k-1}\), let \(|a| = (|a_1|,\dots,|a_k|)'\) and note that \(|a|\in S^{k-1}\) as well:
\begin{align*}
    \tilde\alpha_n' = \frac{\alpha'}{\sqrt{n}\|\alpha_n'\Omega^{1/2}\|} ,\;\;\;\omega_{n} := |\tilde\alpha|'p^k(x),\andbox \bar\eps_k := \sup_{1\leq j\leq k}|\eps_j|
\end{align*}
Now, by the definition of \(\Omega\) we have that
\[
    \Var\left(\sum_{i=1}^n \frac{\alpha'}{\sqrt{n}\|\alpha'\Omega^{1/2}\|}\{p^k(x)\circ(\eps^k + r_k)\}\right)  = 1
.\]
Second for each \(\delta > 0\)
\begin{align*}
    \sum_{i=1}^n\,&\E\left[(\tilde\alpha_n'\{p^k(x)\circ(\eps^k + r_k)\})^2 \bm{1}\left\{|\tilde\alpha_n'\{p^k(x)\circ(\eps^k + r_k)\}| > \delta \right\} \right] \\ 
    &\leq \sum_{i=1}^n \E\left[\omega_{n}^2\E\left[\bar\eps_k^2\,\bm{1}\{|\omega_{n}||\bar\eps_k + \ell_kc_k| > \delta\} \mid X = x\right]\right]\numberthis\label{eq:intermediate-pointwise-normality-1}
\end{align*}
\begin{tcolorbox}
    What we are using here is the following. Suppose \(\alpha\) is a nonrandom vector in  \(\SR^k\), \(a\) is a  (positive) random vector in \(\SR^k\) and \(b\) is a random vector in \( \SR^k\). Then, 
    \begin{equation}
        \label{eq:hadamard-rough-bound}
        \{\alpha'(a\circ b)\}  = \sum_{j=1}^k \alpha_j a_j b_j \leq \|b\|_\infty \sum_{j=1}^k |\alpha_j|a_j  = (|\alpha|'a) \|b\|_\infty
    .\end{equation} 
\end{tcolorbox}
To bound the right hand side of \eqref{eq:intermediate-pointwise-normality-1} use the fact that \(1 \lesssim \|\alpha'\Omega^{1/2}\|\) because \(1\lesssim \underline\sigma^2\) and 
\[
    \Omega \geq \Omega_0 \geq \underline\sigma^2 Q^{-1}
\]
in the positive semidefinite sense. Using these two we have 
\[
    n\E|\omega_n|^2 \leq  \E[(|\alpha|'p^k(x))^2]/(\alpha'\Omega\alpha) \lesssim 1
.\] 
\begin{tcolorbox}
    By the bounded eigenvalue condition and using the trace operator:
    \[\E[(|\alpha|p^k(x))^2] = \text{trace}(\E[|\alpha|'p^k(x)'p^k(x)|\alpha|])=  |\alpha|'Q||\alpha| \lesssim \|\alpha\|=1 \] 
\end{tcolorbox}
Further note, \(|\omega_{ni}| \lesssim \frac{\xi_k}{\sqrt{n}}\). Using \((a+b)^2 \leq 2a^2 + 2b^2\), the right hand side of \eqref{eq:intermediate-pointwise-normality-1} is bounded by 
\begin{align*}
    2n\E[|\omega_n|^2\bar\eps_k^2\,\bm{1}\{|\bar\eps_k| + \ell_kc_k > \delta/|\omega_{n}|\}] + 2n\E[|\omega_n|^2\ell_k^2c_k^2\bm{1}\{|\bar\eps_k| + \ell_kc_k > \delta/|\omega_n|\}] 
\end{align*}
and both terms converge to zero. Indeed, to bound the first term note that, for some \(c > 0\):
\begin{align*}
    2n\E[|\omega_n|^2\bar\eps_k^2\,\bm{1}\{|\bar\eps_k| + \ell_kc_k > \delta/|\omega_{n}|\}]  
    &\lesssim n\E[|\omega_n|^2]\sup_{x\in\calX}\E[\bar\eps_k^2\bm{1}\{\bar\eps_k^2+ \ell_kc_k > c\delta\sqrt{n}/\xi_k\}\mid X= x ] \\
        &= o(1)
\end{align*}
where here we use the first part of \Cref{assm:second-stage-assumptions}(iv). To show the second term converges to zero, follow the same steps as for the first term, but apply the second part of \Cref{assm:second-stage-assumptions}(iv).

\subsubsection*{Proof of \Cref{thm:strong-approximation}}
We apply Yurinskii's coupling lemma \citep{pollard_2001}
\begin{tcolorbox}[title = Yurinskii's Coupling Lemma]
    Let \(\xi_1,\dots, \xi_n\) be independent random \(k\)-vectors with  \(\E[\xi_i] = 0\) and \(\beta := \sum_{i=1}^n \E[\|\xi_i\|^3]\) finite. Let \(S := \xi_1 + \dots + \xi_n\). For each \(\delta > 0\) there exists a random vector  \(T\) with a  \(N(0,\var(S))\) distribution such that
     \begin{equation}
        \label{eq:YC}
        \P(|S - T| > 3\delta) \leq C_0B\left(1 + \frac{|\log(1/B)|}{k}\right)\;\;\;\text{where }B:= \beta k\delta^{-3}\tag{YC}
    \end{equation}
    for some universal constant \(C_0\).
\end{tcolorbox}
In order to apply the coupling, we want to consider a first order approximation to the estimator
\[
    \frac{1}{\sqrt{n}}\sum_{i=1}^n \xi_i,\;\;\;\zeta_i = \Omega^{-1/2}p^k(x)\circ(\eps^k + r_k) 
.\] 
When \(\bar R_{2n} = o_p(a_n^{-1})\) a similar argument can be used with  \(\zeta_i = \Omega^{-1/2}p^k(x)\circ(\eps^k + r_k)\) replaced with \(\Omega^{-1/2}p^k(x)\circ \eps^k\). As before, the eigenvalues of  \(\Omega\) are bounded away from zero, therefore
 \begin{align*}
    \E\|\zeta_i\|^3 
    &\lesssim \E[\|p^k(x)\circ(\eps^k(x) + r_k)\|^3] \\
    &\lesssim \E[\|p^k(x)\|^3(|\bar\eps_k|^3 + |r_k|^3)]\\ 
    &\lesssim \E[\|p^k(x)\|^3](\bar\sigma_k^3 + \ell_k^3c_k^3) \\
    &\lesssim \E[\|p^k(x)\|^3]\xi_k(\bar\sigma_k^3 + \ell_k^3c_k^3) \\
    &\lesssim k\xi_k(\bar\sigma_k^3 + \ell_k^3c_k^3)
\end{align*}
Therefore, by Yurinskii's coupling lemma \eqref{eq:YC}, for each \(\delta > 0\),
 \begin{align*}
    \Pr\left\{\|\sum_{i=1}^n \zeta_i/\sqrt{n} - \calN_k\| > 3\delta a_n^{-1}\right\} 
    &\lesssim \frac{nk^2\xi_k(\bar\sigma_m^3 + \ell_k^3c_k^3)}{(\delta a_n^{-1}\sqrt{n})^3}\left(1 + \frac{\log(k^3\xi_k(\bar\sigma_k^3 + \ell_k^3c_k^3))}{k} \right) \\
    &\lesssim \frac{a_n^3k^2\xi_k(\bar\sigma_k^3 + \ell_k^3c_k^3)}{\delta^3n^{1/2}}\left(1 + \frac{\log n}{k}\right)   \to 0 
.\end{align*} 
because \(a_n^6k^2\xi_k(\bar\sigma_m^3 + \ell_k^3c_k^3)\log^2 n/n\to 0\). Using the first two results from \Cref{lemma:uniform-linearization}, 
\eqref{eq:uniform-linearization-1}-\eqref{eq:uniform-linearization-r1n-bound}, we obtain that
\[
    \|\sqrt{n}\alpha(x)'(\widehat\beta^k - \beta^k) - \alpha(x)'\Omega^{1/2}\calN_k\| \leq \|1/\sqrt{n}\sum_{i=1}^n \alpha(x)'\Omega^{1/2}\zeta_i - \alpha(x)'\Omega^{1/2}\calN_k\| + \bar R_{1n} = o_p(a_n^{-1})
.\] 
uniformly over \(x\in\calX\). Since  \(\|\alpha(x)'\Omega^{1/2}\|\) is bounded from below uniformly over \(x\in\calX\) we obtain the first statetment of \Cref{thm:uniform-convergence} from which the second statement directly follows.

Finally, under the assumption that  \(\sup_{x\in\calX}n^{1/2}|r(x)|/\|s(x)\| =o_p(a_n^{-1})\), 
\[
    \frac{\sqrt{np(x)'(\widehat\beta^k-\beta^k)}}{\|s(x)\|}  - \frac{\sqrt{n}(\widehat g(x) - g_0(x))}{\|s(x)\|} = o_p(a_n^{-1}) 
\] 
so that the third statement, \eqref{eq:uniform-normality-ghat} holds.

\subsubsection*{Proof of \Cref{thm:matrix-estimation}}
\begin{tcolorbox}[title = Preliminaries for Proof of \Cref{thm:matrix-estimation}]
\begin{lemma*}[Symmetrization]
    \label{lemma:symmetrization}
    Let \(Z_1,\dots,Z_n\) be independent stochastic processes with mean zero and let \(\eps_1,\dots,\eps_n\) be independent Rademacher random variables generated independetly of the data. Then 
    \begin{equation}
        \label{eq:symmetrization}\tag{SI}
        \E^*\Phi\bigg(\frac{1}{2}\big\|\sum_{i=1}^n \eps_iZ_i\big\|_\calF\bigg)
        \leq \E^*\Phi\bigg(\big\|\sum_{i=1}^n Z_i\big\|_\calF\bigg)
        \leq \E^*\Phi\bigg(2\big\|\eps_i(Z_i - \mu_i)\big\|_\calF\bigg)
    ,\end{equation}
    for every nondecreasing, convex \(\Phi:\SR\to\SR\) and arbitrary functions \(\mu_i:\calF\to\SR\).
\end{lemma*}
\tcblower 
For \(p\geq 1\) consider the Shatten norm \(S_p\) on symmetrix  \(k\times k\) matrices  \(Q\) defined by \(\|Q\|_{S_p} = (\sum_{j=1}^k |\lambda_j(Q)|^p)^{1/p}\) where \(\lambda_1(Q),\dots,\lambda_k(Q)\) are the eigenvalues of \(Q\). The case \(p=\infty\) recovers the operator norm and  \(p=2\) recovers the Frobenius norm.
\begin{lemma*}[Khinchin's Inequality for Matrices]
    \label{lemma:khincins-inequality}
    For symmetric \(k\times k\) matrices  \(Q_i\),  \(i = 1,\dots,n\), \(2 \leq p \leq \infty\), and an i.i.d sequence of Rademacher random variables \(\eps_1,\dots,\eps_n\) we have 
    \begin{equation}
        \label{eq:khinchin}\tag{KI-1}
        \bigg\|\left(\E_n[Q_i^2]\right)^{1/2}\bigg\|_{S_p} \leq \left(\E_\eps\|\mathbb{G}_n[\eps_iQ_i]\|_{S_p}^p\right)^{1/p} \leq C\sqrt{p}\left\|\left(\E_n[Q_i^2]\right)^{1/2}\right\|_{S_p}
    \end{equation}
    where \(C\) is an absolute constant. So, for  \(k \geq 2\), 
    \begin{equation}
        \label{eq:khinchin-2}\tag{KI-2}
        \E_\eps[\|\mathbb{G}_n[\eps_iQ_i]\|] \leq C\sqrt{\log k}\|(\E_n[Q_i^2])^{1/2}\|
    \end{equation}
    for some (possibly different) absolute constant \(C\).
\end{lemma*}
\end{tcolorbox}

We will establish consistent estimation of 
\begin{align*}
    \Sigma &= \E[\{p^k(x)\circ(\eps^k + r_k)\}\{p^k(x)\circ(\eps^k + r_k)\}']\intertext{using }
    \widehat\Sigma &=\E_n[\{p^k(x)\circ\widehat\eps^k\}\{p^k(x)\circ\widehat\eps^k\}']
\end{align*} 
Consistency of \(\widehat\Omega\) will then follow from the consistency of \(\widehat Q\) established by \Cref{lemma:matrix-lln}. To save notation, define the vectors
\begin{equation}
    \label{eq:hatYbarY}
    \widehat Y :=
    \begin{bmatrix} Y(\widehat \pi_1, \widehat m_1) \\ \vdots \\ Y(\widehat\pi_k, \widehat m_k) \end{bmatrix} \andbox\;
    \widehat Y :=
    \begin{bmatrix} Y(\widehat \pi_1, \widehat m_1) \\ \vdots \\ Y(\widehat\pi_k, \widehat m_k) \end{bmatrix}
\end{equation}
Also define \(\dot\eps^k := (\dot \eps^k_1, \dots, \dot \eps^k_k)\) so that \(\dot\eps^k_j := Y(\bar\pi_j, \bar m_j) - \widehat g(x)\). Ideally, we would like to use \(\dot\eps^k\) to estimate  \( \widehat\Sigma\), but we don't observe \(\dot\eps^k\). Define  \(\Delta := \widehat\eps^k - \dot\eps^k = \widehat Y^k - \bar Y^k\in \SR^k\).

Using this, we can decompose
\begin{equation}
    \label{eq:SigmaHat-decomp}
    \begin{split}
        \widehat\Sigma &= \E_n[\{p^k(x)\circ (\Delta + \dot\eps^k)\}\{p^k(x)\circ(\Delta + \dot\eps^k)\}] \\ 
                    &= \underbrace{\E_n[\{p^k(x)\circ\Delta\}\{p^k(x)\circ\Delta\}']}_{\Sigma_1} 
                    + \underbrace{\E_n[\{p^k(x)\circ\dot\eps^k\}\{p^k(x)\circ\Delta\}']}_{\Sigma_2}\\
                    &\;\;+ \underbrace{\E_n[\{p^k(x)\circ\Delta\}\{p^k(x)\circ\dot\eps^k\}']}_{\Sigma_3} +  \underbrace{\E_n[\{p^k(x)\circ \dot\eps^k\}\{p^k(x)\circ\dot\eps^k\}]}_{\Sigma_4}  
    \end{split}
\end{equation}
We first show that  \(\|\Sigma_4 - \Sigma\| \to_p 0\). This is nonstandard because of the Hadamard product.

\begin{lemma}[Psuedo-Variance Estimator Consistency]
    \label{lemma:psuedo-variance-estimator-consistency}
    Suppose \Cref{assm:second-stage-assumptions} and \Cref{assm:uniform-limit-theory} hold. Further, define \(v_n = \E[\max_{1\leq i \leq n}|\bar\eps_k|^2]^{1/2}\). In addition, assume that \(\bar R_{1n} + \bar R_{2n} \lesssim (\log k)^{1/2}\). Then, 
    \begin{align*}
        \|\widehat Q - Q\| &\lesssim_P \sqrt{\frac{\xi_k^2\log k}{n}} = o(1) \\
       \andbox \|\Sigma_4 - \Sigma\|&\lesssim_P (v_n \vee 1 + \ell_kc_k)\sqrt{\frac{\xi_k^2\log k}{n}}
    \end{align*}
\end{lemma}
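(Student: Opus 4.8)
\emph{Proof plan.} For the first display I would invoke the matrix law of large numbers of \Cref{lemma:matrix-lln}; for the second I would split $\Sigma_4 - \Sigma$ into the oracle fluctuation $\widetilde\Sigma - \Sigma$ and a contamination term driven by the second-stage estimation error, and bound each. Since the $\widehat Q$ bound is the prototype for the harder term, I sketch it. Writing $\widehat Q - Q = \mathbb{G}_n[p^k(X_i)p^k(X_i)']/\sqrt n$, the symmetrization inequality~\eqref{eq:symmetrization} followed by the matrix Khinchin inequality~\eqref{eq:khinchin-2} gives $\E\|\widehat Q - Q\| \lesssim \sqrt{\log k}\,\E\big\|(\E_n[(p^k(X_i)p^k(X_i)')^2])^{1/2}\big\|/\sqrt n$. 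Since $p^k(X_i)p^k(X_i)'$ is rank one, $(p^k(X_i)p^k(X_i)')^2 = \|p^k(X_i)\|^2\, p^k(X_i)p^k(X_i)' \preceq \xi_k^2\, p^k(X_i)p^k(X_i)'$, so the inner norm is at most $\xi_k\|\widehat Q\|^{1/2}$; combining with $\E\|\widehat Q\| \le \|Q\| + \E\|\widehat Q - Q\|$, the bounded eigenvalues of $Q$, and $\xi_k^2\log k/n = o(1)$, a self-bounding argument yields $\E\|\widehat Q - Q\| \lesssim \sqrt{\xi_k^2\log k/n}$, and Markov's inequality gives the stated rate.

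For the second display, the first step is to peel off the second-stage estimation error inside $\dot\eps^k$. By~\eqref{eq:second-stage-setup}, evaluated at observation $i$, $Y_i(\bar\pi_j,\bar m_j) = g_k(X_i) + r_k(X_i) + \eps_{j,i}$, hence $\dot\eps^k_{j,i} = \eps_{j,i} + r_k(X_i) + \delta_i$ with $\delta_i := -p^k(X_i)'(\widehat\beta^k-\beta^k)$, which crucially does \emph{not} depend on $j$. Therefore $p^k(X_i)\circ\dot\eps^k_i = W_i + \delta_i\, p^k(X_i)$, where $W_i := p^k(X_i)\circ(\eps^k_i + r_k(X_i))$, and expanding the outer product,
\begin{equation*}
    \Sigma_4 = \widetilde\Sigma + \E_n[\delta_i(W_i p^k(X_i)' + p^k(X_i) W_i')] + \E_n[\delta_i^2\, p^k(X_i)p^k(X_i)'],\qquad \widetilde\Sigma := \E_n[W_iW_i'].
\end{equation*}
Consequently $\|\Sigma_4 - \Sigma\| \le \|\widetilde\Sigma - \Sigma\| + 2\|\E_n[\delta_i W_i p^k(X_i)']\| + \max_i|\delta_i|^2\,\|\widehat Q\|$, and it suffices to bound each term.

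The contamination by $\delta_i$ is controlled through a uniform bound on $\max_i|\delta_i|$. The uniform linearization lemma (\Cref{lemma:uniform-linearization}, equations~\eqref{eq:uniform-linearization-1}--\eqref{eq:uniform-linearization-r1n-bound}) approximates $x\mapsto \sqrt n\, p^k(x)'(\widehat\beta^k-\beta^k)/\|s(x)\|$ in $\ell^\infty(\calX)$ by the Gaussian process $x\mapsto s(x)'\calN_k/\|s(x)\|$ with uniform error of order $\bar R_{1n}$. Because the eigenvalues of $\Omega$ are bounded away from $0$ and $\infty$, the normalized curve $\{s(x)/\|s(x)\| : x\in\calX\}\subseteq S^{k-1}$ is Lipschitz in $x$ with constant of order $\xi_k^L$; since $\calX\subseteq\SR^{d_x}$ with $d_x$ fixed and $\log\xi_k^L\lesssim\log k$ by \Cref{assm:uniform-limit-theory}(ii)(b), Dudley's entropy bound makes the supremum of that Gaussian process $O_p(\sqrt{\log k})$. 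Together with the hypothesis $\bar R_{1n}+\bar R_{2n}\lesssim(\log k)^{1/2}$ and $\|s(x)\|\asymp\|p^k(x)\|\le\xi_k$, this gives $\max_i|\delta_i|\lesssim_P\sqrt{\xi_k^2\log k/n}$. Hence $\max_i|\delta_i|^2\|\widehat Q\|\lesssim_P\xi_k^2\log k/n = o\big(\sqrt{\xi_k^2\log k/n}\big)$, and, by the Cauchy--Schwarz inequality for matrices, $\|\E_n[\delta_i W_i p^k(X_i)']\|\le\max_i|\delta_i|\,\|\widehat Q\|^{1/2}\|\widetilde\Sigma\|^{1/2}$, which once the control on $\|\widetilde\Sigma\|$ from the last step is available is of order $(v_n\vee 1 + \ell_kc_k)\sqrt{\xi_k^2\log k/n}$.

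The remaining term $\|\widetilde\Sigma - \Sigma\|$ is the crux, and the main obstacle: it is a matrix law of large numbers in which the randomness enters multiplicatively through a Hadamard product with the $k$-dimensional, only $m$-integrable error vector $\eps^k_i + r_k(X_i)$, so the usual scalar device of controlling a conditional variance function is unavailable. I would again go through symmetrization and the matrix Khinchin inequality: $\E\|\widetilde\Sigma - \Sigma\|\lesssim\sqrt{\log k}\,\E\big\|(\E_n[(W_iW_i')^2])^{1/2}\big\|/\sqrt n$, and rank-one-ness, $(W_iW_i')^2 = \|W_i\|^2 W_iW_i'\preceq(\max_i\|W_i\|^2)\widetilde\Sigma$, bounds the inner norm by $\max_i\|W_i\|\cdot\|\widetilde\Sigma\|^{1/2}$. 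The elementary Hadamard bound $\|p^k(X_i)\circ v\|_2\le\|v\|_\infty\|p^k(X_i)\|_2$ with $v = \eps^k_i + r_k(X_i)$ gives $\|W_i\|\le(\bar\eps_{k,i}+\ell_kc_k)\xi_k$, so by Cauchy--Schwarz and the definition $v_n = \E[\max_i|\bar\eps_k|^2]^{1/2}$, $\E[\max_i\|W_i\|^2]^{1/2}\lesssim\xi_k(v_n+\ell_kc_k)$. Feeding this back, using $\E\|\widetilde\Sigma\|\le\|\Sigma\| + \E\|\widetilde\Sigma - \Sigma\|$ with $\|\Sigma\|$ and $\|\widetilde\Sigma\|$ controlled by the uniform moment bounds of \Cref{assm:second-stage-assumptions,assm:uniform-limit-theory}, together with the smallness the theorem's hypotheses entail (in particular $\xi_k^2(\log k/n)(v_n+\ell_kc_k)^2 = o(1)$, from $\bar R_{1n}\lesssim(\log k)^{1/2}$), a self-bounding argument closes the estimate and delivers $\|\widetilde\Sigma - \Sigma\|\lesssim_P(v_n+\ell_kc_k)\sqrt{\xi_k^2\log k/n}$. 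Collecting the three terms, with the $\vee 1$ absorbing the $\|\widehat Q\|$-driven and cross-term contributions, gives $\|\Sigma_4 - \Sigma\|\lesssim_P(v_n\vee 1 + \ell_kc_k)\sqrt{\xi_k^2\log k/n}$, as claimed.
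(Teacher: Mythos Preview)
Your overall strategy matches the paper's: decompose $\Sigma_4$ into the oracle piece $\widetilde\Sigma=\Sigma_{44}$ plus contamination by the second-stage error $\delta_i=-p^k(X_i)'(\widehat\beta^k-\beta^k)$; control the contamination via a uniform bound on $\max_i|\delta_i|$; and handle $\|\widetilde\Sigma-\Sigma\|$ by symmetrization, the matrix Khinchin inequality~\eqref{eq:khinchin-2}, the rank-one identity $(W_iW_i')^2=\|W_i\|^2W_iW_i'$, Cauchy--Schwarz, and a self-bounding step. The paper's treatment of $\Sigma_{44}-\Sigma$ is identical in substance to yours; for the cross terms the paper pulls out $\max_i|\bar\eps_{k,i}+r_{k,i}|$ rather than using your matrix Cauchy--Schwarz with $\|\widetilde\Sigma\|^{1/2}$, but both lead to the stated rate.

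There is one slip in how you obtain $\max_i|\delta_i|\lesssim_P\sqrt{\xi_k^2\log k/n}$. You write that \Cref{lemma:uniform-linearization} approximates $\sqrt n\,p^k(x)'(\widehat\beta^k-\beta^k)/\|s(x)\|$ by the \emph{Gaussian} process $s(x)'\calN_k/\|s(x)\|$; it does not. The uniform linearization lemma replaces $\sqrt n\,\alpha(x)'(\widehat\beta^k-\beta^k)$ by the \emph{empirical} process $\alpha(x)'\mathbb{G}_n[p^k(x)\circ(\eps^k+r_k)]$ up to $\bar R_{1n}$; the Gaussian coupling is \Cref{thm:strong-approximation}, which carries hypotheses not assumed here. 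The fix is immediate: either cite \Cref{thm:uniform-convergence} directly---this is what the paper does, obtaining $\sup_x|p^k(x)'(\widehat\beta^k-\beta^k)|\lesssim_P\xi_k(\sqrt{\log k}+\bar R_{1n}+\bar R_{2n})/\sqrt n$---or run your Dudley argument on the empirical process coming out of the linearization, which is precisely how \Cref{thm:uniform-convergence} is proved. With that correction the remainder of your argument goes through unchanged.
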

\begin{proof}
    The first result is established by \Cref{lemma:matrix-lln} (Matrix LLN). Rest of proof will follow proof of Theorem 4.6 in \citet{BCCK-2015}. Like in \eqref{eq:SigmaHat-decomp} we can define \(\dot\Delta \equiv \dot\eps^k - \eps^k = g_0(x) - \widehat g(x)\)\footnote{It is useful to recall that \(\dot\eps^k = \bar Y^k - \widehat g(x)\) and \(\eps^k = \bar Y^k - g_0(x)\)} and decompose
    \begin{align*}
        \Sigma_4  &= \underbrace{\E_n[p^k(x)p^k(x)'\dot\Delta^2]}_{\Sigma_{41}}  + \underbrace{\E_n[\{p^k(x)\circ(\eps^k + r_k)\}\{p^k(x)\cdot\dot\Delta\}']}_{\Sigma_{42}}\\
        &\;\;+ \underbrace{\E_n[\{p^k(x)\cdot\dot\Delta\}\{p^k(x)\circ(\eps^k + r_k)\}']}_{\Sigma_{43}} +  \underbrace{\E_n[\{p^k(x)\circ (\eps^k + r_k)\}\{p^k(x)\circ(\eps^k + r_k)\}]}_{\Sigma_{44}}  
    \end{align*}
    The terms \(\Sigma_{41}, \Sigma_{42}\) and  \(\Sigma_{43}\) are simple to show are negligible.
    \begin{align*}
        \|\Sigma_{41} &+ \Sigma_{42} + \Sigma_{43}\|\\
        &\leq \|\E_n[\{p^k(x)'(\widehat\beta^k-\beta^k)\}p^k(x)p^k(x)']\|
        + \|\E_n[\{p^k(x)\circ(\eps^k + r_k)\}p^k(x)'\{p^k(x)'(\widehat\beta^k - \beta^k)\}]\| \\
        &\;\;+ \|\E_n[p^k(x)\{p^k(x)'(\widehat\beta^k - \beta^k)\}\{p^k(x)\circ(\eps^k + r_k)\}']\| \\
        &\leq \max_{1 \leq i \leq n}|p^k(x)(\widehat\beta^k - \beta^k)|^2\|\E_n[p^k(x)p^k(x)']\| \\ 
        &\;\;+ 2\max_{1\leq i\leq n} |\bar\eps_{k,i}| + |r_{k,i}|\max_{1\leq i\leq n}|p^k(x)'(\widehat\beta - \beta)| \|\E_n[p^k(x)p^k(x)']\|
    \end{align*}
    By \Cref{thm:uniform-convergence} \(|\max_{1\leq i\leq n} |p^k(x)'(\widehat\beta^k - \beta^k)| \lesssim_P \xi_k^2(\sqrt{\log k} + \bar R_{1n} + \bar R_{2n})^2/n\), by \Cref{assm:second-stage-assumptions} the approximation error is bounded \(\max_{1\leq i\leq n}|r_{k,i}| \leq \ell_kc_k\), by \Cref{assm:uniform-limit-theory} and Markov's inequality the errors are bounded \(\max_{1\leq i\leq n}|\bar\eps_{k,i}| \lesssim_p v_n^2\). Finally, by the first part of \Cref{lemma:psuedo-variance-estimator-consistency} \(\|\widehat Q\| \lesssim_P \|Q\| \lesssim 1\). Putting this all together with \(\bar R_{1n} + \bar R_{2n} \lesssim (\log k)^{1/2}\) and \(\xi_k^2 \log k/n \to 0\) gives
    \[
        \|\Sigma_{41} + \Sigma_{42} + \Sigma_{43}\| \lesssim_P (v_n \vee 1 + \ell_kc_k)\sqrt{\frac{\xi_k^2\log k}{n}}
    .\] 
    Next, we want to control \(\Sigma_{44} - \Sigma\). To do this, let \(\eta_1,\dots,\eta_n\) be independent Rademacher random variables generated independently from the data. Then for \(\eta = (\eta_1,\dots,\eta_n)\)
    \begin{align*}
        \E[\|\E_n[\{p^k(x)&\circ(\eps^k + r_k)\}\{p^k(x)\circ(\eps^k + r_k)\}'] - \Sigma\|] \\
        &\lesssim \E[\E_\eta[\E_n\|\eta\{p^k(x)\circ(\eps^k + r_k)\}\{p^k(x)\circ(\eps^k + r_k)\}'\|]] \\
        &\lesssim \sqrt{\frac{\log k}{n}} \E[(\|\E_n[\|p^k(x)\|^2(\bar\eps_k + r_k)^2\{p^k(x)\circ(\eps^k + r_k)\}\{p^k(x)\circ(\eps^k + r_k)\}' ]\|)^{1/2}]\\
        &\lesssim \sqrt{\frac{\xi_k^2\log k}{n}}\E[\max_{1\leq i \leq n} |\bar\eps_{k,i} + r_k|(\|\E_n[\{p^k(x)\circ(\eps^k + r_k)\}\{p^k(x)\circ(\eps^k + r_k)\}']\|)^{1/2}] \\
        &\leq \sqrt{\frac{\xi_k^2\log k}{n}}(\E[\max_{1\leq i\leq n}|\bar\eps_{k,i} + r_k|^2])^{1/2}\times(\E[\|\E_n[\{p^k(x)\circ(\eps^k +r_k)\}\{p^k(x)\circ(\eps^k + r_k)\}']\|])^{1/2}   
    \end{align*}
    where the first inequality holds from Symmetrization~\eqref{eq:symmetrization}, the second from Khinchin's inequality~\eqref{eq:khinchin}, the third by \(\max_{1\leq i\leq n}\|p^k(x)\| \leq \xi_k\) and the fourth by Cauchy-Schwarz inequality.
    
    Since for any positive numbers \(a,b\) and  \(R\),  \(a \leq R(a+b)^{1/2}\) implies \(a \leq R^2 + R\sqrt{b}\), the expression above and the triangle inequality yields
    \begin{align*}
        \E[\|\E_n[\{p^k(x)&\circ(\eps^k + r_k)\}\{p^k(x)\circ(\eps^k + r_k)\}'] - \Sigma\|] \\
        &\lesssim \frac{\xi_k^2\log k}{n}(v_n^2 + \ell_k^2c_k^2)  + \left(\frac{\xi_k^2\log k}{n} \{v_n^2 + \ell_k^2c_k^2\} \right)^{1/2}\|\Sigma\|^{1/2}
    \end{align*}
    and so, because \(\|\Sigma\|\lesssim 1\) and \((v_n^2 + \ell_k^2 c_k^2)\xi_k^2\log k/n\to 0\) we have
     \[
         \E[\|\E_n[\{p^k(x)\circ(\eps^k + r_k)\}\{p^k(x)\circ(\eps^k + r_k)\}'] - \Sigma\|] \lesssim (v_n \vee 1 + \ell_k c_k)\sqrt{\frac{\xi_k^2\log k}{n}}
     .\]
     The second result of \Cref{lemma:psuedo-variance-estimator-consistency} follows from Markov's inequality.
\end{proof}

Now, we need to take care of the terms 
\begin{align*}
    \Sigma_1 &= \E_n[\{p^k(x)\circ\Delta\}\{p^k(x)\circ\Delta\}'] \\
    \Sigma_2 &= \E_n[\{p^k(x)\circ\dot\eps^k\}\{p^k(x)\circ\Delta\}']\\
    \Sigma_3 &= \E_n[\{p^k(x)\circ\Delta\}\{p^k(x)\circ\dot\eps^k\}'] 
\end{align*}
where \(\Delta = \widehat Y^k - \bar Y^k\) and \(\dot\eps^k = \bar Y^k - \widehat g(x) = \widehat g(x) - g^k(x) + \eps^k\). To do so we will use \Cref{cond:variance-estimation}.

\begin{lemma}[Negligible Variance Bias]
    \label{lemma:variance-bias-negligible}
    Suppose that \Cref{cond:variance-estimation}, \Cref{assm:second-stage-assumptions} and \Cref{assm:uniform-limit-theory} hold. Then
    \[
        \|\Sigma_1 + \Sigma_2 + \Sigma_3\| = o_p(1)
    .\] 
\end{lemma}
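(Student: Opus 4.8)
The plan is to bound the operator norm of each of the three matrices separately and combine by the triangle inequality, $\|\Sigma_1 + \Sigma_2 + \Sigma_3\| \le \|\Sigma_1\| + \|\Sigma_2\| + \|\Sigma_3\|$. Two observations do all the work: (i) $\Sigma_1$ is positive semidefinite, so its operator norm is dominated by its trace, which is a plain sum of the quantities controlled by \Cref{cond:variance-estimation}; and (ii) the off-diagonal blocks $\Sigma_2$ and $\Sigma_3 = \Sigma_2'$ can be reduced, via a matrix Cauchy--Schwarz inequality, to $\Sigma_1$ and the already-analyzed $\Sigma_4$.

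For $\Sigma_1 = \E_n[\{p^k(x)\circ\Delta\}\{p^k(x)\circ\Delta\}']$, I would write $\|\Sigma_1\| \le \text{trace}(\Sigma_1) = \E_n[\|p^k(x)\circ\Delta\|^2] = \sum_{j=1}^k \E_n[p_j(X)^2\Delta_j^2] \le k\max_{1\le j\le k}\E_n[p_j(X)^2(Y(\widehat\pi_j,\widehat m_j)-Y(\bar\pi_j,\bar m_j))^2]$. \Cref{cond:variance-estimation} states precisely that $\xi_{k,\infty}$ times this last maximum is $o_p(k^{-2}n^{-1/m})$; since the eigenvalues of $Q$ are bounded below, $\xi_{k,\infty}\ge\xi_{k,2}/\sqrt k\gtrsim 1$, so $\|\Sigma_1\| = o_p(k^{-1}n^{-1/m}) = o_p(1)$. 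The factor of $k$ lost in passing from the maximum to the sum is exactly why \Cref{cond:variance-estimation} carries the extra $k^{-2}$ on its right-hand side.

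For the cross terms I would use the elementary fact that for any random vectors $a,b$ and unit vectors $u,v\in S^{k-1}$, $u'\E_n[ab']v = \E_n[(u'a)(b'v)] \le (u'\E_n[aa']u)^{1/2}(v'\E_n[bb']v)^{1/2}$ by Cauchy--Schwarz, whence $\|\E_n[ab']\| \le \|\E_n[aa']\|^{1/2}\|\E_n[bb']\|^{1/2}$. Taking $a = p^k(x)\circ\dot\eps^k$ and $b = p^k(x)\circ\Delta$ gives $\|\Sigma_2\| \le \|\Sigma_4\|^{1/2}\|\Sigma_1\|^{1/2}$, where $\Sigma_4 = \E_n[\{p^k(x)\circ\dot\eps^k\}\{p^k(x)\circ\dot\eps^k\}']$ is the very matrix handled in \Cref{lemma:psuedo-variance-estimator-consistency}, which gives $\|\Sigma_4 - \Sigma\|\to_p 0$; combined with the boundedness $\|\Sigma\|\lesssim 1$ established there this yields $\|\Sigma_4\|\lesssim_P 1$. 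With $\|\Sigma_1\| = o_p(1)$ from the previous step we get $\|\Sigma_2\| = o_p(1)$, and $\|\Sigma_3\| = \|\Sigma_2'\| = \|\Sigma_2\| = o_p(1)$ by symmetry. The triangle inequality then finishes the proof.

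I do not expect a serious obstacle here: all the substantive content lives in \Cref{cond:variance-estimation} and \Cref{lemma:psuedo-variance-estimator-consistency}. The only points requiring care are the bookkeeping of the $k$ and $\xi_{k,\infty}$ factors — making sure the $k$ dropped in the trace bound for $\Sigma_1$ (and its square root implicitly absorbed through Cauchy--Schwarz for $\Sigma_2$) is covered by the $k^{-2}n^{-1/m}$ slack in \Cref{cond:variance-estimation} — and confirming that the $\Sigma_4$ appearing here is, up to the obvious transpose, literally the $\Sigma_4$ bounded in \Cref{lemma:psuedo-variance-estimator-consistency}, which is immediate from the definitions $\dot\eps^k = \bar Y^k - \widehat g(x)$ and $\Delta = \widehat Y^k - \bar Y^k$.
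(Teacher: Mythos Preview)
Your proof is correct, and for the cross terms $\Sigma_2,\Sigma_3$ it takes a genuinely different route from the paper. For $\Sigma_1$ both you and the paper end up at $k\max_j\E_n[p_j^2\Delta_j^2]$ --- you via the trace of a PSD matrix, the paper via the crude bound $\|A\|\le k\|A\|_{\max}$ followed by entrywise Cauchy--Schwarz. For $\Sigma_2$ the paper does \emph{not} recycle $\Sigma_4$: it bounds $\|\Sigma_2\|\le k\max_{l,j}|\E_n[p_l\dot\eps_l\,p_j\Delta_j]|$, applies scalar Cauchy--Schwarz in each entry, and then controls the $\dot\eps$-factor crudely by $\xi_{k,\infty}\bigl(\max_i|\bar\eps_{k,i}|+\max_i|p^k(x_i)'(\widehat\beta-\beta)|\bigr)\lesssim_P \xi_{k,\infty}n^{1/m}$ using \Cref{assm:uniform-limit-theory} and \Cref{thm:uniform-convergence}. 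That argument makes explicit why the $n^{-1/m}$ slack in \Cref{cond:variance-estimation} is there --- it is spent precisely in absorbing this $n^{1/m}$. Your operator-norm Cauchy--Schwarz $\|\Sigma_2\|\le\|\Sigma_4\|^{1/2}\|\Sigma_1\|^{1/2}$ is more modular: it pushes all the work onto $\|\Sigma_1\|=o_p(1)$ and the already-established $\|\Sigma_4\|\lesssim_P 1$, and in fact uses strictly less of \Cref{cond:variance-estimation} than the paper does. One caveat: both arguments lean on conditions not listed in the lemma's own hypotheses --- the paper through \Cref{thm:uniform-convergence}, you through \Cref{lemma:psuedo-variance-estimator-consistency}, each of which requires \Cref{cond:no-effect} and the rate side-condition $\bar R_{1n}+\bar R_{2n}\lesssim(\log k)^{1/2}$ --- but since the lemma lives inside the proof of \Cref{thm:matrix-estimation} where all of these are assumed, this is harmless in context.
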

\begin{proof}
    From \Cref{cond:variance-estimation}, the term \(\Sigma_1\) being negligible immediately follows from Cauchy-Schwarz. Notice that 
    \begin{align*}
        \|\Sigma_1\| &\leq k\sup_{\substack{1 \leq l \leq k \\ 1 \leq j \leq k}} |\E_n[p_l(X)(Y(\hat\pi_l,\hat m_l) - Y(\bar\pi_j, \bar m_j))p_l(X)(Y(\widehat\pi_l,\widehat m_l) - Y(\bar\pi_l, \bar m_l))]| \\
        &\leq k \sup_{1\leq l \leq k} (\E_n[p_j(X)^2(Y(\widehat\pi_j, \widehat m_j)- Y(\bar\pi_j,\bar m_j))^2])^{1/2}\sup_{1\leq j\leq k}(\E_n[p_j(X)^2(Y(\widehat\pi_j, \widehat m_j)- Y(\bar\pi_j,\bar m_j))^2])^{1/2} \\
        &= o_p(1)
    .\end{align*} 
    To see that \(\Sigma_2\) is negligible notice that
    \begin{align*}
        \|\Sigma_2\| &\leq k \sup_{\substack{1 \leq l \leq k \\ 1 \leq j \leq k}}\E_n[p_l(X)(\eps_l + p^k(x)'(\widehat\beta^k - \beta^k))p_j(X)(Y(\widehat\pi_j, \widehat m_j) - Y(\bar\pi_j,\bar m_j))] \\
                     &\leq k\sup_{1\leq l\leq k}\E_n[p_l(X)^2(\eps_l + p^k(x)'(\widehat\beta - \beta))^2]^{1/2}\E_n[p_j(X)^2(Y(\widehat\pi_j,\widehat m_j) - Y(\bar\pi_j, \bar m_j))^2]^{1/2} \\ 
                     &\leq \xi_{k,\infty}(\max_{1\leq i\leq n} |\bar\eps_k| + \max_{1\leq i \leq n}p^k(x)'(\widehat\beta - \beta))\E_n[p_j(X)^2(Y(\widehat\pi_j, \widehat m_j) - Y(\bar\pi_j,\bar m_j))^2]^{1/2} \\
        \intertext{Applying \Cref{assm:uniform-limit-theory} and \Cref{thm:uniform-convergence} gives }
                     &\lesssim_P k\xi_{k,\infty}n^{1/m}\E_[p_j(X)^2Y(\widehat\pi_j, \widehat m_j) - Y(\bar\pi_j, \bar m_j))^2]^{1/2} = o_p(1)
    \end{align*}
    where the final line is via \Cref{cond:variance-estimation}. Showing negligibility of \(\Sigma_3\) follows the same steps. 
\end{proof}

\subsubsection*{Proof of \Cref{thm:uniform-confidence-bands}}
Follows from the exact same steps as Theorem 3.5 in \citet{SC-2020} after establishing strong approximation by a gaussian process as in \Cref{thm:strong-approximation} and consistent variance estimation as in \Cref{thm:matrix-estimation}.

\newpage \setcounter{page}{0} \thispagestyle{plain}
\vspace*{\fill}
\Large Online Appendix \normalsize
\vspace*{\fill}

\newpage
\section{Supporting Lemmas for First Stage}%
\label{sec:proofs-lemmas}

Here we provide supporting lemmas and their proofs. We start off with non-asymptotic bounds for first stage parameters and means.

\subsection{Nonasymptotic Bounds for the First Stage}
\label{subsec:first-stage-nonasymptotic-bounds}
The nonasymptotic bounds for the first stage will depend on certain events. In \Cref{subsec:first-stage-probability-bounds} we will show that under \Cref{assm:logistic-model-convergence} these events happen with probability approaching one.  To control sparsity, define \(\calS_{\gamma,j} := \{j: \bar\alpha_j \neq 0\}\), \(\calS_{\alpha,j} := \{j: \bar\alpha_j \neq 0\}\). Recall \(s_k := \max_{1\leq j\leq k}\{|\calS_{\gamma, j}| \vee |\calS_{\alpha, j}|\}\). Define the scores 
\begin{equation}
    \label{eq:score-definition}
    \begin{split}
        S_{\gamma, j} &:= \E_n[U_{\gamma,j}Z] \\
        S_{\alpha, j} &:= \E_n[U_{\alpha,j}Z]
    \end{split}
\end{equation}
With these in mind, we will consider nonasymptotic bounds under the events:
\begin{equation}
    \label{eq:nonasymptotic-events-logit}
    \begin{split}
        \Omega_{k,1} &:= \{\lambda_{\gamma, j} \geq c_0\cdot\|S_{\gamma, j}\|_\infty, \forall j \leq k\}    \\
        \Omega_{k,2} &:= \{\lambda_{\gamma, j} \leq \bar\lambda_k, \forall j\leq k\} \\
    \end{split}
\end{equation}    
Following \citet{CS-2021}, the first event is referred to as ``score domination'' while the second event is referred to as ``penalty majorization''.

Bounds will be established on the \(\ell_1\) convergence rate of the estimated coefficient vector as well as on the symmetrized Bregman divergences,  \(D^\ddag_{\gamma, j}(\widehat\gamma_j, \bar\gamma_j)\) and \(D^\ddag_{\alpha,j}(\widehat\alpha_j, \bar\alpha_j; \gamma_j)\), defined by
\begin{equation}
    \label{eq:bregman-divergences}
    \begin{split}
        D_{\gamma,j}^\ddag(\widehat\gamma_j, \bar\gamma_j)
        &:= \E_n\left[p_j(X)D\{e^{-\widehat\gamma_j'Z} - e^{-\bar\gamma_j'Z}\}\{\bar\gamma_j'Z - \widehat\gamma_j'Z\}\right], \\
        D_{\alpha,j}^\ddag(\widehat\alpha_j, \bar\alpha_j; \widehat\gamma)
        &:= \E_n\left[p_j(X)De^{-\widehat\gamma_j'Z}(\bar\alpha_j'Z - \widehat\alpha_j'Z)^2\right].
    \end{split}
\end{equation}
We refer readers to discussion in \citet{Tan-2017} for details and motiviation. For now it suffices to note that the Bregman divergence is the error resulting from approximating the non-penalized loss function at the estimated value with a first order Taylor expansion of the non-penalizd loss function at the true values. Because our loss functions are convex, these errors will always be positive. Bounds on the Bregman divergence help directly control second order terms in the remainder of \eqref{eq:taylor-expansion-gamma-alpha}.

\begin{lemma}[Nonasymptotic Bounds for Logistic Model]
    \label{lemma:nonasymptotic-logit}
    Suppose that \Cref{assm:logistic-model-convergence} holds with \(\xi_0 > (c_0 + 1)/(c_0 - 1)\) and \(2C_0\nu_0^{-2}s_k\bar\lambda_k \leq \eta < 1\). Then, under the events \(\Omega_{k,1}\cap\Omega_{k,2}\) defined in \eqref{eq:nonasymptotic-events-logit}, there exists a finite constant \(M_0\) that does not depend on \(k\) such that  
    \begin{equation}
        \label{eq:nonasymptotic-logit}
        \max_{1\leq j\leq k}D^\ddag(\bar g, \widehat g) \leq M_0 s_k \bar\lambda_k^2 \andbox \max_{1\leq j\leq k}\|\widehat\gamma_j - \bar\gamma_j\|_1 \leq M_0 s_k \bar\lambda_k   
    \end{equation}
\end{lemma}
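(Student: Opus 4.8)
\textbf{Proof proposal for \Cref{lemma:nonasymptotic-logit}.}
The plan is to run the now-standard oracle-inequality argument for $\ell_1$-penalized convex $M$-estimation (cf.\ \citet{Tan-2017,vanDerGreer2016}), working throughout on the intersection of the score-domination and penalty-majorization events $\Omega_{k,1}\cap\Omega_{k,2}$ from \eqref{eq:nonasymptotic-events-logit} and tracking that every constant is independent of $j$; uniformity over $j\le k$ is then automatic since $\Omega_{k,1}\cap\Omega_{k,2}$ is imposed for all $j\le k$. Fix $j$, write $\ell_{n,j}(\gamma):=\E_n[p_j(X)\{De^{-\gamma'Z}+(1-D)\gamma'Z\}]$ for the convex unpenalized objective in \eqref{eq:gamma-j-estimating-equation}, set $b:=\widehat\gamma_j-\bar\gamma_j$ and $\calS:=\calS_j$, and note $\nabla\ell_{n,j}(\bar\gamma_j)=S_{\gamma,j}$ and $D^\ddag_{\gamma,j}(\widehat\gamma_j,\bar\gamma_j)=\langle\nabla\ell_{n,j}(\widehat\gamma_j)-\nabla\ell_{n,j}(\bar\gamma_j),b\rangle\ge 0$.

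\emph{Basic inequality and cone condition.} Combining the first-order optimality condition for $\widehat\gamma_j$ with convexity of $\ell_{n,j}$ yields $D^\ddag_{\gamma,j}(\widehat\gamma_j,\bar\gamma_j)+\langle S_{\gamma,j},b\rangle\le\lambda_{\gamma,j}(\|\bar\gamma_j\|_1-\|\widehat\gamma_j\|_1)$. On $\Omega_{k,1}$, H\"older gives $|\langle S_{\gamma,j},b\rangle|\le\|S_{\gamma,j}\|_\infty\|b\|_1\le(\lambda_{\gamma,j}/c_0)\|b\|_1$, and $\|\bar\gamma_j\|_1-\|\widehat\gamma_j\|_1\le\|b_\calS\|_1-\|b_{\calS^c}\|_1$, so
\[
    D^\ddag_{\gamma,j}(\widehat\gamma_j,\bar\gamma_j)\le\lambda_{\gamma,j}\Bigl(1+\tfrac1{c_0}\Bigr)\|b_\calS\|_1-\lambda_{\gamma,j}\Bigl(1-\tfrac1{c_0}\Bigr)\|b_{\calS^c}\|_1 .
\]
Since $D^\ddag_{\gamma,j}\ge0$, this forces $\|b_{\calS^c}\|_1\le\tfrac{c_0+1}{c_0-1}\|b_\calS\|_1\le\xi_0\|b_\calS\|_1$ (using $\xi_0>(c_0+1)/(c_0-1)$), so $b$ lies in the cone on which the compatibility condition of \Cref{assm:logistic-model-convergence}(iv) applies, and on $\Omega_{k,2}$ we also get $D^\ddag_{\gamma,j}\le(1+1/c_0)\bar\lambda_k\|b_\calS\|_1$.

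\emph{Quadratic lower bound, localization, and closing the loop.} From the elementary inequality $(e^{-u}-e^{-v})(v-u)\ge e^{-\max(u,v)}(u-v)^2$ applied pointwise with $u=\widehat\gamma_j'Z$, $v=\bar\gamma_j'Z$, together with $|u-v|\le C_0\|b\|_1$ from \Cref{assm:logistic-model-convergence}(i), one gets $D^\ddag_{\gamma,j}(\widehat\gamma_j,\bar\gamma_j)\ge e^{-C_0\|b\|_1}\,b'\tilde\Sigma_{\gamma,j}b$ with $\tilde\Sigma_{\gamma,j}$ the weighted empirical Hessian of \Cref{assm:logistic-model-convergence}(iv). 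Chaining this with the compatibility condition and the basic inequality gives $\nu_0^2\|b_\calS\|_1^2\le s_k\,b'\tilde\Sigma_{\gamma,j}b\le s_k e^{C_0\|b\|_1}D^\ddag_{\gamma,j}\le s_k e^{C_0\|b\|_1}(1+1/c_0)\bar\lambda_k\|b_\calS\|_1$, hence $\|b\|_1\le(1+\xi_0)\nu_0^{-2}(1+1/c_0)s_k\bar\lambda_k e^{C_0\|b\|_1}$, a self-referential bound. To discharge the $e^{C_0\|b\|_1}$ factor I will run a standard convexity argument on the $\ell_1$-sphere of radius $R\asymp s_k\bar\lambda_k$ around $\bar\gamma_j$: the penalized objective is convex, and repeating the cone/compatibility computation on that sphere (where $e^{C_0R}\le e^{\eta}$ is bounded, using exactly $2C_0\nu_0^{-2}s_k\bar\lambda_k\le\eta<1$) shows it strictly exceeds its value at $\bar\gamma_j$ there, so $\widehat\gamma_j$ lies in the open ball and $C_0\|b\|_1\le\eta$. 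Feeding $e^{C_0\|b\|_1}\le e^{\eta}$ back into the self-referential bound yields $\|\widehat\gamma_j-\bar\gamma_j\|_1\le M_0 s_k\bar\lambda_k$ for $M_0$ depending only on $(c_0,\xi_0,\nu_0,C_0,\eta)$, and substituting into the basic inequality gives $D^\ddag_{\gamma,j}(\widehat\gamma_j,\bar\gamma_j)\le(1+1/c_0)\bar\lambda_k\|b_\calS\|_1\le M_0 s_k\bar\lambda_k^2$ (after enlarging $M_0$). Maximizing over $j\le k$ costs nothing.

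\emph{Main obstacle.} The delicate step is the localization in the last paragraph: because the logistic loss is nonlinear, the quadratic lower bound on $D^\ddag_{\gamma,j}$ carries the degrading factor $e^{-C_0\|b\|_1}$, so the oracle inequality cannot be closed without first confining $\widehat\gamma_j$ to an $\ell_1$-ball of radius $O(s_k\bar\lambda_k)$ around $\bar\gamma_j$; doing this rigorously via convexity (rather than circularly) and simultaneously uniformly over the growing index set $j=1,\dots,k$ is precisely what the scaling condition $2C_0\nu_0^{-2}s_k\bar\lambda_k\le\eta<1$ is there to guarantee, and it is the part that requires care.
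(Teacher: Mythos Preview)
Your outline is sound and matches the paper's proof through the basic inequality and the cone derivation; the divergence is in how you close the loop against the nonlinearity of the loss. You lower-bound the symmetrized Bregman divergence by the crude $D^\ddag_{\gamma,j}\ge e^{-C_0\|b\|_1}\,b'\tilde\Sigma_{\gamma,j}b$ and then propose a separate convexity-based localization on an $\ell_1$-sphere to tame the self-referential factor $e^{C_0\|b\|_1}$. The paper instead invokes the sharper lower bound of \citet{Tan-2017} (Lemma~4),
\[
D^\ddag_{\gamma,j}(\widehat\gamma_j,\bar\gamma_j)\ \ge\ \frac{1-e^{-C_0\|b\|_1}}{C_0\|b\|_1}\,b'\tilde\Sigma_{\gamma,j}b,
\]
which, after chaining with the basic inequality and compatibility exactly as you do, produces directly $1-e^{-C_0\|b\|_1}\le 2C_0\lambda_{\gamma,j}\nu_0^{-2}|\calS_j|\le\eta$; then $\tfrac{1-e^{-C_0\|b\|_1}}{C_0\|b\|_1}=\int_0^1 e^{-C_0\|b\|_1 t}\,dt\ge e^{-C_0\|b\|_1}\ge 1-\eta$ and the oracle inequality closes purely algebraically, with no localization pass at all. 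What the paper's route buys is that the hypothesis $2C_0\nu_0^{-2}s_k\bar\lambda_k\le\eta<1$ is used \emph{exactly once}, with no extra absolute constants, to bound the degrading factor.

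Your localization route would also succeed, but two points need tightening. First, ``repeating the cone/compatibility computation'' at an intermediate point $\tilde\gamma$ on the segment cannot reuse the $D^\ddag$-basic inequality you derived, because that inequality rests on the KKT condition at $\widehat\gamma_j$; at $\tilde\gamma$ you must instead work with the one-sided Bregman divergence $\ell_{n,j}(\tilde\gamma)-\ell_{n,j}(\bar\gamma_j)-\langle S_{\gamma,j},\tilde\gamma-\bar\gamma_j\rangle$, starting from $F(\tilde\gamma)\le F(\bar\gamma_j)$. Second, with the cruder curvature bound the feasible choice of radius $R$ absorbs factors like $(1+\xi_0)(1+1/c_0)$, so your literal claims ``$e^{C_0R}\le e^\eta$'' and ``$C_0\|b\|_1\le\eta$'' will not hold as stated; you will only get $C_0\|b\|_1$ bounded by a fixed constant depending on $(c_0,\xi_0,\nu_0,\eta)$, which is enough for the lemma but less clean than the paper's argument.
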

\begin{proof}
    We show that the bound of \eqref{eq:nonasymptotic-logit} holds for each \(j = 1,\dots,k\). For any \(\gamma\in\SR^d\) define \(\tilde\ell_j(\gamma) := \E_n[p_j(X)\{De^{-\gamma'Z} + (1-D)\gamma'Z\}]\). By optimality of \(\widehat\gamma_j\) we must have, for any \(u \in (0,1]\):
    \[
        \tilde\ell_j\left(\widehat\gamma_j\right) + \lambda_{\gamma, j}\|\widehat\gamma_j\|_1 \leq \tilde\ell\left((1-u)\widehat\gamma_j + u\bar\gamma_j\right) + \lambda_{\gamma, j}\|(1-u)\widehat\gamma_j + u\bar\gamma_j\|_1
    .\] 
    Using convexity of the \(\ell_1\) norm  \(\|\cdot\|_1\), this gives after rearrangment
    \[
        \tilde\ell_j\left(\widehat\gamma_j\right)-\tilde\ell\left((1-u)\widehat\gamma_j + u\bar\gamma_j\right) + \lambda_{\gamma, j} u \|\widehat\gamma_j\|_1 \leq \lambda_{\gamma, j}u\|\bar\gamma_j\|_1
    .\] 
    Divide both sides by \(u\) and let \(u \to^+ 0\)
    \[
        \E_n[p_j(X)D\{e^{-\widehat\gamma'Z} + (1-D)\}\{\widehat\gamma_j'Z - \bar\gamma_j'Z\}] + \lambda_{\gamma, j}\|\widehat\gamma_j\|_1 \leq \lambda_{\gamma, j}\|\bar\gamma_j\|_1
    .\] 
    By direct calculation, we have that \(D^\ddag_{\gamma,j}(\widehat\gamma_j, \bar\gamma_j)\) from \eqref{eq:bregman-divergences} can be expressed
    \[
        D_{\gamma,j}^\ddag(\widehat\gamma_j, \bar\gamma_j) = \E_n[p_j(X)D\{e^{-\widehat\gamma'Z} + (1-D)\}\{\widehat\gamma_j'Z - \bar\gamma_j'Z\}]  -  \E_n[p_j(X)D\{e^{-\bar\gamma'Z} + (1-D)\}\{\widehat\gamma_j'Z - \bar\gamma_j'Z\}] 
    .\] 
    Combining the last two displays yields
    \begin{equation}
        \label{eq:nonasymptotic-logit-step1}\tag{L.1}
        D_{\gamma,j}^\ddag(\widehat\gamma_j, \bar\gamma_j) + \E_n[p_j(X)D\{e^{-\bar\gamma'Z} + (1-D)\}\{\widehat\gamma_j'Z - \bar\gamma_j'Z\}] +\lambda_{\gamma,j}\|\widehat\gamma_j\|_1 \leq \lambda_{\gamma, j}\|\bar\gamma_j\|_1
    \end{equation}
    In the event \(\Omega_{k,1}\) we have that 
     \begin{equation}
        \label{eq:nonasymptotic-logit-score-domination}\tag{L.2}
        |\E_n[p_j(X)D\{e^{-\bar\gamma'Z} + (1-D)\{\widehat\gamma'Z - \bar\gamma'Z\}\}] \leq c_0^{-1}\lambda_{\gamma, j}\|\widehat\gamma_j - \bar\gamma_j\|_1
    \end{equation}
    Combining \eqref{eq:nonasymptotic-logit-step1} and \eqref{eq:nonasymptotic-logit-score-domination} yields 
    \[
        D_{\gamma,j}^\ddag(\widehat\gamma_j, \bar\gamma_j) + \lambda_{\gamma, j}\|\widehat\gamma_j\|_1 \leq \lambda_{\gamma,j}\|\bar\gamma_j\| + c_0^{-1}\lambda_{\gamma, j}\|\widehat\gamma_j - \bar\gamma_j\|_1
    .\] 
    Expanding \(\|\gamma_j\|_1 = \sum_{l\in \calS_{\gamma, j}}|\gamma_l| + \sum_{l \not\in \calS_{\gamma, j}}|\gamma_l|\) for \(\gamma =\widehat\gamma_j,\bar\gamma_j\) and applying the triangle inequalities \(|\widehat\gamma_{j,l}| \geq |\bar\gamma_{j,l}| - |\widehat\gamma_{j,l} - \bar\gamma_{j,l}|\) for \(l \in\calS_{\gamma, j}\) and the equality  \( \widehat\gamma_{j,l} = \widehat\gamma_{j,l} - \bar\gamma_{j,l}\) gives
    \begin{align*}
        D_{\gamma,j}^\ddag(\widehat\gamma_j, \bar\gamma_j) + \lambda_{\gamma, j}\bigg\{\sum_{l\in\calS_{\gamma, j}}
        &|\bar\gamma_{j,l}| - \sum_{l\in\calS_{\gamma, j}} |\widehat\gamma_{j,l} - \bar\gamma_{j,l}| + \sum_{j\not\in\calS_{\gamma, j}}|\widehat\gamma_{j,l} - \bar\gamma_{j,l}|\bigg\} \\
        &\leq \lambda_{\gamma, j}\bigg\{\sum_{l\in\calS_{\gamma, j}}|\bar\gamma_{j,l}| + c_0^{-1}\sum_{l\in\calS_{\gamma, j}}|\widehat\gamma_{j,l} - \bar\gamma_{j,l}| + c_0^{-1}\sum_{j\not\in\calS_{\gamma, j}}|\widehat\gamma_{j,l}-\bar\gamma_{j,l}|\bigg\}
    \end{align*}
    Rearrange to get
    \[
        D_{\gamma, j}^\ddag(\widehat\gamma_j,\bar\gamma_j) + (1-c_0^{-1})\lambda_{\gamma, j}\sum_{l\not\in\calS_{\beta}}|\widehat\gamma_{j,l} - \bar\gamma_{j,l}| \leq (1+c_0)^{-1}\lambda_{\gamma,j}\sum_{l\in\calS_{\gamma, j}}|\widehat\gamma_{j,l} - \bar\gamma_{j,l}|
    .\] 
    Adding \((1-c_0^{-1})\lambda_{\gamma, j}\sum_{l\in\calS_{\gamma, j}}|\widehat\gamma_{j,l} -\bar\gamma_{j,l}|\) gives
    \begin{equation}
        \label{eq:nonasymptotic-logit-step3}\tag{L.3}
        D_{\gamma, j}^\ddag(\widehat\gamma_j, \bar\gamma_j) + (1-c_0^{-1})\|\widehat\gamma_j - \bar\gamma_j\|_1 \leq 2\lambda_{\gamma, j}\sum_{l\in\calS_{\gamma, j}}|\widehat\gamma_{j,l} - \bar\gamma_{j,l}|
    \end{equation}
    By Lemma 4 in Appendix V.3 of \citet{Tan-2017} we have that for \(\delta_j := \widehat\gamma_j - \bar\gamma_j\)
    \begin{equation}
        \label{eq:non-asymptotic-logit-step4}\tag{L.4}
        D_{\gamma, j}^\ddag(\widehat\gamma_j, \bar\gamma_j) \geq \frac{1 - e^{-C_0}\|\delta_j\|_1}{C_0\|\widehat\delta_j\|}\left(\delta_j'\tilde\Sigma_{\gamma, j}\delta_j\right)
    \end{equation}
    By \eqref{eq:nonasymptotic-logit-step3} and \(\xi_0 > (c_0 + 1)/(c_0 - 1)\) we have that \(\sum_{l\not\in\calS_{\gamma, j}}|\delta_{j,l}| \leq \xi_0\sum_{l\in\calS_{\gamma, j}}|\delta_{j,l}|\). Applying the empirical compatability condition from \Cref{assm:logistic-model-convergence} to \eqref{eq:nonasymptotic-logit-step3} then yields
    \begin{equation}
        \label{eq:nonasymptotic-logit-step5}\tag{L.5}
        D_{\gamma, j}^\ddag(\widehat\gamma_j,\bar\gamma_j) + (1- c_0^{-1})\lambda_{\gamma,j}\|\delta_j\|_1 \leq 2\lambda_{\gamma, j}\nu_0^{-1}|\calS_{\gamma,j}|^{1/2}(\delta_j'\tilde\Sigma_{\gamma, j}\delta_j)^{1/2}
    \end{equation}
    Combining \eqref{eq:non-asymptotic-logit-step4} and \eqref{eq:nonasymptotic-logit-step5} to get an upper bound on \((\delta_j'\tilde\Sigma\delta_j)^{1/2}\) gives 
    \[
        \nu_0\|\delta_j\|_2 \leq (\delta_j'\tilde\Sigma_{\gamma, j}\delta_j)^{1/2} \leq 2\lambda_{\gamma, j}\nu_0^{-1}|\calS_{\gamma, j}|^{1/2}\frac{C_0\|\delta_j\|_1}{1 - e^{-C_0\|\delta_j\|_1}} 
    .\] 
    Plugging the second bound into \eqref{eq:nonasymptotic-logit-step5} gives
    \[
        D_{\gamma, j}^\ddag(\widehat\gamma_j,\bar\gamma_j) + (1- c_0^{-1})\lambda_{\gamma, j}\|\delta_j\|_1\leq 2\lambda\sum_{l\in\calS_{\gamma, j}}|\delta_{j,l}| \leq 4\lambda_{\gamma, j}^2\nu_0^{-2}|\calS_{\gamma, j}|\frac{C_0\|\delta_j\|_1}{1- e^{-C_0\|\delta_j\|_1}} 
    .\]
    The second inequality and \(\sum_{l\not\in\calS_{\gamma, j}}|\delta_{j,l}| \leq \xi_0\sum_{l\in\calS_{\gamma,j}}|\delta_{j,l}|\) imply \(1-e^{-C_0\|\delta_j\|_1} \leq 2C_0\lambda_{\gamma, j}\nu_0^{-2}|\calS_{\gamma, j}| \leq \eta\) so,
    \[
        \frac{1 - e^{-C_0\|\delta_j\|_1}}{C_0\|\delta_j\|_1} = \int_0^1e^{-C_0\|\delta_j\|_1 u}\,du \geq e^{-C_0\|\delta_j\|_1} \geq 1 - \eta
    .\] 
    Combining the last two displays gives
    \begin{equation}
        \label{eq:nonasymptotic-logit-step6}\tag{L.6}
        D_{\gamma,j}^\ddag(\widehat\gamma_j,\bar\gamma_j) + (1-c_0^{-1})\lambda_{\gamma, j}\|\widehat\gamma_j - \bar\gamma_j\|_1 \leq 4\lambda_{\gamma, j}^2 \nu_0^{-2}(1-\eta)|\calS_{\gamma, j}|
    \end{equation}
    Applying \(\Omega_{k,2}\) to bound  \(\lambda_{\gamma, j} \leq \bar\lambda_k\) and noting that \(|\calS_{\gamma, j}| \leq s_k\) by definition gives \eqref{eq:nonasymptotic-logit} with \(M_0 = \frac{4\nu_0^{-1}(1-\eta)}{1-c_0^{-1}}\).
\end{proof}

For each \(j\), consider the matrices,
\begin{equation}
    \label{eq:additional-matrices-outcome}
    \begin{split}
        \tilde\Sigma_{\alpha, j} &:= \E_n[p_j(X)De^{-\bar\gamma_j'Z}(Y-\bar\alpha_j'Z)^2ZZ']  \\
        \tilde\Sigma_{\gamma, j} &:= \E_n[p_j(X)De^{-\bar\gamma_j'Z}ZZ'] \\
    \end{split}
\end{equation}
In addition define \(\Sigma_{\alpha, j} := \E\tilde\Sigma_{\alpha, j}\) and \(\Sigma_{\gamma, j} := \E\tilde\Sigma_{\gamma, j}\). For the outcome regression model, we will consider nonasymptotic bounds under the following additional events:
\begin{equation}
    \label{eq:nonasymptotic-events-outcome}
    \begin{split}
        \Omega_{k,3} &:= \{\lambda_{\alpha, j} \geq c_0\|S_{\alpha, j}\|_\infty, \forall j \leq k\}  \\
        \Omega_{k,4} &:= \{\lambda_{\alpha, j} \leq \bar\lambda_k, \forall j\leq k\}  \\
        \Omega_{k,5} &:= \{\|\tilde\Sigma_{\alpha,j}- \Sigma_{\alpha,j}\|_\infty \leq \bar\lambda_k, \forall j\leq k\} \\
        \Omega_{k,6} &:= \{\|\tilde\Sigma_{\gamma, j} - \Sigma_{\gamma, j}\|_\infty \leq \bar\lambda_k, \forall j\leq k\}
    \end{split}
\end{equation}
\begin{lemma}[Nonasymptotic Bounds for Linear Model]
    \label{lemma:nonasymptotic-outcome}
    Suppose that \Cref{assm:logistic-model-convergence} holds, \(\xi_0  > (c_0 + 1)/(c_0 - 1)\), and \(2C_0\nu_0^{-2}s_k\bar\lambda_k \leq \eta < 1\). In addition, assume there is a constant \(c > 0\) such that  \(\lambda_{\alpha, j}/\lambda_{\gamma, j} \geq c\) for all \(j\leq k\). Then, under the event \(\bigcap_{m=1}^6 \Omega_{k,m}\) there is a constant \(M_1\) that does not depend on  \(k\) such that
    \begin{equation}
        \label{eq:nonasymptotic-outcome}
        \max_{1\leq j\leq k}D_{\alpha, j}^\ddag(\widehat\alpha_j, \bar\alpha_j;\bar\gamma_j) \leq M_1 s_k\bar\lambda_k^2 \andbox \max_{1\leq j\leq k}\|\widehat\alpha_j - \bar\alpha_j\|_1 \leq M_1s_k\bar\lambda_k
    \end{equation}
\end{lemma}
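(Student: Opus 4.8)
\emph{Overall approach.} The plan is to run the standard convex $M$‑estimation argument for the weighted $\ell_1$‑penalized least squares problem \eqref{eq:alpha-j-estimating-equation}, mirroring the proof of \Cref{lemma:nonasymptotic-logit} essentially line for line, with the one genuinely new ingredient being control of the error caused by the Lasso weight being $De^{-\widehat\gamma_j'Z}$ rather than its limit $De^{-\bar\gamma_j'Z}$. Fix $j\le k$; since no bound below will depend on $j$, the maximum over $j\le k$ is automatic. Work throughout on $\bigcap_{m=1}^6\Omega_{k,m}$. Writing $\tilde\ell_{\alpha,j}(\alpha):=\E_n[p_j(X)De^{-\widehat\gamma_j'Z}(Y-\alpha'Z)^2]/2$, optimality of $\widehat\alpha_j$ together with convexity of $\tilde\ell_{\alpha,j}$ and of $\|\cdot\|_1$ and the $u\downarrow 0$ perturbation trick of \Cref{lemma:nonasymptotic-logit} gives the basic inequality $\nabla\tilde\ell_{\alpha,j}(\widehat\alpha_j)'(\widehat\alpha_j-\bar\alpha_j)+\lambda_{\alpha,j}\|\widehat\alpha_j\|_1\le\lambda_{\alpha,j}\|\bar\alpha_j\|_1$. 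Because $\tilde\ell_{\alpha,j}$ is quadratic, $[\nabla\tilde\ell_{\alpha,j}(\widehat\alpha_j)-\nabla\tilde\ell_{\alpha,j}(\bar\alpha_j)]'(\widehat\alpha_j-\bar\alpha_j)$ equals $D_{\alpha,j}^\ddag(\widehat\alpha_j,\bar\alpha_j;\widehat\gamma_j)$ of \eqref{eq:bregman-divergences} \emph{exactly}, so the basic inequality reads
\[
D_{\alpha,j}^\ddag(\widehat\alpha_j,\bar\alpha_j;\widehat\gamma_j)-\E_n[p_j(X)De^{-\widehat\gamma_j'Z}(Y-\bar\alpha_j'Z)Z]'(\widehat\alpha_j-\bar\alpha_j)+\lambda_{\alpha,j}\|\widehat\alpha_j\|_1\le\lambda_{\alpha,j}\|\bar\alpha_j\|_1 .
\]

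\emph{Removing the estimated weight and the cross term.} First, \Cref{lemma:nonasymptotic-logit} gives $\|\widehat\gamma_j-\bar\gamma_j\|_1\le M_0 s_k\bar\lambda_k$, which is bounded by a fixed constant under the hypotheses, so by \Cref{assm:logistic-model-convergence}(i) $e^{-(\widehat\gamma_j-\bar\gamma_j)'Z}$ lies a.s.\ in a fixed compact subinterval of $(0,\infty)$; hence $D_{\alpha,j}^\ddag(\cdot;\widehat\gamma_j)$ and $D_{\alpha,j}^\ddag(\cdot;\bar\gamma_j)$ coincide up to universal constants and it suffices to control the latter. Next, write $e^{-\widehat\gamma_j'Z}=e^{-\bar\gamma_j'Z}+(e^{-\widehat\gamma_j'Z}-e^{-\bar\gamma_j'Z})$ to split the score term as $-S_{\alpha,j}'(\widehat\alpha_j-\bar\alpha_j)+T_j$, with $S_{\alpha,j}=\E_n[U_{\alpha,j}Z]$ as in \eqref{eq:score-definition} and $T_j:=-\E_n[p_j(X)D(e^{-\widehat\gamma_j'Z}-e^{-\bar\gamma_j'Z})(Y-\bar\alpha_j'Z)((\widehat\alpha_j-\bar\alpha_j)'Z)]$. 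On $\Omega_{k,3}$ the first piece is at most $c_0^{-1}\lambda_{\alpha,j}\|\widehat\alpha_j-\bar\alpha_j\|_1$. For $T_j$, use $|e^{-\widehat\gamma_j'Z}-e^{-\bar\gamma_j'Z}|\le e^{-\bar\gamma_j'Z}e^{C_0\|\widehat\gamma_j-\bar\gamma_j\|_1}|(\widehat\gamma_j-\bar\gamma_j)'Z|\lesssim e^{-\bar\gamma_j'Z}|(\widehat\gamma_j-\bar\gamma_j)'Z|$ and Cauchy--Schwarz to get $|T_j|\lesssim\big(\E_n[p_j(X)De^{-\bar\gamma_j'Z}(Y-\bar\alpha_j'Z)^2((\widehat\gamma_j-\bar\gamma_j)'Z)^2]\big)^{1/2}D_{\alpha,j}^\ddag(\widehat\alpha_j,\bar\alpha_j;\bar\gamma_j)^{1/2}$. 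In the first factor I would peel off the conditional second moment of the outcome residual — uniformly bounded on $\{D=1\}$ by \Cref{assm:identification}(ii) and \Cref{assm:logistic-model-convergence}(ii) — leaving $\E_n[p_j(X)De^{-\bar\gamma_j'Z}((\widehat\gamma_j-\bar\gamma_j)'Z)^2]=(\widehat\gamma_j-\bar\gamma_j)'\tilde\Sigma_{\gamma,j}(\widehat\gamma_j-\bar\gamma_j)$, which is $\lesssim s_k\bar\lambda_k^2$ directly from the bounds already established inside the proof of \Cref{lemma:nonasymptotic-logit} (its display \eqref{eq:nonasymptotic-logit-step6} together with \eqref{eq:non-asymptotic-logit-step4}). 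Thus $|T_j|\lesssim (s_k\bar\lambda_k^2)^{1/2}D_{\alpha,j}^\ddag(\widehat\alpha_j,\bar\alpha_j;\bar\gamma_j)^{1/2}$, and Young's inequality absorbs $\tfrac12 D_{\alpha,j}^\ddag$ into the left side at the cost of an additive $O(s_k\bar\lambda_k^2)$.

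\emph{Cone condition, compatibility, and closing.} Collecting the pieces and splitting $\|\widehat\alpha_j\|_1$, $\|\bar\alpha_j\|_1$ over $\calS_{\alpha,j}$ and its complement as in \Cref{lemma:nonasymptotic-logit} yields, for $\delta_j:=\widehat\alpha_j-\bar\alpha_j$, an inequality of the shape $\tfrac12 D_{\alpha,j}^\ddag(\widehat\alpha_j,\bar\alpha_j;\bar\gamma_j)+(1-c_0^{-1})\lambda_{\alpha,j}\|(\delta_j)_{\calS_{\alpha,j}^c}\|_1\le(1+c_0^{-1})\lambda_{\alpha,j}\|(\delta_j)_{\calS_{\alpha,j}}\|_1+Cs_k\bar\lambda_k^2$; since $s_k\bar\lambda_k^2\lesssim\lambda_{\alpha,j}s_k\bar\lambda_k$ with $s_k\bar\lambda_k\le\eta$ (this is where $\lambda_{\alpha,j}/\lambda_{\gamma,j}\ge c$ and $\Omega_{k,4}$ keep $\lambda_{\alpha,j}$ comparable to $\bar\lambda_k$) and $\xi_0>(c_0+1)/(c_0-1)$, the cone condition $\|(\delta_j)_{\calS_{\alpha,j}^c}\|_1\le\xi_0\|(\delta_j)_{\calS_{\alpha,j}}\|_1$ follows. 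Unlike the logit case no curvature bound is needed: $D_{\alpha,j}^\ddag(\widehat\alpha_j,\bar\alpha_j;\bar\gamma_j)=\delta_j'\tilde\Sigma_{\gamma,j}\delta_j$ is \emph{already} the quadratic form governed by \Cref{assm:logistic-model-convergence}(iv), so the empirical compatibility condition (on the event $\Omega_{k,6}$) applied under the cone condition gives $\nu_0^2\|(\delta_j)_{\calS_{\alpha,j}}\|_1^2\le|\calS_{\alpha,j}|\,D_{\alpha,j}^\ddag(\widehat\alpha_j,\bar\alpha_j;\bar\gamma_j)$. Substituting back produces $D_{\alpha,j}^\ddag+\lambda_{\alpha,j}\|\delta_j\|_1\lesssim\lambda_{\alpha,j}|\calS_{\alpha,j}|^{1/2}(D_{\alpha,j}^\ddag)^{1/2}+s_k\bar\lambda_k^2$; solving this scalar quadratic inequality for $(D_{\alpha,j}^\ddag)^{1/2}$ and using $|\calS_{\alpha,j}|\le s_k$, $\lambda_{\alpha,j}\le\bar\lambda_k$, $s_k\bar\lambda_k\le\eta<1$ delivers $D_{\alpha,j}^\ddag(\widehat\alpha_j,\bar\alpha_j;\bar\gamma_j)\lesssim s_k\bar\lambda_k^2$ and then $\|\widehat\alpha_j-\bar\alpha_j\|_1\lesssim s_k\bar\lambda_k$, the implied constant $M_1$ depending only on the fixed constants $C_0,B_0,\nu_0,\xi_0,c_0,c,G_0,G_1,\eta$. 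Maximizing over $j\le k$ finishes the proof; along the way one also records the by-products $\E_n[p_j(X)De^{-\bar\gamma_j'Z}((\widehat\gamma_j-\bar\gamma_j)'Z)^2]\lesssim s_k\bar\lambda_k^2$ and $D_{\alpha,j}^\ddag(\widehat\alpha_j,\bar\alpha_j;\widehat\gamma_j)\lesssim s_k\bar\lambda_k^2$, which are what \Cref{lemma:nonasymptotic-means} and \Cref{lemma:nonasymptotic-variance} cite back.

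\textbf{Main obstacle.} The one step that is not routine weighted‑Lasso bookkeeping is the cross term $T_j$ produced by the plug‑in $\widehat\gamma_j$: it is bilinear in $(\widehat\gamma_j-\bar\gamma_j,\widehat\alpha_j-\bar\alpha_j)$, and to keep it from dominating I must \emph{not} crudely factor out $\|\widehat\alpha_j-\bar\alpha_j\|_1$ but instead keep $(\widehat\alpha_j-\bar\alpha_j)'Z$ inside a copy of $D_{\alpha,j}^\ddag$ so it can be re‑absorbed, while simultaneously using a \emph{sharp} ($\lesssim s_k\bar\lambda_k^2$, not $\lesssim\xi_{k,\infty}s_k\bar\lambda_k^2$) bound on the $\widehat\gamma_j$‑side weighted quadratic form. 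This forces joint use of the $\ell_1$‑rate for $\widehat\gamma_j$ and the quadratic‑form rate $(\widehat\gamma_j-\bar\gamma_j)'\tilde\Sigma_{\gamma,j}(\widehat\gamma_j-\bar\gamma_j)\lesssim s_k\bar\lambda_k^2$ from \Cref{lemma:nonasymptotic-logit} together with the uniform sub‑Gaussianity of the outcome residuals in \Cref{assm:logistic-model-convergence}(ii); everything else transfers verbatim from the logistic case once the weight has been replaced by its limit.
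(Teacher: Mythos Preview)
Your outline follows the paper's architecture closely (basic inequality, decompose the score at $\widehat\gamma_j$ into the score at $\bar\gamma_j$ plus a cross term, Cauchy--Schwarz the cross term against $D_{\alpha,j}^\ddag(\cdot;\bar\gamma_j)^{1/2}$, then cone $+$ compatibility). Two steps, however, do not go through as written.

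\textbf{Peeling the outcome residual inside an empirical average.} You propose to bound $\E_n[p_j(X)De^{-\bar\gamma_j'Z}(Y-\bar\alpha_j'Z)^2((\widehat\gamma_j-\bar\gamma_j)'Z)^2]$ by ``peeling off the conditional second moment of the outcome residual'' and leaving $(\widehat\gamma_j-\bar\gamma_j)'\tilde\Sigma_{\gamma,j}(\widehat\gamma_j-\bar\gamma_j)$. This is not valid: sub-Gaussianity of $Y-\bar\alpha_j'Z$ bounds its \emph{conditional expectation}, not the random variable itself, and you cannot replace a random factor by its conditional mean inside $\E_n$. The paper (display (O.8)) instead uses $\Omega_{k,5}$ to control $(\E_n-\E)$ of the full quadratic form, then bounds the \emph{population} expectation via the sub-Gaussian moment, then uses $\Omega_{k,6}$ to return to $\E_n$. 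You never invoke $\Omega_{k,5}$ or $\Omega_{k,6}$ in this step, so the bound is unjustified.

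\textbf{The additive $s_k\bar\lambda_k^2$ term and the $\ell_1$ bound.} After Young's inequality you carry an additive $Cs_k\bar\lambda_k^2$. To get the cone condition you assert ``$\lambda_{\alpha,j}$ comparable to $\bar\lambda_k$'', but the events give only the \emph{upper} bound $\lambda_{\alpha,j}\le\bar\lambda_k$; nothing lower-bounds $\lambda_{\alpha,j}$ by a multiple of $\bar\lambda_k$ (the ratio $\lambda_{\alpha,j}/\lambda_{\gamma,j}\ge c$ does not help since $\lambda_{\gamma,j}$ itself has no such lower bound). Even granting the cone condition, your final inequality $\lambda_{\alpha,j}\|\delta_j\|_1\lesssim s_k\bar\lambda_k^2$ yields only $\|\delta_j\|_1\lesssim s_k\bar\lambda_k^2/\lambda_{\alpha,j}$, which is not $\lesssim s_k\bar\lambda_k$ without that missing lower bound. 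The paper avoids this by \emph{not} Young'ing: it keeps the cross term in the form $\delta_{3,j}(D^\ddag)^{1/2}$ with $\delta_{3,j}\lesssim (s_k\lambda_{\gamma,j}\bar\lambda_k)^{1/2}+(s_k\lambda_{\gamma,j}^2)^{1/2}$ (note the explicit $\lambda_{\gamma,j}$, not $\bar\lambda_k$), then runs a two-case argument (either the cross term dominates, or the $\lambda_{\alpha,j}$-weighted $\ell_1$ term does), and finally uses $\lambda_{\gamma,j}/\lambda_{\alpha,j}\le c^{-1}$ when dividing through by $\lambda_{\alpha,j}$. Retaining the $\lambda_{\gamma,j}$ dependence in the cross-term bound, rather than coarsening to $\bar\lambda_k$, is precisely what makes the ratio assumption $\lambda_{\alpha,j}/\lambda_{\gamma,j}\ge c$ do work.
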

\begin{proof}
    We show that the bound of \eqref{eq:nonasymptotic-outcome} holds for each \(j=1,\dots,k\).  We proceed in a few steps.

    \emph{Step 1: Optimization Step.} Let \(\tilde\ell_j(\alpha;\widehat\gamma_j) := \E_n[p_j(X)De^{-\widehat\gamma_j'Z}\{Y - \alpha'Z\}^2]/2\). Optimality of \(\widehat\alpha_j\) implies that for any \(u \in (0,1]\): 
    \[
        \tilde\ell_j\left(\widehat\alpha_j; \widehat\gamma_j\right) - \tilde\ell_j\left((1-u)\widehat\alpha_j + u\bar\alpha_j;\widehat\gamma_j\right) + \lambda_{\alpha, j}\|\widehat\alpha_j\|_1 \leq \lambda_{\alpha, j}\|(1-u)\widehat\alpha_j + u\bar\alpha_j\|_1
    .\]
    Convexity of the \(\ell_1\) norm  \(\|\cdot\|_1\) gives
    \[
         \tilde\ell_j\left(\widehat\alpha_j; \widehat\gamma_j\right) - \tilde\ell_j\left((1-u)\widehat\alpha_j + u\bar\alpha_j;\widehat\gamma_j\right) + \lambda_{\alpha, j}u\|\widehat\alpha_j\|_1 \leq \lambda_{\alpha, j}u\|\bar\alpha_j\|_1
    .\] 
    Dividing both sides by \(u\) and letting  \(u \to 0^+\) gives:
    \[
        -\E_n[p_j(X)De^{-\widehat\gamma_j'Z}\{Y - \widehat\alpha_j'Z\}\{\widehat\alpha_j'Z - \bar\alpha_j'Z\}] + \lambda_{\alpha,j}\|\widehat\alpha_j\|_1 \leq \lambda_{\alpha, j}\|\bar\alpha_j\|_1
    .\] 
    Rearranging using the form of \(D_{\alpha, j}^\ddag\) in  \eqref{eq:bregman-divergences} yields:
    \begin{equation}
        \label{eq:nonasymptotic-outcome-step1}\tag{O.1}
        D_{\alpha, j}^\ddag(\widehat\alpha_j, \bar\alpha_j; \widehat\gamma_j) + \lambda_{\alpha, j}\|\widehat\alpha_j\|_1 \leq (\widehat\alpha_j - \bar\alpha_j')\E_n[p_j(X)De^{-\widehat\gamma'Z}\{Y - \bar\alpha_j'Z\}Z] + \lambda_{\alpha, j}\|\bar\alpha_j\|_1
    \end{equation}

    \emph{Step 2: Quasi-Score Domination and relating \(\bar\gamma_j\) to \(\widehat\gamma_j\).} For this step, we will use the fact that we are in the event \(\Omega_{k,1} \cap \Omega_{k,2} \cap \Omega_{k,3} \cap \Omega_{k,5} \cap \Omega_{k,6}\). Using the expression for \(D_{\gamma, j}^\ddag(\widehat\gamma_j, \bar\gamma_j)\) from \eqref{eq:bregman-divergences} we find that for some \(u\in (0,1)\):
     \begin{align*}
        D_{\gamma, j}^\ddag(\widehat\gamma_j, \bar\gamma_j) 
        &= -\E_n[p_j(X)D\{e^{-\widehat\gamma_j'Z} - e^{-\bar\gamma_j'Z}\}\{\widehat\gamma_j'Z - \bar\gamma_j'Z\}] \\
        &= \E_n[p_j(X)De^{-u(\widehat\gamma_j - \bar\gamma_j)'Z}e^{-\bar\gamma_j'Z}\{\widehat\gamma_j'Z - \bar\gamma_j'Z\}^2]
    \end{align*}
    where the second step uses the mean value theorem: 
    \begin{equation}
        \label{eq:nonasymptotic-outcome-mvt}\tag{O.2}
        e^{-\widehat\gamma_j'Z}-e^{-\bar\gamma_j'Z} = e^{-u\widehat\gamma_j'Z - (1-u)\bar\gamma_j'Z}(\widehat\gamma_j - \bar\gamma_j)'Z 
    \end{equation}
    In the event \(\Omega_{k,1}\cap\Omega_{k,2}\) using the bound in Online Appendix \Cref{lemma:nonasymptotic-logit} and the fact that \(C_0\nu_0^{-2}s_k\bar\lambda_k \leq \eta < 1\) gives us that
    \begin{equation}
        \label{eq:nonasymptotic-outcome-step2.1}\tag{O.3}
        C_0\|\widehat\gamma_j - \bar\gamma_j\|_1 \leq C_0M_0s_k\bar\lambda_k \leq M_0\eta 
    .\end{equation} 
    In the event \(\Omega_{k,1}\cap\Omega_{k,2}\) the bound in \eqref{eq:nonasymptotic-logit-step6} also gives us that \(D_{\gamma, j}^\ddag(\widehat\gamma_j, \bar \gamma_j) \leq M_0s_k\lambda_{\gamma,j}^2\). Combining the above displays then yields 
    \begin{equation}
       \label{eq:nonasymptotic-outcome-step2.2}\tag{O.4}
       M_0s_k\lambda_{\gamma, j}^2 \geq  D_{\gamma, j}^\ddag(\widehat\gamma_j, \bar\gamma_j) \geq e^{M_0\eta}\E_n[p_j(X)De^{-\bar\gamma_j'Z}\{\widehat\gamma_j'Z - \bar\gamma_j'Z\}^2]
    .\end{equation} 
    Again applying the bound on \(C_0\|\widehat\gamma_j - \bar\gamma_j\|_1\)~\eqref{eq:nonasymptotic-outcome-step2.1} gives 
    \begin{equation}
        \label{eq:nonasymptotic-outcome-step2.3}\tag{O.5}
        \begin{split}
            D_{\alpha,j}^\ddag(\widehat\alpha_j, \bar\alpha_j; \widehat\gamma_j)
            &= \E_n[p_j(X)De^{-\widehat\gamma_j'Z}(\widehat\alpha_j'Z - \bar\alpha_j'Z)^2] \\
            &= \E_n[p_j(X)De^{-(\widehat\gamma_j - \bar\gamma_j)'Z}e^{-\bar\gamma_j'Z}(\widehat\alpha_j'Z - \bar\alpha_j'Z)^2] \\
            &\geq e^{-M_0\eta}D_{\alpha,j}^\ddag(\widehat\alpha_j, \bar\alpha_j ; \bar\gamma_j)
        \end{split}
    \end{equation}
    Decomposing the empirical expectation on the RHS of \eqref{eq:nonasymptotic-outcome-step1} gives
    \begin{align*}
        (\widehat\alpha_j-\bar\alpha_j)'\E_n[p_j(X)De^{-\widehat\gamma_j'Z}\{Y-\bar\alpha_j'Z\}Z]
        &= \underbrace{(\widehat\alpha_j - \bar\alpha_j)'\E_n[p_j(X)De^{-\bar\gamma_j'Z}\{Y - \bar\alpha_j'Z\}Z]}_{\delta_{1,j}} \\
        &\;\;+\underbrace{\E_n[p_j(X)D\{e^{-\widehat\gamma_j'Z}-e^{-\bar\gamma_j'Z}\}\{Y- \bar\alpha_j'Z\}\{\widehat\alpha_j'Z - \bar\alpha_j'Z\} ]}_{\delta_{2,j}} 
    \end{align*}
    By Hölder's inequality, in the event \(\Omega_{k,3}\),  \(\delta_{1,j}\) is bounded 
    \begin{equation}
        \label{eq:nonasymptotic-outcome-step2-delta1-bound}\tag{O.6}
        \delta_{1,j} \leq c_0^{-1}\|\widehat\alpha_j - \bar\alpha_j\|_1\lambda_{\alpha, j} 
    \end{equation}
    By the mean value equation~\eqref{eq:nonasymptotic-outcome-mvt} and the Cauchy-Schwarz inequality, \(\delta_{2,j}\) can be bounded from above by 
    \begin{equation}
        \label{eq:nonasymptotic-outcome-step2-delta2-first-step}\tag{O.7}
        \begin{split}
            \delta_{2,j} \leq e^{C_0\|\widehat\gamma_j - \bar\gamma_j\|_1}
            &\times \E_n^{1/2}[p_j(X)De^{-\bar\gamma_j'Z}\{\widehat\alpha'Z - \bar\alpha'Z\}^2] \\
            &\times \E_n^{1/2}[p_j(X)De^{-\bar\gamma_j'Z}\{Y - \bar\alpha_j'Z\}^2\{\widehat\gamma_j'Z - \bar\gamma_j'Z\}^2]
        \end{split}
    \end{equation}
    Using \eqref{eq:nonasymptotic-outcome-step2.1} the first term in \eqref{eq:nonasymptotic-outcome-step2-delta2-first-step} can be bounded by \(e^{M_0\eta}\). The second term is exactly the square root of \(D_{\alpha,j}^\ddag(\widehat\alpha_j, \bar\alpha_j;\bar\gamma_j)\). The third term is bounded in a few steps. First, in the event \(\Omega_{k,5}\) we have that
    \[
        (\E_n - \E)[p_j(X)De^{-\bar\gamma_j'Z}\{Y - \bar\alpha_j'Z\}^2\{\widehat\gamma_j'Z - \bar\gamma_j'Z\}] 
        \leq \bar\lambda_k\|\widehat\gamma_j -\bar\gamma_j\|_1^2
    .\] 
    By \Cref{assm:logistic-model-convergence} and \Cref{lemma:subgaussian-expectation-bound} we have that \(\E[D\{Y - \bar\alpha_j'Z\}^2] \leq G_0^2 + G_1^2\) so that:
    \[
        \E[p_j(X)De^{-\bar\gamma_j'Z}\{Y - \bar\alpha_j'Z\}^2\{\widehat\gamma_j'Z - \bar\gamma_j'Z\}^2]
        \leq (G_0^2 + G_1^2)\E[p_j(X)De^{-\bar\gamma_j'Z}\{\widehat\gamma_j'Z - \bar\gamma_j'Z\}^2]
    .\] 
    In the event \(\Omega_{k,6}\) we have that 
    \[
        (\E_n - \E)[p_j(X)De^{-\bar\gamma_j'Z}\{\widehat\gamma_j'Z - \bar\gamma_j'Z\}^2] \leq \bar\lambda_k\|\widehat\gamma_j - \bar\gamma_j\|_1
    .\] 
    and we can bound \(\E_n[p_j(X)De^{-\bar\gamma_j'Z}\{\widehat\gamma_j'Z - \bar\gamma_j'Z\}^2]\) using~\eqref{eq:nonasymptotic-outcome-step2.2}. Putting this together gives 
    \begin{equation}
        \label{eq:nonasymptotic-outcome-step2.4}\tag{O.8}
        \begin{split}
            \E_n[p_j(X)De^{-\bar\gamma_j'Z}\{Y - \bar\alpha_j'Z\}^2\{\widehat\gamma_j'Z - \bar\gamma_j'Z\}^2]
            \leq \bar\lambda_k\|\widehat\gamma_j - \bar\gamma_j\|_1^2\;\;\;\;\;\;\;&  \\
            + (G_0^2 + G_1^2)&\bar\lambda_k\|\widehat\gamma_j - \bar\gamma_j\|_1^2 \\
            &+ (G_0^2 + G_1^2)e^{-M_0\eta}M_0s_k\lambda_{\gamma, j}^2
        \end{split}
    \end{equation}
    Applying convexity of \(\sqrt{\cdot}\) and the bounds on \(\|\widehat\gamma_j - \bar\gamma_j\|_1^2\) in the event \(\Omega_{k,1}\cap\Omega_{k,2}\) from \eqref{eq:nonasymptotic-logit-step6} gives
    \begin{equation}
        \label{eq:nonasymptotic-outcome-step2-delta2-bound}\tag{O.9}
        \begin{split}
            \delta_{2,j} 
            &\leq \{e^{M_0\eta}(1 + (G_0^2 + G_1^2)^{1/2})(M_0\bar\lambda_k\lambda_{\gamma, j}s_k)^{1/2} + (G_0^2 + G_1)^2(M_0s_k\lambda_{\gamma, j}^2)^{1/2}\}D_{\alpha, j}^\ddag(\widehat\alpha_j, \bar\alpha_j;\bar\gamma_j)^{1/2} \\
            &\leq \tilde C \{(\bar\lambda_k\lambda_{\gamma, j}s_k)^{1/2} + (s_k\lambda_{\gamma, j})^{1/2}\}D_{\alpha, j}^\ddag(\widehat\alpha_j, \bar\alpha_j ; \bar\gamma_j)^{1/2} 
        \end{split}
    \end{equation}
    where \(\tilde C = \max\{e^{M_0\eta} M_0^{1/2}(1 + G_0 + G_1), (G_0^2 + G_1^2)M_0^{1/2}\}\). Combining  \eqref{eq:nonasymptotic-outcome-step2-delta1-bound} and \eqref{eq:nonasymptotic-outcome-step2-delta2-bound} gives a bound on the empirical expectation on the RHS of \eqref{eq:nonasymptotic-outcome-step1}.
    \begin{equation}
        \label{eq:nonasymptotic-outcome-step2}\tag{O.10}
        \begin{split}
            (\widehat\alpha_j - \bar\alpha_j)'\E_n[p_j(X)De^{-\widehat\gamma_j'Z}\{Y - \bar\alpha_j'Z\}Z] 
            &\leq \underbrace{c_0^{-1}\|\widehat\alpha_j-\bar\alpha_j\|_1\lambda_{\alpha, j}}_{\text{Bound on \(\delta_{1,j}\) from \eqref{eq:nonasymptotic-outcome-step2-delta1-bound}}}\\
            &\;\;+ \underbrace{\tilde C\{(\bar\lambda_k\lambda_{\gamma, j}s_k)^{1/2} + (s_k\lambda_{\gamma, j}^2)^{1/2}\}D_{\alpha, j}^\ddag(\widehat\alpha_j, \bar\alpha_j; \bar\gamma_j)^{1/2}}_{\text{Bound on \(\delta_{2,j}\) from \eqref{eq:nonasymptotic-outcome-step2-delta2-bound}}}  
        \end{split}
    \end{equation}
    For convenience, we will sometimes continue to refer to the bound on \(\delta_{2,j}\) from \eqref{eq:nonasymptotic-outcome-step2-delta2-bound} as simply \(\delta_{2,j}\).

    \emph{Step 3: Express Minimization Constraint in Terms of \(\bar\gamma_j\) and Simplify.}
    We use the results from \emph{Step 2} to rewrite the minimization bound \eqref{eq:nonasymptotic-outcome-step1} from \emph{Step 1}. Using 
    \eqref{eq:nonasymptotic-outcome-step2.3} and \eqref{eq:nonasymptotic-outcome-step2} together with the minimization bound \eqref{eq:nonasymptotic-outcome-step1} yields
    \begin{equation}
        \label{eq:nonasymptotic-outcome-step3.1}\tag{O.11}
        e^{-M_0\eta}D_{\alpha, j}^\ddag(\widehat\alpha_j, \bar\alpha_j;\bar\gamma_j) + \lambda_{\alpha,j}\|\widehat\alpha_j\|_1 \leq c_0^{-1}\lambda_{\alpha, j}\|\widehat\alpha_j - \bar\alpha_j\|_1 + \lambda_{\alpha, j}\|\bar\alpha_j\|_1 + \delta_{2,j}
    \end{equation}
    Apply the triangle inequality \(|\widehat\alpha_{j,l}| \geq |\bar\alpha_{j,l}| - |\widehat\alpha_{j,l} - \bar\alpha_{j,l}|\) for \(l\in\calS_{\alpha, j}\) and  \(|\widehat\alpha_{j,l}| = |\widehat\alpha_{j,l} - \bar\alpha_{j,l}|\) for \(l\not\in\calS_{\alpha,j}\) to the above to obtain
    \[
        e^{-M_0\eta}D_{\alpha, j}^\ddag(\widehat\alpha_j, \bar\alpha_j;\bar\gamma_j) + (1-c_0^{-1})\|\widehat\alpha_j - \bar\alpha_j\|_1 \leq 2\lambda_{\alpha, j}\sum_{l\in\calS_{\alpha, j}} |\widehat\alpha_{j,l} - \bar\alpha_{j,l}| + \delta_{2,j}
    .\] 
    Let \(\delta_j = \widehat\alpha_j - \bar\alpha_j\). We use the form \(D_{\alpha, j}^\ddag(\widehat\alpha_j,\bar\alpha_j) = \E_n[p_j(X)De^{-\bar\gamma_j'Z}\{\widehat\alpha_j'Z - \bar\alpha_j'Z\}^2] = \delta_j'\tilde\Sigma_{\gamma,j}\delta_j\) to expand out
    \begin{equation}
        \label{eq:nonasymptotic-outcome-step3}\tag{O.12}
        \begin{split}
            e^{-M_0\eta}(\delta_j'\tilde\Sigma_{\gamma, j}\delta_j) + (1-c_0^{-1})\lambda_{\alpha,j}\|\delta\|_1 
            &\leq 2\lambda_{\alpha, j}\sum_{l\in\calS_{\alpha,j}}|\delta_{j,l}| \\
            &\;\;\;\;+\tilde C \{(s_k\bar\lambda_k\lambda_{\gamma, j})^{1/2} + (s_k\lambda_{\gamma, j})^{1/2}\}(\delta_j'\tilde\Sigma_{\gamma,j}\delta_j)^{1/2} 
        \end{split}
    \end{equation}

    \emph{Step 4: Apply Empirical Compatability Condition.}
    Let \(\delta_{3,j} := \tilde C \{(s_k\bar\lambda_k\lambda_{\gamma,j})^{1/2} + (s_k\lambda_{\gamma,j})^{1/2}\}\) and \(D_{\alpha,j}^\star := e^{-M_0\eta}(\delta_j'\tilde\Sigma_{\gamma,j}\delta_j) + (1-c_0^{-1})\lambda_{\alpha,j}\|\delta_j\|_1\). In the even \(\Omega_{k,1}\cap\Omega_{k,2}\cap \Omega_{k,3}\cap\Omega_{k,5}\cap\Omega_{k,6}\) that  \eqref{eq:nonasymptotic-outcome-step3} holds, there are two possibilities. For \(\xi_2 =1 - 2c_0/\{(\xi_1 + 1)(c_0 -1)\}\in(0,1]\) either 
    \begin{equation}
        \label{eq:nonasymptotic-outcome-step4-case1}\tag{O.13}
        \xi_2 D_{\alpha, j}^\star \leq \delta_{3,j}(\delta_j'\tilde\Sigma_{\gamma,j}\delta_j)^{1/2}
    \end{equation}
    or \((1-\xi_2)D_{\alpha,j}^\star \leq 2\lambda_{\alpha, j}\sum_{l\in\calS_{\alpha,j}}|\delta_{j,l}|\), that is 
    \begin{equation}
        \label{eq:nonasymptotic-outcome-step4-case2}\tag{O.14}
        D_{\alpha_j}^\star \leq (\xi_1+1)(c_0 -1)c_0^{-1}\lambda_{\alpha,j}\sum_{l\in\calS_{\alpha,j}}|\delta_{j,l}|
    \end{equation}
    We deal with these two cases separately. First, if \eqref{eq:nonasymptotic-outcome-step4-case2} holds, then \(\sum_{l\not\in\calS_{\alpha,j}}|\delta_{j,l}| \leq \xi_1\sum_{l\in\calS_{j,l}}|\delta_{j,l}|\). We can apply the empirical compatability of \Cref{assm:logistic-model-convergence} to \eqref{eq:nonasymptotic-outcome-step4-case2} to obtain.
    \[
        e^{-M_0\eta}(\delta_j'\tilde\Sigma_{\gamma,j}\delta_j) 
        + (1-c_0^{-1})\lambda_{\alpha, j}\|\delta_{j,l}\| \leq \nu_1(\xi_1+ 1)(\xi_1 - 1)\lambda_{\alpha, j}(s_j\delta_j\tilde\Sigma_{\gamma,j}\delta_j)^{1/2}
    .\] 
    Inverting for \((\delta_j\tilde\Sigma_{\gamma,j}\delta_j)^{1/2}\) and plugging in gives 
    \begin{equation}
        \label{eq:nonasymptotic-outcome-step4.1}\tag{O.15}
        e^{-M_0\eta}D_{\alpha,j}^\ddag(\widehat\alpha,\bar\alpha_j;\bar\gamma_j) + (1-c_0^{-1})\lambda_{\alpha,j}\|\widehat\alpha_j-\bar\alpha_j\|_1 \leq \tilde M s_k\lambda_{\alpha,j}^2
    \end{equation}
    where \(\tilde M =e^{M_0\eta}(\xi_1 + 1)(c_0 - 1)c_0^{-1}\). Next, assume that \eqref{eq:nonasymptotic-outcome-step4-case1} holds. In this case, we can directly invert for \((\delta_j\tilde\Sigma_{\gamma,j}\delta_j)^{1/2}\) to get that
    \begin{equation}
        \label{eq:nonasymptotic-outcome-step4.2}\tag{O.16}
        e^{-M_0\eta}D_{\alpha,j}^\ddag(\widehat\alpha_j,\bar\alpha_j;\bar\gamma_j) + (1-c_0^{-1})\lambda_{\alpha,j}\|\widehat\alpha_j-\bar\alpha_j\|_1 \leq \xi_2^{-1}\tilde C\{(s_k\bar\lambda_k\lambda_{\gamma,j})^{1/2} + (s_k\lambda_{\gamma,j}^2)^{1/2}\}^2 
    \end{equation}
    Combining \eqref{eq:nonasymptotic-outcome-step4.1} and \eqref{eq:nonasymptotic-outcome-step4.2} gives
    \begin{equation}
        \label{eq:nonasymptotic-outcome-step4}\tag{O.17}
        \begin{split}
            e^{-M_0\eta}D_{\alpha,j}^\ddag(\widehat\alpha_j,\bar\alpha_j;\bar\gamma_j) + (1-c_0^{-1})\lambda_{\alpha,j}\|\widehat\alpha_j-\bar\alpha_j\|_1 
            &\leq \tilde Ms_k \lambda_{\alpha,j}^2 \\
            &\;\;\;\;\;+ \xi_2^{-1}\tilde C\{(s_k\bar\lambda_k\lambda_{\gamma,j})^{1/2} + (s_k\lambda_{\gamma,j}^2)^{1/2}\}^2 
        \end{split}
    \end{equation}
    \emph{Step 5: Apply Penalty Majorization and Bounded Penalty Ratio.}
    Use the fact that \(\lambda_{\gamma,j}/\lambda_{\alpha,j} \leq c^{-1}\) to express \eqref{eq:nonasymptotic-outcome-step4} as 
    \begin{align*}
        D_{\alpha,j}^\ddag(\widehat\alpha_j,\bar\alpha_j;\bar\gamma_j) 
        &\leq  e^{M_0\eta}\tilde Ms_k\lambda_{\alpha,j}^2 + e^{M_0\eta}\xi_2^{-1}\tilde C\{(s_k\bar\lambda_k\lambda_{\gamma,j})^{1/2} + (s_k\lambda_{\gamma,j}^2)^{1/2}\}^2  \\
        \|\widehat\alpha_j - \bar\alpha_j\|_1 
        &\leq (1- c_0^{-1})^{-1}\tilde M s_k\lambda_{\alpha,j} + (1-c_0^{-1})^{-1}c^{-1}\tilde C\{(s_k\bar\lambda_k)^{1/2} + (s_k\lambda_{\gamma,j})^{1/2}\}^2
    \end{align*}
    In the event \(\Omega_{k,2} \cap \Omega_{k,3}\) we have that  \(\lambda_{\gamma,j}\vee\lambda_{\alpha,j}\leq \bar\lambda_k\), so that the above simplifies to 
    \begin{equation}
        \label{eq:nonasymptotic-asymptotic-step5}\tag{O.18}
        \begin{split}
            D_{\alpha,j}^\ddag(\widehat\alpha_j,\bar\alpha_j;\bar\gamma_j) 
            &\leq M_1s_k\bar\lambda_k^2 \\
            \|\widehat\alpha_j - \bar\alpha_j\|_1 
            &\leq M_1s_k\bar\lambda_k
        \end{split}
    \end{equation}
    for \(M_1 = \max\{e^{M_0\eta}, c^{-1}(1-c_0^{-1})^{-1}\}(\tilde M + 2e^{M_0\eta}\xi_2^{-1}\tilde C)\). This completes the result \eqref{eq:nonasymptotic-outcome}.
\end{proof}

\subsection{Nonasymptotic Bounds for Residual Estimation}
\label{subsec:residual-nonasymptotic-bounds}
We now provide nonasymptotic bounds on the empirical mean square error between the estimated residuals \(\widehat U_{\gamma,j}\) and \(\widehat U_{\alpha,j}\) and the true residuals
\begin{equation}
    \label{eq:true-residuals}
    \begin{split}
        U_{\gamma,j} &:= -p_j(X)\{De^{-\bar\gamma_j'Z} + (1-D)\}  \\
        U_{\alpha,j} &:= p_j(X)De^{-\bar\gamma_j'Z}(Y - \bar\alpha_j^{\text{\tiny pilot}'}Z),
    \end{split}
\end{equation}
These bounds will be shown under the events in  \eqref{eq:nonasymptotic-events-logit}, \eqref{eq:nonasymptotic-events-outcome}, and \eqref{eq:mean-event} using the results in \Cref{lemma:nonasymptotic-logit,lemma:nonasymptotic-outcome}.

\begin{lemma}[Nonasymptotic Logistic Residual Bound]
    \label{lemma:nonasymptotic-logit-residual-bound}
    Suppose that \Cref{assm:logistic-model-convergence} and the conditions of \Cref{lemma:nonasymptotic-logit} hold. Then, in the event  \(\Omega_{k,1} \cap \Omega_{k,2}\) described on \eqref{eq:nonasymptotic-events-logit} there is a constant \(M_{\gamma, r}\) that does not depend on  \(k\) such that:
    \begin{equation}
        \label{eq:nonasymptotic-logit-residual-bound}
        \max_{1\leq j \leq k} \E_n[(\widehat U_{\gamma, j} - U_{\gamma, j})^2] \leq M_{\gamma, r}\xi_{k,\infty}s_k \bar\lambda_k^2
    .\end{equation} 
\end{lemma}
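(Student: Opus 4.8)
The plan is to reduce the claim to the uniform $\ell_1$ bound and the weighted second-moment (Bregman-divergence) bound for the pilot logistic estimator, both of which are delivered by \Cref{lemma:nonasymptotic-logit}. First I would observe that the $(1-D)$ terms in $\widehat U_{\gamma,j}$ and $U_{\gamma,j}$ cancel, so that, using $D^2=D$,
\[
    (\widehat U_{\gamma,j} - U_{\gamma,j})^2 = p_j(X)^2\,D\,\big(e^{-\widehat\gamma_j^{\text{\tiny pilot}'}Z} - e^{-\bar\gamma_j'Z}\big)^2 .
\]
Applying the mean-value identity \eqref{eq:nonasymptotic-outcome-mvt} (with $\widehat\gamma_j$ replaced by $\widehat\gamma_j^{\text{\tiny pilot}}$), squaring, and bounding the exponential factor via $|Z_l|\leq C_0$ and $\bar\gamma_j'Z\geq B_0$ from \Cref{assm:logistic-model-convergence} --- namely $e^{-2u\widehat\gamma_j^{\text{\tiny pilot}'}Z-2(1-u)\bar\gamma_j'Z}\leq e^{2C_0\|\widehat\gamma_j^{\text{\tiny pilot}}-\bar\gamma_j\|_1}e^{-B_0}e^{-\bar\gamma_j'Z}$ --- together with $0\leq p_j(X)\leq\xi_{k,\infty}$, and then taking empirical expectations, gives
\[
    \E_n\big[(\widehat U_{\gamma,j}-U_{\gamma,j})^2\big] \;\leq\; \xi_{k,\infty}\,e^{-B_0}\,e^{2C_0\|\widehat\gamma_j^{\text{\tiny pilot}}-\bar\gamma_j\|_1}\;\E_n\big[p_j(X)De^{-\bar\gamma_j'Z}\big((\widehat\gamma_j^{\text{\tiny pilot}}-\bar\gamma_j)'Z\big)^2\big].
\]

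Next I would bound the two data-dependent factors on the right using \Cref{lemma:nonasymptotic-logit} applied to the pilot estimator on the events $\Omega_{k,1}\cap\Omega_{k,2}$ of \eqref{eq:nonasymptotic-events-logit}. That lemma gives $\max_{j\leq k}\|\widehat\gamma_j^{\text{\tiny pilot}}-\bar\gamma_j\|_1\leq M_0 s_k\bar\lambda_k$; combined with the hypothesis $2C_0\nu_0^{-2}s_k\bar\lambda_k\leq\eta$ and $\nu_0<1$ this forces $C_0\|\widehat\gamma_j^{\text{\tiny pilot}}-\bar\gamma_j\|_1$ to be bounded by a fixed constant, hence $e^{2C_0\|\widehat\gamma_j^{\text{\tiny pilot}}-\bar\gamma_j\|_1}\lesssim 1$. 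For the weighted quadratic form I would rewrite the symmetrized Bregman divergence $D_{\gamma,j}^\ddag(\widehat\gamma_j^{\text{\tiny pilot}},\bar\gamma_j)$ from \eqref{eq:bregman-divergences} through the same mean-value step, obtaining $D_{\gamma,j}^\ddag(\widehat\gamma_j^{\text{\tiny pilot}},\bar\gamma_j)=\E_n[p_j(X)De^{-u(\widehat\gamma_j^{\text{\tiny pilot}}-\bar\gamma_j)'Z}e^{-\bar\gamma_j'Z}((\widehat\gamma_j^{\text{\tiny pilot}}-\bar\gamma_j)'Z)^2]\geq e^{-C_0\|\widehat\gamma_j^{\text{\tiny pilot}}-\bar\gamma_j\|_1}\E_n[p_j(X)De^{-\bar\gamma_j'Z}((\widehat\gamma_j^{\text{\tiny pilot}}-\bar\gamma_j)'Z)^2]$, and then invoking the bound $D_{\gamma,j}^\ddag(\widehat\gamma_j^{\text{\tiny pilot}},\bar\gamma_j)\leq M_0 s_k\bar\lambda_k^2$ from \Cref{lemma:nonasymptotic-logit}. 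This yields $\E_n[p_j(X)De^{-\bar\gamma_j'Z}((\widehat\gamma_j^{\text{\tiny pilot}}-\bar\gamma_j)'Z)^2]\leq e^{C_0\|\widehat\gamma_j^{\text{\tiny pilot}}-\bar\gamma_j\|_1}M_0 s_k\bar\lambda_k^2\lesssim s_k\bar\lambda_k^2$, uniformly over $j\leq k$. Substituting the two bounds into the previous display and taking the maximum over $j$ gives $\max_{j\leq k}\E_n[(\widehat U_{\gamma,j}-U_{\gamma,j})^2]\leq M_{\gamma,r}\xi_{k,\infty}s_k\bar\lambda_k^2$, where $M_{\gamma,r}$ collects $e^{-B_0}$, $M_0$, and the constant bounding $e^{2C_0\|\cdot\|_1}$.

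The calculation is short, so the substance is a single modeling choice: one must resist bounding $((\widehat\gamma_j^{\text{\tiny pilot}}-\bar\gamma_j)'Z)^2$ crudely by $C_0^2\|\widehat\gamma_j^{\text{\tiny pilot}}-\bar\gamma_j\|_1^2$ and then $\E_n[p_j(X)De^{-\bar\gamma_j'Z}]\lesssim\xi_{k,\infty}$, which would only yield the weaker rate $\xi_{k,\infty}^2 s_k^2\bar\lambda_k^2$; keeping the weighted quadratic form $\E_n[p_j(X)De^{-\bar\gamma_j'Z}((\widehat\gamma_j^{\text{\tiny pilot}}-\bar\gamma_j)'Z)^2]$ intact and estimating it through the Bregman divergence (which is $O(s_k\bar\lambda_k^2)$ rather than $O(s_k^2\bar\lambda_k^2)$) is exactly what produces the advertised bound. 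The only other point needing care is that \Cref{lemma:nonasymptotic-logit} is being invoked for the pilot estimator $\widehat\gamma_j^{\text{\tiny pilot}}$: its proof uses only the first-order optimality of $\widehat\gamma_j^{\text{\tiny pilot}}$, score domination and penalty majorization (the events $\Omega_{k,1}$, $\Omega_{k,2}$, which hold for the pilot penalty by the assumptions on $c_{\gamma,j}$ and $\eps$ that make it of order $\bar\lambda_k$), and the empirical compatibility condition of \Cref{assm:logistic-model-convergence}, so the bounds transfer verbatim.
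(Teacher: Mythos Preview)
Your proof is correct and follows essentially the same route as the paper's: cancel the $(1-D)$ terms, use the mean-value identity \eqref{eq:nonasymptotic-outcome-mvt} together with $p_j(X)\le\xi_{k,\infty}$ and $\bar\gamma_j'Z\ge B_0$, and then reduce to the Bregman-divergence bound $D_{\gamma,j}^\ddag\le M_0 s_k\bar\lambda_k^2$ from \Cref{lemma:nonasymptotic-logit}. The only cosmetic difference is that the paper applies the mean-value step to just one of the two factors $e^{-\widehat\gamma_j'Z}-e^{-\bar\gamma_j'Z}$, so the remaining expression is literally the Bregman-divergence integrand $p_j(X)D\{e^{-\widehat\gamma_j'Z}-e^{-\bar\gamma_j'Z}\}\{\bar\gamma_j'Z-\widehat\gamma_j'Z\}$ and no back-conversion from the weighted quadratic form is needed; your version expands both factors and then uses the lower bound $D_{\gamma,j}^\ddag\ge e^{-C_0\|\cdot\|_1}\E_n[p_j(X)De^{-\bar\gamma_j'Z}((\widehat\gamma_j-\bar\gamma_j)'Z)^2]$ to close the loop, which is equivalent. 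Your remark about the pilot estimator is also on point: the paper's proof tacitly applies \Cref{lemma:nonasymptotic-logit} to $\widehat\gamma_j^{\text{\tiny pilot}}$ (with the events $\Omega_{k,1},\Omega_{k,2}$ taken for the pilot penalty), exactly as you note.
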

\begin{proof}
    Consider each \(j\) separately. By applying the mean value theorem \eqref{eq:nonasymptotic-outcome-mvt} and \Cref{lemma:nonasymptotic-logit}, we can write
    \begin{align*}
        (\widehat U_{\gamma, j} - U_{\gamma, j})^2 
        &= p_j(X)^2D\{e^{-\widehat\gamma_j'Z} - e^{-\bar\gamma_j'Z}\}\{e^{-\widehat\gamma_j'Z} - e^{-\bar\gamma_j'Z}\} \\
        &\leq \xi_{k,\infty}p_j(X)D\{e^{-\widehat\gamma_j'Z} - e^{-\bar\gamma_j'Z}\}e^{-\bar\gamma_j'Z - u(\widehat\gamma_j - \bar\gamma_j)'Z}\{\bar\gamma_j'Z - \widehat\gamma_j'Z\}  \\
        &\leq \xi_{k,\infty}e^{-B_0 + M_0\eta}D\{e^{-\widehat\gamma_j'Z} -e^{- \bar\gamma_j'Z}\}\{\bar\gamma_j'Z - \widehat\gamma_j'Z\} 
    \end{align*}
    So that
    \begin{align*}
        \E_n[(\widehat U_{\gamma, j} - U_{\gamma,j})^2] 
        &\leq e^{-B_0 + M_0\eta}\xi_{k,\infty}\underbrace{\E_n[p_j(X)D\{e^{-\widehat\gamma_j'Z}\}\{\widehat\gamma_j'Z - \bar\gamma_j'Z\}]}_{= D_{\gamma, j}^\ddag(\widehat\gamma_j,\bar\gamma_j)}\\
        &\leq e^{-B_0 + M_0\eta}\xi_{k,\infty}s_k\bar\lambda_k^2
    \end{align*}
\end{proof}

\begin{lemma}[Nonasymptotic Linear Residual Bound]
    \label{lemma:nonasymptotic-linear-residual-bound}
    Suppose that \Cref{assm:logistic-model-convergence} and the conditions of \Cref{lemma:nonasymptotic-outcome} hold. Then, in the event \(\bigcap_{m=1}^6 \Omega_{k,m}\), there is a constant  \(M_{\alpha, r}\) that does not depend on  \(k\) such that
    \begin{equation}
        \label{eq:nonasymptotic-linear-residual-bound}
        \max_{1\leq j\leq k}\E_n[(\widehat U_{\alpha, j} - U_{\alpha, j})^2] \leq M_{\alpha, r}\xi_{k,\infty}^2s_k^2\bar\lambda_k^2
    \end{equation}
\end{lemma}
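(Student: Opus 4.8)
The plan is to follow the template of the companion result \Cref{lemma:nonasymptotic-logit-residual-bound}, splitting the residual error into a piece driven by the estimation error in $\widehat\gamma_j$ and a piece driven by the estimation error in $\widehat\alpha_j$, and then recycling the bounds already produced inside the proof of \Cref{lemma:nonasymptotic-outcome}. Fix $j$ and add and subtract $p_j(X)De^{-\bar\gamma_j'Z}(Y-\widehat\alpha_j'Z)$ to write
\[
  \widehat U_{\alpha,j} - U_{\alpha,j}
  = \underbrace{p_j(X)D\big(e^{-\widehat\gamma_j'Z} - e^{-\bar\gamma_j'Z}\big)(Y - \widehat\alpha_j'Z)}_{A_j}
  + \underbrace{p_j(X)De^{-\bar\gamma_j'Z}\big(\bar\alpha_j'Z - \widehat\alpha_j'Z\big)}_{B_j},
\]
so that $\E_n[(\widehat U_{\alpha,j}-U_{\alpha,j})^2] \le 2\E_n[A_j^2] + 2\E_n[B_j^2]$. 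For $B_j$, using $D^2=D$, $p_j(X)\le \xi_{k,\infty}$, and the uniform lower bound $\bar\gamma_j'Z\ge B_0$ from \Cref{assm:logistic-model-convergence}(iii), one gets $\E_n[B_j^2] \le \xi_{k,\infty}e^{-B_0}\,\E_n[p_j(X)De^{-\bar\gamma_j'Z}(\bar\alpha_j'Z-\widehat\alpha_j'Z)^2] = \xi_{k,\infty}e^{-B_0}\,D_{\alpha,j}^\ddag(\widehat\alpha_j,\bar\alpha_j;\bar\gamma_j) \le \xi_{k,\infty}e^{-B_0}M_1 s_k\bar\lambda_k^2$ on $\bigcap_{m=1}^6\Omega_{k,m}$ by \Cref{lemma:nonasymptotic-outcome}.

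For $A_j$, apply the mean-value identity \eqref{eq:nonasymptotic-outcome-mvt} together with $|Z_l|\le C_0$ and the $\ell_1$-bound $\|\widehat\gamma_j-\bar\gamma_j\|_1\le M_0 s_k\bar\lambda_k$ of \Cref{lemma:nonasymptotic-logit} (so that $C_0\|\widehat\gamma_j-\bar\gamma_j\|_1\le M_0\eta$, as in \eqref{eq:nonasymptotic-outcome-step2.1}) to get $|e^{-\widehat\gamma_j'Z}-e^{-\bar\gamma_j'Z}| \le e^{-B_0}e^{M_0\eta}\,|(\widehat\gamma_j-\bar\gamma_j)'Z|$; also bound $(Y-\widehat\alpha_j'Z)^2 \le 2(Y-\bar\alpha_j'Z)^2 + 2C_0^2\|\widehat\alpha_j-\bar\alpha_j\|_1^2$. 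Plugging in and using $p_j(X)\le\xi_{k,\infty}$ gives
\[
  \E_n[A_j^2] \lesssim \xi_{k,\infty}\Big(\E_n\big[p_j(X)De^{-\bar\gamma_j'Z}\{(\widehat\gamma_j-\bar\gamma_j)'Z\}^2(Y-\bar\alpha_j'Z)^2\big] + \|\widehat\alpha_j-\bar\alpha_j\|_1^2\,\E_n\big[p_j(X)De^{-\bar\gamma_j'Z}\{(\widehat\gamma_j-\bar\gamma_j)'Z\}^2\big]\Big).
\]
The first expectation is exactly the object controlled in \eqref{eq:nonasymptotic-outcome-step2.4}; combined with $\|\widehat\gamma_j-\bar\gamma_j\|_1\le M_0 s_k\bar\lambda_k$, $\lambda_{\gamma,j}\le\bar\lambda_k$ and $s_k\bar\lambda_k\le\eta<1$, it is $\lesssim s_k\bar\lambda_k^2$. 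The second expectation is bounded by $e^{-M_0\eta}M_0 s_k\lambda_{\gamma,j}^2\le M_0 s_k\bar\lambda_k^2$ via \eqref{eq:nonasymptotic-outcome-step2.2}, while $\|\widehat\alpha_j-\bar\alpha_j\|_1^2\le M_1^2 s_k^2\bar\lambda_k^2$ by \Cref{lemma:nonasymptotic-outcome}, so their product is $O(s_k^3\bar\lambda_k^4) = O(s_k\bar\lambda_k^2)$, again using $s_k\bar\lambda_k\le\eta<1$. Hence $\E_n[A_j^2]\lesssim \xi_{k,\infty}s_k\bar\lambda_k^2$.

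Combining the two pieces yields $\E_n[(\widehat U_{\alpha,j}-U_{\alpha,j})^2] \lesssim \xi_{k,\infty}s_k\bar\lambda_k^2$ for a constant depending only on $B_0,C_0,G_0,G_1,\eta,M_0,M_1$, which are all uniform in $j$; taking the maximum over $j\le k$ therefore costs nothing, and the stated bound follows \emph{a fortiori} since $\xi_{k,\infty}^2s_k^2\bar\lambda_k^2$ dominates $\xi_{k,\infty}s_k\bar\lambda_k^2$. (As in the proof of \Cref{lemma:nonasymptotic-logit-residual-bound}, I write $\widehat\gamma_j,\widehat\alpha_j$ for the estimators feeding $\widehat U_{\alpha,j}$; the pilot estimators obey the identical nonasymptotic bounds on the analogous events, so the argument is unchanged.) The main obstacle is the cross term $\E_n[p_j(X)De^{-\bar\gamma_j'Z}\{(\widehat\gamma_j-\bar\gamma_j)'Z\}^2(Y-\bar\alpha_j'Z)^2]$: it is a fourth-moment-type quantity whose uniform control is precisely where the subgaussian tail condition \Cref{assm:logistic-model-convergence}(ii) and the concentration event $\Omega_{k,5}$ are needed — but since that work is already encapsulated in \eqref{eq:nonasymptotic-outcome-step2.4}, the real content here is only the bookkeeping that reduces $\E_n[(\widehat U_{\alpha,j}-U_{\alpha,j})^2]$ to quantities bounded in the proof of \Cref{lemma:nonasymptotic-outcome}.
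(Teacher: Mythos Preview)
Your proof is correct and follows the same two–piece telescoping strategy as the paper, but with a different choice of intermediate and a more refined bound on the $\gamma$-error piece. The paper inserts $\dot U_{\alpha,j}=p_j(X)De^{-\widehat\gamma_j'Z}(Y-\bar\alpha_j'Z)$ (i.e.\ it freezes $\alpha$ first), and for the piece $\E_n[(\dot U_{\alpha,j}-U_{\alpha,j})^2]$ it bounds $(\widehat\gamma_j-\bar\gamma_j)'Z$ crudely by $C_0\|\widehat\gamma_j-\bar\gamma_j\|_1$, leaving a factor $\E_n[p_j(X)De^{-\bar\gamma_j'Z}(Y-\bar\alpha_j'Z)^2]$ that is controlled via $\Omega_{k,5}$ under the extra simplifying assumption that $Z$ contains a constant. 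This is what produces the looser rate $\xi_{k,\infty}^2s_k^2\bar\lambda_k^2$ stated in the lemma.

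Your route freezes $\gamma$ first and, crucially, keeps $\{(\widehat\gamma_j-\bar\gamma_j)'Z\}^2$ inside the empirical expectation so that you can invoke \eqref{eq:nonasymptotic-outcome-step2.4} directly; this avoids the crude $\ell_1$ step and the ``$Z$ contains a constant'' device, and it yields the sharper bound $\xi_{k,\infty}s_k\bar\lambda_k^2$. Your final ``a fortiori'' remark is fine because $s_k\ge 1$ and $\xi_{k,\infty}$ is bounded below under the paper's basis normalization (e.g.\ once a constant term is present or under the B-spline scaling discussed in \Cref{sec:first-stage-second-stage}), so $\xi_{k,\infty}s_k\bar\lambda_k^2\lesssim \xi_{k,\infty}^2s_k^2\bar\lambda_k^2$. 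In short, both arguments are valid; yours is a bit cleaner and actually delivers a tighter rate than the lemma claims.
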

\begin{proof}
   Recall that  \( \widehat U_{\alpha, j} = p_j(X)De^{-\widehat\gamma_j'Z}(Y - \widehat\alpha_j'Z)\) and \(U_{\alpha, j} = p_j(X)De^{-\bar\gamma_j'Z}(Y - \bar\alpha_j'Z)\). As an intermediary, define \(\dot U_{\gamma, j} = p_j(X)De^{-\widehat\gamma_j'Z}(Y- \bar\alpha_j'Z)\). We will show a bound on the empirical mean square error between  \( \widehat U_{\alpha, j}\) and \(\dot U_{\alpha, j}\) as well as on the empirical mean square error between  \(\dot U_{\alpha, j}\) and  \(U_{\alpha, j}\). The bound in  \eqref{eq:nonasymptotic-linear-residual-bound} will then follow from \((a+b)^2 \leq 2a^2 + 2b^2\).

   First consider \((\widehat U_{\alpha, j} - \dot U_{\alpha, j})^2\):
   \begin{align*}
       \E_n[(\widehat U_{\alpha,j} - \bar U_{\alpha,j})^2]
       &= \E_np_j^2(X)De^{-2\widehat\gamma_j'Z}(\widehat\alpha_j'Z - \bar\alpha_j'Z)^2] \\
       &= \E_n[p_j^2(X)De^{-2(\bar\gamma_j'Z - (\widehat\gamma_j - \bar\gamma_j)'Z)}(\widehat\alpha_j'Z - \bar\alpha_j'Z)^2] \\
       &\leq \xi_{k\infty}e^{-B_0}e^{2M_0\eta}\underbrace{\E_n[p_j(X)De^{-\bar\gamma_j'Z}(\widehat\alpha_j'Z - \bar\alpha_j'Z)]}_{= D_{\alpha, j}^\ddag(\widehat\alpha_j, \bar\alpha_j;\bar\gamma_j)}\\
       &\leq e^{2M_0\eta - B_0}M_1\xi_{k,\infty}s_k\bar\lambda_k^2
   \end{align*}
   Where the last empirical expectation is bounded by \Cref{lemma:nonasymptotic-outcome}.
   Next, consider \((\dot U_{\alpha, j} - U_{\alpha, j})^2\):
   \begin{align*}
       \E_n[(\dot U_{\alpha, j} - U_{\alpha, j})^2]
       &= \E_n[p_j^2(X)D\{e^{-\widehat\gamma'Z}-e^{-\bar\gamma'Z}\}^2\{Y - \bar\alpha_j'Z\}^2] \\
       &= \E_n[p_j^2(X)D\{e^{-\bar\gamma'Z - u(\widehat\gamma - \bar\gamma)'Z}(\bar\gamma_j'Z - \widehat\gamma_j'Z)\}^2(Y - \bar\alpha_j'Z)^2] \\
       &\leq 2e^{M_0\eta - B_0}C_0^2\xi_{k,\infty}(M_1 s_k\bar\lambda_k)^2\E_n[p_j(X)De^{-\bar\gamma_j'Z}(Y - \bar\alpha_j'Z)^2]
   \end{align*}
    To proceed we assume that  \(Z\) contains a constant. That is  \(Z = (1,Z_2,\dots,Z_{d_z})\). However, this is not necessary it just simplifies the proof a bit. We bound the final empirical expectation in the event \(\Omega_{k,5}\). In this event we can bound
    \begin{align*}
        \E_n[p_j(X)De^{-\bar\gamma_j'Z}(Y - \bar\alpha_j'Z)^2] 
        &= (\E_n - \E)[p_j(X)De^{-\bar\gamma_j'Z}(Y - \bar\alpha_j'Z)^2] + \E[p_j(X)De^{-\bar\gamma_j'Z}(Y - \bar m_j(X))^2] \\
        &\leq \bar\lambda_k + \xi_{k,\infty}e^{-B_0}(D_0 + D_1)^2
    .\end{align*}
    Combining the above, and using the fact that \(s_k\bar\lambda_k \leq \eta < 1\) completes the reult.

\end{proof}

\subsection{Probability Bounds for the First Stage}
\label{subsec:first-stage-probability-bounds}
In this section we establish that each of the events in \eqref{eq:nonasymptotic-events-logit}, \eqref{eq:nonasymptotic-events-outcome}, and \eqref{eq:mean-event} occurs under \Cref{assm:logistic-model-convergence} with probability approaching one.

\begin{lemma}[Logistic Score Domination and Penalty Majorization]
    \label{lemma:logit-score-domination}
    Suppose \Cref{assm:logistic-model-convergence} holds and that the penalty parameter \(\lambda_{\gamma, j}\) is chosen as described in \Cref{sec:setup}. Then, for \(n\) sufficiently large, the event \(\Omega_{k,1}\) holds with probability \(1 - \eps - \rho_{\gamma,n}\) where 
    \begin{equation}
        \label{lemma:first-stage-logistic-bound}
        \rho_{\gamma,n} = C\max\left\{\frac{4kn + 4k}{n^2},\bigg(\frac{\tilde M\xi_{k,\infty}s_{k,\gamma}\bar c_n^2\ln^5(d_zn)}{n}\bigg)^{1/2}, \bigg(\frac{\tilde M\xi_{k,\infty}^4\ln^7(d_zkn)}{n}\bigg)^{1/6},\frac{1}{\ln^2(d_zkn)}\right\}
    .\end{equation} 
    where \(C, \tilde M\) are absolute constants that do not depend on \(k\). In particular so long as \(\eps \to 0\) as  \(n\to \infty\), this shows that  \(\Pr(\Omega_{k,1}) = 1 - o(1)\) under the rate conditions of \Cref{assm:logistic-model-convergence}.

    Moreover, with probability at least \(1 - \frac{5k}{n} - \frac{4k}{n^2} \) there is a constant \(M_2\) that does not depend on \(k\) such that \(\Omega_{k,2}\) holds with 
    \begin{equation}
        \label{eq:first-stage-logit-penalty-majorization}
        \bar\lambda_k = \max\{M_2, M_4, M_5, M_6, M_7\}\xi_{k,\infty}\sqrt\frac{\ln(d_zn)}{n} 
    \end{equation}
    where \(M_4, M_5, M_6\) and  \(M_7\) are all constants that also do not depend on  \(k\) described in Lemma~\ref{lemma:linear-score-domination} and Lemmas~\ref{lemma:omega-5}-\ref{lemma:omega-7}. In particular, so long as \(k/n\to 0\),  \(\Pr(\Omega_{k,2}) = 1 - o(1)\).
\end{lemma}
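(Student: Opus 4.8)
The plan is to handle the two events separately, in both cases controlling the data-driven penalty $\lambda_{\gamma,j}$ through the multiplier-bootstrap-after-pilot construction and the tools of \citet{CS-2021}. Throughout I would first pass to a $1-o(1)$ probability event on which the \emph{pilot} estimators $\widehat\gamma_j^{\text{\tiny pilot}}$ are uniformly well behaved: the pilot penalty $\lambda_{\gamma,j}^{\text{\tiny pilot}}=c_{\gamma,j}\sqrt{\ln^3(d_z)/n}$ is deterministic with $c_{\gamma,j}\geq\underline c_n\gtrsim\xi_{k,\infty}$, so a maximal inequality for $\max_{j\leq k}\|S_{\gamma,j}\|_\infty\lesssim_P\xi_{k,\infty}\sqrt{\ln(d_zk)/n}$ immediately gives pilot score domination (the extra powers of $\ln$ in the pilot penalty swallow the maximal-inequality bound), and pilot penalty majorization is trivial since $c_{\gamma,j}\leq\bar c_n$. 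On this event \Cref{lemma:nonasymptotic-logit} (applied with the pilot penalty, whose associated $\bar\lambda_k$ is $\lesssim\bar c_n\sqrt{\ln^3(d_z)/n}$, small by the first sparsity bound of \Cref{assm:logistic-model-convergence}(vi)) together with \Cref{lemma:nonasymptotic-logit-residual-bound} yields $\max_{j\leq k}\E_n[(\widehat U_{\gamma,j}-U_{\gamma,j})^2]\lesssim_P\xi_{k,\infty}s_k\bar c_n^2\ln^3(d_z)/n=:\delta_n^2$, the quantity that drives everything below.

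\textbf{Score domination ($\Omega_{k,1}$).} For each $j$ write $W_j^\ast:=\max_{l\leq d_z}|\E_n[e_i\widehat U_{\gamma,j}Z_l]|$ and $\tilde W_j^\ast:=\max_{l\leq d_z}|\E_n[e_iU_{\gamma,j}Z_l]|$ for the conditional (on the data) bootstrap maxima, so $\lambda_{\gamma,j}=c_0\,q_{1-\eps}(W_j^\ast\mid\text{data})$. The first step is to pass from the feasible to the infeasible multiplier statistic: conditionally, $\max_l|\E_n[e_i(\widehat U_{\gamma,j}-U_{\gamma,j})Z_l]|$ is a maximum of $d_z$ centered Gaussians with variances $\leq C_0^2\delta_n^2/n$, hence $\lesssim\delta_n\sqrt{\ln(d_z)/n}$ with conditional probability $1-o(1)$, giving $q_{1-\eps}(W_j^\ast)\geq q_{1-\eps-o(1)}(\tilde W_j^\ast)-C\delta_n\sqrt{\ln(d_z)/n}$; the subtracted term is, up to the $\ln$-factors from the quantile level $\eps$ and the union bound over $j$, the $(\tilde M\xi_{k,\infty}s_k\bar c_n^2\ln^5(d_zn)/n)^{1/2}$ term in $\rho_{\gamma,n}$. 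The second step is that the infeasible multiplier quantile dominates the true score: by the high-dimensional Gaussian-approximation and comparison arguments underlying \citet{CS-2021}---a Berry--Esseen-type bound for maxima of sums of the bounded, uniformly nondegenerate vectors $U_{\gamma,j}Z$, where boundedness of $Z$ (\Cref{assm:logistic-model-convergence}(i)) and the moment conditions \Cref{assm:logistic-model-convergence}(v) make the relevant moment ratios $\lesssim\xi_{k,\infty}$---one obtains $\Pr\big(\|S_{\gamma,j}\|_\infty>q_{1-\eps}(\tilde W_j^\ast\mid\text{data})\big)\leq\eps+C(\tilde M\xi_{k,\infty}^4\ln^7(d_zkn)/n)^{1/6}+C/\ln^2(d_zkn)$, supplying the remaining two terms. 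Chaining these bounds and taking a union bound over $j=1,\dots,k$ (with $O(n^{-3})$-tail truncation steps inside the maximal inequalities producing the $(4kn+4k)/n^2$ term) yields $\Pr(\Omega_{k,1})\geq1-\eps-\rho_{\gamma,n}$, and $\rho_{\gamma,n}=o(1)$ under \Cref{assm:logistic-model-convergence}(vi) and $k/n\to0$ once $\eps\to0$.

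\textbf{Penalty majorization ($\Omega_{k,2}$).} Here only an upper bound on $\lambda_{\gamma,j}=c_0\,q_{1-\eps}(W_j^\ast\mid\text{data})$ is needed. Conditionally $W_j^\ast$ is a maximum of $d_z$ Gaussians with variances $\E_n[\widehat U_{\gamma,j}^2Z_l^2]/n\leq C_0^2\E_n[\widehat U_{\gamma,j}^2]/n$, and $\E_n[\widehat U_{\gamma,j}^2]\leq2\E_n[U_{\gamma,j}^2]+2\delta_n^2\lesssim_P\xi_{k,\infty}^2$ using $p_j\leq\xi_{k,\infty}$, the lower bound $\bar\gamma_j'Z\geq B_0$, and $\delta_n^2=o(\xi_{k,\infty}^2)$. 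A Gaussian maximal inequality then gives $q_{1-\eps}(W_j^\ast)\lesssim\xi_{k,\infty}\sqrt{\ln(d_zn)/n}$ with probability $1-o(1)$, uniformly over $j$ after a union bound over the $(j,l)$ pairs (contributing the $5k/n+4k/n^2$ loss). This establishes $\Omega_{k,2}$ with the logistic threshold $M_2\xi_{k,\infty}\sqrt{\ln(d_zn)/n}$; since $\bar\lambda_k$ in \eqref{eq:first-stage-logit-penalty-majorization} is the maximum of this threshold with the analogous constants $M_4,\dots,M_7$ from \Cref{lemma:linear-score-domination} and Lemmas~\ref{lemma:omega-5}--\ref{lemma:omega-7}, the stated form of $\Omega_{k,2}$ follows, and $\Pr(\Omega_{k,2})=1-o(1)$ whenever $k/n\to0$.

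\textbf{Main obstacle.} The delicate part is the uniform-in-$j$ bookkeeping in the score-domination step: one must simultaneously keep all $k$ pilot estimators accurate (which itself rests on pilot score domination and penalty majorization holding for every $j$) and apply a high-dimensional Gaussian approximation to each of the $k$ maxima $\|S_{\gamma,j}\|_\infty$, then verify that the pilot residual error, the $n^{-1/6}$ and $\ln^{-2}$ Gaussian-approximation errors, and the union-bound losses assemble exactly into the claimed $\rho_{\gamma,n}$ under \Cref{assm:logistic-model-convergence}(vi). The Gaussian maximal inequalities and the crude second-moment bound on $\widehat U_{\gamma,j}$ used for penalty majorization are routine by comparison.
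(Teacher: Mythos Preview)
Your proposal is correct in spirit and follows essentially the same strategy as the paper: establish uniform pilot accuracy via the deterministic pilot penalty, deduce the residual bound from \Cref{lemma:nonasymptotic-logit-residual-bound}, then invoke the \citet{CS-2021} multiplier-bootstrap machinery for score domination and a Gaussian quantile bound for penalty majorization. The penalty-majorization argument is identical to the paper's.

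The one substantive difference is how the uniformity over $j=1,\dots,k$ is obtained in the score-domination step. You describe applying the Gaussian approximation ``to each of the $k$ maxima $\|S_{\gamma,j}\|_\infty$'' and then taking a union bound over $j$. Done literally, this would produce an extra multiplicative factor of $k$ in front of both $\eps$ and the approximation error, rather than the $\ln(d_zkn)$ appearing in $\rho_{\gamma,n}$. The paper avoids this by \emph{stacking}: it forms the single $kd_z$-dimensional vector $W_k=(U_{\gamma,1}Z',\dots,U_{\gamma,k}Z')'$, verifies the moment conditions \eqref{eq:clt-conditions} for $W_k$ with $B_k\asymp\xi_{k,\infty}^3$, and applies \Cref{thm:bootstrap-consistency} once with $p=kd_z$. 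The partitioned-quantile comparison (\Cref{lemma:quantile-setup}, \Cref{thm:quantile-comp}) then extracts the marginal statements for each $\|S_{\gamma,j}\|_\infty$ simultaneously, and $k$ enters only through $\ln(pn)=\ln(kd_zn)$. Your formula already writes $\ln^7(d_zkn)$, so you may have this device in mind implicitly; if so, your plan and the paper's coincide. If instead you intend a genuine per-$j$ application followed by a union bound, you would not recover the stated form of $\rho_{\gamma,n}$ without the stacking step.
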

\begin{proof}
    Collecting the logistic nonasymptotic residual bound from 
    \Cref{lemma:nonasymptotic-logit-residual-bound} and the probability bounds from 
    \Crefrange{lemma:omega-5}{lemma:logit-score-domination-deterministic} we find that, (eventually) with probability at least \(1 - \frac{4k}{n} - \frac{4k}{n^2}\):
    \begin{equation}
        \label{eq:logit-mean-square-bound}\tag{P.1}
        \max_{\substack{1\leq j\leq k\\ 1\leq l \leq d_z}} \E_n[(\widehat U_{\gamma, j}Z_l - U_{\gamma, j}Z_l)^2] \leq M_{\gamma, r}C_0^2\frac{\xi_{k,\infty}s_{k,\gamma}\bar c_n^2\ln^3(d_zn)}{n} 
    .\end{equation} 
    where \(M_{\gamma, r}\) is a constant that does not depend on \(k\). Define the vectors
    \begin{align*}
        W_k &:= (U_{\gamma,1}Z',\dots,U_{\gamma,k}Z')' \in\SR^{kd_z}\\
            &:= (W_{k,1}',\dots,W_{k,k}')'\\
        \widehat W_k &:= (\widehat U_{\gamma,1}Z',\dots,\widehat U_{\gamma, k}Z')'\in\SR^{kd_z} \\
                     &:= (\widehat W_{k,1}',\dots,\widehat W_{k,k}')'
    .\end{align*}
    Notice by optimality of \(\bar\gamma_1,\dots,\bar\gamma_k\) that \(W_k\) is a mean zero vector. Under our assumptions the covariance matrix  \(\Sigma_k = \frac{1}{n} \sum_{i=1}^n \E[W_kW_k']\) exists and is finite. Define the sequences of constants
    \begin{align*}
        \delta_{\gamma,n}^2 &:= M_{\gamma,r}C_0^2\xi_{k,\infty}s_{k,\gamma}\bar c_n^2\ln^5(d_zn)/n \\
        \beta_{\gamma, n} &:= \frac{4k}{n} + \frac{4k}{n^2}  
    \end{align*}
    Then, by \eqref{eq:logit-mean-square-bound} we have that with probability at least \(1 -\beta_{\gamma,n}\)
    \begin{equation}
        \label{eq:logit-prob-bound}\tag{P.2}
        \Pr\left(\|\E_n[(\widehat W_k - W_k)^2] \|_\infty > \delta_n^2/\ln^2(d_zn)\right) \leq \beta_n
    .\end{equation}
    Let \(e_1,\dots,e_n\) be i.i.d normal random variables generated independently of the data. Define the scaled random variables and the multiplier bootstrap process 
    \begin{align*}
        \widehat S_{\gamma, n}^e 
        &:= n^{-1/2}\sum_{i=1}^n e_i \widehat W_{k,i} \\
        &:= (\widehat S_{\gamma, 1}^{e'}, \dots, \widehat S_{\gamma, k}^{e'})'
    \end{align*} 
    and let \(\Pr_e\) denote the probability measure with respect to the  \(e_i's\) conditional on the observed data. \Cref{assm:logistic-model-convergence} implies that the conditions of \eqref{eq:clt-conditions} hold for \(Z = W_k\) with  \(b\) replaced by  \(c_u\) and  \(B_n\) replaced by  \(B_k = (\xi_{k,\infty}C_0C_U)^3\vee 1\). Further, via \eqref{eq:logit-prob-bound} the residual estimation requirement of  with \(\delta_n\) and  \(\beta_n\) replaced by \(\delta_{\gamma, n}\) and  \(\beta_{\gamma, n}\).

    Let \(\widehat q_{\gamma,j}(\alpha)\) be the \(\alpha\) quantile of \(\|\widehat S_{\gamma,j}^{e'}\|\) \emph{conditional} on the data \(Z_i\) and the estimates  \( \widehat Z_i\). \Cref{thm:bootstrap-consistency} then shows that there is a finite constant depending only on \(c_u\) such that 
    \begin{align*}
        \max_{1\leq j\leq k}\sup_{\alpha\in(0,1)} 
        \left|\Pr(\|S_{\gamma, j}\|\geq \widehat q_{\gamma,j}(\alpha)) - \alpha\right| \leq C\max\left\{\beta_{\gamma,n}, \delta_{\gamma,n}, \bigg(\frac{B_k^4\ln^7(kd_zn)}{n}\bigg)^{1/6},\frac{1}{\ln^2(kd_zn)}\right\}
    .\end{align*} 
    This gives the first claim of \Cref{lemma:logit-score-domination} by construction of \(\lambda_{\gamma,j}\). 
    The second claim follows \Cref{lemma:quantile-bound}. For this second claim we will consider the marginal convergence of each \(U_{\gamma,j}Z\) as opposed to their joint convergence (the convergence of \(W_k\)). First, notice that condiitonal on the data, the random vector \(\E_n[e\widehat U_{\gamma, j}Z]\) is centered gaussian in \(\SR^{d_z}\). \Cref{lemma:quantile-bound} then shows that 
    \[
        \widehat q_{\gamma, j}(\eps) \leq (2 + \sqrt{2})\sqrt{\frac{\ln(d_z/\eps)}{n}\max_{1\leq l\leq d_z}\E_n[\widehat U_{\gamma,j}^2Z_l^2] }
    .\] 
    Furthermore, with probability at least \(1 - \beta_{\gamma, n} - \frac{1}{n}\) we have that, for all \(j=1,\dots,k\):
    \begin{align*}
        \max_{1\leq l\leq d_z}\E_n[\widehat U_{\gamma, j}^2Z_l^2] \leq C_0^2\E_n[\widehat U_{\gamma, j}^2] \leq 2C_0^2(\E_n[U_{\gamma, j}^2] + \E_n[(\widehat U_{\gamma, j}^2 - U_{\gamma, j})^2]) \leq 4C_0^2\xi_{k,\infty}^2C_U^2 + \delta_{\gamma, n}^2/\ln^2(d_zn))
    \end{align*}
    Under the rate conditions of \Cref{assm:logistic-model-convergence}, \(\delta_{\gamma, n}^2/\ln^2(d_zn)\) will eventually be smaller than \(1\) and so the claim in \eqref{eq:first-stage-logit-penalty-majorization} holds with \(M_2 = 8C_0^2C_U^2 \vee 1\) .
\end{proof}

\begin{lemma}[Linear Score Domination and Penalty Majorization]
    \label{lemma:linear-score-domination}
    Suppose \Cref{assm:logistic-model-convergence} holds and that the penalty parameters \(\lambda_{\gamma, j}\) and \(\lambda_{\alpha,j}\) are chosen as described in \Cref{sec:setup}. Then, for \(n\) sufficiently large, the event \(\Omega_{k,3}\) holds with probability \(1 - \eps - \rho_{\alpha,n}\) where:    \begin{equation}
        \label{lemma:first-stage-linear-bound}
        \rho_{\alpha,n} = C\max\left\{\frac{4kn + 4k}{n^2},\bigg(\frac{\tilde M\xi_{k,\infty}^2s_{k,\alpha}^2\bar c_n^2\ln^5(d_zn)}{n}\bigg)^{1/2}, \bigg(\frac{\tilde M\xi_{k,\infty}^4\ln^7(d_zkn)}{n}\bigg)^{1/6},\frac{1}{\ln^2(d_zkn)}\right\}
    .\end{equation} 
    where \(C, \tilde M\) are absolute constants that do not depend on \(k\). In particular so long as \(\eps \to 0\) as  \(n\to \infty\), this shows that  \(\Pr(\Omega_{k,3}) = 1 - o(1)\) under \Cref{assm:logistic-model-convergence}.

    Moreover, with probability at least \(1 - \frac{5k}{n} -\frac{4k}{n^2}\) there is a constant \(M_4\) that does not depend on \(k\) such that \(\Omega_{k,4}\) holds with 
    \begin{equation}
        \label{eq:first-stage-linear-penalty-majorization}
        \bar\lambda_k = \max\{M_2, M_4, M_5, M_6, M_7\}\xi_{k,\infty}\sqrt\frac{\ln(d_zn)}{n} 
    \end{equation}
    where  \(M_2, M_5, M_6\) and  \(M_7\) are all constants that also do not depend on  \(k\) described in Lemma~\ref{lemma:logit-score-domination} and Lemmas~\ref{lemma:omega-5}-\ref{lemma:omega-7}. In particular, so long as \(k/n\to 0\), \(\Pr(\Omega_{k,4}) = 1 - o(1)\).
\end{lemma}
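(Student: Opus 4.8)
The plan is to follow the template of the proof of \Cref{lemma:logit-score-domination}, replacing the logistic score $U_{\gamma,j}$ by the linear score $U_{\alpha,j} = p_j(X)De^{-\bar\gamma_j'Z}(Y-\bar\alpha_j'Z)$ from \eqref{eq:true-residuals} and the logistic residual bound by \Cref{lemma:nonasymptotic-linear-residual-bound}. The one ingredient that must be supplied afresh is a fourth-moment envelope for $U_{\alpha,j}$: since $0\le p_j(X)\le\xi_{k,\infty}$, $e^{-\bar\gamma_j'Z}\le e^{-B_0}$ by \Cref{assm:logistic-model-convergence}(iii), and $Y-\bar\alpha_j'Z = Y_1-\bar m_j(Z)$ is uniformly conditionally subgaussian by \Cref{assm:logistic-model-convergence}(ii), one gets $\E[U_{\alpha,j}^4]\le(\xi_{k,\infty}C_U')^4$ for a constant $C_U'$ independent of $k$; together with the natural analogue of the lower-moment condition \Cref{assm:logistic-model-convergence}(v) for the linear score, $\min_{l\le d_z}\E[U_{\alpha,j}^2Z_l^2]\ge c_u'$, this lets the stacked vector $W_k^\alpha := (U_{\alpha,1}Z',\dots,U_{\alpha,k}Z')'$ play the role that $W_k$ played in \Cref{lemma:logit-score-domination}, with third-moment envelope $B_k^\alpha := (\xi_{k,\infty}C_0C_U')^3\vee 1$.

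For the score-domination event $\Omega_{k,3}$: by the population optimality defining $\bar\alpha_j$ in \eqref{eq:alpha-bar-j}, $S_{\alpha,j} = \E_n[U_{\alpha,j}Z]$ is mean zero with finite covariance. First I would bound the residual-estimation error. The estimates $\widehat U_{\alpha,j}$ in \eqref{eq:residual-estimates} are built from the pilot fits $\widehat\gamma_j^{\text{\tiny pilot}},\widehat\alpha_j^{\text{\tiny pilot}}$, whose pilot penalties \eqref{eq:pilot-penalty} are of order $\xi_{k,\infty}\sqrt{\ln^3(d_z)/n}$ and therefore dominate the corresponding scores; applying \Cref{lemma:nonasymptotic-linear-residual-bound} to these pilot fits together with the event-probability bounds of \Crefrange{lemma:omega-5}{lemma:omega-7} gives, with probability at least $1-4k/n-4k/n^2$,
\[
    \max_{1\le j\le k,\,1\le l\le d_z}\E_n[(\widehat U_{\alpha,j}Z_l-U_{\alpha,j}Z_l)^2]\ \lesssim\ \frac{\xi_{k,\infty}^2 s_{k,\alpha}^2\bar c_n^2\ln^3(d_zn)}{n},
\]
so the residual-estimation hypothesis of the bootstrap-consistency theorem holds with $\delta_{\alpha,n}^2 := \tilde M\xi_{k,\infty}^2 s_{k,\alpha}^2\bar c_n^2\ln^5(d_zn)/n$ and $\beta_{\alpha,n} := 4k/n+4k/n^2$ (the extra $\ln^2$ being the slack in that lemma). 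Since \Cref{assm:logistic-model-convergence} verifies the hypotheses \eqref{eq:clt-conditions} for $W_k^\alpha$ with $b$ replaced by $c_u'$ and $B_n$ by $B_k^\alpha$, \Cref{thm:bootstrap-consistency} then yields, uniformly over $j\le k$, that the conditional $(1-\eps)$-quantile $\widehat q_{\alpha,j}(\eps)$ of $\max_{l\le d_z}|\E_n[e_i\widehat U_{\alpha,j}Z_l]|$ dominates $\|S_{\alpha,j}\|_\infty$ except on an event of probability $\eps+C\max\{\beta_{\alpha,n},\delta_{\alpha,n},(B_k^{\alpha\,4}\ln^7(kd_zn)/n)^{1/6},1/\ln^2(kd_zn)\}$, which is exactly $\eps+\rho_{\alpha,n}$ after absorbing constants. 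As $\lambda_{\alpha,j}=c_0\widehat q_{\alpha,j}(\eps)$ by \eqref{eq:final-penalty-parameters}, this delivers $\lambda_{\alpha,j}\ge c_0\|S_{\alpha,j}\|_\infty$ for all $j\le k$ with probability $1-\eps-\rho_{\alpha,n}$, and $\eps\to 0$ plus \Cref{assm:logistic-model-convergence}(vi) force $\rho_{\alpha,n}\to 0$.

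For the penalty-majorization event $\Omega_{k,4}$: conditional on the data, $\E_n[e_i\widehat U_{\alpha,j}Z]$ is a centered Gaussian vector in $\SR^{d_z}$, so \Cref{lemma:quantile-bound} bounds $\widehat q_{\alpha,j}(\eps)\le (2+\sqrt2)\sqrt{(\ln(d_z/\eps)/n)\max_{l\le d_z}\E_n[\widehat U_{\alpha,j}^2Z_l^2]}$. I would then bound $\max_l\E_n[\widehat U_{\alpha,j}^2Z_l^2]\le C_0^2\E_n[\widehat U_{\alpha,j}^2]\le 2C_0^2(\E_n[U_{\alpha,j}^2]+\E_n[(\widehat U_{\alpha,j}-U_{\alpha,j})^2])$: the first term concentrates near $\E[U_{\alpha,j}^2]\le(\xi_{k,\infty}C_U')^2$ uniformly over $j\le k$ with probability $1-5k/n$ by a union/Markov bound on the fourth moment, and the second term is eventually below $(\xi_{k,\infty}C_U')^2$ by \Cref{lemma:nonasymptotic-linear-residual-bound} under the sparsity bounds of \Cref{assm:logistic-model-convergence}(vi). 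Taking $\eps\gtrsim 1/n$ so that $\ln(d_z/\eps)\lesssim\ln(d_zn)$, this gives $\lambda_{\alpha,j}\le M_4\xi_{k,\infty}\sqrt{\ln(d_zn)/n}$ for all $j\le k$ with probability $1-5k/n-4k/n^2$; since $\bar\lambda_k$ in \eqref{eq:first-stage-linear-penalty-majorization} is the maximum of $M_4$ with the constants arising in the companion events, $\Omega_{k,4}$ holds with this $\bar\lambda_k$, and $k/n\to 0$ gives $\Pr(\Omega_{k,4})=1-o(1)$.

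The main obstacle is an apparent circularity: \Cref{lemma:nonasymptotic-linear-residual-bound} is stated on $\bigcap_{m=1}^6\Omega_{k,m}$, which contains the very event $\Omega_{k,4}$ being established, and its bound is phrased in terms of $\bar\lambda_k$. This is broken by observing that $\widehat U_{\alpha,j}$ depends only on the \emph{pilot} fits, whose penalties \eqref{eq:pilot-penalty} are deterministic up to a cross-validated constant drawn from a finite set; score domination and penalty majorization for the pilot fits follow from ordinary subgaussian/Gaussian concentration of the pilot scores, with no reference to the bootstrap penalties, so the nonasymptotic bounds of \Cref{lemma:nonasymptotic-outcome} — and hence \Cref{lemma:nonasymptotic-linear-residual-bound} — apply to the pilot fits unconditionally (the penalty-ratio requirement of \Cref{lemma:nonasymptotic-outcome} being arranged by the $\lambda^{\text{ratio}}_{\alpha,j}$ device). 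The price is the $\ln^3(d_z)$ pilot residual rate rather than $\ln(d_z)$, which is precisely why $\rho_{\alpha,n}$ carries $\ln^5$ rather than $\ln^3$. A secondary check is that each term of $\rho_{\alpha,n}$ vanishes: $(B_k^{\alpha\,4}\ln^7(kd_zn)/n)^{1/6}=o(1)$ is the second sparsity bound $\xi_{k,\infty}^4\ln^7(d_zkn)/n\to 0$ in \Cref{assm:logistic-model-convergence}(vi), and $\delta_{\alpha,n}\to 0$ is covered by the first sparsity bound there together with $\xi_{k,\infty}\lesssim\underline c_n\le\bar c_n$.
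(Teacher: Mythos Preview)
Your proposal is correct and takes essentially the same approach as the paper: the paper's own proof consists of a single sentence instructing the reader to ``apply the same steps as the proof of \Cref{lemma:logit-score-domination}'' with $\delta_{\alpha,n}^2 = M_{\alpha,r}C_0^2\xi_{k,\infty}^2 s_k^2\bar c_n^2\ln^5(d_zn)/n$ and $\beta_{\alpha,n}$ of order $k/n$, and you have correctly unpacked those steps, including the use of \Cref{lemma:nonasymptotic-linear-residual-bound} in place of \Cref{lemma:nonasymptotic-logit-residual-bound}. Your explicit discussion of the fourth-moment envelope for $U_{\alpha,j}$ (which the paper leaves implicit, as \Cref{assm:logistic-model-convergence}(v) is literally stated only for $U_{\gamma,j}$) and of the pilot-versus-bootstrap circularity issue goes beyond what the paper writes but is consistent with the underlying construction.
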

\begin{proof}
    Apply the same steps as the proof of \Cref{lemma:logit-score-domination} with  
    \begin{align*}
        \delta_{\alpha,n}^2 &= M_{\alpha, r}C_0^2\xi_{k,\infty}^2s_k^2\bar c_n^2\ln^5(d_zn)/n \\
        \beta_{\alpha,n} &= \frac{4}{n} + \frac{4}{n^2}  
    \end{align*}
\end{proof}

\begin{lemma}[Probabilistic Bound on \(\Omega_{k,5}\)]
    \label{lemma:omega-5}
    Let \(\tilde\Sigma_{\alpha, j}\) and \(\Sigma_{\alpha, j} = \E\tilde\Sigma_{\alpha, j}\) be as in \eqref{eq:additional-matrices-outcome}. Under \Cref{assm:logistic-model-convergence} if
    \[
        \bar\lambda_k \geq 4\xi_{k,\infty}(G_0^2 + G_0G_1)C_0^2\left[G_0^2\log(d_z/\eps)/n + G_0G_1\sqrt{\log(d_z/\eps)/n}\right]
    \] 
    Then \(\Pr(\Omega_{k,5}) \geq 1 - 2k\eps^2\). In particular, there is a constant \(M_5\) that does not depend on \(k\), such that if \(\bar\lambda_k \geq \xi_{k,\infty}M_5\sqrt{\log(d_z/\eps)/n}\) and \(k\eps^2 \to 0\) as  \(n\to\infty\) then under the conditions of  \Cref{assm:logistic-model-convergence}, \(\Pr(\Omega_{k,5}) = 1 - o(1)\).
\end{lemma}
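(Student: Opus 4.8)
The plan is a Bernstein inequality plus a union bound, made uniform over the $k$ basis terms and the (at most) $d_z^2$ matrix entries. Fix $j\le k$ and $r,s\in\{1,\dots,d_z\}$; using $DY=DY_1$, $D^2=D$ and $\bar m_j(Z)=\bar\alpha_j'Z$, the $(r,s)$ entry of $\tilde\Sigma_{\alpha,j}-\Sigma_{\alpha,j}$ equals $(\E_n-\E)[W_{j,rs}]$ with
\[
W_{j,rs}:=p_j(X)\,D\,e^{-\bar\gamma_j'Z}\,\big(Y_1-\bar m_j(Z)\big)^2\,Z_rZ_s .
\]
Since $\bar\gamma_j$ and $\bar\alpha_j$ are population quantities, $W_{j,rs}$ is a genuine i.i.d.\ average, so no first-stage estimation error enters at this step (that is handled in \Cref{lemma:logit-score-domination,lemma:linear-score-domination}); this is the simple case.

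First I would bound the random multiplier pointwise: by \Cref{assm:logistic-model-convergence}(i) and (iii), $0\le p_j(X)\le\xi_{k,\infty}$, $|Z_rZ_s|\le C_0^2$, $D\le1$, and $e^{-\bar\gamma_j'Z}=\bar\pi_j(Z)^{-1}-1\le e^{-B_0}$, so $|W_{j,rs}|\le \xi_{k,\infty}C_0^2e^{-B_0}\,V_j$ pointwise with $V_j:=(Y_1-\bar m_j(Z))^2\ge0$. The key input is \Cref{assm:logistic-model-convergence}(ii): expanding the exponential in $G_0\,\E[\exp(V_j/G_0^2)-1\mid Z]\le G_1^2$ term by term forces $\E[V_j^m\mid Z]\le m!\,G_0^{2m-1}G_1^2$ for every $m\ge1$, uniformly in $j$ — a Bernstein moment bound for $V_j$ with variance parameter of order $G_0^3G_1^2$ and scale of order $G_0^2$ — and in particular $\E[V_j]$, hence the whole Bernstein pair, is controlled by $G_0,G_1$ alone (cf.\ \Cref{lemma:subgaussian-expectation-bound}). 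Combining with the pointwise multiplier bound and taking total expectations, $\E|W_{j,rs}|^m\le\tfrac{m!}{2}\,v\,L^{m-2}$ for $m\ge2$ with $v\lesssim\xi_{k,\infty}^2C_0^4\,G_0^3G_1^2$ and $L\lesssim\xi_{k,\infty}C_0^2G_0^2$, all constants uniform in $(j,r,s)$.

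Next I would apply Bernstein's inequality for i.i.d.\ sums of sub-exponential variables,
\[
\Pr\big(|(\E_n-\E)[W_{j,rs}]|>t\big)\le2\exp\!\Big(-\tfrac{n}{2}\min\{t^2/v,\,t/L\}\Big),
\]
with $t=\bar\lambda_k$ set to the stated lower bound. The two summands in that bound are precisely the two Bernstein regimes: the $G_0^2\log(d_z/\eps)/n$ term dominates $L\log(d_z/\eps)/n$, forcing the exponential exponent $nt/L\ge2\log(d_z/\eps)$, and the $G_0G_1\sqrt{\log(d_z/\eps)/n}$ term dominates $\sqrt{v\log(d_z/\eps)/n}$, forcing the Gaussian exponent $nt^2/v\ge2\log(d_z/\eps)$, with the prefactor $4$ and the factor $(G_0^2+G_0G_1)$ absorbing the remaining constant slack; thus the per-entry tail is at most $2(\eps/d_z)^2$. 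A union bound over the $\le d_z^2$ pairs $(r,s)$ gives $\Pr(\|\tilde\Sigma_{\alpha,j}-\Sigma_{\alpha,j}\|_\infty>\bar\lambda_k)\le2\eps^2$ for each fixed $j$, and a further union bound over $j\le k$ yields $\Pr(\Omega_{k,5})\ge1-2k\eps^2$. For the ``in particular'' claim, under the rate conditions of \Cref{assm:logistic-model-convergence} one has $\ln(d_z)/n\to0$, hence eventually $\log(d_z/\eps)/n\le\sqrt{\log(d_z/\eps)/n}$, so the prescribed lower bound is at most $\xi_{k,\infty}M_5\sqrt{\log(d_z/\eps)/n}$ for a fixed $M_5$ depending only on $G_0,G_1,C_0,B_0$; if in addition $k\eps^2\to0$, then $1-2k\eps^2\to1$.

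The main obstacle is the middle step: turning the Orlicz-type condition of \Cref{assm:logistic-model-convergence}(ii) into a clean Bernstein pair for $W_{j,rs}$ and carrying the variance and scale parameters through the bounded multiplier so that the resulting threshold matches (and is dominated by) the precise two-term form $G_0^2\log(d_z/\eps)/n+G_0G_1\sqrt{\log(d_z/\eps)/n}$. The remaining ingredients — the i.i.d.\ structure, the two union bounds, and the bookkeeping of $\xi_{k,\infty}$ and $C_0$ — are routine, and uniformity in $j$ is immediate because $G_0,G_1,B_0,C_0$ in \Cref{assm:logistic-model-convergence} do not depend on $j$.
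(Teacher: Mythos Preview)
Your approach is correct and essentially the same as the paper's: both decompose each matrix entry as a bounded factor times the sub-exponential variable $D(Y-\bar m_j(Z))^2$, derive a Bernstein moment condition from the Orlicz-type Assumption~\ref{assm:logistic-model-convergence}(ii), and then apply a Bernstein-type inequality followed by a union bound over $d_z^2$ entries and $k$ basis terms. The only cosmetic difference is that the paper packages the moment computation into \Cref{lemma:tan-18} (which handles the $XY^2$ form with $X$ bounded and $Y$ sub-Gaussian, producing centered moment bounds directly) and the tail bound into \Cref{lemma:tan-16}, whereas you expand the exponential and invoke Bernstein by hand; just be sure that when you pass from the uncentered moments $\E|W_{j,rs}|^m$ to the centered moments required by Bernstein, the extra factor of $2^m$ is absorbed into your constants $v,L$.
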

\begin{proof}
    We show that this happens with probability \(1 - 2\eps^2\) for each  \(j = 1,\dots,k\).
    For any \(l,h = 1,\dots,d_z\), the variable 
    \[
        p_j(X)e^{-\bar\gamma'Z}D\{Y - \bar m_j(Z)\}^2Z_lZ_h 
    \] 
    is the product of \(p_k(X)e^{-\bar\gamma_j'Z}Z_lZ_h\), which is bounded in absolute value by  \(\xi_{k,\infty}C_0^2e^{-B_0}\), and \(D\{Y - \bar m_j(Z)\}\), which is uniformly sub-gaussian conditional on \(Z\). By 
    \Cref{lemma:tan-18} we have:
    \[
        \E\left[|(\tilde\Sigma_{\alpha, j})_{lh} - (\tilde\Sigma_{\alpha, j})_{lh}|^k\right] \leq \frac{k!}{2}(2\xi_{k,\infty}C_0^{-2}e^{-B_0}G_0^2)^{k-2}(2\xi_{k,\infty}C_0^2e^{-B_0}G_0G_1)^2,\;\;k=2,3,\dots 
    .\] 
    Apply the above and \Cref{lemma:tan-16} with \(t = \log(d_z^2/\eps^2)/n\) to obtain
    \[
        \Pr\left(|(\tilde\Sigma_{\alpha, j})_{lh} - (\tilde\Sigma_{\alpha, j})_{lh}| > 2e^{-B_0}\xi_{k,\infty}C_0^2G_0^2t + 2e^{-B_0}\xi_{k,\infty}C_0^2G_0G_1\sqrt{2t}\right) \leq 2\eps^2/d_z^2
    .\]
    A union bound completes the argument.
\end{proof}

\begin{lemma}[Probabilistic Bound on \(\Omega_{k,6}\)]
    \label{lemma:omega-6}
    Let \(\tilde\Sigma_{\gamma, j}\) and \(\Sigma_{\gamma,j} = \E\tilde\Sigma_{\gamma, j}\) be as in \eqref{eq:additional-matrices-outcome}. Under \Cref{assm:logistic-model-convergence} if
    \[
        \bar\lambda_k \geq \xi_{k,\infty}\sqrt{2}(e^{-B_0}+1)C_0\sqrt{\log(d_z/\eps)/n}
    ,\] 
    then \(\Pr(\Omega_{k,6}) \leq 1 - 2k\eps^2\). In particular, there is a constant \(M_6\) that does not depend on \(k\), such that if \(\bar\lambda_k \geq \xi_{k,\infty}M_6\sqrt{\log(d_z/\eps)/n}\) and \(k\eps^2 \to 0\) as  \(n\to\infty\) then under the conditions of  \Cref{assm:logistic-model-convergence}, \(\Pr(\Omega_{k,6}) = 1 - o(1)\).
\end{lemma}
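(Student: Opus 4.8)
The plan is to mimic the proof of \Cref{lemma:omega-5} almost verbatim; the only real simplification is that the summands entering $\tilde\Sigma_{\gamma,j}$ are bounded almost surely, so there is no sub-exponential factor to handle. Fix $j\le k$ and coordinate indices $l,h\in\{1,\dots,d_z\}$, and write the $(l,h)$ entry of $\tilde\Sigma_{\gamma,j}-\Sigma_{\gamma,j}$ as $(\E_n-\E)[W_{j,lh}]$ with $W_{j,lh}:=p_j(X)De^{-\bar\gamma_j'Z}Z_lZ_h$. By \Cref{assm:logistic-model-convergence}(i) we have $|Z_l|,|Z_h|\le C_0$; by part (iii), $\bar\gamma_j'Z\ge B_0$ a.s., hence $0<e^{-\bar\gamma_j'Z}\le e^{-B_0}$; and $0\le p_j(X)\le\xi_{k,\infty}$ with $D\in\{0,1\}$. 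Therefore $|W_{j,lh}|\le \xi_{k,\infty}e^{-B_0}C_0^2$ almost surely, so $W_{j,lh}$ trivially satisfies the moment hypothesis of \Cref{lemma:tan-18} with both the variance and the scale parameters of order $\xi_{k,\infty}e^{-B_0}C_0^2$ (alternatively, Hoeffding's inequality for bounded summands may be used directly).

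Next I would apply \Cref{lemma:tan-16} exactly as in the proof of \Cref{lemma:omega-5}, with the choice $t=\log(d_z^2/\eps^2)/n$, to obtain, for a constant $M_6$ depending only on $C_0$ and $B_0$,
\[
\Pr\big(|(\E_n-\E)[W_{j,lh}]| > M_6\,\xi_{k,\infty}\sqrt{\log(d_z/\eps)/n}\,\big) \le \frac{2\eps^2}{d_z^2}.
\]
Because $W_{j,lh}$ is bounded, the deviation here is purely of order $\sqrt{\log(d_z/\eps)/n}$ with no accompanying $\log(d_z/\eps)/n$ term, which is why the threshold in the statement has the stated form. Taking $\bar\lambda_k$ at least $M_6\,\xi_{k,\infty}\sqrt{\log(d_z/\eps)/n}$, the event $\{|(\tilde\Sigma_{\gamma,j})_{lh}-(\Sigma_{\gamma,j})_{lh}|>\bar\lambda_k\}$ has probability at most $2\eps^2/d_z^2$.

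A union bound over the $d_z^2$ choices of $(l,h)$ then gives $\Pr(\|\tilde\Sigma_{\gamma,j}-\Sigma_{\gamma,j}\|_\infty>\bar\lambda_k)\le 2\eps^2$, and a further union bound over $j=1,\dots,k$ yields $\Pr(\Omega_{k,6}^c)\le 2k\eps^2$, i.e.\ $\Pr(\Omega_{k,6})\ge 1-2k\eps^2$. For the ``in particular'' claim, note that $M_6$ depends only on the fixed constants $B_0,C_0$ and hence not on $k$, and that under the maintained conditions $\eps\to0$ and $k\eps\to0$ (so $k\eps^2\to0$) the right-hand side $2k\eps^2\to0$, giving $\Pr(\Omega_{k,6})=1-o(1)$. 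I expect no substantive obstacle here: the argument is a routine bounded-variable concentration bound plus two nested union bounds. The only points that need care are bookkeeping — verifying that the constant $M_6$ is genuinely free of $k$ (as with the analogous constants in \Cref{lemma:logit-score-domination}, \Cref{lemma:linear-score-domination}, and \Cref{lemma:omega-5}), that the common $\bar\lambda_k=M\xi_{k,\infty}\sqrt{\log(d_z/\eps)/n}$ used throughout the first-stage analysis is taken with $M\ge M_6$, and that $2k\eps^2\to0$, which is exactly where the $k\eps\to0$ hypothesis of \Cref{thm:first-stage-convergence} is invoked.
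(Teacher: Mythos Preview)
Your proposal is correct and follows essentially the same approach as the paper: bound the summands $p_j(X)De^{-\bar\gamma_j'Z}Z_lZ_h$ by $\xi_{k,\infty}e^{-B_0}C_0^2$, apply a concentration inequality entrywise, then union bound over $(l,h)$ and over $j$. The only cosmetic difference is that the paper invokes \Cref{lemma:tan-14} (Hoeffding for bounded variables) directly rather than routing through \Cref{lemma:tan-18}/\Cref{lemma:tan-16}; you already flag this shortcut yourself, and either route yields the same $\sqrt{\log(d_z/\eps)/n}$ threshold.
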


\begin{proof}
    Consider each \(j\) separately. For any  \(l,h = 1,\dots,d_z\), note \(|(\tilde\Sigma_{\gamma, j})_{lh}| = |p_j(X)De^{-\bar\gamma_j'Z}Z_lZ_h| \leq \xi_{k,\infty}C_0^2e^{-B_0}\) so that \((\tilde\Sigma_{\gamma, j})_{lh} - (\Sigma_{\gamma,j})_{lh}\) is mean zero and bounded in abosulte values by  \(2\xi_{k,\infty}C_0^2e^{-B_0}\). Applying \Cref{lemma:tan-14} with  \(\bar\lambda_k \geq 4\xi_{k,\infty}C_0^2e^{-B_0}\sqrt{\log(d_z/\eps)/n}\) yields:
    \[
        \Pr\left(|(\tilde\Sigma_{\gamma,j})_{lh} - (\Sigma_{\gamma, j})_{lh}| \geq \bar\lambda_k \right) \leq 2\eps^2/d_z^2
    .\]
    A union bound completes the argument.
\end{proof}

\begin{lemma}[Probabilitstic Bound on \(\Omega_{k,7}\)]
    \label{lemma:omega-7}
    Let \(\tilde\Sigma_{\alpha, j}^1\) and \(\Sigma_{\alpha, j}^1 = \E\tilde\Sigma_{\alpha, j}^1\) be as in \eqref{eq:mean-event}.  Under \Cref{assm:logistic-model-convergence} if 
     \[
        \bar\lambda_k \geq \xi_{k\infty}4(G_0^2 + G_1^2)^{1/2}e^{-B_0}C_0^2\sqrt{\log(d_z/\eps)/n}
    ,\]
    then \(\Pr(\Omega_{k,7}) \geq 1 - 2k\eps^2\). In particular, there is a constant \(M_7\) that does not depend on  \(k\) such that if  \(\bar\lambda_k \geq \xi_{k,\infty}M_7\sqrt{\log(d_z/\eps)/n}\) and \(k\eps^2 \to 0\) as  \(n\to\infty\) then, under the conditions of \Cref{assm:logistic-model-convergence}, \(\Pr(\Omega_{k,7}) \geq 1 - o(1)\). 
\end{lemma}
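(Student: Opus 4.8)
The argument closely mirrors the proof of \Cref{lemma:omega-5}, the only structural change being that the squared residual $\{Y-\bar m_j(Z)\}^2$ there is replaced here by the absolute residual $|Y-\bar m_j(Z)|$; since $|Y_1-\bar m_j(Z)|$ inherits the uniform subgaussianity of $Y_1-\bar m_j(Z)$ from \Cref{assm:logistic-model-convergence}(ii), its moment growth is lighter (subgaussian rather than subexponential), and this is exactly what yields the pure $\sqrt{\log(d_z/\eps)/n}$ rate in the stated lower bound on $\bar\lambda_k$, with no accompanying $\log(d_z/\eps)/n$ term. Fix $j\le k$ and entry indices $l,h\in\{1,\dots,d_z\}$ and write $(\tilde\Sigma_{\alpha,j}^1)_{lh}=\E_n[V_{i,j,l,h}]$ with $V_{i,j,l,h}:=p_j(X_i)e^{-\bar\gamma_j'Z_i}Z_{i,l}Z_{i,h}\cdot D_i|Y_i-\bar\alpha_j'Z_i|$. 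Because $\bar\gamma_j,\bar\alpha_j$ are deterministic population minimizers, for fixed $(j,l,h)$ these summands are i.i.d.\ across $i$ with common mean $(\Sigma_{\alpha,j}^1)_{lh}$, so the task reduces to a one-dimensional concentration statement followed by two union bounds (over the at most $d_z^2$ entries and over $j\le k$).

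First I would bound the deterministic factor: by \Cref{assm:logistic-model-convergence}(i) $|Z_{i,l}|,|Z_{i,h}|\le C_0$, by (iii) $\bar\gamma_j'Z_i\ge B_0$ so $e^{-\bar\gamma_j'Z_i}\le e^{-B_0}$, and by weak positivity of the basis $0\le p_j(X_i)\le\xi_{k,\infty}$; hence $|p_j(X_i)e^{-\bar\gamma_j'Z_i}Z_{i,l}Z_{i,h}|\le\xi_{k,\infty}C_0^2e^{-B_0}$ almost surely, uniformly in $j$. Next, under $D_i=1$ we have $Y_i=Y_{1i}$ and $\bar\alpha_j'Z_i=\bar m_j(Z_i)$, so the residual factor equals $D_i|Y_{1i}-\bar m_j(Z_i)|\in[0,|Y_{1i}-\bar m_j(Z_i)|]$, which is uniformly subgaussian conditional on $Z_i$ by \Cref{assm:logistic-model-convergence}(ii) (with conditional second moment at most $G_0^2+G_1^2$ via \Cref{lemma:subgaussian-expectation-bound}). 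Combining the bounded factor with this subgaussian factor and invoking \Cref{lemma:tan-18} yields Bernstein-type moment bounds of the form
\[
    \E\big[|V_{i,j,l,h}-\E V_{i,j,l,h}|^m\big]\le\tfrac{m!}{2}\big(2\xi_{k,\infty}C_0^2e^{-B_0}G_0\big)^{m-2}\big(2\xi_{k,\infty}C_0^2e^{-B_0}(G_0^2+G_1^2)^{1/2}\big)^2,\qquad m=2,3,\dots,
\]
with all constants uniform in $j$, precisely because \Cref{assm:logistic-model-convergence}(ii)--(iii) are phrased uniformly in $j$.

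With this moment bound I would apply a concentration inequality for sums of i.i.d.\ subgaussian variables (e.g.\ \Cref{lemma:tan-16}, whose linear-in-$t$ term is of smaller order here and can be absorbed into a negligible enlargement of the leading constant, or dropped entirely under a direct subgaussian tail bound) with $t=\log(d_z^2/\eps^2)/n=2\log(d_z/\eps)/n$, giving, for $n$ large,
\[
    \Pr\Big(\big|(\tilde\Sigma_{\alpha,j}^1)_{lh}-(\Sigma_{\alpha,j}^1)_{lh}\big|>4\xi_{k,\infty}C_0^2e^{-B_0}(G_0^2+G_1^2)^{1/2}\sqrt{\log(d_z/\eps)/n}\Big)\le 2\eps^2/d_z^2.
\]
Since the displayed threshold is exactly the lower bound hypothesized on $\bar\lambda_k$, we obtain $\Pr(|(\tilde\Sigma_{\alpha,j}^1)_{lh}-(\Sigma_{\alpha,j}^1)_{lh}|>\bar\lambda_k)\le 2\eps^2/d_z^2$. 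A union bound over the $\le d_z^2$ pairs $(l,h)$ gives $\Pr(\|\tilde\Sigma_{\alpha,j}^1-\Sigma_{\alpha,j}^1\|_\infty>\bar\lambda_k)\le 2\eps^2$, and a further union bound over $j=1,\dots,k$ gives $\Pr(\Omega_{k,7}^c)\le 2k\eps^2$, i.e.\ the first claim. For the ``in particular'' assertion I would take $M_7:=8(G_0^2+G_1^2)^{1/2}e^{-B_0}C_0^2$ (any constant at least this size covers the leading term plus the absorbed linear term once $t\le 1$); the rate conditions of \Cref{assm:logistic-model-convergence} ensure $\log(d_zn)/n\to0$, hence $t\to0$, and $\eps\le1$ gives $k\eps^2\le k\eps\to0$, so $\Pr(\Omega_{k,7})\to 1$.

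Since the structure is identical to that of \Cref{lemma:omega-5} and the residual in question is subgaussian rather than subexponential, there is no genuine obstacle beyond bookkeeping. The two points that deserve a little care are (a) verifying that the Bernstein moment constants above can be taken uniform in $j$ --- immediate from the uniform phrasing of \Cref{assm:logistic-model-convergence}(ii)--(iii) --- and (b) confirming that the linear-in-$t$ tail contribution is negligible relative to the $\sqrt{t}$ contribution under the global rate conditions, which holds because $\xi_{k,\infty}^2\log(d_zn)/n\to0$ forces $t\to0$.
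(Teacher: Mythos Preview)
Your proposal is correct and follows essentially the same strategy as the paper: bound the deterministic factor $p_j(X)e^{-\bar\gamma_j'Z}Z_lZ_h$ by $\xi_{k,\infty}C_0^2e^{-B_0}$, observe that $D|Y-\bar m_j(Z)|$ is uniformly subgaussian, apply a concentration inequality entrywise, and take two union bounds.

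The one minor difference is the concentration step. The paper applies \Cref{lemma:tan-15} directly, using that a bounded factor times a subgaussian variable is again subgaussian; this yields the pure $\sqrt{\log(d_z/\eps)/n}$ threshold immediately with no linear-in-$t$ term to absorb. Your route through \Cref{lemma:tan-18} and \Cref{lemma:tan-16} is slightly roundabout, and in fact \Cref{lemma:tan-18} as stated concerns $Z=XY^2$ rather than $X|Y|$, so it does not literally apply here. Since you already note that a direct subgaussian tail bound would avoid the linear term altogether, simply replacing that step by \Cref{lemma:tan-15} makes the argument coincide with the paper's.
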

\begin{proof}
    We deal with each \(j\) term separately. The variables \(p_j(X)e^{-\bar\gamma_j'Z}|Y - \bar m_j(Z)|Z_lZ_h\) are uniformly sub-gaussian conditional on \(Z\) because \(|p_j(X)e^{-\bar\gamma_j'Z}Z_lZ_h| \leq \xi_{k,\infty}e^{-B_0}C_0^2\) and \(D|Y - \bar m_j(Z)|\) is uniformly sub-gaussian. Applying \Cref{lemma:tan-15} for  \(\bar\lambda_k \geq e^{-B_0}\xi_{k,\infty}C_0^2\sqrt{8(G_0^2 + G_1)^2}\sqrt{\log(d_z/\eps)/n}\) yields 
    \[
        \Pr\left(|(\tilde\Sigma_{\gamma, j})_{lh} - (\Sigma_{\gamma, j})_{lh}| \geq \bar\lambda_k \right) \leq 2\eps^2/d_z^2
    .\] 
    A union bound completes the argument.
\end{proof}

\subsection{Probability Bounds for Residual Estimation}%
\label{subsec:residual-estimation-probability-bounds}
For showing consistent residual estimation, we employ the following two lemmas.

\begin{lemma}[Deterministic Logistic Score Domination]
    \label{lemma:logit-score-domination-deterministic}
    Under \Cref{assm:logistic-model-convergence} let
    \[
        \bar\lambda_k \geq \xi_{k,\infty}\sqrt{2}(e^{-B_0} + 1)C_0\sqrt{\ln(d_z/\eps)/n}
    .\] 
    Then if for all \(j = 1,\dots,k\) we let  \(\lambda_{\gamma, j} \equiv \bar\lambda_k\), \(\Pr(\Omega_{k,1}\cap\Omega_{k,2}) \geq 1 - 2k\eps\).  In particular, there is a constant \(M_8^p\) that does not depend on \(k\) such that if \(\bar\lambda_k \geq M_8^p\xi_{k,\infty}\sqrt{\ln(d_zn)/n}\) \(\Pr(\Omega_{k,1}\cap\Omega_{k,2}) \geq 1 - 2k/n^p\).  

\end{lemma}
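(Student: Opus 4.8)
The plan is to decouple the two events making up $\Omega_{k,1}\cap\Omega_{k,2}$. Since we set $\lambda_{\gamma,j}\equiv\bar\lambda_k$ for every $j\le k$, the penalty-majorization event $\Omega_{k,2}=\{\lambda_{\gamma,j}\le\bar\lambda_k,\ \forall j\le k\}$ holds with probability one, so everything reduces to bounding $\Pr(\Omega_{k,1}^c)$, i.e.\ the probability that $\|S_{\gamma,j}\|_\infty$ is large (relative to $\bar\lambda_k$, up to the numerical factor $c_0$) for some $j$. Unlike \Cref{lemma:nonasymptotic-logit} or \Cref{lemma:logit-score-domination}, this needs neither the compatibility condition nor the sparsity bounds of \Cref{assm:logistic-model-convergence}: it is a pure tail bound for an empirical average, and uniformity over $j$ will cost only a union-bound factor $k$.

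First I would record two elementary properties of the logistic score $S_{\gamma,j}=\E_n[U_{\gamma,j}Z]$, where $U_{\gamma,j}$ is the summand attached to the population objective \eqref{eq:gamma-bar-j}. \emph{(a) It is a centered i.i.d.\ average:} the terms $U_{\gamma,j,i}Z_i$ are i.i.d.\ by \Cref{assm:identification}(i), and $\E[U_{\gamma,j}Z]=0$ because $\bar\gamma_j$ is an interior minimizer of the smooth convex population criterion \eqref{eq:gamma-bar-j}, so its gradient --- which is exactly $\E[U_{\gamma,j}Z]$, the $\gamma$-block of the first-order conditions \eqref{eq:foc-plim-parameters} --- must vanish; finiteness of this gradient is guaranteed by $\bar\gamma_j'Z\ge B_0$ in \Cref{assm:logistic-model-convergence}(iii). \emph{(b) Each coordinate is uniformly bounded a.s.:} using weak positivity of $p_j$ with $\sup_x\|p^k(x)\|_\infty\le\xi_{k,\infty}$, the bound $e^{-\bar\gamma_j'Z}\le e^{-B_0}$ from (iii), $D,1-D\in\{0,1\}$, and $|Z_l|\le C_0$ from (i), one gets $|U_{\gamma,j}Z_l|\le\xi_{k,\infty}(e^{-B_0}+1)C_0$ almost surely, uniformly over $j\le k$ and $l\le d_z$.

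With these two facts, the rest is a Hoeffding-type bound and two union bounds. Applying the bounded-variable deviation inequality \Cref{lemma:tan-14} to each centered coordinate average $S_{\gamma,j,l}$ and unioning over $l=1,\dots,d_z$ gives, for every $t>0$, $\Pr(\|S_{\gamma,j}\|_\infty>t)\le 2d_z\exp\!\big(-nt^2/(2\xi_{k,\infty}^2(e^{-B_0}+1)^2C_0^2)\big)$; taking $t$ proportional to $\bar\lambda_k$ (the constant $c_0$ in the definition of $\Omega_{k,1}$ being absorbed into the numerical constant) and invoking the hypothesis $\bar\lambda_k\ge\xi_{k,\infty}\sqrt2(e^{-B_0}+1)C_0\sqrt{\ln(d_z/\eps)/n}$ drives the right-hand side below $2\eps$; a final union over $j=1,\dots,k$ yields $\Pr(\Omega_{k,1}^c)\le 2k\eps$, hence $\Pr(\Omega_{k,1}\cap\Omega_{k,2})\ge1-2k\eps$. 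For the ``in particular'' claim, set $\eps=n^{-p}$: then $\ln(d_z/\eps)=\ln d_z+p\ln n\le(p+1)\ln(d_zn)$, so the required lower bound on $\bar\lambda_k$ is implied by $\bar\lambda_k\ge M_8^p\,\xi_{k,\infty}\sqrt{\ln(d_zn)/n}$ with $M_8^p:=\sqrt{2(p+1)}\,(e^{-B_0}+1)C_0$, a constant free of $k$, and $2k\eps=2k/n^p$. There is no substantive obstacle here; the only points demanding care are the rigorous justification of the mean-zero property (interior minimizer $\Rightarrow$ vanishing population gradient, which relies on the integrability furnished by (iii)) and tracking constants so that the per-$j$ failure probability is exactly $2\eps$, so that the two union bounds deliver precisely the $1-2k\eps$ asserted rather than a weaker bound.
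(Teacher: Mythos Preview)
Your proposal is correct and follows essentially the same route as the paper: observe that $\Omega_{k,2}$ is trivially satisfied when $\lambda_{\gamma,j}\equiv\bar\lambda_k$, note that each coordinate $S_{\gamma,j,l}$ is mean zero by optimality of $\bar\gamma_j$ and almost surely bounded by $\xi_{k,\infty}(e^{-B_0}+1)C_0$, then apply the Hoeffding-type bound \Cref{lemma:tan-14} together with union bounds over $l\le d_z$ and $j\le k$. Your write-up is in fact more explicit than the paper's (which compresses the argument to a few lines), particularly in justifying the mean-zero property via the population first-order condition and in spelling out the choice $\eps=n^{-p}$ for the ``in particular'' claim.
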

\begin{proof}
    Let us recall that  
    \[
        \|S_j\|_\infty = \max_{1\leq l\leq d_z}|\E_n[p_j(X)\{-De^{-\bar\gamma_j'Z} + (1-D)\}Z_j]|
    .\] 
    Notice for each \(1\leq l \leq d_z\), \(S_{j,l} = p_j(X)\{-De^{-\bar\gamma_j'Z} + (1-D)\}Z_l\) is bounded in absolute value by  \(C_0\xi_{k,\infty}(e^{-B_0} + 1)\) and is mean zero by optimality of  \(\bar\gamma_j\). For \(\bar\lambda_k \geq 2(e^{-B_0} +1)C_0\sqrt{\ln(d_z/\eps)/n}\) apply \Cref{lemma:tan-14} to see the result.
\end{proof}

\begin{lemma}[Deterministic Linear Score Domination]
    \label{lemma:linear-score-domination-deterministic}
    Under \Cref{assm:logistic-model-convergence} let 
    \[
        \bar\lambda_k \geq \xi_{k,\infty}(e^{-B_0}C_0)\sqrt{8(G_0^2 + G_1^2)}\sqrt{\ln(d_z/\eps)/n}
    .\] 
    Then if for all \(j=1,\dots,k\) we let \(\lambda_{\gamma, j} \equiv \bar\lambda_k\),  \(\Pr(\Omega_{k,3}\cap\Omega_{k,4}) \geq 1 - 2k\eps\). In particular, there is a constant \(M_9^p\) that does not depend on \(k\) such that if \(\bar\lambda_k \geq M_9^p\xi_{k,\infty}\sqrt{\ln(d_zn)/n}\), \(\Pr(\Omega_{k,3}\cap\Omega_{k,4}) \geq 1 - 2k/n^p\).  
\end{lemma}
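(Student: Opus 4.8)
The plan is to mirror the proof of \Cref{lemma:logit-score-domination-deterministic}, replacing the bounded-variable Hoeffding step by the sub-gaussian concentration inequality \Cref{lemma:tan-15} already used in \Cref{lemma:omega-7}. First, setting $\lambda_{\alpha,j}\equiv\bar\lambda_k$ for every $j\le k$ makes $\Omega_{k,4}=\{\lambda_{\alpha,j}\le\bar\lambda_k,\ \forall j\le k\}$ hold with probability one, so the entire task is to show $\Pr(\Omega_{k,3})\ge 1-2k\eps$, i.e.\ that $\|S_{\alpha,j}\|_\infty\le\bar\lambda_k/c_0$ for every $j\le k$ with that probability, where $S_{\alpha,j}=\E_n[U_{\alpha,j}Z]$ and $U_{\alpha,j}=p_j(X)De^{-\bar\gamma_j'Z}(Y-\bar\alpha_j'Z)$ is the population-level outcome-regression residual. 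Since $\bar\gamma_j$ and $\bar\alpha_j$ are population minimizers (of \eqref{eq:gamma-bar-j}--\eqref{eq:alpha-bar-j}), not data-dependent, the coordinates of $S_{\alpha,j}$ are genuine empirical means of i.i.d.\ summands, which is what makes this the ``deterministic'' version of the score-domination bound.

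Fix $j\le k$ and a coordinate $l\le d_z$, and consider the i.i.d.\ summands $S_{\alpha,j,l,i}:=p_j(X_i)D_ie^{-\bar\gamma_j'Z_i}(Y_i-\bar\alpha_j'Z_i)Z_{il}$. These have mean zero: the first-order condition characterizing $\bar\alpha_j$ in \eqref{eq:alpha-bar-j} is precisely $\E[p_j(X)De^{-\bar\gamma_j'Z}(Y-\bar\alpha_j'Z)Z]=0$, so each coordinate of $\E[S_{\alpha,j}]$ vanishes. Using $DY=DY_1$ and $\bar\alpha_j'Z=\bar m_j(Z)$, write $S_{\alpha,j,l,i}=\big(p_j(X_i)D_ie^{-\bar\gamma_j'Z_i}Z_{il}\big)\cdot\big(D_i(Y_{1i}-\bar m_j(Z_i))\big)$. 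The first factor is bounded in absolute value by $\xi_{k,\infty}e^{-B_0}C_0$ using $0\le p_j\le\xi_{k,\infty}$ (weak positivity), $\bar\gamma_j'Z\ge B_0$ from \Cref{assm:logistic-model-convergence}(iii), and $|Z_l|\le C_0$ from \Cref{assm:logistic-model-convergence}(i); the second factor is uniformly (over $j$) sub-gaussian conditional on $Z$ with parameters governed by $G_0,G_1$, since multiplying $Y_1-\bar m_j(Z)$ by $D\in\{0,1\}$ only contracts it and \Cref{assm:logistic-model-convergence}(ii) applies. Hence $S_{\alpha,j,l,i}$ is a mean-zero random variable that is uniformly sub-gaussian conditional on $Z$ with scale of order $\xi_{k,\infty}e^{-B_0}C_0(G_0\vee G_1)$ — exactly the structure handled by \Cref{lemma:tan-15} in the proof of \Cref{lemma:omega-7}.

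Applying \Cref{lemma:tan-15} to the centered (hence here uncentered) mean $\E_n[S_{\alpha,j,l}]$ with threshold of order $\xi_{k,\infty}e^{-B_0}C_0\sqrt{8(G_0^2+G_1^2)}\sqrt{\ln(d_z/\eps)/n}$ gives a tail probability at most $2\eps/d_z$; a union bound over $l=1,\dots,d_z$ and $j=1,\dots,k$ yields $\Pr\big(\max_{j\le k}\|S_{\alpha,j}\|_\infty>\bar\lambda_k/c_0\big)\le 2k\eps$ once $\bar\lambda_k$ exceeds the displayed bound times the fixed constant $c_0$, which is absorbed; this is $\Omega_{k,3}$, so $\Pr(\Omega_{k,3}\cap\Omega_{k,4})\ge 1-2k\eps$. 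For the ``in particular'' clause, take $\eps=n^{-p}$, so $\ln(d_z/\eps)=\ln(d_zn^p)\le p\ln(d_zn)$, and collect $p$, $c_0$, and the fixed constants $B_0,C_0,G_0,G_1$ into a single $M_9^p$, giving $\Pr(\Omega_{k,3}\cap\Omega_{k,4})\ge 1-2k/n^p$ whenever $\bar\lambda_k\ge M_9^p\xi_{k,\infty}\sqrt{\ln(d_zn)/n}$. There is no substantive obstacle here: the argument is a one-step concentration-plus-union-bound exactly parallel to \Cref{lemma:logit-score-domination-deterministic}, and the only care needed is bookkeeping — tracking the sub-gaussian scale through the product so the threshold matches the stated one, and noting that the $c_0$ in the definition of $\Omega_{k,3}$ is folded into the constant rather than written explicitly.
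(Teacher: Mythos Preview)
Your proposal is correct and follows essentially the same approach as the paper: identify the score summands as mean zero by optimality of $\bar\alpha_j$, bound them as a product of a factor bounded by $\xi_{k,\infty}e^{-B_0}C_0$ and a uniformly sub-gaussian factor $D(Y-\bar m_j(Z))$, apply \Cref{lemma:tan-15}, and union-bound over $l$ and $j$. You are slightly more careful than the paper in noting that $\Omega_{k,4}$ is automatic once $\lambda_{\alpha,j}\equiv\bar\lambda_k$ and in tracking the $c_0$ factor that must be absorbed into the constant.
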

\begin{proof}
    Notice \(S_{j,l} = p_j(X)De^{-\bar\gamma_j'Z}\{Y - \bar m_j(Z)\}Z_l\) for \(l = 1,\dots,p\). By optimality of \(\bar\alpha_j\),  \(S_{j,l}\) is mean zero. Under \Cref{assm:logistic-model-convergence},  \(|S_{j,l}| \leq e^{-B_0}C_0|D\{Y - \bar m_j(Z)\}|\) so by \Cref{assm:logistic-model-convergence} the variables \(S_{j,l}\) are uniformly sub-gaussian conditional on  \(Z\) in the following sense:
    \[
         \max_{l=1,\dots,p} \tilde G_0^2\E[\exp(S_{j,l}^2 / \tilde G_0^2) - 1] \leq \tilde G_1^2
    \] 
    for \(\tilde G_0 = \xi_{k,\infty}C_0G_0e^{-B_0}\) and  \(\tilde G_1 = \xi_{k,\infty} C_0 G_1 e^{-B_0}\). Apply \Cref{lemma:tan-15} for  \(\bar \lambda_k\) defined above in the statement of \Cref{lemma:linear-score-domination-deterministic} and union bound to obtain the result.
\end{proof}

\section{Additional Second Stage Results}
\label{sec:additional-second-stage}

\begin{theorem}[Integrated Rate of Convergence]
    \label{thm:rate-of-convergence}
    Assume that \Cref{cond:no-effect} and \Cref{assm:second-stage-assumptions} hold. In addition suppose that \(\xi_k^2\log k/n \to 0\) and  \(c_k\to 0\). Then if either the propensity score our outcome regression model are correctly specified: 
    \begin{equation}
        \label{eq:l2-rate-ghat}
        \|\widehat g_k - g_0\|_{L,2} = (\E[(\widehat g(x) - g_0(x))^2])^{1/2} \lesssim_p \sqrt{k/n} + c_k 
    \end{equation}
\end{theorem}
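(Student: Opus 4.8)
The plan is to follow the template of \citet{BCCK-2015} for least-squares series estimators, replacing the single noise term with the family $\{\eps_j\}_{j\le k}$ and inserting the first-stage bias vector $\vB^k$ from \eqref{eq:beta-means}. First I would record the decomposition $\widehat g(x)-g_0(x)=p^k(x)'(\widehat\beta^k-\beta^k)-r_k(x)$, which with the triangle inequality gives $\|\widehat g_k-g_0\|_{L,2}\le \|p^k(\cdot)'(\widehat\beta^k-\beta^k)\|_{L,2}+\|r_k\|_{L,2}$, and the second term is at most $c_k$ by \Cref{assm:second-stage-assumptions}(iii). Since the eigenvalues of $Q$ are bounded above and away from zero (\Cref{assm:second-stage-assumptions}(i)), $\|p^k(\cdot)'\theta\|_{L,2}^2=\theta'Q\theta\asymp\|\theta\|^2$, so it suffices to control the coefficient error, with a mild refinement for the approximation piece explained below.

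Next I would split the estimator. Writing $\widehat Y^k:=(Y(\widehat\pi_j,\widehat m_j))_{j\le k}$ and $\bar Y^k:=(Y(\bar\pi_j,\bar m_j))_{j\le k}$, and using that under either correctly specified model $\bar Y^k_j=p^k(X)'\beta^k+r_k(X)+\eps_j$ with $\E[\eps_j\mid X]=0$ for \emph{every} $j\le k$ (eq.~\eqref{eq:second-stage-setup} together with the double robustness~\eqref{eq:double-robustness}), a direct computation using $\E[p^k(X)r_k(X)]=0$ (the projection defining $\beta^k$) yields the identity
\[
\widehat\beta^k-\beta^k=\widehat Q^{-1}\vB^k+\widehat Q^{-1}\E_n\big[r_k(X)p^k(X)\big]+\widehat Q^{-1}\E_n\big[p^k(X)\circ\eps^k\big].
\]
The matrix law of large numbers \Cref{lemma:matrix-lln}, valid under $\xi_k^2\log k/n\to0$, gives $\|\widehat Q-Q\|\to_p 0$, hence $\|\widehat Q^{-1}\|\lesssim_P 1$ and $\widehat Q$ may be interchanged with $Q$ in quadratic forms up to constants.

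Then I would bound the three terms. \emph{(i) First-stage bias:} $\|\vB^k\|\le \sqrt k\,\max_{j\le k}|\vB^k_j|=\sqrt k\cdot o_p(n^{-1/2}k^{-1/2})=o_p(n^{-1/2})$ by \Cref{cond:no-effect}, so the first term contributes $o_p(\sqrt{k/n})$ in $L,2$. \emph{(ii) Stochastic noise:} $\E\|\E_n[p^k(X)\circ\eps^k]\|^2=n^{-1}\sum_{j\le k}\E[p_j(X)^2\Var(\eps_j\mid X)]\le(\bar\sigma^2/n)\operatorname{tr}(Q)\lesssim k/n$ using the uniform-in-$j$ variance bound \Cref{assm:second-stage-assumptions}(ii); by Markov this term is $O_p(\sqrt{k/n})$. \emph{(iii) Approximation bias in the coefficient:} here I would use that $\widehat Q^{-1}\E_n[r_k(X)p^k(X)]$ is the in-sample least-squares coefficient from regressing $r_k$ on $p^k$, so the squared empirical $L^2$ norm of its fitted values is at most $\E_n[r_k(X)^2]$; transferring to the population norm via $\widehat Q\asymp Q$ and $\E_n[r_k^2]=O_p(\E[r_k^2])=O_p(c_k^2)$ (Markov, nonnegativity) gives $\|p^k(\cdot)'\widehat Q^{-1}\E_n[r_k(X)p^k(X)]\|_{L,2}\lesssim_P c_k$. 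Combining (i)--(iii) with $\|r_k\|_{L,2}\le c_k$ yields $\|\widehat g_k-g_0\|_{L,2}\lesssim_P\sqrt{k/n}+c_k$.

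The main obstacle will be the two places where the multiplicity of estimators and error terms enters: ensuring $\lambda_{\min}(\widehat Q)$ is bounded away from zero with probability tending to one (this is exactly where $\xi_k^2\log k/n\to0$ is used, through \Cref{lemma:matrix-lln}), and checking that the bias vector is controlled in $\ell_2$ rather than coordinatewise --- which is why the $\sqrt k$-strengthened uniform bound of \Cref{cond:no-effect} is needed --- together with the fact that the conditional-variance bound must hold simultaneously for all $k$ error terms $\eps_j$, as guaranteed by \Cref{assm:second-stage-assumptions}(ii). Everything else is the standard series-estimator bookkeeping of \citet{BCCK-2015}, and $c_k\to0$ then makes the bound vanish.
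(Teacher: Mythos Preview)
Your proposal is correct and follows essentially the same route as the paper's proof: the same triangle-inequality split into $p^k(\cdot)'(\widehat\beta^k-\beta^k)$ and $r_k$, the same three-term decomposition of $\widehat\beta^k-\beta^k$ into first-stage bias, stochastic noise, and approximation bias, and the same use of \Cref{lemma:matrix-lln} together with \Cref{cond:no-effect} and \Cref{assm:second-stage-assumptions}(ii)--(iii) to bound each piece. The only cosmetic difference is your treatment of term~(iii): the paper simply cites equation~(A.48) of \citet{BCCK-2015} for the $c_k$ bound on $\|\widehat Q^{-1}\E_n[p^k(X)r_k(X)]\|$, whereas you give a short self-contained argument via the in-sample projection interpretation and $\widehat Q\asymp Q$.
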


\begin{proof}
    We begin with a matrix law of large numbers from \citet{Rudelson99randomvectors}, which is used to show \(\widehat Q \to_p Q\).
\begin{lemma}[Rudelson's LLN for Matrices]
    \label{lemma:matrix-lln}
    Let \(Q_1,\dots,Q_n\) be a sequence of independent, symmetric, non-negative \(k\times k\) matrix valued random variables with \(k\geq 2\) such that \(Q = \E[\E_n Q_i]\) and \(\|Q_i\| \leq M\) a.s. Then for \( \widehat Q = \E_n[Q_i]\),
    \[
        \Delta := \E\|\widehat Q - Q\| \lesssim \frac{M\log k}{n} + \sqrt{\frac{M\|Q\|\log k}{n}} 
    .\] 
    In particular if \(Q_i = p_ip_i'\) with  \(\|p_i\|\leq \xi_k\) almost surely, then 
    \[
        \Delta := \E\|\widehat Q - Q\| \lesssim \frac{\xi_k^2\log k}{n} + \sqrt{\frac{\xi_k^2\|Q\|\log k}{n}} 
    .\] 
\end{lemma}
Now, to prove \Cref{thm:rate-of-convergence} we have that:
\begin{align*}
    \|\widehat g_k - g_0\|_{L,2}
    &\leq \|p^k(x)'\widehat\beta^k - p^k(x)'\beta^k\|_{L,2} + \|p^k(x)'\beta^k - g\|_{L,2}\\
    &\leq \|p^k(x)'\widehat\beta^k - p^k(x)'\beta^k\|_{L,2} + c_k
\end{align*}
where under the normalization \(Q = I_k\) we have that 
\begin{align*}
    \|p'\widehat\beta - p'\beta\|_{L,2} = \|\widehat\beta - \beta\|
\end{align*}
Further, 
\begin{align*}
    \|\widehat\beta^k - \beta^k\| 
    &= \|\widehat{Q}^{-1}\E[p^k(x)\circ (\widehat Y - \bar Y)]\| + \|\widehat{Q}^{-1}\E_n[p^k(x)\circ (\eps^k + r_k)]\| \\
    &\leq  \|\widehat{Q}^{-1}\E[p^k(x)\circ (\widehat Y - \bar Y)]\| +  \|\widehat{Q}^{-1}\E_n[p^k(x)\circ\eps^k]\| + \|\widehat{Q}^{-1}\E_n[p^k(x)r_k]\|
    \intertext{By the matrix LLN  (\Cref{lemma:matrix-lln}) we have that since \(\xi_k^2\log k/n\to 0\),  \(\|\widehat Q - Q\|\to_p 0\).  This means that with probability approaching one all eigenvalues of \(\widehat Q\) are boundedaway from zero, in particular they are larger than \(1/2\). So w.p.a 1}
    &\lesssim  \|\E[p^k(x)\circ (\widehat Y - \bar Y)]\| +  \|\E_n[p^k(x)\circ\eps^k]\| + \|\E_n[p^k(x)r_k]\|
\end{align*}
Under \Cref{cond:no-effect} the first term is \(o_p(\sqrt{k/n})\). By equation (A.48) in \citet{BCCK-2015} the third term is bounded in probability by \(c_k\). For the second term apply the third condition in \Cref{assm:second-stage-assumptions} to see
\[
    \E\|\E_n[p^k(x)\circ\eps^k]\|^2 = \E\sum_{j=1}^k \eps_j^2p_j(x)^2/n  \leq \bar\sigma^2\E_n[p^k(x)p^k(x)'/n] \lesssim \E[p^k(x)p^k(x)'/n] = k/n
.\] 
This gives \(\|\E_n[p^k(x)\circ \eps^k]\| \lesssim_p \sqrt{k/n}\) and thus shows \eqref{eq:l2-rate-ghat}. 

\end{proof}

The following lemma is a building block for asymptotic pointwise normality. It establishes conditions under which the coefficient estimator \(\widehat\beta\) is asymptotically linear in the sense of \citet{bickel1993efficient}.

\begin{lemma}[Pointwise Linearization]
    \label{lemma:pointwise-linearization}
    Suppose that \Cref{cond:no-effect} and \Cref{assm:second-stage-assumptions}, hold. In addition assume that \(\xi_k^2\log k/n\to 0\). Then for any  \(\alpha \in S^{k-1}\),
    \begin{equation}
        \label{eq:pointwise-linearization-one}
        \sqrt{n}\alpha'(\widehat\beta^k - \beta^k) = \alpha'\mathbb{G}_n[p^k(x)\circ(\eps^k + r_k)] + R_{1n}(\alpha)
    \end{equation}
    where the term \(R_{1n}(\alpha)\), summarizing the impact of unknown design, obeys
    \begin{equation}
        \label{eq:pointwise-linearization-first-error}
        R_{1n}(\alpha) \lesssim_p \sqrt{\frac{\xi_k^2\log k}{n}}(1 + \sqrt{k}\ell_kc_k)
    \end{equation}
    Moreover,
    \begin{equation}
        \label{eq:pointwise-linearization-two}
        \sqrt{n}\alpha'(\widehat\beta^k - \beta^k) = \alpha'\mathbb{G}_n[p^k(x)\circ\eps^k] + R_{1n}(\alpha) + R_{2n}(\alpha)
    \end{equation}
    where the term \(R_{2n}\), summarizing the impact of approximation error on the sampling error of the estimator, obeys
     \begin{equation}
        \label{eq:pointwise-linearization-second-error}
        R_{2n}(\alpha) \lesssim_p \ell_k c_k
    \end{equation}
\end{lemma}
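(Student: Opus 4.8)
The plan is to reduce the claim to the least-squares-series linearization of \citet{BCCK-2015} plus a first-stage bias term that \Cref{cond:no-effect} kills. Starting from the closed form $\widehat\beta^k = \widehat Q^{-1}\E_n\big[p^k(X)\circ\widehat Y^k\big]$ with $\widehat Y^k := (Y(\widehat\pi_1,\widehat m_1),\dots,Y(\widehat\pi_k,\widehat m_k))'$, and using that one of the first-stage models is correct so that \eqref{eq:second-stage-setup} applies, I would write $Y(\bar\pi_j,\bar m_j) = p^k(X)'\beta^k + r_k(X) + \eps_j$ with $\E[\eps_j\mid X]=0$, giving $\E_n[p^k(X)\circ\bar Y^k] = \widehat Q\beta^k + \E_n[p^k(X)\circ(\eps^k + r_k)]$ and hence
\[
    \widehat\beta^k - \beta^k = \widehat Q^{-1}\E_n[p^k(X)\circ(\eps^k + r_k)] + \widehat Q^{-1}\vB^k ,
\]
with $\vB^k$ the bias vector of \eqref{eq:beta-means}. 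Since $\E[p_j(X)\eps_j]=0$ and $\E[p^k(X)r_k(X)]=0$ (the latter by the projection property defining $g_k$), the first term equals $n^{-1/2}\widehat Q^{-1}\mathbb{G}_n[p^k(X)\circ(\eps^k+r_k)]$; with the standing normalization $Q = I_k$, writing $\widehat Q^{-1} = I_k + (\widehat Q^{-1} - I_k)$ then produces
\[
    \sqrt n\,\alpha'(\widehat\beta^k - \beta^k) = \alpha'\mathbb{G}_n[p^k(X)\circ(\eps^k+r_k)] + R_{1n}(\alpha),
\]
with $R_{1n}(\alpha) = \alpha'(\widehat Q^{-1}-I_k)\mathbb{G}_n[p^k(X)\circ(\eps^k+r_k)] + \sqrt n\,\alpha'\widehat Q^{-1}\vB^k$.

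Three routine ingredients finish most of the bound on $R_{1n}$. First, \Cref{lemma:matrix-lln} applied to $Q_i = p^k(X_i)p^k(X_i)'$, using $\|p^k(X_i)\| \le \xi_k$ and $\|Q\| \lesssim 1$ from \Cref{assm:second-stage-assumptions}, gives $\|\widehat Q - I_k\| \lesssim_P \sqrt{\xi_k^2\log k/n}$, hence $\|\widehat Q^{-1}\| \lesssim_P 1$ and $\|\widehat Q^{-1}-I_k\| \lesssim_P \sqrt{\xi_k^2\log k/n}$. Second, \Cref{cond:no-effect} gives $\|\vB^k\|_2 \le \sqrt k\max_j|\vB^k_j| = o_p(n^{-1/2})$, so $\sqrt n\,\alpha'\widehat Q^{-1}\vB^k = o_p(1)$ (negligible for every use of the lemma; under \eqref{eq:first-stage-sparsity-bound} it is in fact far smaller than the stated bound). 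Third, splitting off the approximation error: $\mathbb{G}_n[p^k(X)r_k]$ has $\E\|\mathbb{G}_n[p^k(X)r_k]\|_2^2 \le \|r_k\|_{L,\infty}^2\,\mathrm{tr}(Q) \lesssim k\,\ell_k^2c_k^2$, so its cross term with $\widehat Q^{-1}-I_k$ is $\lesssim_P \sqrt{\xi_k^2\log k/n}\cdot\sqrt k\,\ell_kc_k$, while $\alpha'\mathbb{G}_n[p^k(X)r_k]$ itself has variance $\alpha'\E[p^k(X)p^k(X)'r_k(X)^2]\alpha \le \|r_k\|_{L,\infty}^2\,\alpha'Q\alpha \lesssim \ell_k^2c_k^2$. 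Peeling this last piece off as $R_{2n}(\alpha) := \alpha'\mathbb{G}_n[p^k(X)r_k]$ gives both the ``moreover'' display and $R_{2n}(\alpha) \lesssim_P \ell_kc_k$, and collecting the remaining pieces yields $R_{1n}(\alpha) \lesssim_P \sqrt{\xi_k^2\log k/n}\,(1 + \sqrt k\,\ell_kc_k)$ once the $\eps^k$-part of the cross term is controlled.

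The main obstacle is exactly that $\eps^k$-part, $\alpha'(\widehat Q^{-1}-I_k)\mathbb{G}_n[p^k(X)\circ\eps^k]$, which must be shown to be $\lesssim_P \sqrt{\xi_k^2\log k/n}$ and not merely $\sqrt k$ times that. In \citet{BCCK-2015} the error is scalar and $\mathrm{Var}(p^k(X)\eps\mid X)$ is rank one, which collapses the relevant quadratic form to $\mathrm{const}\cdot w'\widehat Q w$; here $\eps^k$ carries a genuinely $k$-dimensional conditional covariance with no such structure. The device is to condition on the design $\{X_i\}_{i=1}^n$, so that $w := (\widehat Q^{-1}-I_k)\alpha$ is deterministic and the $\eps^k_i$ stay conditionally independent with $\E[\eps^k_i\mid X_i]=0$, and then invoke the \emph{weak positivity} of the basis: since $p_j \ge 0$, $|w'(p^k(X_i)\circ\eps^k_i)| \le |\bar\eps_{k,i}|\sum_j|w_j|p_j(X_i) = |\bar\eps_{k,i}|\,(|w|'p^k(X_i))$, so the conditional variance of $w'\mathbb{G}_n[p^k(X)\circ\eps^k]$ is at most $\big(\sup_x\E[\bar\eps_k^2\mid X=x]\big)\cdot|w|'\widehat Q\,|w| \lesssim_P \|w\|_2^2 \lesssim_P \xi_k^2\log k/n$, using $\||w|\|_2 = \|w\|_2$, $\tfrac1n\sum_i(|w|'p^k(X_i))^2 = |w|'\widehat Q\,|w|$, and $\|\widehat Q\| \lesssim_P 1$; a conditional Chebyshev step then upgrades this to an unconditional $\lesssim_P$ bound, the leading factor $\sup_x\E[\bar\eps_k^2\mid X=x]$ being $\lesssim 1$ under the uniform moment control in \Cref{assm:second-stage-assumptions}. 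This is precisely where the uniform-in-$j$ control on $\bar\eps_k$ is consumed, and it is why the conditions here are slightly stronger than — and, when the limiting outcome regressions are uniformly bounded so that $\bar\eps_k$ has a bounded conditional second moment, identical to — those of \citet{BCCK-2015}.
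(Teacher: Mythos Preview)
Your decomposition and overall strategy coincide with the paper's: both write
\[
\sqrt n\,\alpha'(\widehat\beta^k-\beta^k)=\alpha'\mathbb{G}_n[p^k\circ(\eps^k+r_k)]+\alpha'(\widehat Q^{-1}-I_k)\mathbb{G}_n[p^k\circ(\eps^k+r_k)]+\sqrt n\,\alpha'\widehat Q^{-1}\vB^k,
\]
kill the last piece via \Cref{cond:no-effect}, invoke \Cref{lemma:matrix-lln} for $\|\widehat Q^{-1}-I_k\|$, and recycle \citet{BCCK-2015} for the $r_k$ contributions. You also correctly isolate the only nonstandard step (the $\eps^k$ part of the design-error cross term) and the right tool (weak positivity of the basis).

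There is one genuine gap in how you close that step. Your almost-sure bound $|w'(p^k\circ\eps^k)|\le\bar\eps_k\,|w|'p^k$ forces the conditional variance to carry the factor $\sup_x\E[\bar\eps_k^2\mid X=x]$, and you then assert this is $\lesssim 1$ ``under the uniform moment control in \Cref{assm:second-stage-assumptions}.'' But \Cref{assm:second-stage-assumptions}(ii) only bounds $\Var(\eps_j\mid X)\le\bar\sigma^2$ for each fixed $j$; it gives no control on the conditional second moment of the maximum $\bar\eps_k=\max_j|\eps_j|$, which can grow with $k$. (Part (iv) is a Lindeberg-type tail condition and does not bound $\E[\bar\eps_k^2\mid X]$ either; the $\bar\eps_k$ moments enter later, in the CLT for \Cref{thm:pointwise-normality}, not here.)

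The fix keeps your positivity idea but avoids $\bar\eps_k$: compute the conditional variance directly. With $v=w\circ p^k(X_i)$,
\[
\Var\big(v'\eps^k_i\mid X_i\big)=\sum_{j,l}v_jv_l\,\mathrm{Cov}(\eps_j,\eps_l\mid X_i)\le\bar\sigma^2\Big(\sum_j|v_j|\Big)^2=\bar\sigma^2\big(|w|'p^k(X_i)\big)^2,
\]
using only Cauchy--Schwarz on each covariance (so $|\mathrm{Cov}(\eps_j,\eps_l\mid X)|\le\bar\sigma^2$) and positivity of $p^k$ to get $\|w\circ p^k\|_1=|w|'p^k$. Summing over $i$ gives the conditional variance bound $\bar\sigma^2\,|w|'\widehat Q\,|w|\le\bar\sigma^2\|\widehat Q\|\,\|w\|^2\lesssim_P\|\widehat Q^{-1}-I_k\|^2\lesssim_P\xi_k^2\log k/n$, exactly the paper's Step~1. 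With this one-line substitution your argument is complete and matches the paper.
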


\begin{proof}
    Decompose as before,
\begin{align*}
    \sqrt{n}\alpha'(\widehat\beta^k - \beta^k) &= \sqrt{n}\alpha'\widehat Q^{-1}\E_n[p^k(x)\circ(\widehat Y - \bar Y)]  \\
    &\;\;\;+ \alpha'\mathbb{G}_n[p^k(x)\circ(\eps^k + r_k)] \\
    &\;\;\;\;\;\;+ \alpha'[\widehat{Q}^{-1} - I]\mathbb{G}_n[p^k(x)\circ(\eps^k + r_k)]
.\end{align*} 
The first term is \(o_p(1)\) under \Cref{cond:no-effect}, we can just include this term in \(R_{1n}(\alpha)\). Now bound \(R_{1n}(\alpha)\) and \(R_{2n}(\alpha)\). 

\textbf{Step 1.} Conditional \(X = [x_1,\dots,x_n]\), the term
\[
    \alpha'[\widehat{Q}^{-1} - I]\mathbb{G}_n[p^k(x)\circ\eps^k]
.\] 
has mean zero and variance bounded by \(\bar\sigma^2\alpha'[\widehat{Q}^{-1}-I]\widehat{Q}^{-1}[\widehat{Q}^{-1}-I]\alpha\). Next, by \Cref{lemma:matrix-lln}, with probability approaching one, all eigenvalues of \(\widehat{Q}^{-1}\) are bounded from above and away zero. So,
\[
    \bar\sigma^2\alpha'[\widehat{Q}^{-1}-I_k]\widehat{Q}^{-1}[\widehat{Q}^{-1}-I_k]\alpha \lesssim \bar\sigma^2 \|\widehat{Q}\|\|\widehat{Q}^{-1}\|^2\|\widehat{Q}^{-1}-I_k\|^2 \lesssim_p \frac{\xi_k^2\log k}{n} 
.\] 
so by Chebyshev's inequality,
\[
    \alpha'[\widehat{Q}^{-1}-I]\mathbb{G}_n[p^k(x)\circ \eps^k] \lesssim_p \sqrt{\frac{\xi_k^2\log k}{n}}
.\] 
\textbf{Step 2.} From the proof of Lemma 4.1 in \citet{BCCK-2015}, we get that
\begin{align*}
    \alpha'(\widehat{Q}^{-1} - I_k)\mathbb{G}_n[p^k(x)r_k] \lesssim_p \sqrt{\frac{\xi_k^2\log k}{n}}\ell_k c_k \sqrt{k}
\end{align*}
This completes the bound on \(R_{1n}(\alpha)\) and gives \eqref{eq:pointwise-linearization-one}-\eqref{eq:pointwise-linearization-first-error}. Next, also from the proof of Lemma 4.1 from \citet{BCCK-2015}, 
\[
    R_{2n}(\alpha) = \alpha'\mathbb{G}_n[p^k(x)r_k] \lesssim_p \ell_kc_k
,\] 
which gives \eqref{eq:pointwise-linearization-two}-\eqref{eq:pointwise-linearization-second-error}.

\end{proof}

The following lemma shows that, after adding \Cref{assm:uniform-limit-theory} the linearization of our coefficient estimator \(\widehat\beta^k\) established in \Cref{lemma:pointwise-linearization} holds uniformly over all points \(x\in\calX\). That is to say the error from linearization is bounded in probability uniformly over all \(x\in\calX\). It will form an important building block in uniform consistency and strong approximation results presented in Theorems~\ref{thm:uniform-convergence} and \ref{thm:strong-approximation}.

\begin{lemma}[Uniform Linearization]
    \label{lemma:uniform-linearization}
    Suppose that \Cref{cond:no-effect} and \Cref{assm:second-stage-assumptions}-\ref{assm:uniform-limit-theory} hold. Then if either the propensity score model our outcome regression model is correctly specified:  
    \begin{equation}
        \label{eq:uniform-linearization-1}
        \sqrt{n}\alpha(x)'(\widehat\beta^k-\beta^k) = \alpha(x)'\mathbb{G}_n[p^k(x)\circ(\eps^k + r_k)] + R_{1n}(\alpha(x))
    \end{equation}
    where \(R_{1n}(\alpha(x))\) describes the design error and satisfies
     \begin{equation}
        \label{eq:uniform-linearization-r1n-bound}
        R_{1n}(\alpha(x)) \lesssim_p \sqrt{\frac{\xi_k^2\log k}{n}}(n^{1/m}\sqrt{\log k} + \sqrt{k}\ell_kc_k) := \bar R_{1n}
    \end{equation}
    uniformly over \(x\in\calX\). Moreover, 
     \begin{equation}
        \label{eq:uniform-linearization-2}
        \sqrt{n}\alpha(x)'(\widehat\beta^k -\beta^k) = \alpha(x)'\mathbb{G}_n[p^k(x)\circ\eps^k] + R_{1n}(\alpha(x)) + R_{2n}(\alpha(x))
    \end{equation}
    where \(R_{2n}(\alpha(x))\) describes the sampling error and satisfies, uniformly over \(x\in\calX\):
    \begin{equation}
        \label{eq:uniform-linearization-r2n-bound}
        R_{2n}(\alpha(x)) \lesssim_P \sqrt{\log k}\cdot\ell_kc_k := \bar R_{2n}
    \end{equation}
\end{lemma}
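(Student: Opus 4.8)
The plan is to mimic the proof of \Cref{lemma:pointwise-linearization} and the corresponding uniform linearization argument in \citet{BCCK-2015}, tracking the dependence on $x$ and replacing the Chebyshev bounds of the pointwise proof by maximal inequalities over $\calX$. Working under the normalization $Q = I_k$, and using that $\E[\eps_j\mid X] = 0$ and $\E[p^k(X)r_k(X)] = 0$ whenever one of the first-stage models is correctly specified, the same algebra as in the pointwise case gives
\begin{align*}
    \sqrt{n}\,\alpha(x)'(\widehat\beta^k - \beta^k)
    &= \underbrace{\sqrt{n}\,\alpha(x)'\widehat Q^{-1}\E_n[p^k(X)\circ(\widehat Y - \bar Y)]}_{A(x)}
    + \underbrace{\alpha(x)'\mathbb{G}_n[p^k(X)\circ(\eps^k + r_k)]}_{B(x)} \\
    &\quad + \underbrace{\alpha(x)'[\widehat Q^{-1} - I_k]\,\mathbb{G}_n[p^k(X)\circ(\eps^k + r_k)]}_{C(x)} ,
\end{align*}
with $\widehat Y,\bar Y$ as in \eqref{eq:hatYbarY}. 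Since $B(x)$ is precisely the term on the right of \eqref{eq:uniform-linearization-1}, it suffices to show $\sup_{x\in\calX}(|A(x)| + |C(x)|) \lesssim_P \bar R_{1n}$, and then to split $B(x)$ to extract $R_{2n}$.

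The term $A(x)$ is uniformly negligible: $\|\alpha(x)\|=1$, \Cref{cond:no-effect} gives $\|\E_n[p^k(X)\circ(\widehat Y-\bar Y)]\| \le \sqrt{k}\,\max_{j}|\E_n[p_j(X)\{Y(\widehat\pi_j,\widehat m_j)-Y(\bar\pi_j,\bar m_j)\}]| = o_p(n^{-1/2})$, and \Cref{lemma:matrix-lln} with $\xi_k^2\log k/n\to 0$ gives $\|\widehat Q^{-1}\|\lesssim_P 1$, so $\sup_x|A(x)| = o_p(1)$, which is absorbed into $R_{1n}$. Next split $C(x) = C_1(x)+C_2(x)$ with $C_1(x):=\alpha(x)'[\widehat Q^{-1}-I_k]\mathbb{G}_n[p^k(X)\circ\eps^k]$ and $C_2(x):=\alpha(x)'[\widehat Q^{-1}-I_k]\mathbb{G}_n[p^k(X)r_k(X)]$. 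For $C_2$, exactly as in Step~2 of the proof of \Cref{lemma:pointwise-linearization}, $\E\|\mathbb{G}_n[p^k(X)r_k(X)]\|^2 \le \ell_k^2 c_k^2\,\mathrm{tr}(Q) = \ell_k^2 c_k^2 k$ and $\|\widehat Q^{-1}-I_k\|\lesssim_P v_n:=(\xi_k^2\log k/n)^{1/2}$ by \Cref{lemma:matrix-lln}, so $\sup_{x}|C_2(x)|\le\|\widehat Q^{-1}-I_k\|\,\|\mathbb{G}_n[p^k(X)r_k(X)]\|\lesssim_P v_n\sqrt{k}\,\ell_k c_k$, the second term in $\bar R_{1n}$.

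The main obstacle is $C_1(x)$, which produces the $n^{1/m}\sqrt{\log k}$ factor; here the pointwise proof's Chebyshev step, which only yields the crude bound $v_n\sqrt{k}$, is genuinely insufficient since $\sqrt{k}$ can far exceed $n^{1/m}\sqrt{\log k}$ under the growth restrictions of \Cref{assm:uniform-limit-theory}. Following \citet{BCCK-2015}, I would condition on $X_1,\dots,X_n$ and view $C_1(x) = \langle\,\widehat Q^{-1}\alpha(x)-\alpha(x),\,\mathbb{G}_n[p^k(X)\circ\eps^k]\,\rangle$ as a process indexed by $x\mapsto \widehat Q^{-1}\alpha(x)-\alpha(x)$, a map of Euclidean norm $\le v_n$ on an event of probability tending to one and of Lipschitz constant $\lesssim_P \xi_k^L$. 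Hence the index set has covering numbers polynomial in $\xi_k^L/\delta$, and $\log\xi_k^L\lesssim\log k$ by \Cref{assm:uniform-limit-theory}(ii)(b), so metric entropy contributes a $\sqrt{\log k}$ factor in a chaining bound. Because the errors have only $m$ moments, one truncates $\bar\eps_k$ at order $n^{1/m}$: on the truncated part a maximal inequality for the bounded, Lipschitz-indexed process gives $\sup_x|C_1(x)|\lesssim_P v_n\,n^{1/m}\sqrt{\log k}$, using $\max_{i\le n}|\bar\eps_{k,i}|\lesssim_P n^{1/m}$ from \Cref{assm:uniform-limit-theory}(i), while the contribution of the tail part vanishes by the uniform-integrability conditions in \Cref{assm:second-stage-assumptions}(iv). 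Combining the bounds for $A$, $C_1$, $C_2$ establishes \eqref{eq:uniform-linearization-1}--\eqref{eq:uniform-linearization-r1n-bound}.

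Finally, for \eqref{eq:uniform-linearization-2}--\eqref{eq:uniform-linearization-r2n-bound}, decompose $B(x) = \alpha(x)'\mathbb{G}_n[p^k(X)\circ\eps^k] + R_{2n}(\alpha(x))$ with $R_{2n}(\alpha(x)):=\alpha(x)'\mathbb{G}_n[p^k(X)r_k(X)]$. For fixed $x$ this term is mean zero with variance $\le\ell_k^2c_k^2\,\alpha(x)'Q\alpha(x)\lesssim\ell_k^2c_k^2$; the index set $\{\alpha(x):x\in\calX\}$ again has $\log$-covering number $\lesssim\log\xi_k^L\lesssim\log k$, and a maximal inequality for empirical processes (its envelope term, of order $\ell_k c_k\xi_k\log k/\sqrt{n}$, being negligible under \Cref{assm:uniform-limit-theory}(ii)(a)) gives $\sup_{x\in\calX}|R_{2n}(\alpha(x))|\lesssim_P\sqrt{\log k}\,\ell_k c_k = \bar R_{2n}$. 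The delicate point throughout is passing from the conditional (given the covariates) control of the empirical processes appearing in $C_1$ and $R_{2n}$ to unconditional $\lesssim_P$ statements, which is done by combining the Matrix LLN bound on $\|\widehat Q^{-1}-I_k\|$ with the moment and entropy conditions of \Cref{assm:uniform-limit-theory}.
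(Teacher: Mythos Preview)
Your proposal follows essentially the same route as the paper: the same three-term decomposition $A+B+C$, the same handling of $A$ via \Cref{cond:no-effect} and the Matrix LLN, the same split of $C$ into $\eps^k$- and $r_k$-pieces with the $r_k$ piece and $R_{2n}$ taken directly from \citet{BCCK-2015}, and an entropy/chaining argument over $x\in\calX$ for $C_1(x)$ using the Lipschitz property of $x\mapsto\alpha(x)$ together with $\log\xi_k^L\lesssim\log k$.

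The one technical divergence worth flagging is in how $C_1(x)$ is controlled. The paper does \emph{not} truncate $\bar\eps_k$: it symmetrizes with Rademacher multipliers conditional on all the data (including the $\eps^k_i$), applies Dudley's inequality to the resulting Rademacher process over the index set $T=\{(\alpha(x)'(\widehat Q^{-1}-I)p^k(X_i)\circ\eps^k_i)_{i\le n}:x\in\calX\}$, and obtains a bound that scales with the random quantity $\max_{1\le i\le n}|\bar\eps_{k,i}|$; this is then controlled in conditional expectation via \Cref{assm:uniform-limit-theory}(i), producing the $n^{1/m}$ factor directly. Your truncation step is therefore unnecessary here, and your appeal to \Cref{assm:second-stage-assumptions}(iv) for the tail is misplaced: that condition concerns truncation at level $\sqrt{n}/\xi_k$, not $n^{1/m}$, and is used only for the pointwise CLT. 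If you insisted on truncating, the tail would instead be handled by the $m$th-moment bound in \Cref{assm:uniform-limit-theory}(i) (as the paper in fact does in the proof of \Cref{thm:uniform-convergence}); but for the linearization lemma the symmetrization route is cleaner and avoids the split altogether.
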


\begin{proof}
    As in the proof of \Cref{lemma:pointwise-linearization}, we decompose
    \begin{equation}
        \label{eq:uniform-linearization-decomposition}
        \begin{split}
            \sqrt{n}\alpha(x)'(\widehat\beta^k - \beta^k) &= \sqrt{n}\alpha(x)'\widehat Q^{-1}\E_n[p^k(x)\circ(\widehat Y - \bar Y)]  \\
            &\;\;\;+ \alpha(x)'\mathbb{G}_n[p^k(x)\circ(\eps^k + r_k)] \\
            &\;\;\;\;\;\;+ \alpha(x)'[\widehat{Q}^{-1} - I]\mathbb{G}_n[p^k(x)\circ(\eps^k + r_k)]
    .\end{split} 
    \end{equation}
    Using \Cref{cond:no-effect}, the matrix LLN (\Cref{lemma:matrix-lln}), and bounded eigenvalues of the design matix, we have that: 
    \begin{align*}
        \sup_{x\in\calX} \sqrt{n}\alpha(x)'\widehat{Q}^{-1}\E_n[p^k(x)\circ(\widehat Y - \bar Y)] = o_p(1)
    .\end{align*} 
    Since this is \(o_p(1)\), we can simply include this term in \(R_{1n}(\alpha(x))\). Now derive bounds on \(R_{1n}(\alpha(x))\) and \(R_{2n}(\alpha(x))\).

     \textbf{Step 1:} Conditional on the data let 
     \[
        T := \left\{t = (t_1,\dots,t_n) \in\SR^n : t_i = \alpha(x)'(\widehat{Q}^{-1} - I)p^k(x)\circ\eps^k, x\in\calX \right\} 
     .\] 
     Define the norm \(\|\cdot\|_{n,2}\) on \( \SR^n\) by \(\|t\|_{n,2}^2 = n^{-1}\sum_{i=1}^n t_i^2\). For an \(\varepsilon > 0\) an  \(\varepsilon\)-net of the normed space  \((T,\|\cdot\|_{n,2})\) is a subset \(T_\varepsilon\) of  \(T\) such that for every  \(t\in T\) there is a point \(t_\varepsilon\in T_\varepsilon\) such that  \(\|t-t_\varepsilon\|_{n,2} < \varepsilon\).  The covering number \(N(T, \|\cdot\|_{n,2},\varepsilon)\) of \(T\) is the infimum of the cardinality of  \(\varepsilon\)-nets of  \(T\).

     Let  \(\eta_1,\dots,\eta_n\) be independent Rademacher random variables that are independent of the data. Let \(\eta = (\eta_1,\dots, \eta_n)\). Let \(\E_\eta[\cdot]\) denote the expectation with respect to the distribution of \(\eta\). By Dudley's inequality \citep{Dudley-1967},
      \[
         \E_\eta\left[\sup_{x\in\calX} \left|\alpha(x)'[\widehat{Q}^{-1}- I]\mathbb{G}_n[\eta_ip^k(x)\circ\eps^k]\right|\right] \lesssim \int_{0}^{\theta} \sqrt{\log N(T, \|\cdot\|_{n,2}, \varepsilon)}\,d\varepsilon
     .\] 
     where 
     \begin{align*}
         \theta &:= 2\sup_{t\in T} \|t\|_{n,2} \\
                &= 2\sup_{x\in\calX} \left(\E_n[(\alpha(x)'(\widehat{Q}^{-1}-I)p^k(x)\circ \eps^k)^2]\right)^{1/2} \\
                &\leq 2 \max_{1 \leq i \leq n}|\bar\eps_{k,i}|\|\widehat{Q}^{-1} - I\|\|\widehat Q\|^{1/2},
     \end{align*}
     by \eqref{eq:hadamard-rough-bound}. Now, for any \(x\in\calX\), 
     \begin{align*}
         \bigg(\E_n[&(\alpha(x)'(\widehat{Q}^{-1} - I)p^k(x)\circ\eps^k - \alpha(\tilde x)'(\widehat{Q}^{-1} - I)p^k(x)\circ\eps^k)^2]\bigg)^{1/2} \\  
                    &\leq \max_{1\leq i \leq n}|\bar\eps_{k,i}|\|\alpha(x) - \alpha(\tilde x)\|\|\widehat{Q}^{-1} - I\|\|\widehat{Q}\|^{1/2} \\
                    &\leq \xi_k^L \max_{1\leq i \leq n}|\bar\eps_{k,i}|\|\widehat{Q}^{-1}- I\|\|\widehat Q\|^{1/2}\|x - \tilde x\|
     \end{align*}
     So, for some \(C > 0\), 
     \[
         N(T, \|\cdot\|_{n,2}, \varepsilon) \leq \left(\frac{C\xi_k^L\max_{1\leq i\leq n}|\bar\eps_{k,i}|\|\widehat{Q}^{-1}-I\|\|\widehat Q\|^{1/2}}{\varepsilon}\right)^{d_x}
     .\] 
     This gives us that 
     \[
         \int_0^\theta \sqrt{\log(N(T,\|\cdot\|_{2,n},\varepsilon))}\,d\varepsilon \leq \max_{1\leq i\leq n}|\bar\eps_{k,i}|\|\widehat{Q}^{-1}-I\|\|\widehat Q\|^{1/2}\int_0^2\sqrt{d_x\log(C\xi_k^L/\varepsilon)}\,d\varepsilon
     .\] 
     By \Cref{assm:uniform-limit-theory} we have that \(\E[\max_{1\leq i\leq n}|\bar\eps_{k,i}|\mid X ]\lesssim_P n^{1/m}\) where \(X = (x_1,\dots,x_n)\). In addition \(\xi_k^{2m/(m-2)}\log k/n\lesssim 1\) for  \(m> 2\) gives that \(\xi_k^2/\log k/n\to 0\). So, \(\|\widehat{Q}^{-1} - I\|\lesssim_P (\xi_k^2\log k/n)^{1/2}\) and \(\|\widehat{Q}^{-1}\| \lesssim_P 1\). Combining this all with \(\log\xi_k^L\lesssim \log k\) implies 
     \begin{align*}
         \E\left[\sup_{x\in\calX}\big|\alpha(x)'[\widehat{Q}^{-1}-I]\mathbb{G}_n[p^k(x)\circ\eps^k]\big|\mid X\right]  
         &\leq 2\E\left[\E_\eta\sup_{x\in\calX}\big|\alpha(x)'[\widehat{Q}^{-1}-I]\mathbb{G}_n[\eta_ip^k(x)\circ\eps^k]\big|\mid X\right] \\
         &\lesssim_P n^{1/m}\sqrt{\frac{\xi_k^2\log^2k}{n}}
     \end{align*}
     where the first line is due to symmetrization inequality. This gives us
     \begin{equation}
         \label{eq:uniform-r1n-error-first-part}
         \sup_{x\in\calX} \big|\alpha(x)'[\widehat{Q}^{-1}-I]\mathbb{G}_n[p^k(x)\circ \eps^k]\big| \lesssim_p n^{1/m}\sqrt{\frac{\xi_k^2\log^2k}{n}}
     \end{equation}

     \textbf{Step 2:} Now simply report the results on approximation error from \citet{BCCK-2015} . Since the approximation error is the same for all signals \(Y(\bar\pi_k, \bar m_k)\), there is no Hadamard product to deal with. 
     \begin{align}
         \label{eq:uniform-r1n-error-second-part}
         \sup_{x\in\calX} \big|\alpha(x)'[\widehat{Q}^{-1}-I]\mathbb{G}_n[p^k(x)r_k]\big| 
         &\lesssim_P \sqrt{\frac{\xi_k^2\log k}{n}}\ell_kc_k\sqrt{k}\\
         \label{eq:uniform-r2n-error}
         \sup_{x\in\calX}\big|\alpha(x)'\mathbb{G}_n[p^k(x)r_k]\big|
         &\lesssim_P \ell_kc_k\sqrt{\log k}
     \end{align}
     Looking at \eqref{eq:uniform-linearization-decomposition} and combining \eqref{eq:uniform-r1n-error-first-part}-\eqref{eq:uniform-r1n-error-second-part} gives the bound on \(R_{1n}(\alpha(x))\) while \eqref{eq:uniform-r2n-error} gives the bound on \(R_{2n}(\alpha(x))\). 

\end{proof}

\Cref{thm:uniform-convergence} gives conditions under which our estimator converges in probability to the true conditional counterfactual outcome \(g_0(x)\). In particular, this convergence happens uniformly at the rates defined in \eqref{eq:uniform-convergence-beta-bound}-\eqref{eq:uniform-convergence-ghat-bound}. If these two terms go to zero, the entire estimator will converge uniformly to the true conditional expectation of interest.

\begin{theorem}[Uniform Rate of Convergence]
    \label{thm:uniform-convergence}
    Suppose that \Cref{cond:no-effect} and Assumptions~\ref{assm:second-stage-assumptions}-\ref{assm:uniform-limit-theory} hold. Then so long as either the propensity score model or outcome regression model is correctly specified:
    \begin{equation}
        \label{eq:uniform-convergence-first-bound}
        \sup_{x\in\calX}\big|\alpha(x)'\mathbb{G}_n[p^k(x)\circ\eps^k]\big| \lesssim_P \sqrt{\log k}
    \end{equation}
    Moreover, for 
    \begin{align*}
        \bar R_{1n} &:= \sqrt{\frac{\xi_k^2\log k}{n}}(n^{1/m}\sqrt{\log k} + \sqrt{k}\ell_k c_k) \\
        \bar R_{2n} &:= \sqrt{\log k}\cdot\ell_kc_k
    \end{align*}
    we have that
    \begin{equation}
        \label{eq:uniform-convergence-beta-bound}
        \sup_{x\in\calX} \big| p^k(x)'(\widehat\beta^k - \beta^k)\big| \lesssim_P \frac{\xi_k}{\sqrt{n}}\left(\sqrt{\log k} + \bar R_{1n} + \bar R _{2n}\right) 
    \end{equation}
    and
    \begin{equation}
        \label{eq:uniform-convergence-ghat-bound}
        \sup_{x\in\calX} \big|\widehat g(x) - g_0(x)\big| \lesssim_P \frac{\xi_k}{\sqrt{n}}\left(\sqrt{\log k} + \bar R_{1n} + \bar R_{2n}\right) + \ell_k c_k 
    \end{equation}
\end{theorem}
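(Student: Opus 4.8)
The plan is to establish the three displays in order: \eqref{eq:uniform-convergence-first-bound} is a self-contained maximal inequality and is where the work lies, while \eqref{eq:uniform-convergence-beta-bound} and \eqref{eq:uniform-convergence-ghat-bound} then drop out by feeding it into the uniform linearization lemma together with routine bounds. Throughout I write $\alpha(x) = p^k(x)/\|p^k(x)\|$ and use the elementary Hadamard inequality \eqref{eq:hadamard-rough-bound} to replace the vector error $\eps^k$ by its scalar envelope $\bar\eps_k = \sup_{1\le j\le k}|\eps_j|$; the correct-specification hypothesis enters only through the applicability of \Cref{lemma:uniform-linearization} (which also absorbs the first-stage bias term into $R_{1n}$ via \Cref{cond:no-effect}).

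For \eqref{eq:uniform-convergence-first-bound}, fix $x$ and observe that, conditionally on $X=(X_1,\dots,X_n)$, $\alpha(x)'\mathbb{G}_n[p^k(x)\circ\eps^k]$ is a normalized sum of i.i.d.\ mean-zero terms $\alpha(x)'(p^k(X_i)\circ\eps^k_i)$, each bounded in absolute value by $(|\alpha(x)|'p^k(X_i))\,\bar\eps_{k,i}$ via \eqref{eq:hadamard-rough-bound}, so that its variance is at most $\bar\sigma_k^2\,|\alpha(x)|'Q|\alpha(x)| \lesssim \bar\sigma_k^2$ by \Cref{assm:second-stage-assumptions}(i) and the definition of $\bar\sigma_k^2$ in \Cref{assm:uniform-limit-theory}. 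Since $x\mapsto\alpha(x)$ is Lipschitz with constant $\xi_k^L$ and $\calX$ is compact of fixed dimension, the supremum over $x$ is handled by discretizing $\calX$ to a net of cardinality $\mathrm{poly}(n\xi_k^L)$, taking a union bound over the net using the moment bound $\sup_x\E[\max_i|\bar\eps_{k,i}|^m\mid X=x]\lesssim_P n^{1/m}$ from \Cref{assm:uniform-limit-theory}(i), and controlling the discretization remainder by the Lipschitz constant, which is negligible for a sufficiently fine net. This is the same symmetrization/chaining argument that yields the leading-term bound in Lemma~4.1 of \citet{BCCK-2015}, now carried out with $\bar\eps_k$; parts (ii)(b)--(c) of \Cref{assm:uniform-limit-theory} are precisely what absorb $\log\xi_k^L$, $\log\xi_k$, and $\bar\sigma_k^2\vee\bar\sigma_k^m$ into $\log k$, delivering $\sup_{x\in\calX}|\alpha(x)'\mathbb{G}_n[p^k(x)\circ\eps^k]|\lesssim_P\sqrt{\log k}$.

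For \eqref{eq:uniform-convergence-beta-bound}, apply \eqref{eq:uniform-linearization-2} of \Cref{lemma:uniform-linearization}: uniformly over $x\in\calX$,
\[
    \sqrt{n}\,\alpha(x)'(\widehat\beta^k-\beta^k) = \alpha(x)'\mathbb{G}_n[p^k(x)\circ\eps^k] + R_{1n}(\alpha(x)) + R_{2n}(\alpha(x)),
\]
with $\sup_{x}|R_{1n}(\alpha(x))|\lesssim_P\bar R_{1n}$ and $\sup_{x}|R_{2n}(\alpha(x))|\lesssim_P\bar R_{2n}$ from \eqref{eq:uniform-linearization-r1n-bound}--\eqref{eq:uniform-linearization-r2n-bound}. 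Combining with \eqref{eq:uniform-convergence-first-bound} gives $\sup_{x\in\calX}\sqrt{n}\,|\alpha(x)'(\widehat\beta^k-\beta^k)|\lesssim_P\sqrt{\log k}+\bar R_{1n}+\bar R_{2n}$, and since $|p^k(x)'(\widehat\beta^k-\beta^k)| = \|p^k(x)\|\,|\alpha(x)'(\widehat\beta^k-\beta^k)|\le\xi_k\,|\alpha(x)'(\widehat\beta^k-\beta^k)|$ this is exactly \eqref{eq:uniform-convergence-beta-bound}. For \eqref{eq:uniform-convergence-ghat-bound}, decompose $\widehat g(x)-g_0(x) = p^k(x)'(\widehat\beta^k-\beta^k) - r_k(x)$, bound $\sup_{x\in\calX}|r_k(x)|\le\ell_kc_k$ by \Cref{assm:second-stage-assumptions}(iii), and apply the triangle inequality together with \eqref{eq:uniform-convergence-beta-bound}.

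The one genuinely delicate point is \eqref{eq:uniform-convergence-first-bound}: directly recycling Step~1 of the proof of \Cref{lemma:uniform-linearization} — Dudley's inequality with metric diameter bounded by $\max_i|\bar\eps_{k,i}|$ — would only give the lossy rate $n^{1/m}\sqrt{\log k}$, so one must instead run the sharper maximal inequality that tracks the variance/Orlicz size of the increments rather than the crude envelope of the errors, and one must check that the cross-basis dependence introduced by the Hadamard product does not degrade it. That it does not is exactly why \Cref{assm:uniform-limit-theory} is phrased in terms of $\bar\eps_k=\sup_{j\le k}|\eps_j|$ rather than a single error term.
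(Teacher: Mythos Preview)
Your overall architecture is right: \eqref{eq:uniform-convergence-beta-bound} and \eqref{eq:uniform-convergence-ghat-bound} do follow immediately from \eqref{eq:uniform-convergence-first-bound} via \Cref{lemma:uniform-linearization} and the decomposition $\widehat g - g_0 = p^{k\prime}(\widehat\beta^k-\beta^k) - r_k$, exactly as you wrote. The issue is \eqref{eq:uniform-convergence-first-bound}, and you have not actually closed the gap you yourself flag in your final paragraph.

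Your main-body argument --- discretize $\calX$ to a net of polynomial size and take a union bound using the moment control $\E[\max_i|\bar\eps_{k,i}|^m\mid X]\lesssim_P n^{1/m}$ --- is precisely the route that yields the lossy $n^{1/m}\sqrt{\log k}$ you warn about at the end: with only $m$ finite moments on $\bar\eps_k$ you do not have sub-Gaussian tails, so a union bound over a net cannot return $\sqrt{\log k}$ without the envelope $\max_i|\bar\eps_{k,i}|$ leaking into the leading constant. Saying ``run the sharper maximal inequality that tracks the variance/Orlicz size'' names the desideratum but not the device.

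What the paper does (and what Lemma~4.1 of \citet{BCCK-2015} also relies on) is a truncation argument. Fix a level $\tau$ and split $\eps^k = \eps_k^- + \eps_k^+$ with $\eps_k^- = \eps^k\bm{1}\{|\bar\eps_k|\le\tau\} - \E[\,\cdot\mid X]$ and $\eps_k^+$ the complementary piece. To the bounded part apply the Gin\'e--Koltchinskii VC-type maximal inequality (envelope $\lesssim \tau\xi_k$, variance $\lesssim \bar\sigma_k^2$), yielding
\[
\E\Big[\sup_{x\in\calX}\big|\alpha(x)'\mathbb{G}_n[p^k(x)\circ\eps_k^-]\big|\Big]\;\lesssim\;\sqrt{\bar\sigma_k^2\log\xi_k^L}\;+\;\frac{\tau\,\xi_k\log\xi_k^L}{\sqrt n}.
\]
To the tail part apply the cruder entropy bound of \citet{vdvw-1996} (Theorem~2.14.1) with envelope $\|\eps_k^+\|_\infty\xi_k$ and $\E[\|\eps_k^+\|_\infty^2]\le \tau^{-(m-2)}\bar\sigma_k^m$, yielding
\[
\E\Big[\sup_{x\in\calX}\big|\alpha(x)'\mathbb{G}_n[p^k(x)\circ\eps_k^+]\big|\Big]\;\lesssim\;\sqrt{\bar\sigma_k^m}\,\tau^{-(m-2)/2}\,\xi_k\sqrt{\log\xi_k^L}.
\]
Now choose $\tau=\xi_k^{2/(m-2)}$ to balance: the tail term collapses to $\sqrt{\bar\sigma_k^m\log\xi_k^L}$, the second term of the bounded part becomes $\xi_k^{m/(m-2)}\log\xi_k^L/\sqrt n = \sqrt{\log\xi_k^L}\cdot\sqrt{\xi_k^{2m/(m-2)}\log\xi_k^L/n}$, and \Cref{assm:uniform-limit-theory}(ii)(a)--(c) reduces every $\log\xi_k^L$ and every $\bar\sigma_k^2\vee\bar\sigma_k^m$ factor to $\log k$, delivering $\sqrt{\log k}$. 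This truncate--then--balance step is the missing ingredient; once you insert it, the rest of your write-up goes through unchanged.
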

\begin{proof}
    The goal will be to apply the following two theorems from \citet{GK-2006} and \citet{vdvw-1996}. 

\begin{tcolorbox}[title = Preliminaries for Proof of \Cref{thm:uniform-convergence}]
\begin{theorem*}[Gine and Koltchinskii, 2006]
    Let \(\xi_1,\dots,\xi_n\) be \(i.i.d\) random variables taking values in a measurable space  \((S, \mathscr{S})\) with a common distribution \(P\) defined on the underlying  \(n\)-fold product space. Let  \(\calF\) be a measurable class of functions mapping  \(S \to \SR\) with a measurable envelope \(F\). Let  \(\sigma^2\) be a constant such that  \(\sup_{f\in\calF}\Var(f) \leq \sigma^2 \leq \|F\|_{L^2(P)}^2\). Suppose there exist constats \(A > e^2\) and  \(V \geq 2\) such that \(\sup_Q N(\calF, L^2(Q), \varepsilon\|F\|_{L^2(Q)}) \leq (A/\varepsilon)^V\) for all \(0 <\varepsilon\leq 1\). Then
    \begin{equation}
        \label{eq:maximal-inequality-GK}
        \tag{GK}
        \begin{split}
        \E\bigg[\bigg\|\sum_{i=1}^n \{f(\xi_i) - \E[f(\xi_1)]\} \bigg\|_\calF\bigg] \leq C\left[\sqrt{n\sigma^2V\log\frac{A\|F\|_{L^2(P)}}{\sigma}} + V\|F\|_\infty\log\frac{A\|F\|_{L^2(P)}}{\sigma} \right]
        \end{split}
    .\end{equation} 
    where \(C\) is a universal constant.
\end{theorem*}
\tcblower
\begin{theorem*}[VdV\&W 2.14.1]
    Let \(\calF\) be a  \(P\)-measurable class of measurable functions with a measurable envelope function  \(F\). Then for any \(p\geq 1\), 
    \begin{equation}
        \label{eq:vdvw-2.14.1}
        \tag{VW}
        \left\|\|\mathbb{G}_n\|_\calF^*\right\|_{P,p} \lesssim \|J(\theta_n, \calF)\|F\|_n\|_{P,p} \lesssim J(1,\calF) \|F\|_{P, 2\vee p}
    \end{equation}
    where \(\theta_n = \left\|\|f\|_n\right\|_\calF^*/\|F\|_n\), where \(\|\cdot\|_n\) is the \(L_2(\P_n)\) seminorm and the inequalities are valid up to constants depending only on the  \(p\) in the statement. The term \(J(\cdot, \cdot)\) is given
    \[
        J(\delta, \calF) = \sup_Q\int_0^\delta \sqrt{1 + \log N(\calF, \|\cdot\|_{L^2(Q)},\varepsilon\|F\|_{L^2(Q)})}\,d\varepsilon
    .\] 
\end{theorem*}
\end{tcolorbox}
We would like to apply these theorems to bound \(\sup_{x\in\calX} |\alpha(x)'\mathbb{G}_n[p^k(x)\circ\eps^k]|\) and thus show \eqref{eq:uniform-convergence-first-bound}. The other two statements of \Cref{thm:uniform-convergence} follow from this. To this end, let's consider the class of functions 
\[
    \calG := \{(\eps^k, x) \mapsto \alpha(v)'(p^k(x)\circ\eps^k), v\in\calX\} 
.\] 
Let's note that \(|\alpha(v)'p^k(x)| \leq \xi_k\), \(\Var(\alpha(v)'p^k(x)) = 1\), and for any \(v, \tilde v\in \calX\)
 \[
    |\alpha(v)'(p^k(x)\circ\eps^k) - \alpha(\tilde v)'(p^k(x)\circ\eps^k)| \leq |\bar\eps_k|\xi_k^L\xi_k\|v-\tilde v\|
,\] 
where \(\bar\eps_k = \|\eps^k\|_\infty\). Then, taking \(G(\eps^k, x) \leq \bar\eps_k\xi_k\)  we have that 
\begin{equation}
    \label{eq:bracketing-bound-uniform-convergence}
    \sup_Q N(\calG, L^2(Q), \varepsilon\|G\|_{L^2(Q)}) \leq \left(\frac{C\xi_k^L}{\varepsilon}\right)^d
.\end{equation} 

Now, for a \(\tau \geq 0\) specified later define \(\eps_k^- = \eps^k\bm{1}\{|\bar\eps_k| \leq \tau\} - \E[\eps^k\bm{1}\{|\bar\eps_k| \leq \tau\}\mid X] \) and \(\eps_k^+ = \eps^k\bm{1}\{|\bar\eps_k| > \tau\} - \E[\eps^k\bm{1}\{|\bar\eps_k| >\tau\}\mid X ]\). Since \(\E[\eps^k \mid X ]=0\) we have that \(\eps^k = \eps_k^- + \eps_k^+\). Using this decompose:
 \[
    \frac{1}{\sqrt{n}}\sum_{i=1}^n \alpha(v)'(p^k(x)\circ\eps^k) = \sum_{i=1}^n \alpha(v)'(p^k(x)\circ\eps_k^-)/\sqrt{n} + \sum_{i=1}^n \alpha(v)'(p^k(x)\circ\eps_k^+)/\sqrt{n}
.\] 
We deal with each of these terms individually, in two steps.

\textbf{Step 1:} 
For the first term, we set up for an application of \eqref{eq:maximal-inequality-GK}. \Cref{eq:bracketing-bound-uniform-convergence} gives us the constants \(A = C\xi_k^L\) and  \(V = d_x \vee 2\). To get \(\sigma^2\) note that  for any \(v\in\calX\), 
\begin{align*}
    \Var(\alpha(v)'(p^k(x)\circ\eps_k^-)/\sqrt{n}) 
    &\leq \E[(\alpha(v)'(p^k(x)\circ \eps_k^-)/\sqrt{n})^2] \\ 
    &\leq \frac{1}{n}\E[(\alpha(v)'p^k(x))^2]\sup_{x\in\calX}\E[\|\eps_k^-\|_\infty^2\mid X = x]\\
    &\leq \frac{\bar\sigma_k^2\wedge \tau^2}{n} 
\end{align*}
Finally note that we can take the envelope \(G = \|\eps_k^-\|_\infty\xi_k/\sqrt{n}\) where \(\|G\|_{L^2(P)}\leq \frac{\bar\sigma_k\wedge \tau}{\sqrt{n}}\) and \(\|G\|_\infty \leq \tau\xi_k/\sqrt{n}\).

We can now apply \eqref{eq:maximal-inequality-GK} to get that
\begin{align*}
   \E[\sup_{x\in\calX}|\alpha(x)'\mathbb{G}_n[p^k(x)\circ\eps_k^-]|] 
   &\lesssim \sqrt{\bar\sigma_k^2\wedge\tau^2\log(\xi_k^L)} + \frac{\tau\xi_k \log(\xi_k^L)}{\sqrt{n}} 
.\end{align*}

\textbf{Step 2:}
For the second term, we set up for an application of \eqref{eq:vdvw-2.14.1} with the envelope function \(G = \|\eps_k^+\|_\infty\xi_k/\sqrt{n}\) and note that 
\[
    \E[\|\eps_k^+\|_\infty^2] \leq \E[\bar\eps_k^2\bm{1}\{|\bar\eps_k| > \tau\} ] \leq \tau^{-m+2}\E[|\bar\eps_k|^m] 
\]
We can now use \eqref{eq:vdvw-2.14.1} to bound
\begin{align*}
    \E\left\|\sup_{x\in \calX}|\alpha(x)'\mathbb{G}_n[p^k(x)\circ\eps_k^+]|\right\|
    &\lesssim \sqrt{\E[|\bar\eps_k|^m]}\tau^{-m/2+1}\xi_k\int_0^1\sqrt{\log(\xi_k^L/\varepsilon)}\,d\varepsilon \\
    &\lesssim \sqrt{\sigma_k^m}\tau^{-m/2+1}\xi_k\sqrt{\log(\xi_k^L)}
.\end{align*} 

\textbf{Step 3:}
Let \(\tau = \xi_k^{2/(m-2)}\) and apply Markov's inequality. The bounds from step one and two become
\begin{align*}
    \sup_{x\in\calX}|\alpha(x)'\mathbb{G}_n[p^k(x)\circ\eps_k^-]| 
   &\lesssim_P \sqrt{\bar\sigma_k^2\log(\xi_k^L)} + \frac{\xi_k^{2m/(m-2)} \log(\xi_k^L)}{\sqrt{n}} \\
    \sup_{x\in \calX}|\alpha(x)'\mathbb{G}_n[p^k(x)\circ\eps_k^+]|
    &\lesssim_P \sqrt{\bar\sigma_k^m\log(\xi_k^L)}
\end{align*}
Applying \Cref{assm:uniform-limit-theory} along with the inequality 
\[
    \frac{\xi_k^{m/(m-2)}\log k}{\sqrt{n}} = \sqrt{\log k}
    \sqrt{\frac{\xi_k^{2m/(m-2)}\log k}{n}}\lesssim \log k 
\]
completes the proof.

\end{proof}

\begin{theorem}[Validity of Gaussian Bootstrap]
    \label{thm:gaussian-bootstrap}
    Suppose that the assumptions of \Cref{thm:strong-approximation} hold with \(a_n = \log n\) and the assumptions of \Cref{thm:matrix-estimation} hold with  \(a_n = O(n^{-b})\) for some  \(b > 0\). In addition, suppose that there exists a sequence \(\xi_n'\) obeying  \(1 \lesssim \xi_n' \lesssim \|p^k(x)\|\) uniformly for all \(x\in\calX\) such that  \(\|p^k(x) - p^k(x')\|/\xi_n' \leq L_n \|x - x'\|\), where \(\log L_n\lesssim \log n\). Let  \(N_k^b\) be a bootstrap draw from \(N(0,I_k)\) and  \(P^\star\) be the distribution conditional on the observed  data \(\{Y_i, D_i, Z_i\}_{i=1}^n \). Then the following approximation holds uniformly in  \(\ell^\infty(\calX)\):
    \begin{equation}
        \label{eq:gaussian-bootstrap}
        \frac{p^k(x)'\widehat\Omega^{1/2}}{\widehat\Omega^{1/2}p^k(x)}N_k^b =^d \frac{p^k(x)'\Omega^{1/2}}{\|\Omega^{1/2}p^k(x)\|}  + o_{P^\star}(\log^{-1}N)
    \end{equation}
\end{theorem}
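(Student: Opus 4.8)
The plan is to condition on the data $\{Y_i,D_i,Z_i\}_{i=1}^n$ throughout and to treat both $x\mapsto \widehat a(x)'N_k^b$ and $x\mapsto a(x)'\mathcal N_k$ as centered Gaussian processes on $\calX$, where $a(x):=\Omega^{1/2}p^k(x)/\|\Omega^{1/2}p^k(x)\|$ and $\widehat a(x):=\widehat\Omega^{1/2}p^k(x)/\|\widehat\Omega^{1/2}p^k(x)\|$ are unit vectors in $\SR^k$ and $\mathcal N_k$ is the Gaussian vector produced by \Cref{thm:strong-approximation}. Since $N_k^b$ and $\mathcal N_k$ are both $N(0,I_k)$ and $N_k^b$ is drawn independently of the data, we may couple them by taking $N_k^b=\mathcal N_k$, so it suffices to show
\[
    \sup_{x\in\calX}\big|(\widehat a(x)-a(x))'N_k^b\big| = o_{P^\star}(\log^{-1}n)
\]
on a sequence of data-events $\mathcal E_n$ with $\Pr(\mathcal E_n)\to 1$. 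Combining this with \Cref{thm:strong-approximation} applied with $a_n=\log n$, which identifies $a(x)'\mathcal N_k$ as the limiting process for $\sqrt n\,p^k(x)'(\widehat\beta-\beta)/\|s(x)\|$ up to $o_p(\log^{-1}n)$ in $\ell^\infty(\calX)$, then yields \eqref{eq:gaussian-bootstrap}.

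First I would control the size of the perturbation $\widehat a(x)-a(x)$. By \Cref{assm:second-stage-assumptions}(i) and $\Omega\geq\underline\sigma^2 Q^{-1}$ in the positive-semidefinite order, the eigenvalues of $\Omega$ are bounded above and away from zero; on the event where $\|\widehat\Omega-\Omega\|\lesssim\delta_n$, with $\delta_n:=(v_n\vee\ell_kc_k)\sqrt{\xi_k^2\log k/n}$ the rate from \Cref{thm:matrix-estimation} (which under the stated hypotheses is $O(n^{-b})$), the same eigenvalue bounds hold for $\widehat\Omega$ with high probability. The operator-Lipschitz property of the matrix square root on the set of matrices sharing a common positive lower eigenvalue bound gives $\|\widehat\Omega^{1/2}-\Omega^{1/2}\|\lesssim\|\widehat\Omega-\Omega\|$, and since $\|\Omega^{1/2}p^k(x)\|\gtrsim\|p^k(x)\|$ uniformly in $x$, normalizing the two vectors yields $\sup_{x\in\calX}\|\widehat a(x)-a(x)\|\lesssim_P\delta_n$.

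Next I would run the entropy argument for the conditionally Gaussian process $Z(x):=(\widehat a(x)-a(x))'N_k^b$. Its conditional increment standard deviation is $\|(\widehat a(x)-a(x))-(\widehat a(x')-a(x'))\|$, which by the Lipschitz hypothesis $\|p^k(x)-p^k(x')\|/\xi_n'\leq L_n\|x-x'\|$, the bound $\|p^k(x)\|\gtrsim\xi_n'$, and the square-root perturbation bounds above is $\lesssim_P\delta_n L_n\|x-x'\|$, while its conditional diameter is $\lesssim_P\delta_n$. Because $\calX\subseteq\SR^{d_x}$ with $d_x$ fixed, the covering numbers of $\calX$ under this pseudometric are $(CL_n/\varepsilon)^{d_x}$, so Dudley's entropy inequality \citep{Dudley-1967} gives $\E^\star[\sup_{x\in\calX}|Z(x)|]\lesssim_P\delta_n\sqrt{d_x\log(CL_n)}\lesssim_P\delta_n\sqrt{\log n}$ using $\log L_n\lesssim\log n$, where $\E^\star$ is the conditional expectation given the data. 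Applying Markov's inequality conditionally (or Gaussian concentration) then gives $\sup_{x\in\calX}|Z(x)|=o_{P^\star}(\log^{-1}n)$ on $\mathcal E_n$, since $\delta_n\sqrt{\log n}=O(n^{-b}\sqrt{\log n})=o(\log^{-1}n)$. Assembling the coupling, the perturbation bound, and the Dudley bound, and invoking \Cref{thm:strong-approximation}, delivers \eqref{eq:gaussian-bootstrap} in $\ell^\infty(\calX)$; this mirrors the bootstrap arguments of \citet{SC-2020} and \citet{BCCK-2015}, with $\widehat\Omega$ now assembled from the $k$ separate aIPW residuals $\widehat\eps^k$ rather than a single signal.

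The main obstacle is the uniformity over $x\in\calX$: since $\calX$ is a continuum, a pointwise estimate $\|\widehat a(x)-a(x)\|=o_{P^\star}(\cdot)$ together with a union bound is unavailable, and one must genuinely combine the operator-norm consistency of $\widehat\Omega$ from \Cref{thm:matrix-estimation} with a Lipschitz-in-$x$ bound on $\widehat a(x)-a(x)$ to make Dudley's inequality applicable — this is where the Lipschitz hypothesis on $p^k(\cdot)$ is essential. A secondary, purely technical point is the bookkeeping between the unconditional ``high probability over the data'' statements (the eigenvalue bounds and $\|\widehat\Omega-\Omega\|\lesssim\delta_n$) and the conditional $o_{P^\star}$ conclusion: one fixes the events $\mathcal E_n$ on which all deterministic bounds hold and carries out the Gaussian-process argument conditionally on $\mathcal E_n$.
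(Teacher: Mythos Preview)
Your proposal is correct and is precisely the standard route for this result. The paper itself does not give an independent proof: its entire argument is ``See Theorem 3.4 in \citet{SC-2020}.'' Your sketch---coupling $N_k^b$ with $\mathcal N_k$, controlling $\sup_x\|\widehat a(x)-a(x)\|$ via the operator-norm bound $\|\widehat\Omega-\Omega\|\lesssim_P\delta_n$ from \Cref{thm:matrix-estimation} together with Lipschitzness of the square root, and then running Dudley's entropy integral using the Lipschitz hypothesis on $p^k(\cdot)$ to obtain $\sup_x|(\widehat a(x)-a(x))'N_k^b|\lesssim_P\delta_n\sqrt{\log n}=o(\log^{-1}n)$---is exactly the argument behind the cited theorem (and, in turn, behind the analogous step in \citet{BCCK-2015}). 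The only adaptation to the present setting is that $\widehat\Omega$ is built from the $k$ basis-specific aIPW residuals $\widehat\eps^k$, but once \Cref{thm:matrix-estimation} delivers $\|\widehat\Omega-\Omega\|=o_P(n^{-b})$ that distinction is irrelevant to the Gaussian-process comparison, as you note.
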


\begin{proof}
    See Theorem 3.4 in \citet{SC-2020}. 

\end{proof}

\section{Additional Details on Empirical Application}
\label{sec:empirical-details}

As mentioned in the setup, to avoid outlier contamination we drop the top 3\% and bottom 3\% of birthweights by maternal age. We also drop ages for which there are fewer than 10 smoker or non smoker observations. The result is a dataset with 4107 (of an initial 4602) observations on the outcome variable, birthweight. In addition to the 21 control variables \((Z)\) available in the dataset, we further generate an additional 11 interaction/higher order variables that we believe may be useful in controlling for confounding. \Cref{tab:data-descr} provides a final summary of the data after our cleaning process. The generated variables represent the bottom 11 variables in \Cref{tab:data-descr}.\footnote{This table is generated using the wonderful stargazer package in \texttt{R} \citep{stargazer}.}

\begin{table}[!htb] \centering \caption{Summary of Data used in Emprical Exercise} 
  \label{tab:data-descr} 
\begin{tabular}{@{\extracolsep{5pt}}lccccc} 
\\[-1.8ex]\hline 
\hline \\[-1.8ex] 
Statistic & \multicolumn{1}{c}{N} & \multicolumn{1}{c}{Mean} & \multicolumn{1}{c}{St. Dev.} & \multicolumn{1}{c}{Min} & \multicolumn{1}{c}{Max} \\ 
\hline \\[-1.8ex] 
bweight & 4,107 & 3,384.354 & 447.616 & 1,544 & 4,668 \\ 
mmarried & 4,107 & 0.708 & 0.455 & 0 & 1 \\ 
mhisp & 4,107 & 0.034 & 0.181 & 0 & 1 \\ 
fhisp & 4,107 & 0.038 & 0.192 & 0 & 1 \\ 
foreign & 4,107 & 0.054 & 0.226 & 0 & 1 \\ 
alcohol & 4,107 & 0.031 & 0.174 & 0 & 1 \\ 
deadkids & 4,107 & 0.252 & 0.434 & 0 & 1 \\ 
mage & 4,107 & 26.125 & 5.025 & 16 & 36 \\ 
medu & 4,107 & 12.703 & 2.470 & 0 & 17 \\ 
fage & 4,107 & 27.000 & 9.022 & 0 & 60 \\ 
fedu & 4,107 & 12.324 & 3.624 & 0 & 17 \\ 
nprenatal & 4,107 & 10.822 & 3.613 & 0 & 40 \\ 
monthslb & 4,107 & 21.938 & 30.255 & 0 & 207 \\ 
order & 4,107 & 1.858 & 1.056 & 0 & 12 \\ 
msmoke & 4,107 & 0.390 & 0.890 & 0 & 3 \\ 
mbsmoke & 4,107 & 0.183 & 0.386 & 0 & 1 \\ 
mrace & 4,107 & 0.847 & 0.360 & 0 & 1 \\ 
frace & 4,107 & 0.822 & 0.382 & 0 & 1 \\ 
prenatal & 4,107 & 1.204 & 0.507 & 0 & 3 \\ 
birthmonth & 4,107 & 6.556 & 3.352 & 1 & 12 \\ 
lbweight & 4,107 & 0.025 & 0.155 & 0 & 1 \\ 
fbaby & 4,107 & 0.443 & 0.497 & 0 & 1 \\ 
prenatal1 & 4,107 & 0.803 & 0.398 & 0 & 1 \\ 
mbsmoke \textasteriskcentered  alcohol & 4,107 & 0.017 & 0.128 & 0 & 1 \\ 
medu \textasteriskcentered  fedu & 4,107 & 161.518 & 64.291 & 0 & 289 \\ 
mage \textasteriskcentered  fage & 4,107 & 730.422 & 328.522 & 0 & 2,088 \\ 
msmoke$\hat{\mkern6mu}$2 & 4,107 & 0.944 & 2.422 & 0 & 9 \\ 
msmoke \textasteriskcentered  alcohol & 4,107 & 0.037 & 0.302 & 0 & 3 \\ 
mage$\hat{\mkern6mu}$2 & 4,107 & 707.741 & 262.383 & 256 & 1,296 \\ 
mage \textasteriskcentered  mmarried & 4,107 & 19.570 & 13.090 & 0 & 36 \\ 
mage \textasteriskcentered  medu & 4,107 & 336.911 & 108.588 & 0 & 612 \\ 
mage \textasteriskcentered  fedu & 4,107 & 328.405 & 128.438 & 0 & 612 \\ 
monthslb$\hat{\mkern6mu}$2 & 4,107 & 1,396.431 & 3,509.883 & 0 & 42,849 \\ 
msmoke \textasteriskcentered  monthslb$\hat{\mkern6mu}$2 & 4,107 & 750.703 & 4,407.398 & 0 & 112,908 \\ 
\hline \\[-1.8ex] 
\end{tabular} 
\end{table}

In conducting analysis, we found it quite helpful to the stability of the final model assisted estimator to do some light trimming of the estimated propensity score and outcome regression models. In particular we trim the estimated propensity score(s) to be between \(0.01\) and  \(0.99\) and trim the estimated mean regression models so that they take a value no more than roughly 12.5\% higher or lower than the maximum or minimum value of \(Y\) observed in the data. 

Because the control variables are all of  different magnitudes, it is common to do some normalization before estimating the \(\ell_1\)-regularized propensity score and outcome regression models so that all variables are ``punished'' equally by the penalty. We normalize our data by scaling each variable to take on values between zero and one. 

In addition to the results presented in \Cref{sec:empirical} we present some additional specifications below. \Cref{fig:K4_D0} presents results from using a local constant regression with 3 knots in the first stage while \Cref{fig:K3_D2}. Both show that the results of analysis generally hold up in a variety of specifications, namely that the effect of maternal smoking on birthweight is negative and increasing in magnitude with age. The model-assisted estimated typically produce values that are more in line with previous work, though the shape of CATE estimates that use standard loss functions in the first stage are more stable to second stage basis.

\begin{figure}[!htp]
    \centering
    \includegraphics[width=\linewidth]{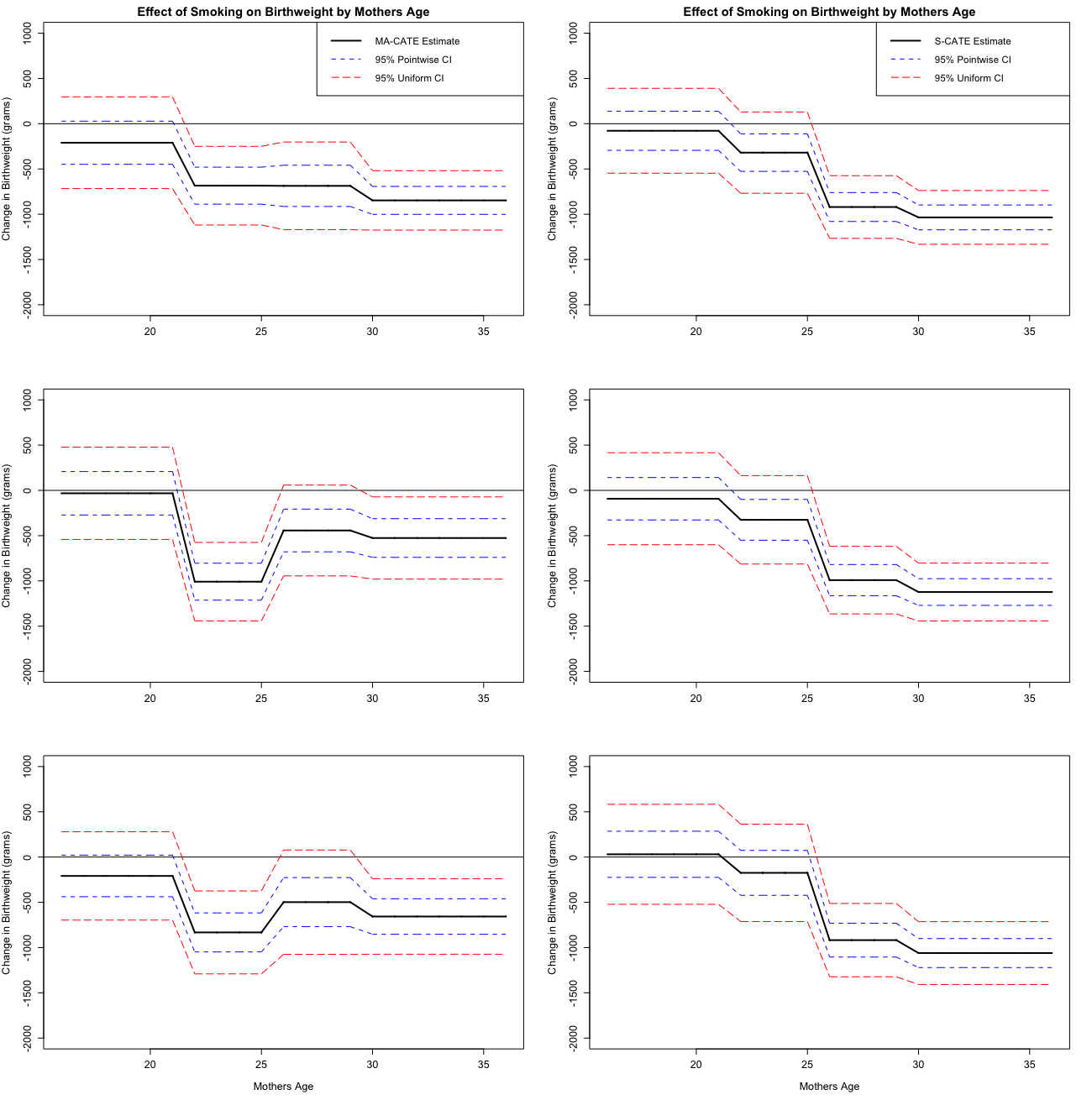}
    \caption{CATE of maternal smoking estimated using model assisted estimating equations (left) and standard MLE/OLS estimating equations (right). Top row uses the 99\textsuperscript{th} quantile of the bootstrap distribution to select the penalty parameters, second row uses 95\textsuperscript{th} quantile, and final row uses the 90\textsuperscript{th} quantile. Second stage is computed using a local constant with 3 knots. 95\% pointwise confidence intervals are displayed in blue short dashes and 95\% uniform confidence bands are displayed in long red dashes.}%
    \label{fig:K4_D0}
\end{figure}

\begin{figure}[!htp]
    \centering
    \includegraphics[width=\linewidth]{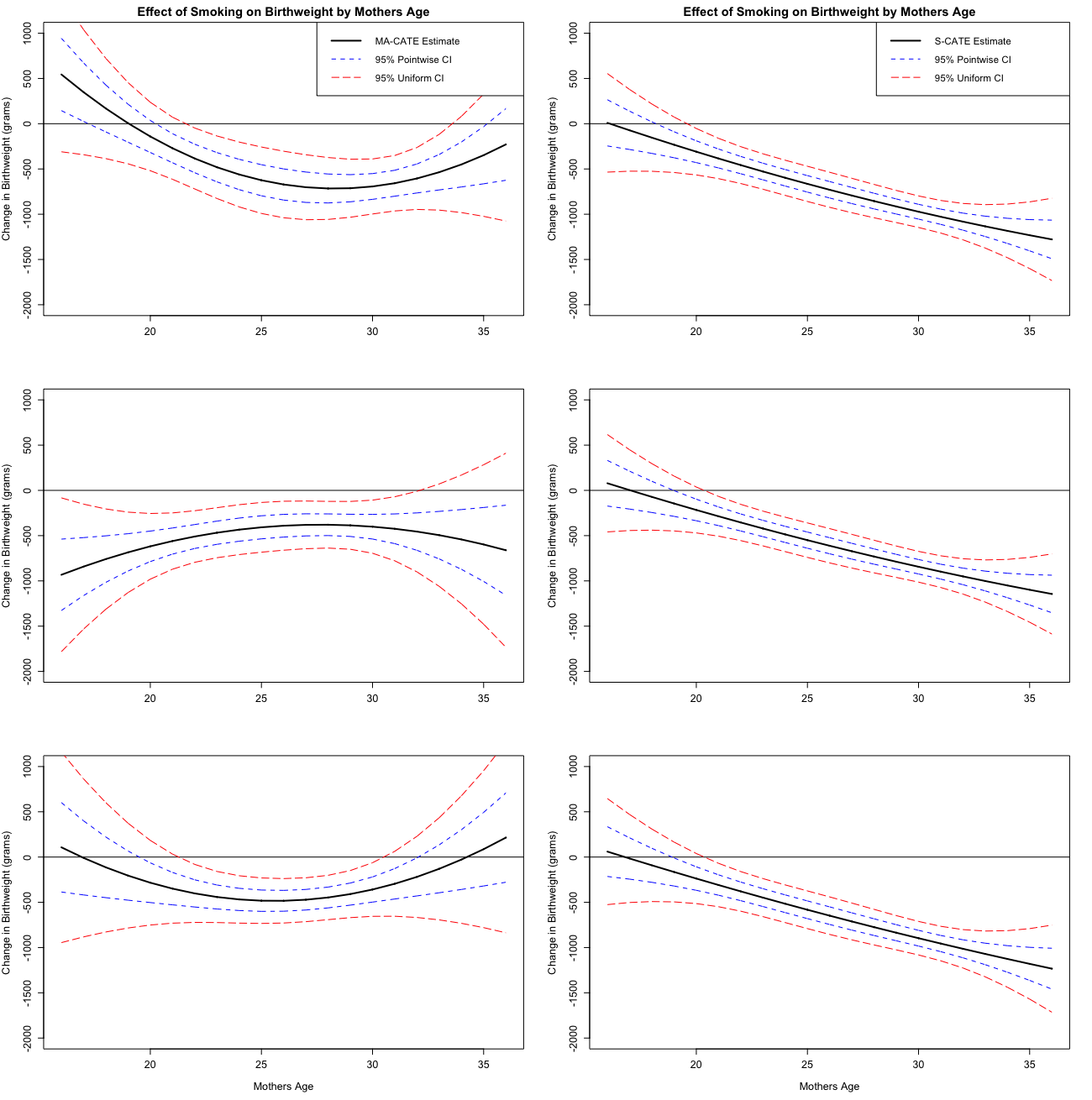}
    \caption{CATE of maternal smoking estimated using model assisted estimating equations (left) and standard MLE/OLS estimating equations (right). Top row uses the 99\textsuperscript{th} quantile of the bootstrap distribution to select the penalty parameters, second row uses 95\textsuperscript{th} quantile, and final row uses the 90\textsuperscript{th} quantile. Second stage is computed using second degree b-splines with 2 knots. 95\% pointwise confidence intervals are displayed in blue short dashes and 95\% uniform confidence bands are displayed in long red dashes.}%
    \label{fig:K3_D2}
\end{figure}

\newpage
\section{Consistency between First Stage and Second Stage Assumptions}
\label{sec:first-stage-second-stage}

In this section, we examine the consistency between the first stage and second stage assumptions on the basis terms \(p^k(x)\). In particular, we are interested in finding a positive basis that also satisfies the bounded eigenvalue condition on the design matrix in \Cref{assm:second-stage-assumptions}. We also discuss how to construct the model assisted estimator with weights in \eqref{eq:gamma-j-estimating-equation}-\eqref{eq:alpha-j-estimating-equation} that are not directly the second stage basis terms in case the researcher is worried about their choice of basis terms satisfying the first stage and second stage stage assumptions simultaneously.

Suppose that \(\calX = [0,1]\). First, note that the first stage non-negativity and second stage design assumptions can be trivially satisfied by using a locally constant basis; that is by taking 
\begin{equation}
    \label{eq:local-constant}
    p_j(x) = \bm{1}_{[\ell_{j-1}, \ell_j)}(x)
\end{equation} 
for some \(0 = \ell_0 < \ell_1 < \dots < \ell_t = 1\). While this basis may have poor approximation qualities, the general principle can be extended to any basis whose elements have disjoint (or limitedly overlapping) supports. Higher order piecewise polynomial approximations can often be implemented using \emph{B-splines} which are orthonormalized regression splines. See \citet{DeBoor-2001} for an in-depth discussion or \citet{Newey-1997} for an application of B-splines to nonparametric series regression.

These higher order splines can be defined recursively. For a given (weakly increasing) knot sequence \(\ell := (\ell_j)_{j=1}^{t}\) we define the ``first-order'' B-splines denoted \(B_{1,1}(x),\dots,B_{t,1}(x)\) using \eqref{eq:local-constant}, that is \(B_{j,1}(x) = p_j(x)\).  On top of these functions, we can define higher order B-splines via the recursive relation (\citet{DeBoor-2001}, p.90)
\begin{equation}
    \label{eq:recursive-relationship}
    B_{j,d+1} := \omega_{j,d}(x)B_{j,d}(x) + [1 - \omega_{j+1,d}(x)]B_{j+1,d}(x) 
.\end{equation}
where 
\[
    \omega_{j,d}(x) := \begin{cases}
    \frac{x - \ell_j}{\ell_{j+d} - \ell_j} 
         & \text{if }\ell_{j+d}\neq \ell_j \\ 0 & \text{otherwise}
    \end{cases}
.\] 
If \(X\) is continuously distributed on an open set containing the knots  \((\ell_j)\), \citet{DeBoor-2001} shows that the  B-spline basis is almost surely positive. Moreover, B-splines is locally supported in the sense each \(B_{j,d}\) is positive on \((\ell_j, \ell_{j + d})\), zero off this support and for each \(d\):
\[
    \sum_{j=1}^{t} B_{j,d} = 1\;\;\;\;\text{ on }[0,1]
.\] 
where the summation is taken pointwise (see \citet{DeBoor-2001}, p.36). 
From the final property we can see the B-spline basis using \(k = td\) basis terms, \(p^k(x) = (B_{j,l}(x))_{\substack{j=1,..,t \\l = 1,\dots,d}}\) are totally bounded so that.

B-splines used directly in this manner, however, do not lead to a design matrix \(Q = \E[p^k(x)p^k(x)']\) with eigenvalues which are bounded away from zero. To achieve this, the basis fucntions must be divided by their \(\ell_2\) norm. In practice, this leads to b-spline terms who are grown at rate \(\xi_{k,\infty}\lesssim \sqrt{k}\). The pilot penalty constants can be chosen from a set whose bounds are on the order of \(\sqrt{k}\) and the sparsity bounds of \Cref{assm:logistic-model-convergence} reduce to
\[
    \frac{s_k\,k^{3/2}\ln^5(d_zn)}{n}\to 0 \andbox \frac{k^2\ln^7(d_zkn)}{n}\to 0
\]
while the bounds in \eqref{eq:first-stage-sparsity-bound} and \eqref{eq:variance-sparsity-bound} reduce respectively to 
\[
    \frac{s_k\,k^{3/2}\ln(d_z)}{\sqrt{n}}\to 0\andbox \frac{s_k^2\,k^{7/2}\ln(d_z)}{n^{(m-1)/m}} \to 0
.\]

\subsection{Alternate Weighting}

So long as the second stage basis \(p^k(x)\) contains a constant term, it is possible to weight the estimating equations \eqref{eq:gamma-j-estimating-equation}-\eqref{eq:alpha-j-estimating-equation} by some \(p^k(x) = p^k(x) + c_k\) with minimal modification to the model assisted estimator. The constants  \(c_k \in \SR\) can be allowed to grow with \(k\) so long as we replace \(\xi_{k,\infty}\) with the maximum of \(\tilde\xi_{k,\infty} := \sup_{x\in\calX}\|\tilde p^k(x)\|_\infty\) and \(\xi_{k,\infty}\) in the sparsity bounds of \Cref{sec:first-stage}. Without loss of generality we will assume that the first basis term is a constant so that \(p_1(x) \equiv 1\)

After estimating the models \((\hat\pi_1,\hat m_1),\dots,(\hat \pi_k, \hat m_k)\) using \((\tilde p_1(x),\dots,\tilde p_k(x))\) in place of \((p_1(x),\dots,p_k(x))\) in \eqref{eq:gamma-j-estimating-equation}-\eqref{eq:alpha-j-estimating-equation} we would construct the second stage estimate \(\hat\beta^k\) 
\[
    \tilde\beta^k = \widehat{Q}^{-1}
    \E_n\begin{bmatrix} \tilde p_1(x)Y(\hat\pi_1,\hat m_1) - c_kY(\hat\pi_1,\hat m_1)\\
    \tilde p_2(x)Y(\hat\pi_2,\hat m_2) - c_kY(\hat\pi_1,\hat m_1) \\
    \vdots \\ \tilde p_k(x)Y(\hat\pi_k, \hat m_k) - c_kY(\hat \pi_1, \hat m_1)\end{bmatrix} 
.\] 
Via the same analysis of \Cref{sec:theory-overview,sec:first-stage} we will still be able to show that the bias passed on from first stage estimation to the second stage parameter \(\tilde\beta^k\) remains negligible even under misspecification of either first stage model. This is because \Cref{thm:first-stage-convergence} will establish that
\begin{equation*}
    \begin{split}
        \max_{1\leq j \leq k}|\E_n[\tilde p_j(x)Y(\hat\pi_j,\hat m_j)] - \E_n[\tilde p_j(x)Y(\bar\pi_j,\bar m_j)]| 
        &= o_p(n^{-1/2}k^{-1/2})\andbox\\
        \max_{1\leq j \leq k}\tilde\xi_{k,\infty}\max_{1\leq j \leq k} \E_n[\tilde p_j(x)^2(Y(\widehat\pi_j,\widehat m_j) - Y(\bar\pi_j,\bar m_j))^2] 
        &= o_p(k^{-2}n^{-1/m})
    \end{split}.
\end{equation*}
Using the first statement, we can immediately establish via the triangle inequality that
\begin{align*}
    \max_{1 \leq j\leq k} |\E_n[\tilde p_j(x)Y(\hat\pi_j,\hat m_j) - c_kY(\hat\pi_1,\hat m_1)]-\E_n[\tilde p_j(x)Y(\bar\pi_j,\bar m_j) - c_kY(\bar \pi_1,\bar m_1)]| = o_p(n^{-1/2}k^{-1/2})
\end{align*}
which is the exact analog of \Cref{cond:no-effect} needed to establish consistency at the nonparameteric rate of the modified model assisted estimator. Similarly, using the second statement and \((a+b)^2 \leq 2a^2 + 2b^2\) we can immediately establish that 
\begin{align*}
    \max_{1\leq j \leq k}\E_n[(\tilde p_j(x)Y(\hat\pi_j,\hat m_j) - cY(\hat\pi_1,\hat m_1) - \tilde p_j(x)Y(\bar\pi_j, \bar m_j) + cY(\bar\pi_j,\bar m_j))^2] = o_p(k^{-2}n^{-1/m})
\end{align*}
which is the exact analog of \Cref{cond:variance-estimation} needed to establish a consistent variance estimator when \(\tilde\beta^k\) is used instead of the \(\hat\beta^k\) from  \eqref{eq:beta-k}.

This logic can be extended slightly if the researcher would like to weight the estimating equations \eqref{eq:gamma-j-estimating-equation}-\eqref{eq:alpha-j-estimating-equation} by some \(\tilde p^k(x) = G^kp^k(x)\) for an invertible and bounded sequence of linear operators \(G^k:\SR^k \to \SR^k\). In this case, one would again use \(\tilde p^k(x)\) in place of \(p^k(x)\) in \eqref{eq:gamma-j-estimating-equation}-\eqref{eq:alpha-j-estimating-equation} and construct the second stage coeffecients via
\begin{align*}
    \tilde\beta^k := \widehat{Q}^{-1} G^{k,-1}
    \E_n\begin{bmatrix} \tilde p_1(x)Y(\hat\pi_1,\hat m_1) \\ \vdots \\ \tilde p_k(x)Y(\hat\pi_k,\hat m_k) \end{bmatrix} 
\end{align*}
After constructing the second stage estimator using \(\tilde\beta^k\), inference procedures would proceed normally as described in \Cref{sec:setup}.

\section{Alternative CV-Type Method for Penalty Parameter Selection}
\label{sec:alt-penalty-CV}

In this section we consider a procedure for penalty parameter selection where we use the pilot penalty parameters described in \eqref{eq:pilot-penalty} directly, after choosing constants \(c_{\gamma, j}\) and  \(c_{\alpha, j}\) from a (finite) set via cross validation. For each \(j\) we will assume that 
\begin{equation}
    \label{eq:alt-penalty-CV-set}
    c_{\gamma, j}, c_{\alpha, j} \in \Lambda_n \subseteq [\underline c_n, \bar c_n]
\end{equation} 
where \(|\Lambda_n|\) can be fairly large (on the order of \(n^2/k\)). 

\subsection{Theory Overview}%
Let \(M_5,M_6,M_7, M_8^2, M_9^2\) be constants that do not depend on \(k\) as in \Cref{lemma:omega-5,lemma:omega-6,lemma:omega-7,lemma:logit-score-domination-deterministic,lemma:linear-score-domination-deterministic}. Whenever
\begin{equation}
    \label{eq:alt-penalty-parameter-determinstic-score-domination}
    \underline c_n\sqrt{\frac{\ln^3(d_zn)}{n}} \geq \xi_{k,\infty}\max\left\{M_5, M_6, M_7, M_8^2, M_9^2\right\}\sqrt{\frac{\ln(d_zn)}{n}}
.\end{equation} 
we will have that, under \Cref{assm:logistic-model-convergence}(i)-(iv) the event \(\bigcap_{k=1}^7 \Omega_{k,7}\) occurs with probability at least \(1 - 10k/n^2\) for the \(2k\) pilot penalty parameters chosen with any values \(c_{\gamma, j}, c_{\alpha,j} \in\Lambda_n\) and 
\[
    \bar\lambda_k := \bar c_n \sqrt{\frac{\ln^3(d_zn)}{n}}
.\]
In this event, apply \Cref{lemma:nonasymptotic-logit,lemma:nonasymptotic-outcome} to obtain the following finte sample bounds for the parameter estimates
\begin{align*}
    \max_{1\leq j \leq k}D_{\gamma,j}^\ddag(\widehat\gamma_j,\bar\gamma_j) &\leq M_0\frac{s_k\bar c_n^2 \ln^3(d_zn)}{n} \andbox \max_{1\leq j\leq k}\|\widehat\gamma_j-\bar\gamma_j\|_1 \leq M_0s_k \bar c_n\sqrt{\frac{\ln^3(d_zn)}{n} } \\
    \max_{1\leq j \leq k}D_{\alpha, j}^\ddag(\widehat\alpha_j, \bar\alpha_j;\bar\gamma_j) &\leq M_1\frac{s_k\bar c_n^2\ln^3(d_zn)}{n} \andbox \max_{1\leq j\leq l}\|\widehat\alpha_j - \bar\alpha_j\|_1 \leq M_1s_k\bar c_n \sqrt{\frac{\ln^3(d_zn)}{n}} \\
\end{align*}
and \Cref{lemma:nonasymptotic-means} to obtain the following finite sample bound for the weighted means:
\begin{align}
    \label{eq:alt-penalty-means-bound}
    \max_{1\leq j\leq k} |\E_n[p_j(X)(Y(\widehat\pi_j,\widehat m_j) - Y(\bar\pi_j, \bar m_j))]| 
    &\leq  M_2\frac{\bar c_n^2s_k\ln^3(d_zn)}{n} \\
    \label{eq:alt-penalty-second-moments-bound}
    \max_{1\leq j\leq k}|\E_n[p_j^2(X)(Y(\widehat\pi_j, \widehat m_j) - Y(\bar\pi_j, \bar m_j))^2] 
    &\leq M_3\frac{\xi_{k,\infty}^2\bar c_n^2 s_k^2\ln^3(d_zn)}{n} 
\end{align}
Combining \eqref{eq:alt-penalty-parameter-determinstic-score-domination} and \eqref{eq:alt-penalty-means-bound} we can see that \Cref{cond:no-effect} can be obtained under  \Cref{assm:logistic-model-convergence}(i)-(iv) and the following modified sparsity bounds
\begin{equation}
   \label{eq:alt-penalty-cond1-sparsity-bounds}
   \frac{k|\Lambda_n|}{n^2}\to 0,\; \frac{\underline c_n^{-1}\xi_{k,\infty}}{\ln(d_zn)} \to 0\andbox \frac{\bar c_n^2s_k k^{1/2}\ln^3(d_zn)}{\sqrt{n}}\to 0 
.\end{equation} 
Simlarly combining \eqref{eq:alt-penalty-parameter-determinstic-score-domination} and \eqref{eq:alt-penalty-second-moments-bound}, \Cref{cond:variance-estimation} can additionally be obtained by strengthening the rates in \eqref{eq:alt-penalty-cond1-sparsity-bounds} to include
\begin{equation}
    \label{eq:alt-penalty-cond2-sparsity-bounds}
    \frac{\xi_{k,\infty}^2\bar c_n^2s_kk^2\ln^3(d_zn)}{n^{(m-1)/m}} \to 0
\end{equation}
for \(m > 2\) as in \Cref{assm:uniform-limit-theory}. These rates are comparable and in certain cases may be more palatable than those presented in the main text, \Cref{assm:logistic-model-convergence}(vi). They come at the cost of slower rates of convergence for the weighted means as seen by comparing \crefrange{eq:alt-penalty-means-bound}{eq:alt-penalty-second-moments-bound} to \cref{eq:uniform-mean-convergence,eq:second-moment-convergence}.

\subsection{Practical Implementation}%

In practice, the constants \(M_5, M_6, M_7, M_8^2, M_9^2\) from \Cref{lemma:omega-5,lemma:omega-6,lemma:omega-7,lemma:logit-score-domination-deterministic,lemma:linear-score-domination-deterministic} are roughly on the order of \(\|Z\|_\infty\). We therefore reccomend setting 
\begin{align*}
    \underline c_n &= \frac{1}{2\log^{1/2}(d_zn)} \max_{1\leq i\leq n}\|p^k(X_i)\|_\infty \max_{1\leq i \leq n}\|Z_i\|_\infty \\
    \bar c_n &= \frac{3\log^{1/2}(d_zn)}{2} \max_{1\leq i \leq n}\|p^k(X_i)\|_\infty\max_{1\leq i\leq n}\|Z_i\|_\infty
\end{align*} 
and letting \(\Lambda_n\) be a set of points evenly spaced between \(\underline c_n\) and  \(\bar c_n\). The cross validation procedure then can be implemented in the following steps.
\begin{enumerate}
    \item[\texttt{1.}] \texttt{Split the sample into \(K\) folds.}
    \item[\texttt{2.}] \texttt{Consider a single pair of values for \(c_\alpha, c_\gamma\) and designate a fold to hold out.}
    \item[\texttt{3.}] \texttt{Estimate nuisance model parameters using \(\lambda_{\gamma, j}^{\text{\tiny pilot}}\) and \(\lambda_{\alpha, j}^{\text{\tiny pilot}}\) on the remaining  folds.}
    \item[\texttt{4.}] \texttt{Evaluate the resulting models on held out fold using non-penalized loss functions.}
    \item[\texttt{5.}] \texttt{Repeat \(K\) times and record average loss over all folds.}
    \item[\texttt{6.}] \texttt{Choose values of \(c_{\gamma, j}\) and  \(c_{\alpha, j}\) with the lowest average loss.} 
\end{enumerate}
In practice we find this procedure works well with small \(K\), around \(K = 5\) and with  \(|\Lambda_n|\) on the order of about 10-20.

\section{High Dimensional Probability Results}%
\label{sec:high-dimensional-clt}

\subsection{High Dimensional Central Limit and Bootstrap Theorems}%
\label{sec:clt-bootstrap}

\begin{lemma}[Gaussian Quantile Bound]
	\label{lemma:quantile-bound}
	Let \(Y=(Y_1,\dots,Y_p)\) be centered Gaussian in \(\SR^p\) with \(\sigma^2 \leq \max_{1\leq j\leq p}\E[Y_j^2]\) and \(\rho \geq 2\). Let \(q^Y(1-\eps)\) denote the  \((1-\eps)\)-quantile of  \(\|Y\|_\infty\) for \(\eps \in (0,1)\). Then  \(q^Y(1-\eps) \leq (2+\sqrt{2})\sigma\sqrt{\ln(p/\eps)}\).
\end{lemma}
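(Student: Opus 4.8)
The plan is to combine a union bound over the $p$ coordinates with a Gaussian tail estimate, controlling the variances by $\sigma^2$. First I would normalize: for each $j$, $Y_j/\sqrt{\E[Y_j^2]}$ is standard normal, so by the standard sub-Gaussian tail bound $\Pr(|Y_j| > t) \le 2\exp(-t^2/(2\E[Y_j^2])) \le 2\exp(-t^2/(2\sigma^2))$ using $\E[Y_j^2] \le \sigma^2$. (Note: the statement writes $\sigma^2 \le \max_j \E[Y_j^2]$, but for the bound to make sense $\sigma$ must dominate the coordinatewise standard deviations, i.e.\ $\max_j \E[Y_j^2] \le \sigma^2$; I would read it that way, or equivalently take $\sigma^2 = \max_j \E[Y_j^2]$.) A union bound then gives
\[
    \Pr\bigl(\|Y\|_\infty > t\bigr) \le 2p\exp\!\bigl(-t^2/(2\sigma^2)\bigr).
\]

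Next I would choose $t$ so that the right-hand side is at most $\eps$, which by definition of the quantile forces $q^Y(1-\eps) \le t$. Setting $2p\exp(-t^2/(2\sigma^2)) = \eps$ gives $t = \sigma\sqrt{2\ln(2p/\eps)}$. It then remains to check the clean bound $\sigma\sqrt{2\ln(2p/\eps)} \le (2+\sqrt 2)\sigma\sqrt{\ln(p/\eps)}$. Since $p/\eps \ge 2$ (as $p \ge 2$ and $\eps < 1$), we have $\ln(2p/\eps) = \ln 2 + \ln(p/\eps) \le (1 + \ln 2/\ln 2)\ln(p/\eps) = 2\ln(p/\eps)$, hence $\sqrt{2\ln(2p/\eps)} \le 2\sqrt{\ln(p/\eps)} \le (2+\sqrt 2)\sqrt{\ln(p/\eps)}$, which closes the argument with room to spare.

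There is no real obstacle here; the only mild subtlety is bookkeeping with the constant $2$ inside the logarithm, handled by the inequality $\ln(2x) \le 2\ln x$ valid for $x \ge 2$. If one wanted to avoid even invoking $\ln 2 \le \ln(p/\eps)$, one could instead use the slightly lossier tail bound $\Pr(|Y_j|>t)\le \exp(-t^2/(2\sigma^2))$ valid after absorbing the factor $2$, but the version above is cleanest. I would present it in three short displayed steps: the coordinatewise tail bound, the union bound and choice of $t$, and the elementary logarithm inequality.
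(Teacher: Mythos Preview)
Your proposal is correct, including your reading of the two evident typos in the statement (the inequality should be \(\sigma^2 \geq \max_j \E[Y_j^2]\), and \(\rho \geq 2\) should be \(p \geq 2\)); the paper does not supply its own argument but simply cites Lemma~D.2 of \citet{CS-2021}, and your union-bound-plus-Gaussian-tail computation is exactly the standard proof one would find there.
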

\begin{proof}
    See \citet{CS-2021}, Lemma D.2.
\end{proof}

Now let \(Z_1,\dots,Z_n\) be independent, mean zero random variables in \(\SR^p\), and denote their scaled average and variance by 
\[
	S_n := \frac{1}{\sqrt{n}}\sum_{i=1}^n Z_i \andbox \Sigma :=  \frac{1}{n} \sum_{i=1}^n \E[Z_iZ_i'] 
.\]
For \(\SR^p\) values random variables \(U\) and  \(V\), define the distributional measure of distance
 \[
	 \rho(U,V) := \sup_{A\in\calA_p} \left|\Pr(U\in A) - \Pr(V\in A)\right|
\] 
where \(\calA_p\) denotes the collection of all hyperrectangles in  \(\SR^p\). For any symmetric positive matrix \(M \in \SR^{p\times p}\), write \(N_M := N(\bm{0}, M)\).

\begin{theorem}[High-Dimensional CLT]
	\label{thm:clt}
	If, for some finite constants \(b > 0\) and  \(B_n \geq 1\),
	\begin{equation}
		\label{eq:clt-conditions}
		\frac{1}{n} \sum_{i=1}^n \E[Z_{ij}^2] \geq b,\hbox{ }\frac{1}{n} \sum_{i=1}^n \E[|Z_{ij}|^{2+k}] \leq B_n^k\andbox \E\left[\max_{1\leq j\leq p} Z_{ij}^4\right] \leq B_n^4
	.\end{equation} 
	for all \(i \in \{1,\dots,n\}, j\in \{1,\dots,p\}\) and \(k\in \{1,2\} \), then there exists a finite constant \(C_b\), depending only on \(b\), such that:
	\[
		\rho(S_n, N_{\Sigma}) \leq C_b\left(\frac{B_n^4\ln^7(pn)}{n}\right)^{1/6}
	.\] 
\end{theorem}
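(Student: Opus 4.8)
The plan is to prove this by the Lindeberg--Slepian interpolation (Stein-type ``swapping'') method for Gaussian approximation over hyperrectangles, combined with smoothing of the max-of-linear-forms indicator and Gaussian anti-concentration; this is exactly the scheme of Chernozhukov, Chetverikov and Kato, and the hypotheses in \eqref{eq:clt-conditions} are the ``bounded fourth moment'' variant of their conditions, so in the writeup one may simply invoke their theorem. Sketching the argument: for a hyperrectangle $A = \prod_{j=1}^p[a_j,b_j]$, the event $\{x\in A\}$ can be rewritten as $\{\max_{1\le\ell\le 2p}(w_\ell' x - t_\ell)\le 0\}$ for a fixed family of $2p$ linear functionals, so it suffices to bound $|\E[h(S_n)] - \E[h(N_\Sigma)]|$ for $h(x) = g_0\!\left(F_\beta(w_1'x - t_1,\dots,w_{2p}'x - t_{2p})\right)$, where $F_\beta(z) = \beta^{-1}\log\sum_\ell e^{\beta z_\ell}$ is the soft-max (so $0 \le F_\beta(z) - \max_\ell z_\ell \le \beta^{-1}\log(2p)$) and $g_0:\SR\to[0,1]$ is a $C^3$ function equal to $1$ on $(-\infty,0]$, equal to $0$ on $[\psi,\infty)$, with $\|g_0^{(k)}\|_\infty \lesssim \psi^{-k}$.

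The core estimate is the Lindeberg telescoping: replace $Z_i/\sqrt n$ one index at a time by an independent Gaussian $\tilde Z_i/\sqrt n$ matching the first two moments of $Z_i$, and Taylor-expand $h$ to third order at each swap. Moment matching annihilates the zeroth- through second-order terms, leaving $n$ third-order remainders. Each is controlled by the ``stable'' derivative calculus of $x\mapsto g_0(F_\beta(\cdot))$ --- its third partials are bounded by a polynomial in $\psi^{-1}$ and $\beta$ acting on the $w_\ell$ --- after truncating each coordinate $Z_{ij}$ at a level $u$, with the discarded tail handled by Markov's inequality through $\E[\max_j Z_{ij}^4]\le B_n^4$ and the recentering bias controlled the same way. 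Summing these, and performing the identical computation on the Gaussian side (using $\E[N_{\Sigma,j}^4]\lesssim B_n^2$, which follows from $\frac1n\sum_i\E[Z_{ij}^4]\le B_n^2$), yields $|\E[h(S_n)] - \E[h(N_\Sigma)]| \lesssim n^{-1/2}\,\mathrm{poly}(\psi^{-1},\beta,B_n,\log p) + (\text{tail terms in }u)$.

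Finally one passes from smooth test functions back to the hyperrectangle class via Nazarov's anti-concentration inequality: since $\Sigma_{jj}\ge b$ for every $j$, $\Pr\!\big(\max_\ell(w_\ell'N_\Sigma - t_\ell)\in(0,\varepsilon]\big)\lesssim b^{-1/2}\varepsilon\sqrt{\log p}$, which bounds the discrepancy between $h$ and the exact indicator (this discrepancy is supported on a strip of width $\psi + \beta^{-1}\log(2p)$). Choosing $\beta \asymp \psi^{-1}\log(2p)$, then optimizing the smoothing width $\psi$ and the truncation level $u$ against the Lindeberg remainder, and replacing $\log p$ by $\log(pn)$ to absorb the $n$-dependent factors, collapses everything to $C_b\,(B_n^4\log^7(pn)/n)^{1/6}$ --- the $n^{-1}$ under the sixth root being the signature of this optimization. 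The main obstacle, and the only genuinely delicate part, is the second step: producing dimension-free third-derivative bounds for the composition with the soft-max and carrying the truncation cleanly through the Taylor remainders so that the exponent comes out to exactly $1/6$; the reduction to smooth functions and the anti-concentration conversion are routine once that is in place.
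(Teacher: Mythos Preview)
Your proposal is correct and aligns with the paper's approach: the paper's proof is simply ``See \citet{CCK-2018-HDCLT}, Proposition 2.1,'' and you correctly identify that the result is exactly the bounded-fourth-moment case of the Chernozhukov--Chetverikov--Kato high-dimensional CLT, even noting that one may just invoke their theorem. Your sketch of the Lindeberg interpolation with soft-max smoothing and Nazarov anti-concentration is precisely the machinery behind that cited result, so there is nothing to add.
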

\begin{proof}
    See \citet{CCK-2018-HDCLT}, Proposition 2.1.
\end{proof}

Let \(\widehat Z_i\) be an estimator of \(Z_i\) and let  \(e_1,\dots,e_n\) be i.i.d \(N(0,1)\) and independent of both the  \(Z_i\)'s and  \( \widehat Z_i\)'s. Define 
\(
    \widehat S_n^e := \frac{1}{\sqrt n}\sum_{i=1}^n e_i \widehat Z_i 
\) 
and let \(\Pr_e\) denote the conditional probability measure computed with respect to the  \(e_i's\) for fixed  \(Z_i\)'s and  \(\widehat Z_i\)'s. Also abbreviate 
\[
	\tilde \rho(\widehat S_n^e, N_{\Sigma}) := \sup_{A\in\calA_p}\left|\text{Pr}_e\left(\widehat S_n^e \in A\right) - \Pr\left(N_\Sigma\in A\right)\right|
.\] 

\begin{theorem}[Multiplier Bootstrap for Many Approximate Means]
	\label{thm:bootstrap}
	Let \eqref{eq:clt-conditions} hold for some finite constants \(b > 0\) and \(B_n \geq 1\), and let \(\{\beta_n\}_{\SN}\) and  \(\{\delta_n\}_{\SN}\) be sequences in \(\SR_{++}\) converging to zero such that 
	\begin{equation}
		\label{eq:bootstrap-condition}
		\Pr\left(\max_{1\leq j\leq p}\frac{1}{n} \sum_{i=1}^n (\widehat Z_{ij} - Z_{ij})^2 > \frac{\delta_n^2}{\ln^2(pn)} \right) \leq \beta_n
	\end{equation}
	Then, there exists a finite constant \(C_b\) depending only on  \(b\) such that with probability at least  \(1-\beta_n - 1/\ln^2(pn)\), 
	 \[
		 \tilde \rho(\widehat S_n^e, N_\Sigma) \leq C_b\max\left\{\delta_n, \left(\frac{B_n\ln^6(pn)}{n} \right)^{1/6}\right\}
	.\] 
\end{theorem}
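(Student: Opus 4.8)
The plan is to exploit the fact that, conditionally on the data $\{Y_i,D_i,Z_i\}_{i=1}^n$ (equivalently, on $\{Z_i\}$ and $\{\widehat Z_i\}$), the bootstrap statistic $\widehat S_n^e = n^{-1/2}\sum_i e_i\widehat Z_i$ is \emph{exactly} Gaussian with covariance $\widehat\Sigma := \E_n[\widehat Z_i\widehat Z_i']$, and to compare it to $N_\Sigma$ by routing through the infeasible bootstrap $\widetilde S_n^e := n^{-1/2}\sum_i e_i Z_i$, which is conditionally Gaussian with covariance $\Sigma_n := \E_n[Z_iZ_i']$. By the triangle inequality for the supremum over hyperrectangles, $\tilde\rho(\widehat S_n^e, N_\Sigma) \le \tilde\rho(\widehat S_n^e, \widetilde S_n^e) + \sup_{A}|\Pr_e(\widetilde S_n^e\in A) - \Pr(N_\Sigma\in A)|$. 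I would bound the two pieces separately: the second (a ``sampling-error'' piece) by a Gaussian comparison inequality fed by a maximal inequality for $\|\Sigma_n - \Sigma\|_\infty$, and the first (an ``estimation-error'' piece) by a \emph{direct} conditional Gaussian argument driven by \eqref{eq:bootstrap-condition}.

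For the sampling-error piece, $\Sigma_{jj} \ge b$ by \eqref{eq:clt-conditions}, so on the event that $\|\Sigma_n - \Sigma\|_\infty$ is small the diagonal of $\Sigma_n$ is also bounded below by $b/2$; a Gaussian comparison bound (Chernozhukov--Chetverikov--Kato, Nazarov) then gives $\sup_A|\Pr_e(\widetilde S_n^e\in A) - \Pr(N_\Sigma\in A)| \lesssim_b \|\Sigma_n-\Sigma\|_\infty^{1/3}\log^{2/3}(pn)$. It remains to control $\|\Sigma_n - \Sigma\|_\infty$ with probability at least $1 - 1/\ln^2(pn)$: applying a Fuk--Nagaev/Bernstein-type maximal inequality over the $\le p^2$ entries, with the envelope $\max_j Z_{ij}^2$ controlled in second moment by $B_n^4$ (from $\E[\max_j Z_{ij}^4]\le B_n^4$) and the per-entry variances by $\E[(Z_{ij}Z_{ik})^2]\le B_n^4$, yields a bound that after the cube root and multiplication by $\log^{2/3}(pn)$ matches the advertised $\big(B_n\ln^6(pn)/n\big)^{1/6}$ term. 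This accounting of moments against logarithms to hit the exact target rate is the most delicate calculation and where the bulk of the technical work sits; the failure probability $1/\ln^2(pn)$ in the statement is precisely the one produced by this maximal inequality.

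For the estimation-error piece I would deliberately \emph{not} pass through a covariance comparison, because the cube root in the comparison bound would turn the small quantity $\delta_n$ into $\delta_n^{1/3}$ and destroy the advertised rate. Instead, working conditionally on the data, $\widehat S_n^e - \widetilde S_n^e = n^{-1/2}\sum_i e_i(\widehat Z_i - Z_i)$ is centered Gaussian in $e$ whose $j$-th coordinate has variance $\E_n[(\widehat Z_{ij}-Z_{ij})^2]$, which on the event of \eqref{eq:bootstrap-condition} is at most $\delta_n^2/\ln^2(pn)$ uniformly in $j$. A Gaussian maximal inequality then gives $\|\widehat S_n^e - \widetilde S_n^e\|_\infty \lesssim (\delta_n/\ln(pn))\sqrt{\log p}$ with $\Pr_e$-probability at least $1 - \exp(-c\ln^2(pn))$, and combining this with Nazarov-type anti-concentration for $\widetilde S_n^e \sim N_{\Sigma_n}$ (diagonal $\ge b/2$) produces $\tilde\rho(\widehat S_n^e, \widetilde S_n^e) \lesssim_b (\delta_n/\ln(pn))\sqrt{\log p}\cdot\sqrt{\log p} = \delta_n\log p/\ln(pn) \le \delta_n$, since $\log p \le \ln(pn)$. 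The point worth emphasizing — and the reason \eqref{eq:bootstrap-condition} carries $\ln^2(pn)$ in its denominator — is that this $\ln^2$ buffer supplies exactly the two factors of $\sqrt{\log p}$ needed to absorb the one lost to the Gaussian maximal inequality and the one lost to anti-concentration.

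Combining, on the intersection of the two good events (probability at least $1 - \beta_n - 1/\ln^2(pn)$) one gets $\tilde\rho(\widehat S_n^e, N_\Sigma) \lesssim_b \delta_n + \big(B_n\ln^6(pn)/n\big)^{1/6} \le C_b\max\{\delta_n, (B_n\ln^6(pn)/n)^{1/6}\}$ after absorbing the constant, which is the claim. The conceptual skeleton is the standard CCK bootstrap template (conditional Gaussianity, split into sampling and estimation error, comparison plus anti-concentration); the main obstacle is the rate bookkeeping in the sampling-error piece, namely pushing the maximal inequality for $\|\Sigma_n-\Sigma\|_\infty$ hard enough — exploiting \emph{all} the moment conditions in \eqref{eq:clt-conditions}, via truncation — that cubing and multiplying by $\log^{2/3}(pn)$ lands precisely on $(B_n\ln^6(pn)/n)^{1/6}$ rather than on a cruder power of $B_n$.
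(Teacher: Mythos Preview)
The paper does not supply its own proof of this theorem: it simply cites \citet{belloni2018highdimensional}, Theorem~2.2, and \citet{CS-2021}, Theorem~D.2. Your sketch is precisely the standard Chernozhukov--Chetverikov--Kato bootstrap template that those references implement---conditional Gaussianity of \(\widehat S_n^e\), the split into sampling error (\(\Sigma_n\) vs.\ \(\Sigma\), handled by Gaussian comparison plus a maximal inequality for \(\|\Sigma_n-\Sigma\|_\infty\)) and estimation error (handled by the direct coupling \(\widehat S_n^e-\widetilde S_n^e\) combined with Nazarov anti-concentration, which is indeed the right move to avoid degrading \(\delta_n\) to \(\delta_n^{1/3}\)). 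So your proposal is correct and is essentially the argument behind the cited results; there is nothing to compare against in the paper itself.
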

\begin{proof}
    See \citet{belloni2018highdimensional}, Theorem 2.2 or \citet{CS-2021} Theorem D.2.
\end{proof}

We now consider a partition of \(Z\) and  \( \widehat Z\) into \(k\) subvectors.
\begin{align*}
    Z := (Z_1',\dots,Z_k')'\in\SR^{d_1,\dots,d_k}\andbox \widehat Z := (\widehat Z_1',\dots,Z_k')' \in \SR^{d_1,\dots,d_k}
\end{align*}
where \(\sum_{j=1}^k d_j = p\).
Given such a partition, for any symmetric, positive definite \(M \in \SR^{p\times p}\) let \(N_{M,j}\) denote the marginal distribution of the subvector of \(N_M\) corresponding the the indices of partition  \(j\). In other words, \(N_{M_1}\) would denote the marginal distribution of the first  \(d_1\) elements of an  \(\SR^p\) vector with distribution \(N_M\), \(N_2\) would denote the marginal distribution of the next \(d_2\) elements and so on. For each \(j=1,\dots,k\) define \(q_{M,j}^N: \SR \to \bar \SR\) as the (extended) quantile function of \(\|N_{M,j}\|_{\infty}\),
\[
	q_{M,j}^N(\eps) := \inf\left\{t \in \SR: \Pr(\|N_{M,j}\|_{\infty} \leq t) \geq \eps\right\}
.\] 
Define \(q_{M,j}^N(\eps) = +\infty\) if \(\eps \geq 1\) and \(-\infty\) if  \(\eps\leq 0\) so that \(q_{M,j}^N\) is always montone (strictly) increasing.

\begin{lemma}[]
	\label{lemma:quantile-setup}
    Let \(M\in\SR^{p\times p}\) be symmetric positive definite, let \(U\) be a random variable in  \( \SR^p\). Partition \(U\) into  \(k\) subvectors,  \(U=(U_1',\dots,U_k')' \in \SR^{d_1,\dots,d_k}\) where \(d_1+\dots+d_k =p\). For each \(j =1,..,k\) let \(q_j\) denote the quantile function of  \(\|U_j\|_{\infty}\). Then for any \(j=1,\dots,k\),
	\[
		q_{M,j}^N(\eps - 2\rho(U,N_M)) \leq q_j(\eps) \leq q_{M,j}^N(\eps + \rho(U,N_M))\hbox{ }\text{ for all }\eps\in(0,1) 
	.\] 
\end{lemma}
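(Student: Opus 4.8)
The plan is to reduce the statement to a one‑dimensional comparison of the distribution functions of \(\|U_j\|_\infty\) and \(\|N_{M,j}\|_\infty\), and then apply the standard relationship between the generalized inverses of two distribution functions that are uniformly close. No real computation is involved; the content is entirely bookkeeping.

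First I would note that for any \(t\in\SR\) the event \(\{\|U_j\|_\infty\leq t\}\), viewed as a subset of \(\SR^p\), is the hyperrectangle that constrains each coordinate in block \(j\) to lie in \([-t,t]\) and leaves all other coordinates unconstrained; since \(\calA_p\) contains unbounded hyperrectangles, this set lies in \(\calA_p\). Writing \(F_j(t):=\Pr(\|U_j\|_\infty\leq t)\) and \(G_j(t):=\Pr(\|N_{M,j}\|_\infty\leq t)\) and using that \(N_{M,j}\) is exactly the marginal of \(N_M\) on the coordinates of block \(j\), the definition of \(\rho:=\rho(U,N_M)\) gives \(\sup_{t\in\SR}|F_j(t)-G_j(t)|\leq\rho\).

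Next I would convert this uniform closeness into the quantile bounds. With \(q_j(\eps)=\inf\{t:F_j(t)\geq\eps\}\) and \(q_{M,j}^N(\eps)=\inf\{t:G_j(t)\geq\eps\}\): from \(F_j\geq G_j-\rho\) we get \(\{t:G_j(t)\geq\eps+\rho\}\subseteq\{t:F_j(t)\geq\eps\}\), hence \(q_j(\eps)\leq q_{M,j}^N(\eps+\rho)\), which is the claimed upper bound; from \(F_j\leq G_j+\rho\) we get \(\{t:F_j(t)\geq\eps\}\subseteq\{t:G_j(t)\geq\eps-\rho\}\), hence \(q_j(\eps)\geq q_{M,j}^N(\eps-\rho)\). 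Because \(q_{M,j}^N\) is weakly increasing and \(\eps-2\rho\leq\eps-\rho\), the last inequality implies the stated (slightly weaker) lower bound \(q_{M,j}^N(\eps-2\rho)\leq q_j(\eps)\); the looser constant is stated only because the extra \(\rho\) of slack is convenient in the downstream bootstrap arguments. The extended‑value conventions \(q_{M,j}^N(\eps)=+\infty\) for \(\eps\geq1\) and \(q_{M,j}^N(\eps)=-\infty\) for \(\eps\leq0\) make all of the above valid for every \(\eps\in(0,1)\) with no case splitting.

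The only point demanding care — and the nearest thing to an obstacle — is handling the generalized inverse correctly: one must use the left‑continuous \(\inf\{t:F(t)\geq\eps\}\) definition consistently, verify that the set inclusions above survive ties and flat stretches of \(F_j\) and \(G_j\), and confirm at the outset that the class \(\calA_p\) underlying \(\rho\) really does include unbounded rectangles so that the reduction in the second step is legitimate.
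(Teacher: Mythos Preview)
Your proposal is correct and follows essentially the same route as the paper: both arguments observe that \(\{\|U_j\|_\infty\le t\}\) corresponds to the hyperrectangle \([-t,t]^{d_j}\times\SR^{p-d_j}\in\calA_p\), deduce \(\sup_t|F_j(t)-G_j(t)|\le\rho(U,N_M)\), and then convert this into the quantile sandwich. You even notice, as the paper's own proof implicitly shows, that the lower bound actually holds with \(\eps-\rho\) rather than the stated \(\eps-2\rho\); the extra factor of two is just slack carried forward from the source lemma.
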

\begin{proof}
	Proof is a slight modification of that of Lemma D.3 in \citet{CS-2021}. Main idea is to add and substract a \(\|N_M\|_\infty\) term and use the fact that the approximation is achieved over all hyperrectangles. We show the bound holds for each \(j=1,\dots,k\). Without loss of generality, consider \(U_1\). Let \(N_{M,1}\) denote the maginal distribution of the first  \(d_1\) elements of a \(\SR^p\) vector with distribution \(N_M\).
	\begin{align*}
		\Pr(\|U_1\|_\infty \leq t) &= \Pr(\|N_{M,1}\|_\infty \leq t) + \Pr(\|U_1\|_{\infty} \leq t) - \Pr(\|N_{M,1}\|_\infty \leq t) \\
								 &= \Pr(\|N_{M,1}\|_\infty \leq t) + \left(\Pr(U \in [-t,t]^p \times \SR^{p-d_1}) - \Pr(N_M \in [-t,t]^p \times \SR^{p-d_1})\right) \\
								 &\leq \Pr(\|N_{M,1}\|_{\infty} \leq t) + \rho(U,N_M)
	\end{align*}
	for any \(t \in\SR\). A similar construction will give that 
	\[
		\Pr(\|U_1\|_{\infty} \leq t) \geq \Pr(\|N_{M,1}\|_\infty \leq t) - \rho(U, N_M)
	.\] 

	Substituting \(t = q_{M,1}^N(\eps - 2\rho(U,N_M))\) into the upper bound on \(\Pr(\|U_1\|_\infty \leq t)\) gives the lower bound statement, while \(t = q_{M,1}^N(\eps + \rho(U, N_M))\) and using the lower bound on \(\Pr(\|U_1\|_\infty \leq t)\) gives the upper bound statement.
\end{proof}

As with \(Z\) partition \(S_n\) and  \(\widehat S_n^e\) into  
\[
    S_n = (S_{n,1}',\dots,S_{n,k}')' \in \SR^{d_1,\dots,d_k}\andbox \widehat S_n^e = (\widehat S_{n,1}^{e'},\dots,\widehat S_{n,k}^{e'})'\in\SR^{d_1,\dots,d_k}
.\] 
For each \(j=1,\dots,k\) define \(q_{n,j}(\eps)\) as the  \(\eps\)-quantile of  \(\|S_{n,j}\|_\infty\)
\[
	q_{n,j}(\eps) := \inf\{t \in \SR: \Pr(\|S_{n,j}\|_\infty \leq t) \geq \eps\}\hbox{ }\text{ for }\eps \in (0,1)
.\] 
Let \( \widehat q_{n,j}(\eps)\) be the \(\eps\)-quantile of  \( \|\widehat S_{n,j}^e\|_\infty\), computed conditionally on \(X_i\) and  \( \widehat X_i\)'s,
\[
	\widehat q_{n,j}(\eps) := \inf\{t \in \SR: \text{Pr}_e(\|\widehat S_{n,j}^e\|_\infty \leq t) \geq \eps\} \text{ for }\eps \in (0,1)
.\] 

\begin{theorem}[Quantile Comparasion]
	\label{thm:quantile-comp}
	If \eqref{eq:clt-conditions} holds for some finite constants \(b> 0\) and  \(B_n \geq 1\), and 
	\[
		\rho_n := 2C_b\left(\frac{B_n^4\ln^7(pn)}{n} \right)^{1/6}
	\] 
	denotes the upper bound in \Cref{thm:clt} multiplied by two, then for all \(j =1,\dots,k\)
	\[
		q_{\Sigma,j}^N(1- \eps - \rho_n) \leq q_{n,j}(1-\eps) \leq  q_{\Sigma,j}^N(1-\eps + \rho_n)\hbox{ }\text{ for all }\eps \in (0,1)
	.\] 
	If, in addition, \eqref{eq:bootstrap-condition} holds for some sequences \(\{\delta_n\}_\SN\) and \(\{\beta_n\}_\SN\) converging to zero, and 
	\[
		\rho_n' \leq  2C_b'\max\left\{\delta, \left(\frac{B_n^4\ln^6(pn)}{n} \right)^{1/6}\right\} 
	\] 
	denotes the upper bound in \Cref{thm:bootstrap} multiplied by two, then with probability at least  \(1-\beta_n - 1/\ln^2(pn)\) we have for all \(j=1,\dots,k\), 
	\[
		q_{\Sigma,j}^N(1-\eps-\rho_n') \leq \widehat q_{n,j}(1-\eps) \leq q_{\Sigma,j}^N(1-\eps + \rho_n')\hbox{ }\text{ for all }\eps \in (0,1)
	.\] 
\end{theorem}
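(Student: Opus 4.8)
The plan is to obtain both halves of the statement as a mechanical composition of the high-dimensional CLT (\Cref{thm:clt}), the multiplier bootstrap theorem (\Cref{thm:bootstrap}), and the generic quantile-comparison inequality (\Cref{lemma:quantile-setup}). The guiding observation is that \Cref{lemma:quantile-setup} takes any bound on the hyperrectangle distance $\rho(\cdot,N_\Sigma)$ and turns it into a two-sided comparison between the $\ell_\infty$-quantile functions of each subvector and those of the corresponding marginal of $N_\Sigma$; so once the distributional distances are controlled, the quantile inequalities follow by monotonicity alone.

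For the unconditional claim, I would apply \Cref{lemma:quantile-setup} with $M=\Sigma$, $U=S_n$, and the partition $S_n=(S_{n,1}',\dots,S_{n,k}')'$, noting that $q_{n,j}$ is precisely the quantile function of $\|S_{n,j}\|_\infty$. This gives, for each $j$ and each $\epsilon\in(0,1)$,
\[
q_{\Sigma,j}^N\!\big(\epsilon-2\rho(S_n,N_\Sigma)\big)\le q_{n,j}(\epsilon)\le q_{\Sigma,j}^N\!\big(\epsilon+\rho(S_n,N_\Sigma)\big).
\]
Under \eqref{eq:clt-conditions}, \Cref{thm:clt} yields $\rho(S_n,N_\Sigma)\le C_b(B_n^4\ln^7(pn)/n)^{1/6}=\rho_n/2$, hence $2\rho(S_n,N_\Sigma)\le\rho_n$ and $\rho(S_n,N_\Sigma)\le\rho_n$. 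Since $q_{\Sigma,j}^N$ is monotone increasing (using the convention that extends it to $\pm\infty$ outside $(0,1)$), I may replace $\epsilon-2\rho(S_n,N_\Sigma)$ by the smaller $\epsilon-\rho_n$ on the left and $\epsilon+\rho(S_n,N_\Sigma)$ by the larger $\epsilon+\rho_n$ on the right; relabeling $\epsilon\mapsto 1-\epsilon$ then gives exactly the first displayed chain.

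For the bootstrap claim I would rerun the identical argument conditionally on the data $\{Z_i\}$ and the estimates $\{\widehat Z_i\}$: apply \Cref{lemma:quantile-setup} with $U=\widehat S_n^e$, the probability measure $\Pr_e$ in place of $\Pr$, and $q_j=\widehat q_{n,j}$ the $\Pr_e$-quantile of $\|\widehat S_{n,j}^e\|_\infty$, which produces $q_{\Sigma,j}^N(\epsilon-2\tilde\rho(\widehat S_n^e,N_\Sigma))\le \widehat q_{n,j}(\epsilon)\le q_{\Sigma,j}^N(\epsilon+\tilde\rho(\widehat S_n^e,N_\Sigma))$. Then \Cref{thm:bootstrap}, invoked under \eqref{eq:clt-conditions} and \eqref{eq:bootstrap-condition}, says that on an event of probability at least $1-\beta_n-1/\ln^2(pn)$ one has $\tilde\rho(\widehat S_n^e,N_\Sigma)\le C_b'\max\{\delta_n,(B_n\ln^6(pn)/n)^{1/6}\}=\rho_n'/2\le\rho_n'$; on that event the same monotonicity step and relabeling $\epsilon\mapsto 1-\epsilon$ deliver the second chain.

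There is no deep obstacle here — it is essentially bookkeeping — but two points deserve care. First, one must check that \Cref{lemma:quantile-setup}, whose proof is written unconditionally, applies verbatim after conditioning: the reference Gaussian is still $N_\Sigma$ with $\Sigma$ the (fixed) unconditional average covariance, and the distance that feeds the lemma is $\tilde\rho$, the hyperrectangle distance of $\widehat S_n^e$ to $N_\Sigma$ computed under $\Pr_e$, which is exactly what \Cref{thm:bootstrap} bounds. Second, one should keep straight the factor of two between the CLT/bootstrap bounds and the constants $\rho_n,\rho_n'$ as defined in the statement — that factor is precisely what absorbs the ``$2\rho$'' on the left-hand side of \Cref{lemma:quantile-setup}, making the slack symmetric on both sides.
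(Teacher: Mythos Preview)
Your proposal is correct and follows essentially the same approach as the paper: apply \Cref{lemma:quantile-setup} with \(U=S_n\) (respectively \(U=\widehat S_n^e\) conditionally on the data), then invoke \Cref{thm:clt} (respectively \Cref{thm:bootstrap}) to bound \(2\rho(S_n,N_\Sigma)\le\rho_n\) (respectively \(2\tilde\rho(\widehat S_n^e,N_\Sigma)\le\rho_n'\) on the high-probability event) and use monotonicity of \(q_{\Sigma,j}^N\). Your elaboration of the conditioning and the factor-of-two bookkeeping is exactly the care the paper's terse proof implicitly relies on.
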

\begin{proof}
	From \Cref{lemma:quantile-setup} with \(U = S_n\) we obtain
	 \[
		 q_{\Sigma,j}^N(1-\eps-2\rho(S_n,N_\Sigma)) \leq q_{n,j}(1-\eps) \leq q_{\Sigma,j}^N(1-\eps + \rho(S_n, N_\Sigma))
	.\] 
	The first chain of inequalities then follows from \(2\rho(S_n, N_\Sigma) \leq \rho_n\) by \Cref{thm:clt}. 

	For the second claim, apply \Cref{lemma:quantile-setup} with \(U = \widehat S_n^e\) and condition on the \(Z_i\)'s and  \( \widehat Z_i\)'s obtain
	\[
		q_{\Sigma,j}^N(1-\eps - 2\tilde\rho(\widehat S_n^e, N_\Sigma)) \leq \widehat q_n(1-\eps) \leq q_{\Sigma,j}^N(1-\eps + \tilde\rho(\widehat S_n^e, N_\Sigma))
	.\] 
	The second chain of inequalities then follows on the event \(2\tilde\rho(\widehat S_n^e, N_\Sigma) \leq \rho_n'\), which by \Cref{thm:bootstrap} happens with probability at least \(1-\beta_n - 1/\ln^2(pn)\).
\end{proof}

\begin{theorem}[Multiplier Bootstrap Consistency]
	\label{thm:bootstrap-consistency}
	Let \eqref{eq:clt-conditions} and \eqref{eq:bootstrap-condition} hold for some constants \(b>0\) and  \(B_n \geq 1\) and some sequences \(\{\delta_n\}_\SN\) and \(\{\beta_n\}_\SN\) in \(\SR_{++}\) converging to zero. Then, there exists a finite constant  \(C_b\), depending only on  \(b\), such that 
	 \[
		 \max_{1\leq j\leq k}\sup_{\eps\in (0,1)} \left|\Pr(\|S_{n,j}\|_\infty \geq \widehat q_{n,j}(1-\alpha)) - \alpha\right| \leq C_b\max\left\{\beta_n, \delta_n, \left(\frac{B_n^4\ln^7(pn)}{n}\right)^{1/6}, \frac{1}{\ln^2(pn)} \right\}
	.\] 
\end{theorem}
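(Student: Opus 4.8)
\emph{Proof plan.} The statement is a bookkeeping consequence of three results already in hand: the high-dimensional CLT of \Cref{thm:clt}, the multiplier-bootstrap approximation of \Cref{thm:bootstrap}, and the quantile comparison of \Cref{thm:quantile-comp}. The plan is, for a fixed pair \((j,\alpha)\) with \(j\le k\) and \(\alpha\in(0,1)\), to sandwich the \emph{random} bootstrap threshold \(\widehat q_{n,j}(1-\alpha)\) between two \emph{deterministic} Gaussian quantiles on a high-probability event, then to transfer from \(\|S_{n,j}\|_\infty\) to \(\|N_{\Sigma,j}\|_\infty\) through the hyperrectangle distance \(\rho(S_n,N_\Sigma)\), and finally to read off the error terms. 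Because the bound produced does not depend on \(j\) or \(\alpha\), this yields the \(\max_{1\le j\le k}\sup_{\alpha\in(0,1)}\) in the conclusion.

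Let \(\mathcal{E}\) denote the event, furnished by the second half of \Cref{thm:quantile-comp}, on which \(q_{\Sigma,j}^N(1-\alpha-\rho_n')\le \widehat q_{n,j}(1-\alpha)\le q_{\Sigma,j}^N(1-\alpha+\rho_n')\) holds simultaneously for every \(j=1,\dots,k\) and every \(\alpha\in(0,1)\); by construction \(\Pr(\mathcal{E}^c)\le \beta_n+1/\ln^2(pn)\). For the upper bound, monotonicity gives \(\{\|S_{n,j}\|_\infty\ge\widehat q_{n,j}(1-\alpha)\}\cap\mathcal{E}\subseteq\{\|S_{n,j}\|_\infty\ge q_{\Sigma,j}^N(1-\alpha-\rho_n')\}\). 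Since \(\{\|S_{n,j}\|_\infty\le t\}\) is, up to a coordinate permutation, a (possibly unbounded) hyperrectangle event for \(S_n\), \Cref{thm:clt} yields \(\big|\Pr(\|S_{n,j}\|_\infty\le t)-\Pr(\|N_{\Sigma,j}\|_\infty\le t)\big|\le\rho(S_n,N_\Sigma)\le\rho_n/2\) for every \(t\); combined with Gaussian anti-concentration — applicable because \(\Sigma_{ll}\ge b\) for every coordinate \(l\) by \eqref{eq:clt-conditions}, so the law of \(\|N_{\Sigma,j}\|_\infty\) is atomless — and the definition of \(q_{\Sigma,j}^N\), this gives \(\Pr(\|S_{n,j}\|_\infty\ge q_{\Sigma,j}^N(1-\alpha-\rho_n'))\le\alpha+\rho_n'+\rho_n/2\). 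Adding \(\Pr(\mathcal{E}^c)\) to account for the restriction to \(\mathcal{E}\) produces \(\Pr(\|S_{n,j}\|_\infty\ge\widehat q_{n,j}(1-\alpha))\le\alpha+\rho_n'+\rho_n/2+\beta_n+1/\ln^2(pn)\). The matching lower bound follows symmetrically from \(\widehat q_{n,j}(1-\alpha)\le q_{\Sigma,j}^N(1-\alpha+\rho_n')\) on \(\mathcal{E}\). The degenerate cases \(1-\alpha-\rho_n'\le 0\) or \(1-\alpha+\rho_n'\ge 1\) are immediate from the \(\pm\infty\) conventions for \(q_{\Sigma,j}^N\), since then the corresponding side of the two-sided estimate is vacuous.

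To finish, collect constants: \(\rho_n\asymp (B_n^4\ln^7(pn)/n)^{1/6}\) by its definition in \Cref{thm:quantile-comp}, \(\rho_n'\lesssim\max\{\delta_n,(B_n^4\ln^6(pn)/n)^{1/6}\}\le\max\{\delta_n,(B_n^4\ln^7(pn)/n)^{1/6}\}\) using \(\ln(pn)>1\), and \(\Pr(\mathcal{E}^c)\le\beta_n+1/\ln^2(pn)\). Summing the contributions \(\rho_n/2+\rho_n'+\Pr(\mathcal{E}^c)\) and absorbing the universal factors into a single constant — the only \(b\)-dependence entering through the constants of \Cref{thm:clt} and \Cref{thm:bootstrap} — yields the claimed bound \(C_b\max\{\beta_n,\delta_n,(B_n^4\ln^7(pn)/n)^{1/6},1/\ln^2(pn)\}\).

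The bulk of the work is already done in the three cited theorems, so what remains is essentially careful assembly. The one genuine subtlety is that \(\widehat q_{n,j}(1-\alpha)\) is a data-dependent threshold: replacing it by the deterministic quantiles \(q_{\Sigma,j}^N(1-\alpha\mp\rho_n')\) is legitimate only on \(\mathcal{E}\), and the probability mass \(\Pr(\mathcal{E}^c)\) discarded there is exactly what contributes the \(\beta_n+1/\ln^2(pn)\) term to the final bound. A second, more minor point is ensuring the law of \(\|N_{\Sigma,j}\|_\infty\) is atomless so that Gaussian tail probabilities evaluated at quantiles behave as expected and the passage between \(S_{n,j}\) and \(N_{\Sigma,j}\) at those thresholds is valid; this is where the coordinatewise variance lower bound in \eqref{eq:clt-conditions} is used.
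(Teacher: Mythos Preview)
Your proposal is correct and follows essentially the same approach as the paper: sandwich the random bootstrap quantile \(\widehat q_{n,j}(1-\alpha)\) between deterministic Gaussian quantiles via \Cref{thm:quantile-comp} on a high-probability event, then transfer from \(\|S_{n,j}\|_\infty\) to \(\|N_{\Sigma,j}\|_\infty\) via the hyperrectangle bound of \Cref{thm:clt}, and finally collect the error terms \(\rho_n,\rho_n',\beta_n,1/\ln^2(pn)\). Your treatment is in fact somewhat more careful than the paper's (you make explicit the event \(\mathcal{E}\), the atomlessness of \(\|N_{\Sigma,j}\|_\infty\), and the degenerate quantile cases), but the argument is the same.
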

\begin{proof}
	By \Cref{thm:clt} and \Cref{thm:quantile-comp}, 
	\begin{align*}
		\Pr(\|S_{n,j}\|_\infty \leq \widehat q_{n,j}(1-\eps)) 
		&\leq \Pr(\|S_{n,j}\|_{\infty} \leq q_{\Sigma,j}^N(1-\eps+\rho_n')) + \beta_n + \frac{1}{\ln^2(pn)} \\
		&\leq \Pr(\|N_{\Sigma,j}\|_\infty \leq q_{\Sigma,j}^N(1-\eps + \rho_n')) + \rho_n + \beta_n + \frac{1}{\ln^2(pn)}  \\
		&\leq 1-\eps + \rho_n' + \rho_n + \beta_n + \frac{1}{\ln^2(pn)} 
	\end{align*}
	Where the second inequality is making use of the same rectangle argument as before. A parallel argument shows that 
	\[
		\Pr(\|S_{n,j}\|_{\infty} \leq \widehat q_{n,j}(1-\eps)) \geq 1-\eps - \left(\rho_n'+\rho_n + \beta_n + \frac{1}{\ln^2(pn)} \right)
	.\] 
	Combining these two inequalities gives the result.

\end{proof}

\subsection{Concentration and Tail Bounds}
\label{sec:tail-concentration-bounds}

We make use of the following concentration and tail bounds. \Crefrange{lemma:tan-14}{lemma:tan-18} can be found in \citet{Bulmann-VanDeGeer-2011}. The proof of \Cref{lemma:sub-gaussian-absolute-bound} is trivial but provided here.

\begin{lemma}[]
	\label{lemma:tan-14}
	Let \((Y_1,\dots,Y_n)\) be independent random variables such that \(\E[Y_i]=0\) for \(i=1,\dots,n\) and \(\max_{i=1,\dots,m}|Y_i|\leq c_0\) for some constant \(c_0\). Then, for any \(t>0\),
	\[
		\Pr\bigg(\bigg|\frac{1}{n}\sum_{i=1}^n Y_i\bigg|>t\bigg) \leq 2\exp\left(-\frac{nt^2}{2c_0^2}\right)
	.\] 
\end{lemma}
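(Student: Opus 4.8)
The plan is to prove this as a standard instance of Hoeffding's inequality via the Cram\'er--Chernoff exponential-moment method. Fix $s>0$. Markov's inequality applied to $\exp\!\big(s\sum_{i=1}^n Y_i\big)$ gives
\[
    \Pr\!\Big(\tfrac{1}{n}\textstyle\sum_{i=1}^n Y_i > t\Big) = \Pr\!\Big(e^{s\sum_i Y_i} > e^{snt}\Big) \le e^{-snt}\,\E\!\big[e^{s\sum_i Y_i}\big],
\]
and by independence the expectation factorizes as $\prod_{i=1}^n \E[e^{sY_i}]$, so the whole problem reduces to controlling the individual moment generating functions $\E[e^{sY_i}]$.

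The key step is Hoeffding's lemma: since $\E[Y_i]=0$ and $Y_i\in[-c_0,c_0]$ almost surely, $\E[e^{sY_i}]\le \exp\!\big(s^2 c_0^2/2\big)$. I would establish this by bounding $e^{sy}$ on $[-c_0,c_0]$ above by its chord (convexity), taking expectations and using $\E[Y_i]=0$ to obtain $\E[e^{sY_i}]\le e^{\psi(s)}$ for an explicit smooth $\psi$ with $\psi(0)=\psi'(0)=0$, and then showing $\psi''(s)\le (2c_0)^2/4 = c_0^2$ uniformly (this is just the statement that a random variable supported on an interval of length $2c_0$ has variance at most $c_0^2$); a second-order Taylor expansion then yields $\psi(s)\le s^2 c_0^2/2$. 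Plugging this into the display above gives $\Pr(\tfrac1n\sum_i Y_i > t)\le \exp\!\big(-snt + n s^2 c_0^2/2\big)$ for every $s>0$; minimizing the exponent over $s$ at $s^\star = t/c_0^2$ produces $\exp\!\big(-nt^2/(2c_0^2)\big)$. The identical argument applied to $-Y_i$ bounds the lower tail $\Pr(\tfrac1n\sum_i Y_i < -t)$ by the same quantity, and a union bound over the two tails supplies the factor $2$.

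There is no genuine obstacle here --- the statement is a textbook concentration bound, indeed one quoted from \citet{Bulmann-VanDeGeer-2011} --- and the only point requiring a little care is tracking the constant through Hoeffding's lemma: the effective range of $Y_i$ is $2c_0$, not $c_0$, which is exactly what turns the exponent into $nt^2/(2c_0^2)$. An alternative would be to invoke Bernstein's inequality using $\E[Y_i^2]\le c_0^2$, but that yields a two-term exponent rather than the clean single-term bound stated, so Hoeffding is the appropriate tool.
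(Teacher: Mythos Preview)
Your proof is correct and is exactly the standard Cram\'er--Chernoff derivation of Hoeffding's inequality. The paper does not give its own proof of this lemma; it simply cites \citet{Bulmann-VanDeGeer-2011} as the source, so your argument is precisely the textbook proof one would find there.
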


\begin{lemma}[]
	\label{lemma:tan-15}
	Let \((Y_1,\dots,Y_n)\) be independent random variables such that \(\E[Y_i]=0\) for \(i=1,\dots,\) and \((Y_1,\dots,Y_n)\) are uniformly sub-gaussian: \(\max_{1\leq i\leq n}c_1^2\E[\exp(Y_i^2/c_1^2)-1]\leq c_2^2\) for some constants \((c_1,c_2)\). Then for any \(t>0\),
	 \[
		 \Pr\bigg(\bigg|\frac{1}{n}\sum_{i=1}^n Y_i\bigg|>t\bigg) \leq 2\exp\left(-\frac{nt^2}{8(c_1^2+c_2^2)}\right)
	.\] 
\end{lemma}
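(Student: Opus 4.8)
\textbf{Proof proposal for Lemma \ref{lemma:tan-15}.} The plan is a standard Chernoff (exponential Markov) argument, and the only real work is to convert the Orlicz-type sub-Gaussianity hypothesis into a quadratic bound on the moment generating function of each $Y_i$ valid for \emph{all} real arguments. Concretely, I would first establish the claim that for each $i$ and every $\lambda\in\SR$,
\[
    \E\!\left[e^{\lambda Y_i}\right]\leq \exp\!\big(2(c_1^2+c_2^2)\lambda^2\big).
\]

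To prove this claim I would start from the hypothesis, writing $\E[e^{Y_i^2/c_1^2}]=\sum_{m\geq 0}\E[Y_i^{2m}]/(c_1^{2m}m!)\leq 1+c_2^2/c_1^2$, which yields the moment bounds $\E[Y_i^{2m}]\leq c_1^{2m-2}c_2^2\,m!$ for every $m\geq 1$ (in particular $\E[Y_i^2]\leq c_2^2$). Then, using $\E[Y_i]=0$, I would expand $\E[e^{\lambda Y_i}]=1+\sum_{k\geq 2}\lambda^k\E[Y_i^k]/k!$ and split the sum into even and odd indices. For even $k=2m$, inserting the moment bound and using $\binom{2m}{m}\geq 2^m$ (so $m!/(2m)!\leq 1/(2^m m!)$) collapses the even part into $(c_2^2/c_1^2)(e^{\lambda^2 c_1^2/2}-1)$. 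For odd $k=2m+1$ with $m\geq 1$, I would bound $|\E[Y_i^{2m+1}]|\leq(\E[Y_i^{2m}]\E[Y_i^{2m+2}])^{1/2}$ by Cauchy--Schwarz, plug in the moment bounds, and absorb $|\lambda|c_1$ via $|\lambda|c_1\leq\tfrac12(1+\lambda^2 c_1^2)$, so that the odd part is likewise controlled by a constant multiple of $(c_2^2/c_1^2)e^{\lambda^2 c_1^2/2}$ times a polynomial in $\lambda^2 c_1^2$ that is itself dominated by an exponential. Collecting the even and odd estimates \emph{together} (term-by-term majorization would fail for tiny $\lambda$) and using $1+x\leq e^x$ gives the displayed MGF bound with variance proxy $4(c_1^2+c_2^2)$.

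With the MGF bound in hand the rest is routine: by independence, for any $\lambda>0$,
\[
    \Pr\!\Big(\tfrac1n\sum_{i=1}^n Y_i>t\Big)\leq e^{-n\lambda t}\prod_{i=1}^n\E[e^{\lambda Y_i}]\leq\exp\!\big(-n\lambda t+2n(c_1^2+c_2^2)\lambda^2\big),
\]
and optimizing over $\lambda$ (take $\lambda=t/(4(c_1^2+c_2^2))$) yields $\exp\big(-nt^2/(8(c_1^2+c_2^2))\big)$. Applying the same argument to $-Y_i$ bounds $\Pr(\tfrac1n\sum_{i=1}^n Y_i<-t)$ by the same quantity, and a union bound produces the factor $2$.

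The main obstacle is the first step: obtaining a clean quadratic MGF bound valid for all $\lambda$ — not merely for $|\lambda|\leq 1/c_1$, which is all the naive Jensen estimate $\E[e^{\lambda Y_i}]\leq\E[e^{\lambda^2 Y_i^2}]^{\lambda^2 c_1^2}$ delivers — with constants sharp enough to produce the factor $8$ in the exponent. This is careful bookkeeping of the even/odd moment series above; since this is a classical fact established in \citet{Bulmann-VanDeGeer-2011}, one could alternatively simply cite that reference, but the argument sketched here is self-contained.
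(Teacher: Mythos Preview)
Your proposal is correct and, in fact, more detailed than what the paper does: the paper does not prove this lemma at all but simply cites \citet{Bulmann-VanDeGeer-2011} (see the sentence preceding Lemmas~\ref{lemma:tan-14}--\ref{lemma:tan-18}). Your Chernoff argument---converting the Orlicz condition into even-moment bounds, assembling an MGF bound with variance proxy $4(c_1^2+c_2^2)$, and optimizing---is exactly the route taken in that reference, so there is no discrepancy in approach; you have merely unpacked the citation.
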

\begin{lemma}[]
	\label{lemma:tan-16}
	Let \((Y_1,\dots,Y_n)\) be independent variables such that \(\E[Y_i]=0\) for \(i=1,\dots,n\) and 
	\[
		\frac{1}{n}\sum_{i=1}^n \E[|Y_i|^k] \leq \frac{k!}{2}c_3^{k-2}c_4^2,\hbox{ }\hbox{ }k=2,3,\dots,
	\]
	for some constants \((c_3,c_4)\). Then, for any \(t>0\),
	\[
		\Pr\bigg(\bigg|\frac{1}{n}\sum_{i=1}^n Y_i\bigg|>c_3t+c_4\sqrt{2t}\bigg)\leq 2\exp(-nt)
	.\] 
\end{lemma}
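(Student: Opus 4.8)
The plan is to prove this Bernstein-type tail bound by the exponential (Chernoff) method, using that the stated moment hypothesis is precisely the classical Bernstein moment condition. Write $S_n := \sum_{i=1}^n Y_i$, so the event in question is $\{|S_n| > n(c_3 t + c_4\sqrt{2t})\}$. First I would fix $s \in (0,1/c_3)$ and bound the moment generating function of $S_n$: by independence and $\log x \le x-1$,
\[
    \log \E[e^{sS_n}] = \sum_{i=1}^n \log\E[e^{sY_i}] \le \sum_{i=1}^n\big(\E[e^{sY_i}] - 1\big).
\]
Expanding the exponential, using $\E[Y_i]=0$ to cancel the linear term, and $\E[Y_i^k] \le \E[|Y_i|^k]$, gives $\sum_i(\E[e^{sY_i}]-1) \le \sum_{k\ge 2}\frac{s^k}{k!}\sum_{i=1}^n\E[|Y_i|^k]$; inserting the hypothesis $\tfrac1n\sum_i\E[|Y_i|^k]\le \tfrac{k!}{2}c_3^{k-2}c_4^2$ collapses the factorials and leaves a geometric series, $\sum_{k\ge 2}\frac{s^k}{k!}\cdot n\cdot\frac{k!}{2}c_3^{k-2}c_4^2 = \frac{nc_4^2 s^2}{2}\sum_{j\ge 0}(c_3 s)^j = \frac{nc_4^2 s^2}{2(1-c_3 s)}$, where convergence uses precisely $c_3 s<1$. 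Chernoff's inequality then yields $\Pr(S_n > nu) \le \exp\big(-snu + \tfrac{nc_4^2 s^2}{2(1-c_3 s)}\big)$ for every $u>0$ and every admissible $s$.

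Second, I would make the explicit choice $u := c_3 t + c_4\sqrt{2t}$ and $s := s_0 := \sqrt{2t}\,/\,(c_3\sqrt{2t}+c_4)$; one checks $c_3 s_0 = c_3\sqrt{2t}/(c_3\sqrt{2t}+c_4) < 1$, so $s_0$ is admissible, and $1-c_3 s_0 = c_4/(c_3\sqrt{2t}+c_4)$. A short computation gives $\tfrac{c_4^2 s_0^2}{2(1-c_3 s_0)} = \tfrac{c_4 t}{c_3\sqrt{2t}+c_4}$ and $s_0 u = \tfrac{t(c_3\sqrt{2t}+2c_4)}{c_3\sqrt{2t}+c_4}$, so the exponent is
\[
    -s_0 u + \tfrac{c_4^2 s_0^2}{2(1-c_3 s_0)} = \frac{t\big(c_4 - c_3\sqrt{2t} - 2c_4\big)}{c_3\sqrt{2t}+c_4} = -t,
\]
whence $\Pr\big(S_n > n(c_3 t + c_4\sqrt{2t})\big) \le e^{-nt}$. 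Since $(-Y_1,\dots,-Y_n)$ are again independent, mean zero, and satisfy the identical moment bound, the same argument gives $\Pr\big(-S_n > n(c_3 t + c_4\sqrt{2t})\big)\le e^{-nt}$; a union bound over the two tails yields $\Pr\big(|S_n| > n(c_3 t + c_4\sqrt{2t})\big) \le 2e^{-nt}$, which is the claim after rewriting the event as $\{|\tfrac1n S_n| > c_3 t + c_4\sqrt{2t}\}$.

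This is a classical estimate, so there is no genuine obstacle; the only points requiring care are (i) justifying the termwise expansion of $\E[e^{sY_i}]$, legitimate precisely because the bounding series converges when $c_3 s < 1$, and (ii) the choice of $s_0$ — it is exactly the value that makes the exponent equal $-nt$, and it is what forces the deviation to take the two-term form $c_3 t + c_4\sqrt{2t}$ rather than a cleaner expression. Plugging this $u$ into the more familiar simplified bound $\exp\big(-nu^2/(2(c_4^2+c_3 u))\big)$ loses a $c_3^2 t^2$ term in the exponent and does not recover $-nt$, so the hands-on choice of $s_0$ above is the right route to the stated inequality.
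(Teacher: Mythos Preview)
Your argument is correct and is exactly the classical Chernoff--Bernstein route: bound the MGF via the geometric series from the moment condition, then choose $s_0=\sqrt{2t}/(c_3\sqrt{2t}+c_4)$ to make the exponent equal $-nt$. The paper does not supply its own proof of this lemma; it simply cites \citet{Bulmann-VanDeGeer-2011}, where the same Chernoff argument appears, so your proposal matches the intended proof.
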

\begin{lemma}[]
	\label{lemma:subgaussian-expectation-bound}
	Suppose that \(Y\) is sub-gaussian: \(c_1^2\E[\exp(Y^2/c_1^2)-1]\leq c_2^2\) for some constants \((c_1,c_2)\). Then
	\[
		\E[|Y|^k] \leq \Gamma\left(\frac{k}{2}+1\right)(c_1^2+c_2^2)c_1^{k-2},\hbox{ }\hbox{ }k=2,3,\dots
	.\] 
\end{lemma}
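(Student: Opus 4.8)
The plan is to bypass the power series of $\exp(Y^2/c_1^2)$ and instead pass through a Gaussian-type tail bound, which handles even and odd $k$ uniformly. First I would invoke Markov's inequality in exponential form. The hypothesis $c_1^2\E[\exp(Y^2/c_1^2)-1]\le c_2^2$ rearranges to $\E[\exp(Y^2/c_1^2)]\le (c_1^2+c_2^2)/c_1^2$, so for every $t>0$
\[
    \Pr(|Y|>t) = \Pr\big(e^{Y^2/c_1^2} > e^{t^2/c_1^2}\big) \le \frac{c_1^2+c_2^2}{c_1^2}\,e^{-t^2/c_1^2}.
\]

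Next I would use the layer-cake representation of the absolute moment. Writing $s=t^k$ and applying Tonelli's theorem (all integrands are nonnegative, so the interchange is automatic),
\[
    \E[|Y|^k] = \int_0^\infty \Pr(|Y|^k>s)\,ds = \int_0^\infty k\,t^{k-1}\Pr(|Y|>t)\,dt \le \frac{c_1^2+c_2^2}{c_1^2}\int_0^\infty k\,t^{k-1}e^{-t^2/c_1^2}\,dt.
\]
Then I would evaluate the remaining integral via the substitution $u=t^2/c_1^2$, which converts it into a Gamma integral:
\[
    \int_0^\infty k\,t^{k-1}e^{-t^2/c_1^2}\,dt = \frac{k\,c_1^k}{2}\int_0^\infty u^{k/2-1}e^{-u}\,du = \frac{k}{2}\,c_1^k\,\Gamma(k/2) = c_1^k\,\Gamma(k/2+1),
\]
using the functional equation $\tfrac{k}{2}\Gamma(k/2)=\Gamma(k/2+1)$. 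Combining the two displays gives $\E[|Y|^k]\le (c_1^2+c_2^2)\,c_1^{k-2}\,\Gamma(k/2+1)$, which is exactly the claim, valid for all $k\ge 2$ (indeed all $k\ge 1$).

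I do not anticipate a genuine obstacle: every step is elementary, and the only points meriting a sentence are the Tonelli justification for the tail-integral identity and the Gamma functional equation. For completeness I would note the alternative route — expand $\exp(Y^2/c_1^2)-1=\sum_{j\ge 1}Y^{2j}/(j!\,c_1^{2j})$, read off $\E[Y^{2j}]\le j!\,c_1^{2j-2}c_2^2$ term by term to settle even $k$, then bound odd $k$ by Cauchy--Schwarz from the adjacent even moments. That argument is quick for even $k$ but requires an extra log-convexity estimate for the Gamma values at odd $k$; the tail-integral argument above avoids the case split, so that is the route I would take.
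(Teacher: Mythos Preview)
Your proof is correct. The paper does not supply its own argument for this lemma; it simply states that the result (together with the neighboring concentration lemmas) can be found in B\"uhlmann and van de Geer (2011). Your tail-integral route---Markov on $e^{Y^2/c_1^2}$ to get a Gaussian-type tail, then the layer-cake formula and a Gamma substitution---is the standard derivation and recovers the stated constant exactly, so there is nothing to contrast.
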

\begin{lemma}[]
	\label{lemma:tan-18}
	Suppose that \(X\) is bounded, \(|X|\leq c_0\),  and \(Y\) is sub-gaussian, \(c_2^2\E[\exp(Y^2/c_1^2)-1]\leq c_2^2\) for some constants \((c_1,c_2)\). Then \(Z=XY^2\) satisfies 
	\[
		\E\left[|Z-\E[Z]|^k\right] \leq \frac{k!}{2}c_3^{k-2}c_4^2,\hbox{ }\hbox{ }k=2,3,\dots 
	,\] 
	for \(c_3 = 2c_0c_1^2\) and  \(c_4 = 2c_0c_1c_2\).
\end{lemma}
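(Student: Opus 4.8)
The plan is to reduce the moment bound for $Z = XY^2$ to a moment bound for $Y^2$ alone, exploiting the boundedness $|X| \leq c_0$, and then to bound the central moments of $Y^2$ using the sub-gaussian tail assumption on $Y$ via Lemma~\ref{lemma:subgaussian-expectation-bound}. First I would observe that $|Z - \E[Z]| \leq |Z| + |\E[Z]| $ is not tight enough; instead the cleaner route is to write, for the centered variable, a symmetrization-free bound. Note $\E[|Z - \E Z|^k] \leq 2^{k-1}(\E[|Z|^k] + |\E Z|^k) \leq 2^k \E[|Z|^k]$ by Jensen, but this loses a factor $2^k$ which is acceptable since we have freedom in the constants $c_3, c_4$. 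So the first step is: $\E[|Z - \E Z|^k] \leq 2^k\,\E[|Z|^k] \leq 2^k c_0^k\, \E[|Y|^{2k}]$, using $|X| \leq c_0$.

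The second step is to control $\E[|Y|^{2k}]$ using Lemma~\ref{lemma:subgaussian-expectation-bound}, which gives $\E[|Y|^j] \leq \Gamma(j/2 + 1)(c_1^2 + c_2^2)c_1^{j-2}$ for $j = 2,3,\dots$. Applying this with $j = 2k$ yields $\E[|Y|^{2k}] \leq \Gamma(k+1)(c_1^2 + c_2^2)c_1^{2k-2} = k!\,(c_1^2 + c_2^2)c_1^{2k-2}$. Combining with the first step gives $\E[|Z - \E Z|^k] \leq 2^k c_0^k\, k!\, (c_1^2 + c_2^2) c_1^{2k - 2}$. The remaining work is purely bookkeeping: I want to write the right-hand side in the form $\frac{k!}{2} c_3^{k-2} c_4^2$. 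Matching the geometric factor in $k$ suggests $c_3 = 2 c_0 c_1^2$, so that $c_3^{k-2} = 2^{k-2} c_0^{k-2} c_1^{2k-4}$. Then $\frac{k!}{2}c_3^{k-2} = \frac{k!}{2} 2^{k-2} c_0^{k-2} c_1^{2k-4} = k!\, 2^{k-3} c_0^{k-2} c_1^{2k-4}$, and we need $c_4^2$ to absorb the leftover: comparing with $2^k c_0^k k! (c_1^2+c_2^2)c_1^{2k-2}$, the ratio is $2^k c_0^k c_1^{2k-2}(c_1^2+c_2^2) / (2^{k-3}c_0^{k-2}c_1^{2k-4}) = 8 c_0^2 c_1^2 (c_1^2 + c_2^2)$. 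Since the lemma claims $c_4 = 2 c_0 c_1 c_2$, i.e. $c_4^2 = 4 c_0^2 c_1^2 c_2^2$, a slightly sharper version of the first step is needed to shave the constant; one can instead center more carefully or simply note that any $c_4$ with $c_4^2 \gtrsim c_0^2 c_1^2(c_1^2 + c_2^2)$ suffices and the stated constants follow from the tighter Bernstein-moment estimates in \citet{Bulmann-VanDeGeer-2011} for bounded-times-subexponential products; I would cite that reference for the exact constants.

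The main obstacle here is not conceptual but constant-tracking: getting exactly $c_3 = 2c_0 c_1^2$ and $c_4 = 2 c_0 c_1 c_2$ rather than slightly larger constants. The crude bound $\E[|Z-\E Z|^k] \le 2^k \E|Z|^k$ is wasteful; the sharp route is to recognize that $Y^2 - \E Y^2$ is sub-exponential with the Bernstein-type moment growth $\E[|Y^2 - \E Y^2|^k] \le \frac{k!}{2}(2c_1^2)^{k-2}(2c_1 c_2)^2$ (this is the $X \equiv 1$ case of the lemma, standard for sub-gaussian squares), and then multiply through by $|X| \le c_0$, which rescales $c_3 \mapsto c_0 \cdot 2c_1^2$... — wait, that gives $c_3 = 2c_0 c_1^2$ only if the multiplication is done before centering. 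The clean argument is: $Z - \E Z = X(Y^2 - \E[Y^2 \mid X])+ (X - \text{something})\dots$; since $X, Y$ need not be independent here, the safest path is $|Z - \E Z| \le |X| |Y^2 - a| + |a||X| + |\E Z|$ for a well-chosen constant $a = \E Y^2$, bound $|X| \le c_0$ throughout, apply the sub-exponential moment bound for $Y^2 - \E Y^2$, and collect terms. I expect this to go through with exactly the stated $c_3, c_4$ after one careful pass, and I would present it as a three-line computation citing Lemmas~\ref{lemma:subgaussian-expectation-bound} and the Bernstein moment inequality underlying Lemma~\ref{lemma:tan-16}.
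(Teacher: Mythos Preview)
The paper does not actually prove this lemma: the text preceding it states that Lemmas~\ref{lemma:tan-14}--\ref{lemma:tan-18} ``can be found in \citet{Bulmann-VanDeGeer-2011}'' and gives no argument. So your ultimate fallback---cite that reference for the exact constants---is precisely what the paper does, and in that sense your proposal matches the paper.

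That said, the self-contained proof you sketch has a genuine gap. Your crude route \(\E|Z-\E Z|^k \le 2^k c_0^k \E|Y|^{2k}\) combined with Lemma~\ref{lemma:subgaussian-expectation-bound} gives \(2^k c_0^k\, k!\,(c_1^2+c_2^2)c_1^{2k-2}\), but the target \(\tfrac{k!}{2}c_3^{k-2}c_4^2\) with the stated constants equals \(k!\,2^{k-1}c_0^k c_1^{2k-2}c_2^2\); matching would require \(2(c_1^2+c_2^2)\le c_2^2\), which is impossible. You notice this and pivot to the sub-exponential bound for \(Y^2-\E Y^2\), but then correctly observe that without independence of \(X\) and \(Y\), \(Z-\E Z \neq X(Y^2-\E Y^2)\), so the ``multiply through by \(c_0\)'' step does not directly rescale the centered moments. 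Your final decomposition \(|Z-\E Z|\le |X||Y^2-a| + |a||X| + |\E Z|\) is not worked out, and the deterministic remainder terms \(|a||X|+|\E Z|\) are of order \(c_0\E Y^2\), which does not obviously vanish in the \(k\)-th moment calculation with the required constants. So as written you have a plausible outline but not a proof that recovers \(c_3=2c_0c_1^2,\ c_4=2c_0c_1c_2\) exactly; if you want a self-contained argument rather than a citation, you would need to carry the Bernstein-moment computation for \(XY^2-\E[XY^2]\) through carefully, not just gesture at it.
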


\begin{lemma}[]
	\label{lemma:sub-gaussian-absolute-bound}
	Suppose that \(Y\) is sub-gaussian in the following sense, there exist positive constants \(c_0,c_1 > 0\) such that \(c_0^2\E[\exp(Y^2/c_0^2)-1] \leq c_1^2\). Then 
	\[
	    \E[|Y|] \leq c_1^2/c_0 + c_0
	.\] 
\end{lemma}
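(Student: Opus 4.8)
The plan is to reduce the bound on $\E[|Y|]$ to a bound on the second moment $\E[Y^2]$, and then to control that second moment directly from the sub-gaussian hypothesis via an elementary pointwise inequality. This is a short, self-contained argument with no machinery required beyond Young's inequality and the convexity bound $x \le e^x - 1$.

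First I would apply Young's inequality (equivalently AM--GM), $ab \le \tfrac{1}{2}(a^2 + b^2)$, with $a = |Y|/c_0$ and $b = 1$. This gives the pointwise estimate $|Y|/c_0 \le \tfrac{1}{2}\big(Y^2/c_0^2 + 1\big)$, and taking expectations yields
\[
    \E[|Y|] \le \frac{1}{2c_0}\E[Y^2] + \frac{c_0}{2}.
\]
Next I would bound $\E[Y^2]$. Using $x \le e^x - 1$ for all real $x$, applied with $x = Y^2/c_0^2$, gives $Y^2/c_0^2 \le e^{Y^2/c_0^2} - 1$ pointwise; taking expectations and invoking the hypothesis $c_0^2\,\E\big[e^{Y^2/c_0^2} - 1\big] \le c_1^2$ gives $\E[Y^2] \le c_1^2$.

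Combining the two displays yields $\E[|Y|] \le \tfrac{c_1^2}{2c_0} + \tfrac{c_0}{2} \le \tfrac{c_1^2}{c_0} + c_0$, which is the claimed bound (in fact a slightly sharper inequality holds). There is no genuine obstacle in this lemma; the only point that requires a bit of attention is choosing the elementary inequalities so that the resulting constants match the stated form, and both steps above are immediate.
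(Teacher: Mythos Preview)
Your proof is correct and in fact yields the sharper bound $\E[|Y|] \le \tfrac{c_1^2}{2c_0} + \tfrac{c_0}{2}$. The paper's argument is slightly different: it applies the single pointwise inequality $|x| \le e^{x^2}$ with $x = Y/c_0$ to get $\E[|Y|/c_0] \le \E[e^{Y^2/c_0^2}] \le c_1^2/c_0^2 + 1$ directly, without passing through the second moment. Your two-step route (AM--GM to reduce to $\E[Y^2]$, then $x \le e^x - 1$ to control that) buys you the extra factor of $1/2$; the paper's one-step route is marginally shorter but loses that constant. Both are equally elementary and there is nothing to add.
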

\begin{proof}
	Using the fact that \(e^{x^2} > |x|\) gives 
	 \begin{align*}
		 c_0^2\E[\exp(Y^2/c_0^2)-1] \leq c_1^2 
		 &\implies \E[\exp(Y^2/c_0^2)] \leq c_1^2/c_0^2 + 1 \\
		 &\implies \E[|Y/c_0|] \leq c_1^2/c_0^2 + 1 \\
		 &\implies \E[|Y|] \leq c_1^2/c_0 + c_0 
	\end{align*}
\end{proof}

\end{document}